\newcommand{\Ex}{\E}
\newcommand{\sym}{{\rm sym}}
\newcommand{\defeq}{\overset{\rm def}{=}}
\newcommand{\vphi}{\varphi}
\newcommand{\paren}[1]{\left(#1 \right )}
\newcommand{\abs}[1]{\left\lvert#1\right\rvert}
\newcommand{\Langle}{\left\langle}
\newcommand{\Rangle}{\right\rangle}
\newcommand{\smallsetexpansion}{{\sc SmallSetEdgeExpansion}}
\newcommand{\smallsetvertexexpansion}{{\sc SSVE}}
\newcommand{\stronguniquegames}{{\sc StrongUniqueGames}}
\newcommand{\uniquegames}{{\sc UniqueGames}}
\newcommand{\hypersse}{{\sc Hyper-SSE}}
\newcommand{\non}{\nonumber}
\newcommand{\cE}{{\mathcal{E}}}
\newcommand{\wh}[1]{\widehat{#1}}
\renewcommand{\vphi}{v_{\emptyset}}
\newcommand{\uphi}{u_{\emptyset}}
\newcommand{\vol}[1]{{\sf Vol}(#1)}
\newcommand{\ssee}{\textsc{SSE}}
\newcommand{\ssve}{\textsc{SSVE}}
\newcommand{\hypergraphsse}{{Hyper-SSE}}
\newcommand{\phiv}{\phi^{\sf V}}
\newcommand{\phie}{\phi^{\sf E}}
\newcommand{\hsse}{{\sc Hypergraph small-set expansion}}
\newcommand{\tPr}{\widetilde{\Pr}}
\newcommand{\tEx}{\widetilde{\Ex}}
\title{New Approximation Bounds for Small-Set Vertex Expansion}
\author{ Suprovat Ghoshal \\ Northwestern University \& TTIC\footnote{This work was done while SG was at University of Michigan.} \\ suprovat.ghoshal@northwestern.edu
	\and Anand Louis \\ Indian Institute of Science \\ anandl@iisc.ac.in }
\begin{document}

\begin{titlepage}
	\maketitle
	\begin{abstract}
		The vertex expansion of the graph is a fundamental graph parameter. Given a graph $G=(V,E)$ and a parameter $\delta \in (0,1/2]$, its $\delta$-Small-Set Vertex Expansion (\ssve) is defined as 
		\[ \min_{S : \abs{S} = \delta \abs{V}}  \frac{\abs{\partial^V(S)}}{ \min \{ \abs{S}, \abs{S^c} \}  } \]
		where $\partial^V(S)$ is the vertex boundary of a set $S$. The \ssve~problem, in addition to being of independent interest as a natural graph partitioning problem, is also of interest due to its connections to the \stronguniquegames~problem \cite{GL20}. We give a randomized algorithm running in time $n^{{\sf poly}(1/\delta)}$, which outputs a set $S$ of size $\Theta(\delta n)$, having vertex expansion at most
		\[
		 \max\left(O(\sqrt{\phi^* \log d \log (1/\delta)}) , \tilde{O}(d\log^2(1/\delta)) \cdot \phi^* \right),
		\] 
		where $d$ is the largest vertex degree of the graph, and $\phi^*$ is the optimal $\delta$-\ssve. The previous best known guarantees for this were the bi-criteria bounds of $\tilde{O}(1/\delta)\sqrt{\phi^* \log d}$ and $\tilde{O}(1/\delta)\phi^* \sqrt{\log n}$ due to Louis-Makarychev [TOC'16]. \\
		
		Our algorithm uses the basic SDP relaxation of the problem augmented with ${\rm poly}(1/\delta)$ rounds of the Lasserre/SoS hierarchy. Our rounding algorithm is a combination of rounding algorithms of \cite{RT12,ABG12}. A key component of our analysis is novel Gaussian rounding lemma for hyperedges which might be of independent interest.
		
	\end{abstract}
\end{titlepage}


\newpage

\section{Introduction}

Graph partitioning problems are a fundamental class of problems that are studied extensively in theory and practice. In theory, they have many connections to metric embeddings \cite{KV05,ALN08}, Markov chains~\cite{LK99,AndersenPeres09}, etc. in addition to connections to fundamental open questions such as the {\em Unique Games Conjecture}~\cite{Khot02a} and {\em Small-set Expansion Hypothesis}~\cite{RS10}. In practice, they are extensively used as inexpensive pre-processing steps for simplifying an optimization problem into isolated sub-problems of smaller size e.g., dynamic algorithms~\cite{SarWang19}, clustering algorithms~\cite{KVV04}, ranking~\cite{Fogel16}, etc. A commonly studied problem in this context is to design algorithms where the goal is to output a partition which minimizes the {\em edge expansion} of the set i.e,. given a $d$-regular graph $G = (V,E)$ on $n$-vertices, the goal is to minimize
\[
\phie_G(S) := \frac{|E(S,S^c)|}{d|S|}
\] 
over all sets of size at most $n/2$, where $\phie_G(S)$ is referred to as the edge expansion or ``conductance'' of the set $S$. This, and its several variants, have been studied extensively in the literature, on account of being natural optimization problems with deep connections to several areas in mathematics such as {\em Isoperimetric Inequalities}~\cite{Alon86} and {\em Metric Embedding}~\cite{KV05}. Of particular interest in this setting is the problem of minimizing the $\delta$-small set edge expansion of a graph -- denoted by $\phie_\delta$ -- where given a parameter $\delta \in (0,1/2]$, the goal is to compute a set with the minimum edge expansion over all sets of size $\delta n$. Introduced by Raghavendra and Steurer \cite{RS10}, the {\sc SmallSetEdgeExpansion}~problem (\ssee~in short) is central to the {\em Small-set Expansion Hypothesis} (SSEH) and is closely related to Khot's Unique Games Conjecture~\cite{Khot02a}.

The focus of this work is the related problem of minimizing {\em vertex expansion} $\phiv$ in graphs. Formally, given a graph $G = (V,E)$, the vertex expansion of a set $S \subseteq V$ is defined as
\[ \phiv_G(S) \defeq \frac{\abs{\partial^V_G(S)}}{ \abs{S}   } \]
where $\partial^V_G(S)$ is the vertex boundary\footnote{Formally, the vertex boundary of a set $S$ is defined as $\partial^V_G(S) := \{u \in S^c | \exists v \in S,  (u,v) \in E\}$.} of the set $S$. The question of finding sets with the minimum vertex expansion has been studied by several works in the past -- the best known upper bounds are by \cite{FLH08} who gave an $O(\sqrt{\log n})$-approximation algorithm, and by \cite{LRV13}, who gave a $O(\sqrt{\phiv \log d})$ bound for approximating vertex expansion on graphs of maximum degree $d$; they also gave a matching (up to constant factors) lower bound based on SSEH. 

Here we study the ``small-set'' variant of the above, namely the {\sc SmallSetVertexExpansion}~(\ssve)~problem. Formally, given a parameter $\delta \in (0,1/2]$, the $\delta$-\ssve~of $G$ is defined as
\[ \phiv_{\delta} \defeq \min_{S : \abs{S} = \delta \abs{V} } \phiv(S) . \]
The $\delta$-small set vertex expansion of a graph, denoted by $\phi^V_\delta$ henceforth, while being an independently interesting quantity on its own as a generalization of edge expansion, is also useful for studying hypergraph analogues of graph partitioning primitives such as Sparsest Cut and Cheeger's Inequality~\cite{CLTZ18,lm16}. Furthermore, minimizing the (small set) vertex expansion appears to be a key combinatorial bottleneck towards approximating CSPs with hard constraints such as the \stronguniquegames~problem\cite{KR08,GL20} -- similar to the well established connection between \uniquegames~and \ssee~\cite{RS09,RST12}.  

However, while $\phiv_\delta$ is a natural variant of edge expansion, as a measure of connectivity, its behavior can be quite different from the edge expansion counterpart $\phie_\delta$. In particular, unlike edge expansion, the vertex expansion of a set does not immediately admit a analogous re-interpretation in terms of random walks. In fact, note that it is entirely possible for a graph to have $\phiv_\delta$ much larger than $1$ -- for e.g., it is easy to show that for all constant $\delta$, $d$-regular random graphs can have $\phi^V_\delta$ as large as $\Omega(d)$. As a result, the techniques and results for approximating $\phie_\delta$ are unlikely to apply as is to the setting of $\phiv_\delta$. Furthermore, it is expected that the approximation curve\footnote{By approximation curve, we mean the mapping $\epsilon \to s_\delta(\epsilon)$, which maps $\epsilon$ to the best possible approximation guarantee one can hope to efficiently achieve over instances with $\phiv_\delta \leq \epsilon$.} of $\phiv_\delta$ also depends on the maximum degree $d$~\cite{LRV13} in addition to parameters $\delta$ and $\phiv_\delta$. To that end, correctly characterizing the three-way trade-off between the parameters $\phiv_{\delta},\delta$ and $d$ in the approximation curve has been challenging, leading us to question:
\begin{equation}				\label{eqn:ques}
\triangleright \textnormal{\it ~~ What is the right form of the approximation curve of \ssve~as a function of $\phiv_{\delta},\delta$ and $d$?}
\end{equation}  
Towards this, a couple of previous works combined with some additional observations paint a partial yet somewhat helpful picture. Firstly, there have been few works which make progress in terms of upper bounds: \cite{lm16} gives an algorithm which outputs $O(\delta n)$-sized sets with vertex expansion at most ${\tilde{O}((1/\delta))\cdot\sqrt{\phiv_{\delta} \log d}}$. Moreover, by combining a straightforward reduction from edge expansion to vertex expansion with the bounds from \cite{RST10a}, one can derive an algorithm which outputs a $O(\delta n)$-sized set with vertex expansion at most $O(d)\cdot\sqrt{\phi^V_\delta \log(1/\delta)}$. On the other hand, one can guess that for certain ranges of parameters $\phiv_\delta,d$ and $\delta$, the right approximation might be the (analytic) vertex expansion of $(1 - \epsilon)$-noisy Gaussian graphs\footnote{The $(1 -\epsilon)$-noisy Gaussian graph is the infinite graph on $\mathbbm{R}^d$ whose random walk is governed by the Ornstein-Uhlenbeck operator~\cite{OD14} with correlation parameter $(1 - \epsilon)$.} -- this is known (for e.g.,~\cite{GL20}) to be $\alpha_{\rm SSE}:= \sqrt{\epsilon \log(d) \log(1/\delta)}$. Finally, \cite{lm16} also showed that the basic SDP relaxation admits an integrality gap of $\alpha_{\rm Int}:=\Omega(\min\{d,1/\delta\})$.

\subsection{Our Results}	

This paper makes progress towards giving a more unified answer to \eqref{eqn:ques} by giving new approximation bounds, as well as strengthening the existing unconditional hardness, for the $\delta$-\smallsetvertexexpansion~problem. Our first main result is the following new approximation bound for the \smallsetvertexexpansion~problem:

\begin{restatable}{rethm}{hssemain}
	\label{thm:hsse-main}
	The following holds for any constant $\delta \in (0,1/2)$. There exists a randomized algorithm which on input a graph $G = (V,E)$ of maximum degree $d \in \mathbbm{N}$, outputs a set of size $[0.99 \delta n, 1.01 \delta n]$ having vertex expansion at most 
	\[
		O\paren{ \sqrt{{\phi^V_\delta} \log d \log(1/\delta)}} + {\phi^V_\delta} \cdot O(d)(\log(d)\log^2(1/\delta)).
	\]
	in time $n^{{\rm poly}(1/\delta)}$. 
\end{restatable}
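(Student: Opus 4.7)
The plan is to lift $\delta$-\ssve{} to a hypergraph small-set edge-expansion problem, solve a Lasserre/SoS SDP relaxation, and round it via a Gaussian projection followed by a threshold sweep. First I would observe that for the hypergraph $H$ whose hyperedges are the closed neighborhoods $e_u := \{u\}\cup N(u)$, the vertex boundary $|\partial^V(S)|$ equals (up to a constant factor) the number of hyperedges cut by $(S,S^c)$. So minimizing vertex expansion on $G$ reduces to minimizing a hypergraph edge-expansion objective subject to $|S| \in [0.99\delta n,\, 1.01 \delta n]$, with hyperedges of size at most $d+1$.

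Next I would write the natural SDP relaxation of this hypergraph problem, augmented with $R = \mathrm{poly}(1/\delta)$ rounds of Lasserre/SoS, with an indicator $x_u \in \{0,1\}$ per vertex and the constraint $\sum_u x_u = \delta n$. To turn the resulting pseudo-distribution into an actual set, I would use the Raghavendra--Tan global-correlation-rounding strategy: condition on the outcome of a random subset of $R$ variables to drive down the average pairwise correlation along hyperedges by a factor of $1/R$, which requires exactly the $\mathrm{poly}(1/\delta)$ SoS rounds we invested. After conditioning, the residual pseudo-distribution behaves, hyperedge-by-hyperedge, close to a product distribution.

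The actual rounding is a Gaussian threshold. Draw a random Gaussian $g$, let $X_u := \langle g, v_u\rangle$ for the SDP vectors $v_u$, and output $S := \{u : X_u \ge \tau\}$ with $\tau$ chosen so that $\E|S| \approx \delta n$. The heart of the analysis is a new \emph{Gaussian hyperedge rounding lemma}: for each hyperedge $e$ of size $\le d+1$ with SDP contribution $\e_e$, the probability $e$ is cut is bounded by
\[
  O\!\left(\sqrt{\e_e \log d \log(1/\delta)}\right) + O\!\left(\e_e \cdot d \log d \log^2(1/\delta)\right).
\]
The two summands come from decomposing each hyperedge into a \emph{spread} part (no single vertex's vector dominates), controlled via Gaussian anti-concentration for the maximum of $d$ correlated Gaussians, and a \emph{concentrated} part (one vertex dominates), handled by the pairwise Cheeger-style estimate used in \cite{lm16} and the $d$-dependent rounding of \cite{RT12,ABG12}. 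Summing over hyperedges and applying Cauchy--Schwarz against the SDP value $\phi^V_\delta$ gives the target bound on the expected expansion.

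The main obstacle is proving the hyperedge rounding lemma with only a $\sqrt{\log d}$ (not $\log d$) loss: a naive union bound over the $d+1$ vertices of a hyperedge, or a reduction to pairwise edges, would cost an extra $\sqrt{\log d}$ or a factor of $d$ everywhere. The right approach is to analyze the joint event $\{\max_{u\in e} X_u \ge \tau\} \wedge \{\min_{u\in e} X_u < \tau\}$ by conditioning on the location of $\max_u X_u$, exploiting Gaussian concentration of the gap $\max-\min$ in terms of the diameter of $\{v_u : u\in e\}$, and paying the $\sqrt{\log d}$ factor only once. Once this lemma is established, a standard concentration argument on $|S|$ (again using the conditioned SoS solution) together with repeated sampling yields a set of size in $[0.99\delta n, 1.01\delta n]$ with the claimed expansion, in time $n^{\mathrm{poly}(1/\delta)}$.
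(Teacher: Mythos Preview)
Your outline matches the paper's architecture closely: the closed-neighborhood hypergraph reduction, the $\mathrm{poly}(1/\delta)$-round SoS relaxation, Raghavendra--Tan conditioning to kill global correlation, a Gaussian-projection rounding, a per-hyperedge cut lemma of exactly the shape you state, and a final Cauchy--Schwarz over edges. The paper's proof of the hyperedge lemma also proceeds by writing $g_j = \rho_j g_1 + \sqrt{1-\rho_j^2}\,\zeta_j$, bounding $\zeta_{\max} = \max_j |\zeta_j|$ by $O(\sqrt{\log d})$, and arguing that the edge is cut only when $g_1$ lands in a short interval around the threshold --- this is your ``Gaussian concentration of the gap $\max - \min$'' idea, so the core mechanism is the same.

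Two technical points you glossed over are where most of the work actually sits. First, a single threshold $\tau$ does not survive the conditioning step: after conditioning on $X_T \gets \alpha$, the pseudo-marginals $\mu_i = \widetilde{\Pr}[x_i=1]$ vary across vertices, and the paper must use \emph{vertex-sensitive} thresholds $t_i = \Phi^{-1}(\mu_i)$ (projecting along the normalized orthogonal component $\bar z_i$) so that $\Pr[i \in S] = \mu_i$ exactly. Your hyperedge lemma then has to cope with differing $t_i$'s inside a single edge, which is why the paper needs the $\ell_1$-constraint $|\mu_i - \mu_j| \le \widetilde{\Pr}[X_i \neq X_j]$ and an edge-deletion step for edges with $\nu_e \ge 1/10$. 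Second, the $\log(1/\delta)$ factor in the lemma only comes out if all biases are bounded away from $\{0,1\}$ by $\delta^{O(1)}$; the paper enforces this by a delicate perturbation $v_i \mapsto (v_i - \theta \hat z)/\sqrt{1+\theta^2}$ with $\theta = \delta^{O(1)}$, and then re-verifies that the perturbed vectors still approximately satisfy the degree-$2$ SoS identities. Without this, your Gaussian anti-concentration step (bounding $\Phi(t+\Delta) - \Phi(t)$ by $\Delta \cdot \Phi(t)\sqrt{\log(1/\Phi(t))}$) has no control on $\Phi(t)$ and the argument collapses. Your ``spread vs.\ concentrated'' dichotomy is the paper's ``Nice vs.\ Gap'' split, but the precise threshold is $\mu_1 \gtrsim \nu_e \cdot \log d \cdot \log^2(1/\delta)$, and only the Gap case pays the additive $\tilde O(d)\,\nu_e$ via a union bound.
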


The approximation bound from the above theorem can be interpreted to be as $\max\{\alpha_{\rm SSE},\tilde{O}(\alpha_{\rm Int} \cdot \phiv_\delta)\}$ where $\alpha_{\rm SSE} := \sqrt{\epsilon \log(d) \log(1/\delta)}$ is the (conjectured) SSEH based hardness (see Remark \ref{rem:sse-hardness}), and $\alpha_{\rm Int}$ is the integrality gap of Sum-of-Squares (SoS) lifting of the SDP relaxation (Lemma \ref{lem:int-gap}). We point out that there is a subtle difference between the setting of the above theorem -- where the non-expanding set is promised to be of relative size $\delta$, and the algorithm also outputs a set of size $\approx\delta$  -- and those of previous related works~\cite{RST10a,lm16} where the promise and the approximation guarantee are for sets of sizes at most $O(\delta n)$. Nevertheless, we still compare the guarantees of Theorem \ref{thm:hsse-main} with known bounds.

\begin{itemize}
	\item Louis and Makarychev~\cite{lm16} give algorithms which output sets of size at most $(1 + \eta)\delta$ with vertex expansion at most $O_\eta(1/\delta)\cdot\sqrt{\phiv_\delta \log d}$ for any $\eta$, with the $O_\eta(\cdot)$ hiding ${\rm poly}(1/\eta)$-terms. Then for $\delta^{0.99} \leq O(1/\sqrt{\phiv_\delta}d)$, Theorem \ref{thm:hsse-main} gives improved approximation guarantees.
	\item Reducing vertex expansion to edge expansion, and plugging in the guarantees of \cite{RST10a} for SSE yields an algorithm which outputs sets of size at most $\delta n$ with vertex expansion at most $O(d\sqrt{\phiv_{\delta} \log(1/\delta)})$. Again, when $\phi^V_\delta \leq 1/\log^4(1/\delta)$, Theorem \ref{thm:hsse-main} gives improved approximation guarantees in comparison. 
\end{itemize}

Combining the above immediately yields the following guarantee:

\begin{corollary}
	Given a graph $G$ with maximum degree $d$ such that $\phiv_\delta \leq 1/(d^3 \log^4(1/\delta))$, there exists a $n^{{\rm poly}(1/\delta)}$-time algorithm that returns a set of size $\Theta(\delta n)$ with vertex expansion at most $O(\sqrt{\phiv_\delta \log d \log(1/\delta)})$
\end{corollary}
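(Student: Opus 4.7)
The plan is to derive the corollary as a direct consequence of Theorem \ref{thm:hsse-main} by showing that under the stated hypothesis on $\phiv_\delta$, the second summand in the approximation bound is dominated by the first, so the overall expansion guarantee collapses to $O(\sqrt{\phiv_\delta \log d \log(1/\delta)})$.

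Concretely, I would invoke Theorem \ref{thm:hsse-main} to obtain, in time $n^{{\rm poly}(1/\delta)}$, a set $S$ of size in $[0.99\delta n, 1.01\delta n] = \Theta(\delta n)$ whose vertex expansion is at most
\[
\phi := C_1 \sqrt{\phiv_\delta \log d \log(1/\delta)} + C_2 \cdot \phiv_\delta \cdot d \log(d) \log^2(1/\delta)
\]
for absolute constants $C_1, C_2 > 0$. The goal then is to show that the second term is at most (a constant times) the first whenever $\phiv_\delta \leq 1/(d^3 \log^4(1/\delta))$.

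For this, I would compare the two terms by squaring the desired inequality $C_2 \phiv_\delta \, d \log(d) \log^2(1/\delta) \le C_1 \sqrt{\phiv_\delta \log d \log(1/\delta)}$, which is equivalent (after dividing by $\phiv_\delta$) to
\[
\phiv_\delta \;\le\; \frac{C_1^2}{C_2^2}\cdot \frac{1}{d^2 \log(d) \log^3(1/\delta)}.
\]
Since $d \geq 1$ and $\log(1/\delta) \geq \log 2$, the hypothesis $\phiv_\delta \leq 1/(d^3 \log^4(1/\delta))$ is (up to an absolute constant) stronger than the inequality above, so the second term in $\phi$ is bounded by $O(\sqrt{\phiv_\delta \log d \log(1/\delta)})$, yielding the claimed bound.

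There is no real obstacle here: the statement is essentially a bookkeeping consequence of Theorem \ref{thm:hsse-main}. The only thing to be careful about is to absorb the various logarithmic and constant factors correctly when comparing the two terms, and to note that the size guarantee $[0.99\delta n, 1.01\delta n]$ from Theorem \ref{thm:hsse-main} already implies the $\Theta(\delta n)$ size requirement in the corollary, so no additional rounding or post-processing of the output set is needed.
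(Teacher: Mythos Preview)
Your proposal is correct and matches the paper's approach: the corollary is stated in the paper as an immediate consequence (``Combining the above immediately yields\ldots'') without a written-out proof, and the natural derivation is exactly the one you give---invoke Theorem~\ref{thm:hsse-main} and check that the hypothesis $\phiv_\delta \leq 1/(d^3\log^4(1/\delta))$ forces the additive $\tilde{O}(d)\cdot\phiv_\delta\log^2(1/\delta)$ term to be dominated by the $O(\sqrt{\phiv_\delta\log d\log(1/\delta)})$ term. Your back-of-the-envelope comparison (squaring and reducing to $\phiv_\delta \lesssim 1/(d^2\log d\log^3(1/\delta))$, then noting $\log d \lesssim d\log(1/\delta)$) is exactly right.
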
	
	
{\bf Lower bounds}. We also give conditional and unconditional hardness results which hold for more general settings of $\phiv_{\delta}$. The first result shows that it is SSEH hard to get a $f(d)$-approximation for SSVE for any increasing nonnegative function $f$.

\begin{restatable}{rethm}{ssedeg}
	\label{prop:deg-hardness}
	Assuming the SSEH, the following holds for any computable nonnegative increasing function $f:\mathbbm{N} \to \mathbbm{R}_{+}$. There exists an absolute constant $\epsilon_0 \in (0,1)$ such that for any $\epsilon \in (0,\epsilon_0)$, there exists $\delta = \delta(\epsilon)$ and a degree $d$ such that given a graph $G = (V,E)$ of maximum degree $d$ it is \NP-Hard to distinguish between
	\begin{itemize}
		\item {\bf YES}: $\phi^V_\delta(G) \leq \epsilon$
		\item {\bf NO}: $\phi^V_\delta(G) \geq f(d)\cdot \epsilon$
	\end{itemize}
\end{restatable}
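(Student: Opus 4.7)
The plan is to transfer the \sseh~hardness directly from edge expansion to vertex expansion via the elementary inequality: for any $d$-regular graph $G$ and every $S \subseteq V$,
\[
\phi^E_G(S) \;\leq\; \phi^V_G(S) \;\leq\; d \cdot \phi^E_G(S),
\]
where $\phi^E_G(S) := |E(S, S^c)|/(d|S|)$ denotes the degree-normalized edge expansion. The upper bound holds because each cut edge contributes at most one vertex to $\partial^V_G(S)$, and the lower bound holds because each boundary vertex is incident to at most $d$ cut edges. Taking the minimum over sets of size $\delta n$ gives $\phi^E_\delta(G) \leq \phi^V_\delta(G) \leq d \cdot \phi^E_\delta(G)$.

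I then invoke a bounded-degree form of the \sseh: by combining the \sseh~of \cite{RS10} with a routine expander-replacement degree reduction, there is a fixed integer $D_0 \geq 3$ such that for every $\eta \in (0, 1/2)$ there exists $\delta_E(\eta) \in (0, 1/2]$ for which it is \NP-hard, given a $D_0$-regular graph $G$, to distinguish $\phi^E_{\delta_E(\eta)}(G) \leq \eta$ from $\phi^E_{\delta_E(\eta)}(G) \geq 1 - \eta$. Given the hypothesis function $f$, set $\epsilon_0 := 1/(4 f(D_0))$ and $d := D_0$. For any $\epsilon \in (0, \epsilon_0)$, set $\eta := \epsilon/D_0$ and $\delta := \delta_E(\eta)$, and apply the inequality above to the \sseh-hard instance: in the YES case, $\phi^V_\delta(G) \leq D_0 \eta = \epsilon$; in the NO case, $\phi^V_\delta(G) \geq 1 - \eta \geq 3/4 \geq f(D_0) \epsilon = f(d) \epsilon$, where the penultimate step uses $\epsilon < \epsilon_0$. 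This yields the desired gap with the promised parameters $d$ and $\delta = \delta(\epsilon)$.

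The main obstacle I expect is justifying the bounded-degree form of the \sseh~with a fixed $D_0$. The \sseh~as originally stated in \cite{RS10} does not bound the graph degree; while standard expander-replacement-style reductions produce bounded-degree instances with only constant-factor losses in the expansion parameters and a rescaling of the small-set parameter $\delta$, these losses need to be tracked carefully so that they do not ruin the edge-to-vertex transfer above. Should a truly fixed $D_0$ prove unavailable off-the-shelf, a robust fallback is to let $d = d(\eta)$ grow with $1/\eta$ and let $\epsilon_0, d$ depend on both $f$ and $\epsilon$; since the ratio $(1-\eta)/(d\eta)$ diverges as $\eta \to 0$, the same argument still succeeds after picking $\eta$ sufficiently small relative to $1/f(d)$.
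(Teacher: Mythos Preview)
Your proposal is correct and follows essentially the same route as the paper: reduce to a bounded-degree \ssee\ instance via expander replacement, then transfer the gap to vertex expansion through the sandwich inequality between $\phi^E$ and $\phi^V$. The paper makes this explicit by stating and proving the degree-reduction step as a separate lemma (their Lemma~\ref{lem:deg-red}), whereas you invoke it as a black box; note that their lemma only guarantees a fixed absolute constant $\alpha_0$ in the NO case rather than $1-\eta$, but as you implicitly use, any constant suffices after adjusting $\epsilon_0$.
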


\begin{remark}			\label{rem:sse-hardness}
	It is widely conjectured that for graphs with degree $d$, it should be hard to distinguish between whether $\phi^V_\delta \leq \epsilon$ and $\phi^V_\delta \geq \sqrt{\epsilon \log d \log(1/\delta)}$. This is consistent with known results for $\delta$-small set edge expansion~\cite{RST12}, for which it is known that it is hard to distinguish between $\phi^E_\delta \leq \epsilon $ and $\phi^E_\delta \geq \sqrt{\epsilon \log(1/\delta)}$. This is also consistent with known hardness for vertex expansion~\cite{LRV13} which states that it is hard to distinguish between the cases $\phi^V \leq \epsilon$ and $\phi^V \geq \sqrt{\epsilon\log d}$.
\end{remark}
	
Additionally, we also strengthen the integrality gap construction of \cite{lm16} to show a $\Omega(d)$-integrality gap for the $\tilde{O}(n)$-round SoS lifting of the basic SDP (Proposition \ref{lem:int-gap}).

{\bf Additional Algorithmic Applications}. Our algorithm for Theorem \ref{thm:hsse-hyper} is based on solving a SoS lifting of the strengthened SDP relaxation, followed by a conditioning + Gaussian rounding step. We use this framework to derive new approximation guarantees for the related problem of \hypergraphsse. Given a hypergraph $H=(V,E,w)$ with hyperedge weights $w: E \to \mathbbm{R}_+$, the expansion of a set $S \subset V$ is defined as
\begin{equation}				\label{eqn:hyper-exp} 
\phie_H(S) \defeq \frac{ w\paren{\partial^E_H(S)}}{\min \left\{\vol{S},\vol{S^c}\right\}},  
\end{equation}
where for a subset $S$, $\phie_H(S)$ denotes the hyperedge boundary\footnote{For a subset $S \subseteq V$, the hyperedge boundary $\partial^H_E(S)$ is the set of hyperedges crossing the cut $(S,S^c)$ i.e., $\partial^E_H(S) = \{e \in E | e \cap S, e \cap S^c \neq \emptyset\}$.} and $\vol{S}$ denotes its volume, defined as the sum of the degrees of the vertices in $S$. The following theorem states our guarantees for \hypersse.

\begin{restatable}{rethm}{hssehyper}
	\label{thm:hsse-hyper}
	There is a randomized algorithm which takes as input 
	a $r$-uniform hypergraph $H=(V,E,w)$ with hyperedge weights $w: E \to \mathbbm{R}^+$, max-degree $d_{\rm max}$,
	and containing a set of relative weight $\delta $ with expansion $\phie_\delta$,
	runs in time $n^{{\rm poly}(1/\delta)}$, and outputs a set of relative weight $\delta(1 \pm o_\delta(1))$
	having hyperedge expansion at most 
	\[
	O\paren{\sqrt{(d_{\rm max}/r) {\phie_\delta} \log(1/\delta)\log r }  + \tilde{O}(r){\phi^*}(\log(1/\delta))^{2}} 
	\]
\end{restatable}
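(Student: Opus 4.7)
The plan is to adapt the SDP-plus-Gaussian-rounding framework that underlies Theorem \ref{thm:hsse-main} so that it handles hyperedge boundaries instead of vertex boundaries. First, I would set up a natural SoS relaxation for \hypergraphsse: for each vertex $v$ associate an SDP vector $u_v$ (with scalar $\|u_v\|^2$ representing membership in $S$), enforce the volume constraint $\sum_v d_v\|u_v\|^2 = \delta \cdot \vol{V}$, and encode the hyperedge-cut objective as a pseudo-expectation of $\vone[\exists v,w \in e: \vone_S(v) \neq \vone_S(w)]$. I would strengthen this with ${\rm poly}(1/\delta)$ levels of the Lasserre/SoS hierarchy so that global correlation rounding in the style of \cite{RT12} becomes available later.

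Second, I would apply conditioning via global correlation reduction: condition the SoS solution on a small seed set of vertices chosen to drive the average mutual information between endpoints of a random hyperedge down to $o_\delta(1)$. Since the relevant notion of correlation now involves $r$-tuples rather than pairs, the usual argument needs to be lifted to hyperedges, but the required level stays ${\rm poly}(1/\delta)$ because the reduction only loses a factor polynomial in $r$ and $\log(1/\delta)$. After conditioning, the restricted SDP vectors on a typical hyperedge behave like a near-product distribution, which is exactly what the Gaussian rounding will exploit.

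Third, and most importantly, I would apply the hyperedge Gaussian rounding. Draw $\xi \sim \mathcal{N}(0,I)$, set $g_v := \langle \xi, u_v\rangle$, and put $v \in S$ iff $g_v$ exceeds a threshold $t$ calibrated so that $|S| = \delta(1 \pm o_\delta(1))\vol{V}$. The novel lemma I would prove bounds the cut probability
\[
\Pr[\text{hyperedge } e \text{ is cut}] \;\lesssim\; \sqrt{\log r \cdot \log(1/\delta)} \cdot \mathrm{Gap}_{\rm SDP}(e),
\]
where $\mathrm{Gap}_{\rm SDP}(e)$ is the SDP relaxation of the hyperedge cut indicator. The crucial point is that the denominator in $\phie_H$ scales with $\vol{S}$, so each vertex contributes degree-weight $d_v \leq d_{\max}$ whereas each cut hyperedge is counted only once; the per-vertex load within a hyperedge is therefore only $d_{\max}/r$, which is the source of the $\sqrt{d_{\max}/r}$ factor. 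Combining this with the \cite{ABG12}-style thresholding to produce a set of the desired size, and then balancing the rounding loss against the residual integrality gap after conditioning, produces the two terms in the stated bound.

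The main obstacle is the Gaussian rounding lemma for hyperedges itself. In the edge case, one directly estimates $\Pr[g_u > t > g_v]$ by a univariate Gaussian tail in $\|u_u - u_v\|$; for arity $r$ one faces the $r$-dimensional event that some $g_v$ lies above $t$ and some below, and a naive union bound over the $\binom{r}{2}$ edge-events loses a full factor of $r$ (rather than the desired $\sqrt{\log r}$). To recover the correct dependence I would reduce to the maximum-minus-minimum of the Gaussian process $\{g_v\}_{v\in e}$ and use Gaussian concentration for suprema, controlled by the SDP diameter $\max_{v,w \in e}\|u_v - u_w\|$ together with the level-constraints of the SoS relaxation which control the second moments. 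Making this replace the naive $r$-factor with $\sqrt{\log r}$, while preserving the $\sqrt{\log(1/\delta)}$ small-set factor, is the key technical step.
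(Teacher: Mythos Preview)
Your overall framework---SoS relaxation, global-correlation conditioning, then Gaussian thresholding with a supremum-of-Gaussians argument to replace the naive $r$ factor by $\sqrt{\log r}$---matches the paper's approach at a high level, and your intuition for why $\sqrt{\log r}$ should appear is correct. However, two concrete pieces are missing or wrong, and without them the argument does not go through.

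First, your proposed hyperedge cut bound $\Pr[e\text{ cut}]\lesssim \sqrt{\log r\,\log(1/\delta)}\cdot\mathrm{Gap}_{\rm SDP}(e)$ is too strong to hold in general (it would beat the integrality gap), and the paper's bound is crucially of the two-term form
\[
\Pr[e\text{ is cut}]\;\lesssim\;\sqrt{\alpha_e\,\nu_e\,\log r\,\log(1/\delta)}\;+\;\tilde O(r)\,\nu_e\,(\log(1/\delta))^2,
\]
where $\nu_e=\max_{i,j\in e}\|v'_i-v'_j\|^2$ and $\alpha_e=\max_{i\in e}\min(\mu'_i,1-\mu'_i)$ is the largest bias in $e$. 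This arises from a \emph{nice}/\emph{gap} edge dichotomy: if $\alpha_e$ is large compared to $\nu_e$ (nice), the Gaussian concentration argument you describe gives the first term; if $\alpha_e$ is small (gap), one simply union-bounds the event that any vertex is included, paying $r\cdot\alpha_e\le \tilde O(r)\nu_e$. You never introduce this split, so your plan does not explain the second term in the theorem, and your single-term bound cannot be correct. Relatedly, the paper uses \emph{vertex-dependent} thresholds $t_i=\Phi^{-1}(\mu'_i)$ (after a delicate pre-processing shift ensuring $\mu'_i\ge\delta^{O(1)}$), not a single global $t$; the $\alpha_e$ factor and the nice/gap split fundamentally rely on this.

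Second, your explanation of the $\sqrt{d_{\max}/r}$ factor (``per-vertex load within a hyperedge is $d_{\max}/r$'') is not an argument. In the paper this factor comes out of the final aggregation step: after summing $\sum_e w(e)\sqrt{\hat\alpha_e\,\nu_e}$ (with $\hat\alpha_e\approx\min_{i\in e}\mu'_i$), one invokes a combinatorial lemma from \cite{lm16} bounding this sum by $\sqrt{\eta^H_{\max}\,\phi^*}$ times the total volume, where for an $r$-uniform hypergraph of max degree $d_{\max}$ one has $\eta^H_{\max}=(d_{\max}\log r)/r$. The $\alpha_e$ factor inside the square root is precisely what makes this Cauchy--Schwarz-type step work; without it you cannot extract $d_{\max}/r$, and your sketch gives no substitute mechanism.
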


Again, we compare this with the $\tilde{O}(1/\delta)\sqrt{(d_{\rm max}/r) \phie_\delta \log r}$-bound from \cite{lm16}; Theorem \ref{thm:hsse-hyper} again gives better guarantees in the small-volume + small-expansion setting i.e., $\delta^{0.99} \leq  d/\sqrt{\phie_\delta}$. 

\subsection{Related Works}

\paragraph{Graph/Hypergraph Partitioning Problems.} There is a long line of works which study graph/hypergraph partitioning while aiming to minimize various notions of expansion. The simplest variant where the objective is to minimize expansion i.e., {\sc Sparsest Cut}~has been studied extensively. Leighton and Rao \cite{LR99} first gave a $O(\log n)$-approximation for it. The breakthrough work of Arora, Rao and Vazirani \cite{ARV09} gave an $O(\sqrt{\log n})$-approximation for this problem. Feige, Hajhiyaghayi and Lee~\cite{FLH08} study the balanced vertex separator problem for which they gave a $O(\sqrt{\log n})$-approximation. Louis, Raghavendra and Vempala~\cite{LRV13} gave a $O(\sqrt{\phiv \log d})$ bound for minimizing vertex expansion over graphs with max-degree at most $d$; they also proved a matching (up to constant factors) lower bound based on SSEH. Finally, there have been several recent works which explore undirected and directed vertex and hypergraph expansion via reweighed eigenvalues~\cite{KLT22,LTW23}.

\paragraph{Small Set Expansion.}
The \ssee~problem was introduced by Raghavendra and Steurer~\cite{RS10} in the context of the Small Set Expansion Hypothesis, in an attempt to characterize the combinatorial structure underlying hard instances of CSPs such as \textsc{Unique Games}. Raghavendra, Stuerer and Tetali~\cite{RST10a} gave a $O(\sqrt{\phie_\delta \log (1/\delta)})$ bound for \ssee -- this was later matched (up to constant factors) by the work of Raghavendra, Steurer and Tulsiani \cite{RST12}. In terms of multiplicative approximation, Bansal et al. \cite{BFKM11} gave a bi-criteria $O(\sqrt{\log n \log (1/\delta)})$ approximation for \ssee. There have also been a sequence of works which show that \ssee~is tractable on graphs having ``low threshold rank'', for e.g, see \cite{Kol10}, \cite{GS11}. On other other hand, Arora, Barak and Steurer~\cite{ABS15} gave a subexponential time algorithm for \ssee~on general instances. \ssee~is also studied in the context of higher order variants of Cheeger's inequality~\cite{LOT14,LRTV12} which relate the multi-way expansion with the higher eigenvalues of the normalized Laplacian matrix of the graph. Louis and Makarychev ~\cite{lm16} gave $\tilde{O}(1/\delta)\sqrt{\log n}$-approximation algorithms for the $\delta$-\ssve~and $\delta$-\hypersse~problems respectively. Additionally, for graphs of maximum degree $d$, they also gave a $\tilde{O}(1/\delta)\sqrt{\phiv_\delta \log d}$-bound for \ssve.

\paragraph{CSPs with Cardinality Constraints.}
\ssee~and \hypersse~can also be interpreted as Constraint Satisfaction Problems (CSPs) with global cardinality constraints. CSPs with cardinality constraints model a wide array of combinatorial optimization problems such as {\sc Max-Bisection}, {\sc Min-Bisection}, {\sc Densest}-$k$-{\sc Subgraph}, {\sc Min-Max-Partition}-etc. The global constraint influences the tractability of underlying CSP. For e.g., Austrin and Stankovic~\cite{AS19} showed that adding cardinality constraints to {\sc Max-Cut} makes it strictly harder. Therefore, there have been several attempts to study this class of CSPs on its own. Raghavendra and Tan~\cite{RT12} proposed a general framework for approximating CSPs with cardinality constraints, and used it to give a $0.85$-approximation for {\sc Max Bisection}, which was later improved to $0.8776$ by Austrin, Bennabas and Georgiou~\cite{ABG12}. There have been several works \cite{GS11,GL14,BansalSticky20,GL22,GL23} which study such CSPs in more general contexts.

\section{Overview}		\label{sec:overview}

Our algorithmic result for \ssve~is achieved via a known reduction to \hypersse~ (Lemma \ref{lem:transform}), henceforth we shall focus our discussion on \hypersse. Recall that the setting of \hypersse~is as follows: given a hypergraph $H = (V,E)$, the objective is to find a set $S \subseteq V$ of relative volume $\delta$ which minimizes the hyperedge expansion $\phie_H(S)$ (as defined in \eqref{eqn:hyper-exp}). As is usual, our approach to designing an algorithm for \hypersse~relies on the ``SDP Relaxation + Gaussian Rounding'' framework. However, towards designing the algorithm, we will encounter and address several fundamental issues which will guide our final approach. The rest of the section will consist of the following components.
\begin{itemize}
	\item We first discuss the basic SDP relaxation and its integrality gap. 
	\item We shall then sketch a relatively straightforward $\sqrt{\phie_\delta \log(1/\delta)}$-algorithm for the ``equals'' version of small-set edge expansion (i.e., arity $2$), and discuss the key bottlenecks towards extending this approach to the setting of hypergraphs of higher arities.
	\item Finally, we discuss our approach towards addressing these challenges and conclude by presenting a simplified version of our algorithm and its analysis.
\end{itemize}

\subsection{The Basic SDP}					\label{sec:basic-sdp}

We begin by describing the basic SDP relaxation for the \hypersse~in SDP \ref{sdp:basic}.

\begin{SDP}\label{sdp:basic}
	\begin{eqnarray*}				
	\text{minimize}  &  \sum_{e \in E} \max_{i,j \in e} \| u_i - u_j\|^2 & \\
	\text{subject\space to}& \sum_{i \in V} \langle u_i, \uphi \rangle = \delta n \\
	& \langle u_i, \uphi\rangle = \|u_i\|^2 & \forall i \in V\\
	& \|\uphi\|^2 = 1.
	\end{eqnarray*}
\end{SDP}

In the above SDP, the objective is a convex vector relaxation of the hypergraph cut function. The variables i.e., the vectors $\{u_i\}_{i \in V}$ are intended to be $\{0,1\}$ indicators of a set in an integral solution, and $\uphi$ is the purported ``one'' vector which aligns the SDP solution. The first constraint in the SDP relaxation is intended to control the cardinality of the set represented by the solution, and the second constraint is the vectorization of the Booleanity constraint $x^2 = x$ for $x \in \{0,1\}$. This relaxation and its instantiations to the setting of graphs (often strengthened with $\ell^2_2$-triangle inequalities), have been used in most of the earlier work on small-set expansion problems \cite{RST10a,BFKM11,lm16}. While for constant arity (i.e. size of the largest hyperedge), the above SDP can yield optimal (up to constant factors and assuming SSEH) approximation results (e.g,. see \cite{RST10a} and \cite{RST12}), in general the integrality gap is at least linear in the arity of the hypergraph, as stated in the following observation.

\begin{observation}\cite{lm16} 			\label{obs:ig}
	For hypergraphs with arity $d$, the integrality gap $\alpha_{\rm Int}(\delta,d,\epsilon)$ of the SDP \ref{sdp:basic} is at least $\Omega(\min\{1/\delta, d\})$.
\end{observation}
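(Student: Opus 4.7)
The plan is to exhibit a family of $d$-uniform hypergraphs on which SDP \ref{sdp:basic} admits a feasible solution whose objective is vanishingly small, while every $\delta$-fraction subset has strictly positive integer hyperedge expansion. This asymmetry between the SDP and the integer optimum immediately certifies an integrality gap that dominates any prescribed $\Omega(\min\{1/\delta, d\})$ quantity in the parameterized sense of $\alpha_{\rm Int}(\delta, d, \epsilon)$.

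For the construction I would take the transversal hypergraph: partition $V$ into $k = \max(d, \lceil 1/\delta \rceil)$ equal-size clusters $V_1, \ldots, V_k$ and let the hyperedges be all transversals picking one vertex from each of $d$ distinct clusters. By symmetry and direct counting, every subset $S$ of relative volume $\delta$ cuts an $\Omega(1)$-fraction of the hyperedges touching it, because each transversal visits $d$ distinct clusters and hence has $d - |\{i : S \cap V_i \ne \emptyset\}|$ vertices that generically lie outside $S$. A short case split on whether $d \le 1/\delta$ (arity-dominated) or $d \ge 1/\delta$ (size-dominated) then pushes this to an integer small-set expansion lower bound of $\Omega(\min\{1/\delta, d\})$.

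For the SDP side I would use a symmetric constant-assignment: pick any unit vector $\uphi$ and any unit $w$ with $w \perp \uphi$, set $v := \delta\, \uphi + \sqrt{\delta - \delta^2}\, w$, and assign $u_i := v$ for every $i \in V$. One verifies $\|v\|^2 = \langle v, \uphi\rangle = \delta$, so $\sum_{i \in V}\langle u_i, \uphi\rangle = n\delta$, $\langle u_i, \uphi\rangle = \|u_i\|^2$, and $\|\uphi\|^2 = 1$, matching all three constraints of SDP \ref{sdp:basic}. Since every pair $u_i, u_j$ coincides, $\max_{i,j \in e}\|u_i - u_j\|^2 = 0$ for each hyperedge $e$, so the SDP objective is exactly zero and the SDP-expansion vanishes. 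Combined with the positive integer expansion on the transversal hypergraph, the ratio of integer to SDP expansion blows up and in particular exceeds any $\min\{1/\delta, d\}$ one wishes to certify.

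The main obstacle is the combinatorial lower bound on integer expansion: one must rule out that any $\delta$-fraction subset can simultaneously align with the cluster structure and avoid cutting many transversals. The two-case analysis above handles the arity-dominated regime (where transversals touching $d$ clusters must straddle any small $S$) and the size-dominated regime (where pigeonhole on cluster capacity forces cuts), and together these give the tight $\min\{1/\delta, d\}$ scaling. The SDP side is immediate from symmetry, and the overall argument parallels the known integrality gap constructions for small-set edge expansion in \cite{lm16}.
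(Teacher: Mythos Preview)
Your SDP-side argument trivializes the observation in a way that misses its point. Assigning every $u_i$ to the same vector $v$ is feasible for SDP~\ref{sdp:basic} as written and gives objective exactly $0$---for \emph{any} hypergraph. So your transversal construction and its expansion analysis do no work; any instance with positive integer expansion would yield an infinite ratio. But $\alpha_{\rm Int}(\delta, d, \epsilon)$ is parameterized by the SDP value $\epsilon$, and exhibiting $\epsilon = 0$ does not certify a gap at any positive $\epsilon$; it just exposes that the bare SDP is degenerate without a spreading-type constraint. The observation is intended as a structural gap that survives at positive $\epsilon$ and under the usual strengthenings, and your all-equal solution (which corresponds to perfectly correlated $x_i$) is killed as soon as one adds any conditioning or spreading constraint.

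The paper's construction is both far simpler combinatorially and non-degenerate on the SDP side. The instance is a \emph{single hyperedge} on $d$ vertices with $\delta = 1/d$: every one-vertex set cuts the edge, so the integer expansion is $1$. For the SDP, take orthonormal $\uphi, \bar{z}_1, \ldots, \bar{z}_d$ and set $u_i = \delta\,\uphi + \sqrt{\delta - \delta^2}\,\bar{z}_i$; this is feasible with $\max_{i,j}\|u_i - u_j\|^2 = 2\delta(1-\delta) = \Theta(1/d)$, giving an $\Omega(d) = \Omega(\min\{1/\delta, d\})$ gap at a strictly positive SDP value. This solution corresponds to independent $\mathrm{Bernoulli}(\delta)$ variables---a genuine distribution, hence robust to $\ell_2^2$ triangle inequalities---and it is precisely the template the paper later scales up (to a random hypergraph on many vertices with the random-$k$-subset pseudo-distribution) for the SoS gap in Proposition~\ref{lem:int-gap}.
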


The integrality gap instance is simple to describe; it is a single hyperedge on $d$ vertices. Here for $\delta = 1/d$ the optimal expansion is $1$, since the hyperedge will always get cut. On other other hand, consider the following vector assignment. Let $\uphi, \bar{z}_1,\ldots,\bar{z}_d$ be $(d + 1)$-orthonormal vectors. For every $i \in [d]$, assign $u_i = \delta  \uphi + \sqrt{\delta - \delta^2}\bar{z}_i$. It is easy to verify that (a) this assignment is feasible and (b) this yields a value of $1/d$ for the objective. Observation \ref{obs:ig} results in a situation where the integrality gap (denoted by $\alpha_{\rm Int}$)\footnote{Here $\alpha_{\rm Int} = \alpha_{\rm Int}(d,\delta,\epsilon)$ denotes the integrality gap of SDP \ref{sdp:basic} on $\delta$-\hypersse~instances with maximum degree $d$ and SDP value $\epsilon$} and SSEH based hardness factors (denoted by $\alpha_{\rm SSE}$) are {\em incomparable}. This leads us to ask the following which is another motivating question for this work: {\em Can we give an algorithm which achieves an approximation factor of $O(\max(\alpha_{\rm SSE},\alpha_{\rm Int} \cdot \phiv_\delta))$?}

\begin{remark}
	We point out that the integrality curve $\alpha_{\rm Int}(\cdot)$ corresponds to that of the standard SDP relaxation described in SDP \ref{sdp:basic}. In particular, $\alpha_{\rm Int}(\cdot)$ does not necessarily have to match the approximation curve $\alpha_{\rm SSE}(\cdot)$. On the other hand, it is clear that any rounding scheme for the above relaxation cannot have an approximation guarantee better than $\alpha_{\rm Int}$. Hence a restatement of the above question is that can we give find sets with vertex expansion at most $\alpha_{\rm SSE} = \alpha_{\rm SSE}(\phiv_\delta,d,\delta)$ when $\alpha_{\rm SSE} < \alpha_{\rm Int}\cdot \phiv_\delta$? 
\end{remark}

\subsection{A $\sqrt{\phie_\delta \log(1/\delta)}$-algorithm for SSE}


Here we sketch an algorithm for \ssee~which outputs $~\delta|V|$-sized sets with edge expansion at most $\sqrt{\phie_\delta \log(1/\delta)}$. Towards this, we begin by considering the instantiation of SDP \ref{sdp:basic} for the setting of graphs i.e,. $d = 2$:
 
\begin{SDP}\label{sdp:sse}
	\begin{eqnarray*}				
		\text{minimize}  &  \sum_{(i,j) \in E} \| u_i - u_j\|^2 & \\
		\text{subject\space to}& \sum_{i \in [n]} \langle u_i, \uphi \rangle = \delta n \\
		& \langle u_i,\uphi \rangle = \|u_i\|^2 	& \forall i \in V\\
		& \|\uphi\|^2 = 1.
	\end{eqnarray*}
\end{SDP}

{\bf Idealized Analysis and a Gaussian Rounding Question}. Given the SDP relaxation in SDP \ref{sdp:sse}, the natural next step is to design and analyze a rounding algorithm which will use the SDP solution to round off a feasible integral solution -- i.e., a $\delta n$-sized subset -- with the objective that hyperedge boundary of the rounded solution is not too large in comparison to the SDP objective. This is the central step of the process and is key to determining the overall approximation guarantee of the algorithm. For the setting of graphs i.e, when the arity $d = 2$, designing optimal rounding schemes is a well understood process and is based on the following geometric principle: vertices whose corresponding vectors are far apart should be far more likely to be cut by the rounded off solution and vice versa. This is easily achieved by projecting the vectors along a randomly sampled Gaussian vector. Furthermore, in order to control the size of the set output by the algorithm, the partition of the vertices is guided by an appropriately chosen threshold. These with some additional technical considerations taken together yield the rounding scheme described in Figure \ref{fig:round-basic}.

\begin{figure}[ht!]
	\begin{mdframed}
		\begin{center}
			\underline{\bf Rounding for SSE}
		\end{center}
	\vspace{5pt}
			\begin{itemize}
				\item For every $i \in V$, write $u_i = \mu_i \uphi + z_i$, where $z_i \perp \uphi$ is the component of $u_i$ orthogonal to the one vector $\uphi$.
				\item {\bf Random Projection}. Sample $g \sim N(0,1)^k$ (where $k$ is the dimension of the SDP solution) and for every $i \in V$, project $g_i := \Langle g,\frac{z_i}{\|z_i\|}\Rangle$.
				\item {\bf Thresholding}. Construct $S \subseteq V$ by including every $i \in V$ for which if $g_i \leq \Phi^{-1}(\delta)$.\footnote{Here $\Phi:\mathbbm{R} \to [0,1]$ is the Gaussian CDF function.} 
				\item Output set $S$.	
			\end{itemize}
		
	\end{mdframed}
\caption{Basic Rounding Scheme}
\label{fig:round-basic}
\end{figure}

The correctness of the rounding scheme can be shown by establishing the following two steps: (i) the set returned is of size $\approx \delta |V|$ with high probability and (ii) the expected expansion of the rounded off solution is small. For establishing (i), one can easily see that the choice of the threshold ensures that for $i \in V$, marginally $g_i = \langle g,z_i/\|z_i\|\rangle$ is a standard Gaussian random variable, and hence 
\begin{equation}					\label{eqn:del-bias}
\Pr_{(x_i)_{i \in V}}\big[x_i = 1\big] = \Pr_{g \sim N(0,1)}\Big[g_i \leq \Phi^{-1}(\delta)\Big] = \delta,
\end{equation}
which implies that the expected size of the set rounded by the algorithm is $\delta|V|$. The more interesting component here is to analyze the expansion of the set $S$ output by the algorithm (Figure \ref{fig:round-basic}). Equivalently, we want to bound the expected fraction of edges cut by $S$. It turns out that for a fixed edge $(i,j) \in E$, analyzing the probability of $(i,j)$ getting cut reduces to the following Gaussian stability type question which asks "{\em what is the probability that two $\rho$-correlated Gaussians get separated by a halfspace of volume $\delta$}"? This question and its variants have been studied in various contexts in the combinatorial optimization and graph partitioning problems in particular. Goemans and Williamson~\cite{GW94} first studied the above for the setting of $\delta = 0$; subsequent works such as \cite{CMM06a} on \uniquegames~and \ssee~study it for more general $\delta$'s. In particular, incorporating the bounds from \cite{CMM06a} into the setting of our rounding scheme yields the following.
\begin{fact}				\label{fact:edge-sep}
	Fix $i,j \in V$ and let $H_\delta = \{x \in \mathbbm{R} | x \leq \Phi^{-1}(\delta)\}$ be the halfspace of Gaussian volume $\delta$. Then,
	\[
	\Pr\Big[\mathbbm{1}_{H_\delta}(g_i) \neq \mathbbm{1}_{H_\delta}(g_j)\Big]
	\lesssim \|u_i-u_j\|\cdot\sqrt{\delta\log(1/\delta)}.
	\]
	where $\lesssim$ hides constant multiplicative factors. 
\end{fact}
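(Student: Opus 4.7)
The plan is to first reduce to a purely joint-Gaussian statement and then convert it back to SDP distances. Let $t := \Phi^{-1}(\delta)$. Since $g_i=\langle g,\,z_i/\|z_i\|\rangle$ and $g_j=\langle g,\,z_j/\|z_j\|\rangle$ for $g\sim N(0,I)$, the pair $(g_i,g_j)$ is jointly standard Gaussian with correlation $\rho := \langle z_i/\|z_i\|,\,z_j/\|z_j\|\rangle$, and we need to bound $\Pr[\mathbbm{1}[g_i\le t]\neq \mathbbm{1}[g_j\le t]]$.

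First I would invoke Sheppard's identity for correlated Gaussians,
\[
\Pr[g_i\le t< g_j] \;=\; \int_{\rho}^{1}\frac{1}{2\pi\sqrt{1-s^2}}\,\exp\!\left(-\frac{t^2}{1+s}\right)ds,
\]
and substitute $u=1-s$: for small $u$ the integrand is $\lesssim \frac{1}{\sqrt{u}}\exp(-t^2/2)$, so the integral is $\lesssim \sqrt{1-\rho}\cdot\exp(-t^2/2)$. At $t=\Phi^{-1}(\delta)$ the standard Gaussian tail estimate gives $\exp(-t^2/2)\lesssim \delta\sqrt{\log(1/\delta)}$, and combining with the symmetric event $\Pr[g_j\le t<g_i]$ yields
\[
\Pr\!\left[\mathbbm{1}_{H_\delta}(g_i)\neq \mathbbm{1}_{H_\delta}(g_j)\right]\;\lesssim\;\sqrt{1-\rho}\cdot \delta\sqrt{\log(1/\delta)},
\]
which is the halfspace-separation estimate of \cite{CMM06a} specialized to our setting.

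The remaining step is to convert $\sqrt{1-\rho}$ into the SDP distance $\|u_i-u_j\|$. Writing $\hat z_i := z_i/\|z_i\|$, we have $\sqrt{1-\rho}=\|\hat z_i-\hat z_j\|/\sqrt{2}$. Combining the elementary inequality $\|\hat z_i-\hat z_j\|\le 2\,\|z_i-z_j\|/\min(\|z_i\|,\|z_j\|)$ with $\|z_i-z_j\|\le \|u_i-u_j\|$ (since $z_i-z_j=u_i-u_j-(\mu_i-\mu_j)\uphi$ is the projection to $\uphi^\perp$) and the SDP identity $\|z_i\|^2=\mu_i(1-\mu_i)$ (a direct consequence of $\langle u_i,\uphi\rangle=\|u_i\|^2$) gives $\sqrt{1-\rho}\lesssim \|u_i-u_j\|/\sqrt{\min(\mu_i,\mu_j)}$. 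In the intended regime $\mu_i,\mu_j\sim \delta$, this factor is $1/\sqrt{\delta}$, producing exactly the claimed $\|u_i-u_j\|\cdot\sqrt{\delta\log(1/\delta)}$.

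I expect the main obstacle to be this final conversion, since SDP \ref{sdp:sse} does not on its own enforce $\mu_i\sim \delta$: in principle some $\mu_i$ may be very small, making $\hat z_i$ unstable and preventing a uniform $1/\sqrt{\delta}$ loss. A clean argument would therefore either augment SDP \ref{sdp:sse} with box constraints $\mu_i\in[\delta/C,\,C\delta]$, or preprocess the instance to discard vertices whose $\mu_i$ is far from $\delta$ while separately arguing that they contribute only a lower-order term to the overall cut probability, so that the Gaussian estimate above can be applied uniformly across the remaining vertices.
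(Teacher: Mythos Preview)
The paper does not actually prove Fact~\ref{fact:edge-sep}: it appears only in the overview as a heuristic building block and is attributed to the techniques of \cite{CMM06a}. Your argument via Sheppard's integral formula and the substitution $u=1-s$ is the standard route to this estimate and is correct; your identification of the caveat that one needs $\mu_i,\mu_j\gtrsim\delta$ is also correct and in fact necessary---the statement fails as written when, say, $z_i\perp z_j$ with $\mu_i=\mu_j=\mu\ll\delta/\log(1/\delta)$, since then the left side is $\approx 2\delta$ while the right side is $\approx\sqrt{\mu\,\delta\log(1/\delta)}\ll\delta$. So your diagnosis of the obstacle is exactly right, and your proposed fixes (box constraints or discarding low-bias vertices) are the natural ones.

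For comparison, the paper's rigorous treatment of the edge case is subsumed in Lemma~\ref{lem:cut-bound1} (instantiated at $d=2$), and both the setup and the proof technique differ from yours. Rather than a fixed threshold $\Phi^{-1}(\delta)$, the algorithm uses vertex-specific thresholds $t_i=\Phi^{-1}(\mu'_i)$ and a preprocessing step (Step~\ref{step:vi'}) that perturbs the vector solution to force $\mu'_i\ge\delta^{O(1)}$, which sidesteps the bias issue you raised up front. The proof itself does not go through Sheppard's formula: it writes $g_j=\rho_j g_1+\sqrt{1-\rho_j^2}\,\zeta_j$, observes that conditional on $\zeta_{\max}:=\max_j|\zeta_j|$ the edge can be cut only if $g_1$ lands in a window of width $O(\zeta_{\max}\sqrt{\tilde\theta_m})$ around the threshold, and then integrates over the distribution of $\zeta_{\max}$. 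What this buys is a clean extension to hyperedges of arity $d>2$ (where Sheppard-type identities become unwieldy) and a bound phrased directly in terms of $\alpha_e=\max_i\min(\mu'_i,1-\mu'_i)$ rather than routing through the unstable quantity $1/\sqrt{\min(\mu_i,\mu_j)}$; your approach is arguably more elementary for $d=2$ but would not generalize as readily.
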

We point out that earlier works which employ Gaussian thresholding rounding study it for the setting where the random projection is done along the $\{u_i\}_{i \in V}$ vectors as opposed to the $\{z_i\}_{i \in V}$ vectors. Consequently, even for edges, one still needs derive the above bound specifically for our rounding scheme, although it follows in a straightforward manner using the techniques of \cite{CMM06a}. Equipped with the above bound, (assuming $G$ is $\ell$-regular for simplicity) one can conclude the analysis of the approximation guarantee by bounding the expansion of the rounded set as:
\begin{equation}				\label{eqn:final}
\frac{\sum_{(i,j)} \|u_i - u_j \| \sqrt{\delta \log(1/\delta)}}{\delta \ell n}
\leq \sqrt{\frac{|E|}{\ell n}} \cdot \sqrt{\frac{\sum_{(i,j)} \|u_i - u_j\|^2 \log(1/\delta)}{\delta \ell n}} 
\leq \sqrt{\phie_\delta \log(1/\delta)}. 
\end{equation}
From the above discussion, it is easy to infer that a key bottleneck towards establishing analogous approximation guarantees for \hypergraphsse~would be to extend Fact \ref{fact:edge-sep} to $d$-sets of correlated Gaussians (where $d > 2$). In fact, looking at the form of $\alpha_{\rm SSE}$, one might even be tempted to suggest that the following holds\footnote{The intuition behind suggesting this bound is the following well-known Gaussian geometric fact (e.g., Proposition 8.6~\cite{GL20arxiv}): for $\epsilon$-small enough as a function of $\delta,d$, if $g_1,\ldots,g_d$ were $(1 - \epsilon)$-correlated copies of a standard Gaussian $g \sim N(0,1)$, then the RHS of \eqref{eqn:cut-bound} can be shown to be $\Theta(\sqrt{\epsilon \log d \log(1/\delta)})$.}:
\begin{question}					\label{ques:cut-bound}
	Given unit vectors $u_1,\ldots,u_d$, let $g_i := \Langle g, \frac{z_i}{\|z_i\|} \Rangle$ be the corresponding projected Gaussians (as in Figure \ref{fig:round-basic}). Furthermore, let $H_\delta \subset \mathbbm{R}$ be the halfspace of Gaussian volume $\delta$. Then
	\begin{equation} \label{eqn:cut-bound}
	\Pr_{(g_i)_{i \in [d]}}\Big[\exists i,j \mbox{ s.t. } \mathbbm{1}_{H_\delta}(g_i) \neq \mathbbm{1}_{H_\delta}(g_j) \Big] \overset{\mbox{\large{ \bf ?}}}{\leq} \left\{ \max_{i,j \in e} \|{u}_i - {u}_j\|\right\} \cdot \sqrt{\delta \log(1/\delta) \log(d)},
	\end{equation}
\end{question}
If the above bound i.e, \eqref{eqn:cut-bound}, held for arbitrary sets of vectors $u_1,\ldots,u_d$, then plugging in such a bound into the analysis described above would immediately yield an algorithm which outputs $~\delta$-weight sets with hyperedge expansion at most $\sqrt{\phi^{\sf H}_\delta \log (d) \log(1/\delta)}$ and we would be done. Unfortunately, it turns out that the above does not actually hold for arbitrary choices of $u_1,\ldots,u_d$, and the witness to that can again be constructed from the integrality gap example, as we discuss below. 

Let $(u_i)_{i \in [d]}$ be the vector assignment for the integrality gap assignment i.e,  $u_i = \delta\uphi + \sqrt{\delta - \delta^2} \cdot \bar{z}_i$ for every $i \in V$ where recall that $\{\bar{z}_i\}$ are orthonormal vectors which are all orthogonal to $\uphi$. Then observe that for $g_i = \langle g,\bar{z}_i \rangle$, the Gaussians $g_1,\ldots,g_d$ are {\em independent} standard Gaussian random variables and hence they separate the edge with probability at least $1 - (1 - \delta)^d - \delta^d \geq \Omega(1)$ when $d = \Theta(1/\delta)$. On the other hand, the RHS of \eqref{eqn:cut-bound} is at most  $\sqrt{\delta \cdot d^{-1} \cdot\log(1/\delta) \log(d)} \ll O(1)$ when $\delta$ is small, and hence the above assignment of vectors $\{v_i\}_{i \in [d]}$ violates \eqref{eqn:cut-bound}. Therefore, understanding under what conditions one can establish \eqref{eqn:cut-bound} and how can one analyze hyperedges which for which the local distribution violates \eqref{eqn:cut-bound} is one of the key bottlenecks towards designing and analyzing an approximation algorithm for \hypersse. 

\subsection{Our Approach: Nice edges and Gap edges.}

As described in the previous part, identifying robust characterizations of $(g_i)_{i \in [d]}$ from Question \ref{ques:cut-bound} under which one can expect to establish \eqref{eqn:cut-bound}, and ways to handle edges for which the local distribution does not allow for \eqref{eqn:cut-bound}, is one of the key challenges towards improving on existing bounds. Interestingly, it turns out that the fact that our example which violates \eqref{eqn:cut-bound} can be derived from the vector assignment of the integrality gap instance is not entirely coincidental. In fact, we can systematically show that collections of Gaussians (identified by the hyperedges in the hypergraph) that violate \eqref{eqn:cut-bound} have correlation structure which geometrically resemble the integrality gap vector assignment, in a somewhat approximate sense. This connection allows us to classify hyperedges as ``nice'' edges -- for which \eqref{eqn:cut-bound} holds, and consequently we can argue Gaussian stability type approximation bounds, and ``gap'' edges -- which resemble the integrality gap assignment, for which way pay a cost of $\tilde{O}(d \cdot{ \sf Cost}(e))$ -- where ${\sf Cost}(e)$ is the cost contributed by $e$ to the objective. 

Towards establishing the above, it is useful to adopt the probabilistic view point of pseudo-distributions\footnote{Informally, a degree-$r$ pseudo-distribution is a collection of distributions on subsets of variables of sizes at most $r$ that are locally consistent, see Section \ref{sec:lass} for a more formal description.} corresponding to the vector solution. Given a hyperedge $e = \{1,\ldots,d\}$, let $(u_i)_{i \in [d]}$ be the corresponding set of vectors, for which we can write $u_i = \mu_i\uphi + z_i$ for every $i \in [d]$, with $z_i$ being the component of $u_i$ orthogonal to the one-vector $\uphi$. This is a useful decomposition since the various quantities derived from it can also be interpreted as probabilistic quantities related to the corresponding degree-$2$ pseudo-distribution. In particular, denoting the local variables associated with the pseudo-distribution as $X_1,\ldots,X_d$, we have that $\mu_i$ represents the (pseudo) probability of setting the variable $X_i$ to $1$, and $\langle z_i,z_j \rangle/\|z_i\|\|z_j\|$ represents the (pseudo) correlation between variables $X_i$ and $X_j$. Consequently, note that the Gaussian ensemble $(g_i)_{i \in [d]}$ matches the correlation structure of the local variables $(X_i)_{i \in [d]}$ i.e, correlation between $g_i$ and $g_j$ is identical to that of $X_i$ and $X_j$, for every $i,j \in [d]$. Furthermore, for any $i,j$, we have that $\|u_i - u_j\|^2$ represents (up to multiplicative factors) the probability of event $\{X_i \neq X_j\}$. 

Now, from standard results in Gaussian noise stability\footnotemark[6] it is understood that one can establish \eqref{eqn:cut-bound} if the corresponding Gaussians are correlated enough. From a geometric perspective, this is expected to happen when the vectors $v_i$ are close enough as a function of the biases $\mu_i$. This intuition can be formalized using using the following probabilistic (and equivalently, geometric) inequality: for any pair of joinly distributed $\{0,1\}$-random variables $X_i,X_j$, we have 
\begin{equation}				\label{eqn:corr-inf}
\underbrace{\rho(X_i,X_j) \geq 1 - \frac{\tPr\big[X_i \neq X_j\big]}{\sigma(X_i)\sigma(X_j)}}_{{\it Probabilistic}}
\qquad\qquad\qquad\qquad
\underbrace{\langle \bar{z}_i,\bar{z}_j \rangle \geq 1 - \frac{\|u_i - u_j\|^2}{\|z_i\|\|z_j\|}}_{{\it Geometric}}
\end{equation}
where $\rho(\cdot,\cdot)$ denotes the correlation coefficient and $\sigma(\cdot)$ denotes the standard deviation. In particular, assuming $\mu_i = \delta \in (0,1)$ for every $i \in [d]$ for simplicity, one can use the above inequality to derive a relationship between the minimum correlation between the Gaussian random variables and the contribution of the hyperedge to SDP objective.
\begin{equation}				\label{eqn:rho-cut}
	\rho_e: =\min_{i,j \in e} \rho(g_i,g_j) \geq 1 - \frac{\max_{i,j \in [d]} \|u_i - u_j\|^2}{\delta} 
\end{equation}
The above inequality immediately suggests that one can potentially analyze the performance of Gaussian rounding by considering the following extreme (but not exhaustive) cases. For brevity, denote $\nu_e := \max_{i,j \in [d]} \|u_i - u_j\|^2$. Then,

{\bf Case (i)} ``{\sf Nice}$^*$ Edges'': Suppose $\nu_e \ll \delta$, then it follows that $\rho_e \sim 1$ i.e., we are in the almost correlated regime, where we can hope to show that Gaussian rounding yields a cut probability of $O(\sqrt{\nu_e} \cdot\sqrt{\delta \log(d) \log(1/\delta)})$ as suggested by \eqref{eqn:cut-bound}.

{\bf Case (ii)}``{\sf Gap}$^*$ Edges'': Suppose $\nu_e \geq \delta$. Then, we can upper bound the probability of $e$ getting cut by the probability of at least one of the variables being set to $1$, which using \eqref{eqn:del-bias} is at most $d \cdot \delta \leq d \cdot \nu_e$. Here, the contribution matches the multiplicative loss due to the integrality gap of the SDP.

The above classification into {\sf Nice}$^*$ and {\sf Gap}$^*$ edges via \eqref{eqn:rho-cut} forms the basis of our approach towards designing the algorithm for Theorem \ref{thm:hsse-main}. However, note that the above only works in the ``idealized setting'' with the following unreasonable assumptions hold (i) all hyperedges can be classified into the idealized definitions of {\sf Nice}$^*$ and {\sf Gap}$^*$ edges, and (ii) locally the biases $\mu_i$ of hyperedges should all be identical to $\delta$. Towards dealing with the above, we need to consider a modified rounding algorithm and additional pre-processing steps, and consider more relaxed notions of Nice and Gap edges which will allow us to smoothly interpolate between the two settings while covering all possible cases. Consequently, our modified rounding and relaxed notion of nice'ness imply that deriving \eqref{eqn:cut-bound} in the new setting becomes significantly more complicated. 

{\bf The Rounding Scheme, Gaussian Rounding Lemma, Additional Bottlenecks}. A first step towards addressing issues (i) and (ii) outlined above is to use a (standard) alternative rounding scheme~\cite{FG95,RT12} where instead of using a fixed bias for thresholding, we make the thresholds vertex sensitive. Formally, we replace the thresholding step in Figure \ref{fig:round-basic} with the following.
\[
\triangleright ~~ \textnormal{\it For every $i \in V$, include $i \in S$ if $g_i \leq \Phi^{-1}(\mu_i)$,}
\]
i.e., the threshold for vertex $i \in V$ is $\Phi^{-1}(\mu_i)$, where recall that $\Phi(\cdot)$ is the Gaussian CDF function. The varying thresholds allow us to define notions of {\sf Gap} and {\sf Nice} in the following more edge sensitive way. Fix a hyperedge $e = [d]$, and let $(v_i)_{i \in e}$ be the corresponding vectors from the vector solution of the SDP. As before, let us write $u_i = \mu_i\uphi + z_i$ where $z_i \perp \uphi$. Furthermore, for simplicity, we may assume $0 \leq \mu_d \leq \cdots \mu_1 \leq 1/2$ and as before, let $\nu_e$ denote the contribution of the hyperedge to the SDP objective. Then, 
\[
\triangleright ~~ \textnormal{\it In the above setting, we say that the edge $e$ is {\sf Nice} if $\mu_1 \geq \nu_e \cdot \log(d)\log(1/\delta)^2$, and {\sf Gap} otherwise.}
\]

Given this setup, we can now derive the intended bounds for the probability of any hyperedge getting cut under the rounding scheme. As before, for {\sf Gap} edges, we can bound the probability of the hyperedge getting cut by the probability of at least one of the variables being included in the set, which by a union bound and the definition of {\sf Gap} edges is at most $d \cdot \mu_1 \leq \tilde{O}(d \cdot \nu_e)$. On the other hand, if the edge is {\sf Nice}, then we can establish bounds on the cut probability using the following Gaussian rounding lemma which is the key technical contribution of this work.

\begin{lemma}[Informal version of Lemma \ref{lem:cut-bound1}]					\label{lem:cut-informal}
	Suppose the hyperedge $e = [d]$ is {\sf Nice}. Furthermore, suppose we have $\mu_i \geq \delta^{10}$ for every $i \in V$ and $|\mu_i - \mu_j| \leq \nu_e$ for every $i,j \in [d]$. Then, for $g \sim N(0,1)$, the ensemble of Gaussians $\left(g_i := \langle g, \bar{z}_i \rangle\right)_{i \in [d]}$ satisfy:
	\[
	\Pr_{g_1,\ldots,g_d} \Big[\exists \ i,j \in [d], \ g_i \leq \Phi^{-1}(\mu_i)~~ \& ~~ g_j > \Phi^{-1}(\mu_j)\Big] \lesssim \sqrt{\mu_1 \nu_e \log(d) \log(1/\delta)}.
	\]
\end{lemma}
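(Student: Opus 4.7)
My plan is to prove Lemma \ref{lem:cut-informal} by first reducing to an equi-correlated Gaussian model via Slepian's inequality, and then performing an explicit integration against a shared Gaussian direction --- a natural generalization of the CMM-style two-point analysis underlying Fact \ref{fact:edge-sep}. Normalizing $\bar z_i := z_i/\|z_i\|$, each $g_i = \langle g, \bar z_i\rangle$ is a standard Gaussian with pairwise correlations $\rho_{ij} = \langle \bar z_i, \bar z_j\rangle$. Using $\|z_i\|^2 = \mu_i - \mu_i^2$ together with the niceness hypothesis $\mu_1 \geq \nu_e \log(d) \log^2(1/\delta)$, the assumption $|\mu_i - \mu_j| \leq \nu_e$, and the geometric inequality in \eqref{eqn:corr-inf}, one obtains $\|z_i\|^2 = \mu_i(1 \pm o(1)) \approx \mu_1$ and $\rho_{ij} \geq 1 - O(\nu_e/\mu_1) =: \rho^*$. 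The thresholds $\tau_i := \Phi^{-1}(\mu_i)$ satisfy $|\tau_i| \lesssim \sqrt{\log(1/\delta)}$ (using $\mu_i \geq \delta^{10}$) and cluster around a common $\tau^*$ with $|\tau_i - \tau_j| \lesssim \nu_e \sqrt{\log(1/\delta)}/\mu_1$ by the mean value theorem.

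By Slepian's inequality, both $\Pr[\bigcap_i\{g_i > \tau_i\}]$ and $\Pr[\bigcap_i\{g_i \leq \tau_i\}]$ are nondecreasing in the pairwise correlations, so replacing the actual covariance by the equi-correlated matrix $\rho^* J + (1-\rho^*) I$ (with identical marginals) only \emph{increases} the cut probability $1 - \Pr[\bigcap_i A_i] - \Pr[\bigcap_i A_i^c]$, where $A_i := \{g_i > \tau_i\}$. In the reduced model, write $g_i = \sqrt{\rho^*}\,W + \sqrt{1-\rho^*}\,\xi_i$ with $W, \xi_1, \ldots, \xi_d$ i.i.d.\ standard Gaussians. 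Conditional on $W$, the $A_i$'s are independent with probability $q_i(W) := \bar\Phi((\tau_i - \sqrt{\rho^*}W)/\sqrt{1-\rho^*})$, giving
\[
\Pr[\textup{cut}\mid W] \;=\; 1 - \textstyle\prod_i q_i(W) - \prod_i (1-q_i(W)) \;\leq\; \min\big(1,\ \sum_i q_i(W),\ \sum_i(1-q_i(W))\big).
\]
The integrand is concentrated on the transition window $W \in \tau^*/\sqrt{\rho^*} \pm O(\sqrt{(1-\rho^*)\log d})$, outside of which every $q_i(W)$ is either $\leq 1/d$ or $\geq 1-1/d$. Inside the window I apply the trivial bound $\leq 1$; its Gaussian $W$-measure is at most $\phi(\tau^*)\cdot O(\sqrt{(1-\rho^*)\log d}) \lesssim \mu_1\sqrt{\log(1/\mu_1)(1-\rho^*)\log d}$ by the standard tail-density estimate $\phi(\tau^*) \lesssim \mu_1\sqrt{\log(1/\mu_1)}$. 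A direct Gaussian computation shows the outside-window tail integral $\int d\,q_i(W)\,\mathbf{1}[q_i(W) < 1/d]\,\phi(W)\,dW$ is of the same order or smaller. Substituting $1 - \rho^* \leq O(\nu_e/\mu_1)$ then yields $\Pr[\textup{cut}] \lesssim \sqrt{\mu_1 \nu_e \log d \log(1/\delta)}$.

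\textbf{Main obstacle.} The technically delicate step is controlling the outside-window tail contribution from $q_i(W) \in (0, 1/d)$ without losing a factor of $d$. This works because, after the Slepian reduction, all the $q_i(W)$'s track the single shared direction $W$ and are simultaneously small or simultaneously large, so $\min(\sum_i q_i, \sum_i (1-q_i))$ effectively saturates at $1$ on a window only $\sqrt{\log d}$ times wider than in the arity-$2$ case of Fact \ref{fact:edge-sep}, rather than $\sqrt{d}$ times wider as a naive star union bound would give. A secondary issue is merging the slightly different transition windows for different $\tau_i$; by niceness, $|\tau_i - \tau^*| \ll \sqrt{(1-\rho^*)\log d}$, so a single enlarged window covers all $i$ without loss.
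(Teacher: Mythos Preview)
Your approach is correct but takes a genuinely different route from the paper's proof of Lemma~\ref{lem:cut-bound1}. The paper does \emph{not} invoke Slepian. Instead it fixes the vertex with largest bias (say index~$1$), writes $g_j = \rho_j g_1 + \sqrt{1-\rho_j^2}\,\zeta_j$ for $j\ge 2$ with the $\zeta_j$'s jointly Gaussian but independent of $g_1$, and conditions on $\zeta_{\max}:=\max_j|\zeta_j|$. Two deterministic claims (Claims~\ref{cl:lower-thr} and~\ref{cl:upper-thr}) show that, given $\zeta_{\max}=a$, the edge can be cut only if $g_1$ falls in an interval of width $O(a\sqrt{\tilde\theta_m})$ around the threshold, where $\tilde\theta_m=\max_j(1-\rho_j^2)$. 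The proof then integrates over $a$, splitting at $a=\sqrt{C_1^*\log d}$ and applying the CDF perturbation bounds (Facts~\ref{prop:half-error},~\ref{fact:incr}) in the small-$a$ regime and the Gaussian-max tail (Fact~\ref{fact:gauss-max}) in the large-$a$ regime.

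What each approach buys: your Slepian reduction to the equicorrelated model yields genuine conditional independence given $W$, which makes the bound $\Pr[\text{cut}\mid W]\le\min(1,\sum q_i,\sum(1-q_i))$ available and conceptually clean; the price is that Slepian is an extra ingredient, and the ``direct Gaussian computation'' you defer for the outside-window tails is exactly where the work hides (it does go through, but one needs the pointwise bound $\phi((\tau_i-\sqrt{1-\rho^*}u)/\sqrt{\rho^*})\lesssim\phi(\tau_i)$ together with $\int_{C\sqrt{\log d}}^\infty\bar\Phi(u)\,du\lesssim d^{-C^2/2}/\log d$ to beat the factor of $d$). The paper's approach avoids Slepian entirely and never needs conditional independence---it handles the dependent $\zeta_j$'s only through their maximum, using the elementary observation that the max of $d$ (possibly dependent) standard Gaussians has second moment $O(\log d)$. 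The paper's decomposition also sidesteps your secondary issue of merging different transition windows, since conditioning on $\zeta_{\max}$ directly yields a single interval for $g_1$. Both routes land on the same bound, and your reduction via Slepian is a perfectly valid alternative.
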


Establishing the above involves most of the technical work -- this is mainly because, unlike the setting of Question \ref{ques:cut-bound}, the above lemma applies in a much broader sense where the biases $\mu_i$ are not fixed to $\delta$ but can vary from vertex to vertex. The proof of the lemma involves a carefully combined application of several ingredients including two sided perturbation bounds of the Gaussian CDF function $\Phi(\cdot)$ (Facts \ref{prop:half-error}, \ref{fact:incr}), large deviation inequalities for maximum of value of $d$-Gaussian random variables (Facts \ref{fact:gauss-max}, \ref{fact:gauss-max1}), and inequalities derived using the probabilistic (pseudo-distribution) interpretation of the vector solution (Lemma \ref{lem:ineq}). Additionally, we point out that the above lemma still doesn't apply unconditionally for {\sf Nice} edges, and in particular, requires them to additionally satisfy the following:

\begin{itemize}
	\item {\it $\delta$-bounded away'ness}: We need the biases $\mu_i$ to be at least $\delta^{O(1)}$ -- this is crucial in recovering the $\log(1/\delta)$-term in the approximation guarantee.
	\item {\it $\ell_1$-bounded'ness}. We need the absolute differences of the biases $\mu_i,\mu_j$ to be bounded in an $\ell_1$-sense by the SDP cost $\nu_e$. 
\end{itemize}

The above conditions are again needed due to additional complications that arise due to the fact that the thresholds in our rounding scheme now are vertex sensitive. The former is addressed by a delicate pre-processing step which ensures that the modified vector solution satisfies the bounded'away ness condition without affecting the overall geometry of the vector solution. The $\ell_1$-boundedness is ensured by introducing $\ell_1$-constraints in the SDP relaxation, and then pre-deleting edges which violate the constraint. We elaborate on these in the next section.


\subsection{The Final Algorithm and its Analysis} 

Our actual algorithm uses the following SoS lifting of the basic SDP described in Figure \ref{fig:round-basic}.

\begin{figure}[ht!]
	\begin{mdframed}
	\begin{center}
		\underline{\bf SDP Relaxation for \hypersse}
	\end{center}
		\begin{eqnarray}
		\min & \frac{1}{\delta n} \sum_{e \in E}w(e) \max_{i,j \in e}\tPr_{(X_i,X_j) \sim \mu_{ij}}\left[X_i \neq X_j\right] \\
		& \E_{i \sim V_G}\tPr_{X_i \sim \mu_i | X_S \gets \alpha} \Big[X_i = 1\Big] =  \delta & \forall S : |S| \leq R-1, \alpha \in \{0,1\}^S \label{eqn:card} \\
		& |\mu_{i} - \mu_{j}| \leq \tPr\left[X_i \neq X_j\right] & \forall i,j \in e, \forall e \in E 			\label{eqn:ell-1}
		\end{eqnarray}
	\end{mdframed}
	\caption{SoS SDP}
	\label{fig:sos-basic}
\end{figure}

For ease of exposition, the objective and constraints in the above SDP are expressed in terms of constraints on the pseudo-distribution. Denoting $(X_i)_{i \in V}$ as the set of local variables, the objective is a sum of terms corresponding to hyperedges $e \in E$, where for every hyperedge $e \in E$, the corresponding term in the objective measures the maximum probability of separation $\tPr_{(X_i,X_j) \sim \mu_{ij}}\left[X_i \neq X_j\right]$ among all pairs $i,j \in e$ under the pseudo-distribution. Note that this is equivalent to the corresponding vectorized term $\max_{i,j \in e}\|u_i - u_j\|^2$ (up to multiplicative constants), and hence the objective is identical to that of SDP \ref{sdp:basic} (up to scaling). In addition, as in SDP \ref{sdp:basic}, we introduce constraints on the biases \eqref{eqn:card} to control the size of the set rounded off by the algorithm. Finally, we also include $\ell_1$-constraints for every hyperedge $e$ which says that the biases $\mu_i,\mu_j$ for vertices $i,j$ inside the hyperedge $e$ cannot deviate by more than the probability of the event $X_i \neq X_j$. Now we describe the rounding scheme in Figure \ref{fig:hsse-round} which happens in four steps.

\begin{figure}[ht!]
	\begin{mdframed}
		\begin{center}{\underline{\bf Rounding for \hypersse}}\end{center}
		\vspace{5pt}
		{\bf Input}. Let $\{\mu'_e\}_{e \in E}$ be the $R$-round optimal pseudo-distribution from solving the SDP relaxation from Figure \ref{fig:sos-basic}.
		\begin{itemize}
				\item[I] {\bf Conditioning}. Choose a random subset $S$ of size $t$ and sample a labeling $X_S \gets \alpha \sim \mu'_S$ from the local distribution on $S$. Let $\mu := \mu'|_{X_S \gets \alpha}$ denote the resulting degree $2$-SoS pseudo-distribution conditioned on $X_S \gets \alpha$. 
				\item[II] {\bf {Edge Deletion}}. Denoting $\nu_e = \max_{i,j \in e} \Pr_{\mu}\left[X_i \neq X_j\right]$, we delete all the hyperedges with $\nu_e \geq 1/10$.
				\item[III] {\bf Pre-processing}. Let ${\bf V}:=\{v_i\}_{i \in V} \cup \{\uphi\} \subset \mathbbm{R}^{\ell}$ be the $\{\pm 1\}$-vector solution\footnote{For technical reasons, it is easier to work with the $\{\pm 1\}$-version of the vector solution instead of $\{0,1\}$. However, the two are related by the identity $v_i : \uphi - 2 u_i$. Consequently, we have the equivalence $u_i = \mu_i \uphi + z_i$ if and only if $v_i = (1 - 2\mu_i) \uphi - 2 z_i$.} corresponding to $\mu$. Using ${\bf V}$, we construct a new vector solution $\{v'_i\}_{i \in V} \cup \{\uphi\}$ as follows. Let $v_i = (1 - \mu_i) \uphi - 2z_i$ for every $i \in V$ as before, and let $\hat{z}$ be a unit vector orthogonal to ${\bf V}$. Then define new vectors $\{v'_i\}_{i \in V}$ as 
				\[
				v'_i = \frac{v_i - \theta \cdot \hat{z}}{\sqrt{1 + \theta^2}},
				\]
				where $\theta := \delta^{100}$.
				\item[IV] {\bf Gaussian Rounding}. Writing $v'_i = (1 - 2\mu'_i) \uphi - 2z'_i$ for every $i \in V$, we round off an integral solution as follows: sample Gaussian vector $g \sim N(0,1)^\ell$ where $\ell$ is the ambient dimension of ${\bf V'}$ and construct $S \subset V$ as 
				\[
				S:= \left\{ i \in V \Big| \Langle g, \frac{z'_i}{\|z'_i\|} \Rangle \leq \Phi^{-1}(\mu'_i)\right\}.
				\]
				Output $S$.
		\end{itemize}
	\end{mdframed}
\caption{HSSE Rounding}
\label{fig:hsse-round}
\end{figure}

We give a brief sketch of the analysis for the above rounding algorithm.

{\bf Conditioning}.  The first step ensures that with high probability, the average correlation between the random variables is small. This ensures that the Gaussian rounding in step IV returns a set which is close to the intended size with large probability. 

{\bf Edge Deletion}. This step along with the $\ell_1$-constraints ensures that for the surviving hyperedges, the ``biases'' (the bias of a vertex $i$ is defined as $\mu_{i}$) of vertices in a hyperedge are ``close'' to each other.  A simple application of Markov's inequality shows the cost paid due to the deleted edges at most $10 \cdot {\sf Sdp\mbox{-}Cost}$.

{\bf Pre-processing}. Here, given the degree-2 SoS solution $\{\mu_{S,\alpha}\}$, let $\{v_i\}_{i \in V} \cup \{\vphi\}$ be the corresponding unique $\{\pm 1\}$-vector solution. As before, we express $v_{i} = (1 - \mu_{i}) \uphi - 2 z_i$ where $\uphi$ is the one vector of the solution, and $\mu_{i} = \Pr_{X_i \sim \mu}\left[X_{i} = -1\right]$. From this, we construct a new vector solution $\{v'_i\}_{i \in V}$ given as 
\[
v'_i \defeq \frac{v_i - \theta\hat{z}}{\sqrt{1 + \theta^2}} = \frac{(1 - 2\mu_i) \uphi - 2z_i - {\theta}\hat{z}}{\sqrt{1 + \theta^2}} 
\]
where $\theta = \delta^{O(1)}$ is a constant depending only on $\delta$ and $\hat{z}$ is a unit vector that is orthogonal to $\uphi,\{z_i\}_{i \in V}$. Although the resulting set of vectors need not constitute a degree-$2$ SoS solution, (for e.g., the Booleanity condition $x^2 = x$ need not be satisfied), we can still pretend that they come from a degree-$2$ pseudo-distribution and write them as $v'_i = (1 - 2\mu'_i)\uphi - 2z'_i$, where $\uphi$ is the original one vector. A delicate analysis shows that they still {\em approximately} satisfy the properties of the original degree-$2$ solution, in addition to satisfying the $\delta$-bounded away'ness condition needed to instantiate Lemma \ref{lem:cut-informal}. We point the readers to Section \ref{sec:pre-proc} for the full suite of properties we re-establish for the new vector solution.  

{\bf Gaussian Rounding}. Finally, we are in the setting where every surviving hyperedge $e \in E$ satisfies the $\delta$-bounded away'ness condition, and the $\ell_1$-bounded'ness condition required for instantiating Lemma \ref{lem:cut-informal}. Now, as before, we shall now break the cut probability analysis into two cases. If $e$ is {\sf Nice}, then Lemma \ref{lem:cut-informal} applies and we can bound the probability of the hyperedge getting cut by $\sqrt{\nu_e \alpha_e \log d \log(1/\delta)}$ (where $\alpha_e$ is the bias farthest away from $\{0,1\}$). On the other hand, if the edge is a {\sf Gap} edge, then we know $\mu_i \leq \tilde{O}(\nu_e)$ for every $i \in e$ and hence by a union bound, the probability of the edge getting cut is bounded by $\tilde{O}(d \cdot \nu_e)$. Combining these cut probability bounds with steps analogous to \eqref{eqn:final} yields the claimed approximation guarantee.

\subsection{Proof Sketch of Gaussian Rounding Lemma}

We conclude our discussion by including a proof sketch of Lemma \ref{lem:cut-informal}. For simplicity, we will assume that $u_i = \delta \uphi + \sqrt{\delta(1 - \delta)} z_i$, and in particular all the rounding thresholds are identical to $t := \Phi^{-1}(\delta)$. Let $g_1,\ldots,g_d$ be the corresponding set of Gaussians. Let $g_1,\ldots,g_d$ be the corresponding set of Gaussians. For $j = 2,\ldots,d$, let $\rho_j$ denote the correlation between $g_1$ and $g_j$. Then for every $j \geq 2$, we can write $g_j = \rho_j g_1 + \sqrt{1 - \rho^2_j}\zeta_j$, where $\zeta_j$ is a $N(0,1)$-Gaussian random variable that is independent of $g_1$. Since $g_j = \langle g, \bar{z}_j \rangle$ by definition, using simple algebra, we can lower bound the correlation $\rho_j$ as a function of the cut-objective as (see \eqref{eqn:corr-inf}):
\[
\rho_j \geq 1 - \frac{\|u_i - u_j\|^2}{4\delta} \geq 1 - \frac{\nu_e}{\delta},
\]
which in turn implies that $\sqrt{1 - \rho^2_j} \leq 4\sqrt{\nu_e/\delta}$ (when $\nu_e \ll \delta$). Now, note that since $\zeta_1,\ldots,\zeta_d$ are $N(0,1)$ random variables, with high probability at least $1 - e^{-d}$ 
\[
\zeta_{\rm max} := \max_{j \geq 2} |\zeta_j|
\]
is bounded by $4\sqrt{\log d}$. Now conditioned on this\footnote{Note that this event is independent of $g_1$.}, for any fixing of $g_1$, we then have that for every $j \geq 2$, 
\[
|g_j - g_1| \leq (1 - \rho_j)|g_1| + \sqrt{1 - \rho^2_j}\cdot\zeta_{\rm max} \lesssim \frac{\nu_e |g_1|}{\delta} + \sqrt{\frac{\nu_e}{\delta}}\cdot\zeta_{\rm max} \overset{1}{\lesssim} \sqrt{\frac{\nu_e}{\delta}} \cdot \zeta_{\rm max},
\] 
where step $1$ holds w.h.p over the choice of $g_1$. Now the main observation in the argument that given that $g_j$'s are close to $g_1$, the only way the set of Gaussians $g_1,\ldots,g_d$ can be cut\footnote{Here, the Gaussians $g_1,\ldots,g_d$ is said to be ``cut'' if $g_i \leq \Phi^{-1}(\delta)$ and $g_j > \Phi^{-1}(\delta)$ for some $i,j \in [d]$} is if $g_1$ is close to threshold $t =\Phi^{-1}(\delta)$.
\begin{observation}[Informal version of Claims \ref{cl:lower-thr} and \ref{cl:upper-thr}]
	For any fixing of $\zeta_{\rm max}$, the edge $(g_1,\ldots,g_d)$ can be cut only if 
	\[
	g_1 \in \left[\frac{t}{1 - \frac{\nu_e}{\delta}} - 4\zeta_{\rm max}\sqrt{\nu_e/\delta}, t + 4\zeta_{\rm max}\sqrt{\nu_e/\delta}\right]
	\] 
\end{observation}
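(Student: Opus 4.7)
The plan is to unpack what it means for the ensemble $(g_1,\ldots,g_d)$ to be ``cut'' and derive, deterministically, the interval in which $g_1$ must lie. Recall the conditional representation $g_j = \rho_j g_1 + \sqrt{1-\rho_j^2}\,\zeta_j$ for $j \geq 2$, where $\zeta_j \sim N(0,1)$ is independent of $g_1$. The algebraic building block I will use throughout is the correlation bound $\rho_j \geq 1-\nu_e/\delta$ from \eqref{eqn:rho-cut}, which via $1-\rho_j^2 \leq 2(1-\rho_j)$ yields $\sqrt{1-\rho_j^2} \leq \sqrt{2\nu_e/\delta}$. I will split the cut event into two complementary cases depending on which side of the threshold $t = \Phi^{-1}(\delta)$ the Gaussian $g_1$ falls, each corresponding to one of the two claims (\ref{cl:lower-thr} and \ref{cl:upper-thr}).

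\textbf{Case 1: $g_1 \leq t$ and some $g_j > t$ for $j \geq 2$.} In this case I will substitute the conditional representation into $g_j > t$ and isolate $g_1$. Concretely, $\rho_j g_1 > t - \sqrt{1-\rho_j^2}\,\zeta_j \geq t - \sqrt{1-\rho_j^2}\,\zeta_{\rm max}$ using $\zeta_j \leq \zeta_{\rm max}$. Dividing by the positive $\rho_j$ produces
\[
g_1 \;>\; \frac{t}{\rho_j} \;-\; \frac{\sqrt{1-\rho_j^2}}{\rho_j}\,\zeta_{\rm max}.
\]
Because $t < 0$ and $\rho_j \in [1-\nu_e/\delta,\,1]$, we have $t/\rho_j \geq t/(1-\nu_e/\delta)$; and since $\rho_j$ is bounded below by a constant, the coefficient $\sqrt{1-\rho_j^2}/\rho_j$ is at most $4\sqrt{\nu_e/\delta}$ (absorbing $\sqrt{2}$ and $1/\rho_j$ into the constant $4$). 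This gives the lower endpoint $g_1 \geq t/(1-\nu_e/\delta) - 4\zeta_{\rm max}\sqrt{\nu_e/\delta}$ of the advertised interval.

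\textbf{Case 2: $g_1 > t$ and some $g_j \leq t$ for $j \geq 2$.} Here I will run the same manipulation in the opposite direction. From $\rho_j g_1 \leq t - \sqrt{1-\rho_j^2}\,\zeta_j \leq t + \sqrt{1-\rho_j^2}\,\zeta_{\rm max}$, dividing by $\rho_j > 0$ yields $g_1 \leq t/\rho_j + (\sqrt{1-\rho_j^2}/\rho_j)\,\zeta_{\rm max}$. Now since $t < 0$ and $\rho_j \leq 1$, the term $t/\rho_j$ satisfies $t/\rho_j \leq t$, so this bound actually improves to $g_1 \leq t + 4\zeta_{\rm max}\sqrt{\nu_e/\delta}$, which is the upper endpoint. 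The asymmetry between the two endpoints (one gets a multiplicative $1/(1-\nu_e/\delta)$ blow-up, the other does not) is a direct consequence of $t$ being negative and therefore responding differently to division by $\rho_j \in (0,1]$ on its two sides.

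\textbf{Combining.} Since any cut event must fall into one of the two cases above, taking the union yields precisely $g_1 \in \bigl[\,t/(1-\nu_e/\delta) - 4\zeta_{\rm max}\sqrt{\nu_e/\delta},\; t + 4\zeta_{\rm max}\sqrt{\nu_e/\delta}\,\bigr]$ as claimed. I do not foresee any serious obstacle: the argument is purely deterministic, relying only on the lower bound on correlations and the elementary identity $g_j = \rho_j g_1 + \sqrt{1-\rho_j^2}\zeta_j$. The only place that requires care is tracking signs to see why the upper endpoint does not inherit a $1/(1-\nu_e/\delta)$ factor. Once this observation is in hand, the probability of a cut is controlled by $\Pr_{g_1}[g_1 \text{ lies in an interval of length } O(\zeta_{\rm max}\sqrt{\nu_e/\delta} + |t| \nu_e/\delta)]$ near the threshold $t$, which together with the two-sided Gaussian density estimates for $\Phi(\cdot)$ (Facts \ref{prop:half-error}, \ref{fact:incr}) and the tail bound on $\zeta_{\rm max}$ (Facts \ref{fact:gauss-max}, \ref{fact:gauss-max1}) feeds directly into the downstream calculation proving Lemma~\ref{lem:cut-informal}.
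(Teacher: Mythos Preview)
Your proposal is correct and matches the paper's approach: the formal Claims~\ref{cl:lower-thr} and~\ref{cl:upper-thr} are proved as the contrapositive (assume $g_1$ is outside the interval and show all $g_j$ land on the same side of their thresholds), whereas you argue the direct implication (assume a cut and pin down $g_1$), but the underlying algebra---substituting $g_j = \rho_j g_1 + \sqrt{1-\rho_j^2}\,\zeta_j$, bounding $|\zeta_j|\le\zeta_{\max}$, and using $\rho_j \ge 1-\nu_e/\delta$ together with the sign of $t$---is identical. Your observation about the asymmetry in the two endpoints (only the lower one acquires the $1/(1-\nu_e/\delta)$ factor because $t<0$) is exactly the mechanism behind the paper's use of $t'_d = t_d/\rho_{j^*}$ on one side but just $t_1$ on the other.
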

Therefore, conditioned on $\zeta_{\rm max} \leq 4\sqrt{\log d}$, we have that
\begin{align*}
	\Pr_{g_1,\ldots,g_d|\zeta_{\rm max}}\big[e~\textnormal{is cut}\big]
	&\leq \Pr_{g_1 \sim N(0,1)}\left[g_1 \in \left[\frac{t}{1 - \frac{\nu_e}{\delta}} - 4\zeta_{\rm max}\sqrt{\nu_e/\delta}, t + 
	4\zeta_{\rm max}\sqrt{\nu_e/\delta}\right] \right] \\
	&= \Phi\left(t + 4\zeta_{\rm max} \sqrt{\nu_e/\delta} \right) - \Phi\left(\frac{t}{1 - 4\nu_e/\delta} - 4\zeta_{\rm max} \sqrt{\nu_e/\delta}\right) \\
	&\overset{1}{\lesssim}  \left(t + 4\zeta_{\rm max} \sqrt{\nu_e/\delta} - \frac{t}{1 - 4\nu_e/\delta} - 4\zeta_{\rm max} \sqrt{\nu_e/\delta}\right) \delta \sqrt{\log \frac{1}{\delta}} \\
	&\lesssim \zeta_{\rm max}\sqrt{\nu_e/\delta} \cdot \delta\sqrt{\log\frac{1}{\delta}} \\
	&\overset{2}{\leq} \sqrt{\nu_e \delta \log d \log(1/\delta)}, 	 
\end{align*}
where in step $1$ we use the fact that $|\Phi(t \pm \Delta) - \Phi(t)| \lesssim \Delta \delta\sqrt{\log 1/\delta}$ whenever $\Delta \leq 1/|t|$ (Facts \ref{prop:half-error} and \ref{fact:incr})\footnote{This is exactly where we use that since $e$ is a nice edge, $\nu_e/\delta \leq 1/\log(d)\log(1/\delta)$ and hence $|t|\sqrt{\nu_e/\delta} \leq 1$.}, and step $2$ is due to the conditioning on the event $\zeta_{\rm max} \leq 4\sqrt{\log d}$.

\begin{remark}
	Note that in the argument outlined above, several places use bounds that hold with high probability, and hence a naive implementation of the above approach will incur additive error terms due to the small probability events. The actual argument carefully incorporates the small probability events into the analysis in a way so that the corresponding error terms result in multiplicative error instead of additive error.
\end{remark}

\section{Preliminaries}

We introduce some notation and technical preliminaries that will be used in the rest of the paper. Given a hypergraph $H = (V,E,w,W)$, $V$ denotes its vertex set, $E$ its hyperedge set, and $w:E \to \R_{\geq 0}$ and $W:V \to \mathbbm{R}_{\geq 0}$ are the hyperedge and vertex weight functions respectively. For a subset $S \subseteq V$, we shall use $H[S]$ to denote the sub-hypergraph of $H$ induced on $S$. In the case when $H$ is an unweighted graph, we shall use ${\rm deg}_H(i)$ to denote the degree of vertex $i$. We will often be partitioning a hyperedge $e$ as $e = \uplus_{i \in [k]} e_i$, where each $e_i$ is going to be a sub-hyperedge of $e$. Given a subset $S \subseteq V$, we shall say that a hyperedge $e$ is {\em cut} by $S$ if $e \not\subseteq S$ and $e \not\subseteq S^c$. 

We will be using the notation $\lesssim$ to hide leading multiplicative constants, and $\tilde{O}(\cdot)$ to hide logarithmic terms. For a vector $v \in \mathbbm{R}^d$, we shall use $\bar{v}$ to denote the unit vector along the direction of $v$. 

We will use $N(0,1)$ to denote the standard normal distribution i.e., the univariate Gaussian distribution with mean $0$ and variance $1$. We will use $\Phi(\cdot)$ to denote the Gaussian CDF function i.e, $\Phi(t) = \int^{t}_{-\infty} (1/\sqrt{2\pi})e^{-t^2/2} dt$. It is well known that the map $x \to \Phi(x)$ is increasing, continuous and therefore invertible $\forall x \in \R$. We will assume that  the inverse map can be computed in polynomial time up to polynomial precision. Throughout this paper, all the logarithms used will be base $2$.

\subsection{Reduction from \smallsetvertexexpansion~to \hypersse.}

\begin{lemma}[\cite{lm16}, see also Lemma 3.1~of \cite{GL20}]			\label{lem:transform}
	Given a graph $G = (V_G,E_G)$, one can construct an intermediate weighted graph $G_\sym(V_\sym,E_\sym,w_\sym)$ with vertex weights $w_\sym :V_\sym \to \mathbbm{R}_{\geq 0}$, and final weighted hypergraph $H(V,E,w,W)$ with edge weights $w:E \to \mathbbm{R}_{\geq 0}$ and vertex weights $W: V \to \mathbbm{R}_{\geq 0}$ such that the following properties hold:
	\begin{itemize}
		\item[(i)] The vertex set of $V_\sym$ and $V$ are both $V_G \cup E_G$.
		\item[(ii)] The intermediate graph satisfies $\Phi^{\sym}_\delta(G_\sym)  \leq \phiv_{\delta}(G)$ and $\sum_{v \in V_\sym} w_{\sym}(v) = |V_G|$, where $\Phi^\sym_\delta$ is the $\delta$-symmetric vertex expansion\footnote{Given a vertex weighted graph $G = (V,E,w)$, the symmetric vertex boundary of a set $S$, denoted by $\partial^{\rm sym}_G(S)$, is the set of vertices that straddle the cut $S,S^c$ i.e., vertices which have at least one incident edge crossing the cut. Analogously, we can define the symmetric vertex expansion of a set $S$ as $\Phi^{\rm sym}_G(S) = w(\partial^{\rm sym}_G(S)/w(S)$. The $\delta$-symmetric vertex expansion of a graph is the smallest symmetric vertex expansion in $G$ among all sets with relative vertex weight $\delta$.}.
		\item[(iii)] The final hypergraph $H$ satisfies $\phi^{E}_{\delta}(H) \leq \Phi^{\sym}_{\delta}(G_{\rm sym})$ and $\sum_{e \in E} w(e) = \sum_{v \in V} W(v)  = |V_G|$. In particular, there exists a subset $S \subset V$ such that $w(S) = \delta w(V)$, $|S \cap V_G| = \delta|V_G|$ and $\phie_H(S) \leq \phi^V_\delta$. 
		\item[(iv)] If $G$ has max-degree at most $d$, then the final hypergraph $H$ is of arity at most $d$.
		\item[(v)] There is a one-to-one mapping $\pi: E \to V$ between the vertex set $V$ and edge set $E$ in $H$ which satisfies $\pi(e)  \in e$ for every $e \in E$.
		\item[(vi)] Furthermore, if there exists a subset of vertices $S' \subset V$ in $H$ such that $w(S') = \delta'w(V)$ and $\phie_{\delta'}(H) \leq \epsilon'$ then there exists a subset $S \subset V_G$ such that $|S| \in [(1 - \epsilon')\delta'|V_G|, \delta'|V_G|]$ such that $\phiv_G(S) \leq 2\epsilon'$. 
	\end{itemize}
\end{lemma}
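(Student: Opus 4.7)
The proof is essentially a bookkeeping argument: I construct $G_\sym$ and $H$ explicitly and then verify each of the six properties by direct case analysis. First, form $G_\sym$ by subdividing every edge of $G$: take $V_\sym := V_G \cup E_G$, replace each $e = (u,v) \in E_G$ with the two edges $\{u,e\}$ and $\{v,e\}$, and set $w_\sym(v) := 1$ for $v \in V_G$ and $w_\sym(e) := 0$ for $e \in E_G$. Next, build $H$ on the same vertex set $V := V_\sym$, with one hyperedge $h_v := \{v\} \cup \{e \in E_G : v \in e\}$ per $v \in V_G$, hyperedge weights $w(h_v) := 1$, and vertex weights $W(v) := 1$ for $v \in V_G$, $W(e) := 0$ for $e \in E_G$. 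This immediately settles (i), the normalizations $\sum w_\sym = \sum w = \sum W = |V_G|$ in (ii) and (iii), the arity bound in (iv) (up to the stated convention on $d$), and the injection $\pi(h_v) := v$ in (v).

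\textbf{Properties (ii) and (iii).} For (ii), take an optimal $S \subseteq V_G$ realizing $\phi^V_\delta(G)$ and lift it to $\tilde{S} := S \cup \{e \in E_G : e \cap S \neq \emptyset\}$. A short case split on whether a vertex of $V_\sym$ lies in $V_G$ or $E_G$, and whether it lies in $\tilde{S}$, shows that $\partial^\sym_{G_\sym}(\tilde{S}) \cap V_G = \partial^V_G(S)$ while $\partial^\sym_{G_\sym}(\tilde{S}) \cap E_G = E_G(S,S^c)$. Since edge-vertices carry zero weight, $w_\sym(\partial^\sym(\tilde{S})) = |\partial^V_G(S)|$ and $w_\sym(\tilde{S}) = |S|$, giving $\Phi^\sym_{G_\sym}(\tilde{S}) = \phi^V_G(S)$ as required. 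For (iii) I will establish the stronger statement $\phi^E_H(\tilde{S}) = \Phi^\sym_{G_\sym}(\tilde{S})$ for every $\tilde{S} \subseteq V$: by construction $h_v$ is cut by $\tilde{S}$ iff some neighbor of $v$ in $G_\sym$ lies on the opposite side of the cut, i.e., iff $v \in \partial^\sym_{G_\sym}(\tilde{S}) \cap V_G$. Combined with $W(\tilde{S}) = |\tilde{S} \cap V_G| = w_\sym(\tilde{S})$, the two expansion ratios coincide.

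\textbf{Property (vi).} The only nontrivial step is the reverse direction. Given $S' \subseteq V$ with $w(S') = \delta' |V_G|$ and $\phi^E_H(S') \leq \epsilon'$, let $B := \{v \in V_G : h_v \text{ is cut by } S'\}$; since $\pi$ is an injection, $|B|$ equals the number of cut hyperedges, so $|B| \leq \epsilon' \delta' |V_G|$. Define $S := (S' \cap V_G) \setminus B$, which gives $|S| \in [(1-\epsilon')\delta' |V_G|, \delta' |V_G|]$ immediately. The key observation is that $\partial^V_G(S) \subseteq B$: if $u \in S$ and $v \in V_G \setminus S$ are adjacent in $G$, then $u \notin B$ forces $h_u$ to be uncut and thus contained in $S'$ (since $u \in S'$), hence $e := (u,v) \in S'$; combined with $v \notin S$ (so either $v \notin S'$, or $v \in B$), this shows $h_v$ contains both $e \in S'$ and $v$, forcing $h_v$ to be cut, so $v \in B$. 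Therefore $|\partial^V_G(S)| \leq |B| \leq \epsilon' \delta' |V_G|$, and $\phi^V_G(S) \leq \epsilon'/(1-\epsilon') \leq 2\epsilon'$ for $\epsilon' \leq 1/2$. The main obstacle I anticipate is precisely this $V_G$/$E_G$ asymmetric bookkeeping in (vi); the other properties reduce to routine unpacking of definitions.
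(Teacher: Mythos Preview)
Your construction and verification match the paper's approach essentially line for line: the paper also subdivides each edge of $G$ (with zero weight on the new ``edge-vertices''), then takes neighborhood hyperedges $\{v\}\cup N_{G_\sym}(v)$, and proves (ii), (iii), (vi) via the same case analyses (see the sketch after Lemma~\ref{lem:transform} and the full proofs in Lemmas~\ref{lem:red1} and~\ref{lem:red2}). The only cosmetic difference is that the paper introduces a weight-zero hyperedge for every vertex of $V_\sym$ (including edge-vertices), making $\pi$ a bijection rather than merely an injection; your omission of these zero-weight hyperedges is harmless for all six claims, and your parenthetical caveat on the arity being $d+1$ rather than $d$ is the same off-by-one present in the paper's statement.
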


We remark that in \cite{lm16}, the above lemma is stated for the ``at most'' version of the small set expansion problems i.e., where the volume of the set is constrained to be at most $\delta$ instead of being exactly equal to. However, it is easy to verify that their transformations and analyses work as is for the equals variant as well. For completeness, we include a sketch of it below for immediate reference and a full proof in Appendix \ref{sec:redn}.  

\begin{proof}[Proof sketch of Lemma \ref{lem:transform}]
	The construction is as follows. 
	
	{\bf Step (i)} $G \mapsto G_{\rm sym}$. Given $G = (V_G,E_G)$ we construct a weighted bipartite graph $G_{\sym} = (V_\sym,E_\sym,w_\sym)$ where $V_\sym = V \cup E$ and we add an edge between $v \in V_G$ and $e \in E_G$ if edge $e$ is incident to vertex $v$. For every $v \in V_G$, we assign weight $w_{\sym}(v) = 1$, and for every edge $(u,v) \in E_G$ we assign the corresponding vertex weight $w_\sym((u,v)) =  0$. 
	
	{\bf Step (ii)} $G_{\rm sym} \mapsto H$. Given $G_\sym = (V_\sym,E_\sym)$ we construct the weighted hypergraph $H = (V,E)$ with $V = V_{\rm sym}$. Furthermore, we include the following hyperedges: for every vertex $v \in V$, we introduce the hyperedge $\{v\} \cup N_{G_\sym}(v)$ with weight $w\left(\{v\} \cup N_{G_\sym}(v)\right) = w_\sym(v)$. Furthermore, for every vertex $v \in V_\sym$, we retain the vertex weight i.e, $W(v) = w_\sym(v)$
	
	It can be verified the above construction of $G_{\sym}$ and $H$ satisfies items (i)-(vi). For completeness we include a proof in the Appendix \ref{sec:redn}.
\end{proof}

\subsection{Lasserre/SoS Hierarchy}			\label{sec:lass}	

The Lasserre/SoS hierarchy~\cite{Shor87,Lass01} is a sequence of strengthening of the basic SDP which enforce constraints on subsets of variables of increasing sizes. Formally, given a basic SDP with variable set $X_1,\ldots,X_n$, the $R$-round Lasserre relaxation of the SDP optimizes the objective over the set of variables $\{X_{S,\alpha}\}_{|S| \leq R,\alpha \in \{0,1\}^S}$ where $X_{S,\alpha}$ is meant to be the indicator of the event that the subset $S$ gets assigned label $\alpha$. The variables $X_{S,\alpha}$ are associated with the so called degree-$R$ pseudo-distribution $\mu := \{\mu_{S}\}_{S: |S| \leq R}$ where $\mu_S$ is a probability distribution over labelings for the variables in $S$. While the various local distributions $\{\mu_S\}$ may be not be globally consistent, the hierarchy ensures that they are consistent upto $R$-sized subset of variables by enforcing constraints among the $\mu_S$ variables. As is standard, we shall use the ``widetilde'' notation to denote various quantities defined with respect to the pseudo-distribution -- for e.g., we use $\tPr$ notation to denote the pseudo-probability and so on. Our analysis will not require the full strength of the hierarchy, and will essentially rely on a few well-known properties of degree-$2$ pseudo-distributions which we formally describe below. For interested readers, we refer them to \cite{LaurentSoS} for a more exhaustive treatment of the topic.

{\bf Degree-$2$ SoS distributions} The degree-$2$ SoS lifting of an SDP is the SDP itself with additional degree-$2$ Lasserre constraints. Hence, a degree-$2$ pseudo-distribution $\mu$ can be equivalently expressed as a vector solution comprising of a set of unit vectors $\{v_i\}_{i \in [n]} \cup \{\uphi\}$ where $\uphi$ is the ``one vector'' of the solution satisfying $\|\uphi\|^2 = 1$. In particular, in an integral assignment, the $v_i$'s are intended to be $\pm 1$ indicators. A useful feature of the vector representation is that due to the Lasserre constraints, linear algebraic functions of these vectors express several useful parameters of the pseudo-distribution. We list some of the key such properties below.

\begin{proposition}					\label{prop:lass}
	Every degree-$2$ pseudo-distribution can be equivalently identified with a vector solution $\{v_i\} \cup \{\uphi\}$ consisting of unit vectors such that the following holds. For every $i \in V$, let $v_i = (1 - 2\mu_i)\uphi - 2z_i$, where $-2z_i$ is the component of $v_i$ orthogonal to the direction $\uphi$. Then:
	\begin{itemize}
		\item[(i)] We can write the corresponding $\{0,1\}$-vectors as $u_{i} = (\uphi - v_i)/2$. Equivalently, every $i \in V$ we have $u_i = \mu_i \uphi + z_i$. 
		\item[(ii)] For every $i \in V$, we have $\mu_i = \tPr_{x_i \sim \mu} \left[x_i = 1\right]$ and $\|z_i\|^2 = \mu_i(1 - \mu_i)$.
		\item[(iii)] For every $i,j \in V$, we have $\langle z_i,z_j \rangle = \widetilde{{\rm Cov}}_\mu(x_i,x_j)$.
		\item[(iv)] For every $i,j \in V$, we have $(1/4)\|v_i - v_j\|^2 = \tPr_{x_i,x_j \sim \mu} \left[x_i \neq x_j\right]$.
	\end{itemize}
\end{proposition}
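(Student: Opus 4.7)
The plan is to obtain the vector solution directly from the pseudo-moment matrix of the degree-$2$ pseudo-distribution, and then verify the four listed identities one by one using elementary linear algebra, with the Booleanity constraint $x_i^2 = x_i$ doing most of the real work.

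First I would set up the vectors. Let $\mu$ be a degree-$2$ pseudo-distribution over $\{0,1\}^n$, and consider the $(n+1)\times(n+1)$ pseudo-moment matrix $M$ indexed by $\{\star,1,\ldots,n\}$ with $M_{\star\star}=1$, $M_{\star i}=\tEx[x_i]$, and $M_{ij}=\tEx[x_ix_j]$. By the degree-$2$ Lasserre axioms this matrix is PSD, so by a Cholesky (equivalently, Gram) decomposition there exist vectors $\uphi, u_1,\ldots,u_n$ in some $\mathbbm{R}^\ell$ realizing $M$, i.e.\ $\|\uphi\|^2=1$, $\langle \uphi,u_i\rangle=\tEx[x_i]$, and $\langle u_i,u_j\rangle=\tEx[x_ix_j]$. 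Define $v_i:=\uphi-2u_i$ and $\mu_i:=\tPr[x_i=1]=\tEx[x_i]$, and let $z_i:=u_i-\mu_i\uphi$ be the component of $u_i$ orthogonal to $\uphi$, so that $v_i=(1-2\mu_i)\uphi-2z_i$ as stipulated in the proposition statement.

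Next I would verify (i)--(iv). Item (i) is immediate from the definition $v_i=\uphi-2u_i$ together with the orthogonal decomposition of $u_i$. For (ii), the Booleanity pseudo-constraint gives $\tEx[x_i^2]=\tEx[x_i]=\mu_i$, so $\|u_i\|^2=\mu_i=\langle u_i,\uphi\rangle$; this also forces $\|v_i\|^2=1-4\mu_i+4\mu_i=1$, confirming that $v_i$ is a unit vector. From $\|z_i\|^2=\|u_i\|^2-\mu_i^2\|\uphi\|^2=\mu_i-\mu_i^2$, the claimed identity $\|z_i\|^2=\mu_i(1-\mu_i)$ follows. For (iii),
\[
\langle z_i,z_j\rangle=\langle u_i-\mu_i\uphi,u_j-\mu_j\uphi\rangle=\tEx[x_ix_j]-\mu_i\mu_j=\widetilde{\rm Cov}_\mu(x_i,x_j).
\]
For (iv), note $v_i-v_j=-2(u_i-u_j)$, so
\[
\tfrac{1}{4}\|v_i-v_j\|^2=\|u_i\|^2+\|u_j\|^2-2\langle u_i,u_j\rangle=\mu_i+\mu_j-2\tEx[x_ix_j]=\tPr[x_i\neq x_j],
\]
where the last step uses $\tPr[x_i\neq x_j]=\tPr[x_i=1]+\tPr[x_j=1]-2\tPr[x_i=1,x_j=1]$ together with the pseudo-expectation identities $\tEx[x_i]=\mu_i$ and $\tEx[x_ix_j]=\tPr[x_i=1,x_j=1]$.

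There is no real obstacle here: the content is the well-known Gram-matrix correspondence between degree-$2$ pseudo-distributions on $\{0,1\}$-variables and vector configurations with a designated unit vector $\uphi$ serving as the constant. The only point that requires a moment of care is invoking the Booleanity axiom $\tEx[x_i^2]=\tEx[x_i]$ at exactly the right spot in (ii), since this is what pins down $\|u_i\|^2=\mu_i$ and consequently makes the $\pm1$-vectors $v_i$ unit vectors and makes (iv) collapse to $\tPr[x_i\neq x_j]$ rather than some weighted variant. All other steps are direct linear-algebraic manipulations.
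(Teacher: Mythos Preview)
Your proposal is correct and takes essentially the same approach as the paper: both obtain the vectors from a Gram/Cholesky factorization of the PSD degree-$2$ pseudo-moment matrix and then verify (i)--(iv) by direct computation using the Booleanity identity. The only cosmetic difference is that the paper factors the $\pm 1$ moment matrix first (obtaining the $v_i$'s directly as unit vectors) and then defines $u_i=(\uphi-v_i)/2$, whereas you factor the $\{0,1\}$ moment matrix first (obtaining the $u_i$'s) and then define $v_i=\uphi-2u_i$; the subsequent algebra is identical.
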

\begin{proof}
	Let $X_0 \equiv 1, X_1,\ldots,X_n$ be the $\pm 1$ valued pseudo-variables corresponding to the degree-$2$ pseudo-distribution. Let $M := \widetilde{\Ex}[XX^T] \in \mathbbm{R}^{\{0\} \cup [n] \times \{0\} \cup [n]}$ be the degree-$2$ pseudo-moment matrix corresponding to the degree-$2$ pseudo-distribution such that $M[i,j] = \tEx\left[X_iX_j\right]$, . In particular, we have that ${\rm diag}(M) = {\bf 1}^{n + 1}$. Since $M$ is the degree-$2$ pseudo-moment matrix, it must be PSD, and hence it can be decomposes as $M = V^\top V$, where $V = [v_0 \equiv\uphi~v_1 \cdots v_n]$ is a matrix with columns consisting of unit vectors. In particular, for every $i \neq j$, we have $\langle v_i,v_j \rangle = \tEx\left[X_iX_j\right]$.
	
	Now let $x_0 \equiv 1, x_1,\ldots,x_n$ be the corresponding $\{0,1\}$-valued pseudo-variables, where $x_i = \frac12(1 - X_i)$.  Due to the above decomposition, for every $i \in [n]$, we have 
	\begin{equation}				\label{eqn:prop}
	1 - 2\tPr\left[x_i = 1\right] = \tEx\left[X_i\right] = \tEx\left[X_iX_0\right] = \langle v_i,v_0 \rangle.
	\end{equation}
	Hence, denoting $\mu_i := \tPr\left[X_i = 1\right]$ for every $i \in [n]$, we can write $v_i = (1 - 2\mu_i)\uphi - 2z_i$ where $-2z_i$ is the component of $v_i$ in the subspace orthogonal to $\uphi$. Now we are ready to prove the above items one by one.
	
	{\bf Item (i)}: Let $u_i = \frac12(\uphi - v_i)$. Then $\uphi,u_1,\ldots,u_n$ are the corresponding $\{0,1\}$ vector solution. In particular, they satisfy
	\begin{align*}
	\langle u_i,u_j \rangle 
	&= \frac14\Big(\|\uphi\|^2 - \langle v_i + v_j,\uphi \rangle + \langle v_i,v_j \rangle\Big) \\
	&= \frac14\left(1 - (1 - 2\mu_i) - (1 - 2\mu_j) + \tEx\left[X_iX_j\right] \right)			\\
	&= \frac14\left(2\mu_i + 2\mu_j - 1 + \tEx\left[(1 - 2x_i)(1 - 2x_j)\right]\right) \\
	&= \tEx\left[x_ix_j\right] \\
	&= \tPr\left[x_i = 1,x_j = 1\right].
 	\end{align*}
	In particular $U = [\uphi u_1 \cdots u_n]$ satisfy $\tEx\left[xx^\top\right] = U^\top U$.
	
	{\bf Item (ii)}: The first part follows from \eqref{eqn:prop}. For the second part, observe that 
	\[
	\mu_i = \|u_i\|^2_2 = \mu^2_i + \|z_i\|^2,
	\]
	which on rearranging gives us $\|z_i\|^2_2 = \mu_i(1 - \mu_i) = \widetilde{{\rm Var}}(x_i)$.
	
	{\bf Item (iii)}: Again using the decomposition for $u_i$ vectors, we get that
	\[
	\tPr\left[X_i = 1,X_j = 1\right] = \langle u_i,u_j \rangle = \mu_i\mu_j + \langle z_i,z_j \rangle
	\]
	which on rearranging gives us
	\[
	\langle z_i,z_j \rangle = \tPr\left[x_i = 1,x_j = 1\right] - \tPr\left[x_i = 1\right]\tPr\left[x_j = 1\right] = \widetilde{\rm Cov}(x_i,x_j).
	\]
	{\bf Item (iv)}: Finally,
	\[
	\|u_i - u_j\|^2 = \|u_i\|^2 + \|u_j\|^2 - 2\langle u_i,u_j \rangle = \tPr\left[x_i = 1\right] + \tPr\left[x_j = 1\right] - 2\tPr\left[x_i = 1,x_j = 1\right] = \tPr\left[x_i \neq x_j\right].
	\]	
\end{proof}
	
In addition, we need the following lemma which will determine the running time of our algorithm.

\begin{lemma}~\cite{Shor87,Lass01}				\label{lem:sos-time}
	The $R$-round Lasserre lifting of the basic SDP on $n$ variables can be solved in time $n^{O(R)}$ up to polynomial precision.
\end{lemma}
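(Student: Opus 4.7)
The plan is to reduce the $R$-round Lasserre lifting to a single semidefinite program whose size is at most $n^{O(R)}$, and then invoke a standard polynomial-time SDP solver. First I would explicitly write down the pseudo-moment matrix $M$, indexed by pairs $(S,\alpha)$ with $S \subseteq [n]$, $|S| \leq R$, and $\alpha \in \{0,1\}^S$, whose entry $M[(S,\alpha),(T,\beta)]$ is the pseudo-probability that the joint assignment $(X_S,X_T) = (\alpha,\beta)$ is consistent and extends to any agreed-upon labeling of $S \cup T$. The number of such index pairs is bounded by $\sum_{k \leq R} \binom{n}{k} 2^k \leq n^{O(R)}$, so $M$ has dimensions $n^{O(R)} \times n^{O(R)}$.

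Next I would encode the Lasserre axioms as polynomially many (in $n^R$) linear constraints on the entries of $M$: the consistency constraints requiring that marginals on overlapping subsets agree, the normalization $M[(\emptyset,\cdot),(\emptyset,\cdot)] = 1$, the PSD constraint $M \succeq 0$, and the SDP-specific constraints of the basic relaxation (rewritten as linear constraints on moments of the pseudo-distribution, using that each original SDP inner product $\langle u_i,u_j \rangle$ becomes a linear function of the entries of $M$). This gives a semidefinite program with an $n^{O(R)}$-dimensional PSD variable and $n^{O(R)}$ linear constraints, whose objective is also linear in the entries of $M$.

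Finally I would invoke a standard SDP solver (ellipsoid method with an explicit separation oracle for the PSD cone, or an interior point method) which runs in time polynomial in the bit complexity of the program and in $\log(1/\varepsilon)$ to achieve additive accuracy $\varepsilon$. Since the SDP description has size $n^{O(R)}$, the overall running time is $n^{O(R)}$ for polynomial precision. The only mild subtlety, which is the main obstacle one must acknowledge, is the standard technical issue that SDPs are only solvable in polynomial time under a bounded-feasible-region assumption and that the output is guaranteed only up to additive $1/\mathrm{poly}(n)$ error rather than exactly; both of these are handled by the usual argument of adding a ball constraint of radius $n^{O(R)}$ (valid since the moment matrix has bounded entries) and noting that polynomial precision suffices for all downstream uses of the pseudo-distribution in our rounding algorithm.
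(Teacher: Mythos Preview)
The paper does not give its own proof of this lemma; it is stated as a known fact with citations to \cite{Shor87,Lass01} and used as a black box. Your sketch is the standard argument behind those references (bound the moment-matrix dimension by $n^{O(R)}$, encode the hierarchy's consistency and PSD constraints linearly, and apply a polynomial-time SDP solver with the usual caveats about bounded feasibility and additive precision), so there is nothing to compare against in the paper itself.
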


\subsection{Gaussian CDF Facts}				\label{sec:guass}

The following is a basic difference bound for $\Phi$.

\begin{fact}[Folklore]
	\label{fact:cdf1}
$|\Phi(x + z) - \Phi(x)| \leq \frac{|z|}{\sqrt{2\pi}}$ for any $x,z \in \R$. 
\end{fact}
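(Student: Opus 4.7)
The plan is to prove this via a direct integral bound using the fundamental theorem of calculus and the maximum value of the standard Gaussian density. Specifically, I will write
\[
\Phi(x+z) - \Phi(x) = \int_x^{x+z} \varphi(t)\,dt,
\]
where $\varphi(t) = \frac{1}{\sqrt{2\pi}} e^{-t^2/2}$ is the standard Gaussian density (which is precisely $\Phi'$).

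The key observation is that $\varphi$ attains its global maximum at $t=0$, where $\varphi(0) = \frac{1}{\sqrt{2\pi}}$, since $e^{-t^2/2} \leq 1$ for all $t \in \mathbb{R}$. Taking absolute values and pulling them inside the integral,
\[
|\Phi(x+z) - \Phi(x)| = \left|\int_x^{x+z} \varphi(t)\,dt\right| \leq \int_{\min(x,x+z)}^{\max(x,x+z)} \varphi(t)\,dt \leq |z| \cdot \sup_{t \in \mathbb{R}} \varphi(t) = \frac{|z|}{\sqrt{2\pi}},
\]
which is the desired bound.

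There is essentially no obstacle here — this is a one-line calculation that only uses the definition of $\Phi$, its relationship to $\varphi$ via the fundamental theorem of calculus, and the explicit maximum of $\varphi$. The only mild care is to handle both signs of $z$ uniformly, which is automatic once absolute values are brought inside the integral. The fact is stated as folklore and the proof requires no auxiliary lemmas from the paper.
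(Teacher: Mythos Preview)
Your proof is correct and is essentially the same as the paper's: the paper simply notes that the bound follows from $\phi(x) \leq 1/\sqrt{2\pi}$ for all $x \in \mathbb{R}$, which is exactly the key ingredient you use after writing the difference as an integral of the density.
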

\begin{proof}
	This follows from the fact that $\phi(x) \leq 1/\sqrt{2\pi}$ for all $x \in \mathbbm{R}$.
\end{proof}

Now we recall some well-known facts which follow from the asymptotic approximation of Gaussian CDF function $\Phi(\cdot)$.

\begin{fact}[Eq. 7.1.13~\cite{AS65}]				\label{fact:cdf-estim}
	For any $t \in (-\infty,0)$ we have 
	\[
	\frac{1}{\sqrt{2 + t^2} + |t|} \frac{1}{\sqrt{2\pi}}e^{-t^2/2} \leq \Phi(t) \leq \frac{1}{|t|}\frac{1}{\sqrt{2\pi}}e^{-t^2/2}.
	\]
\end{fact}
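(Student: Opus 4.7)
The plan is to substitute $s = -t \geq 0$ and work with the upper tail $\bar\Phi(s) := 1 - \Phi(s) = \int_s^\infty \phi(x)\,dx$, where $\phi(x) = (2\pi)^{-1/2}e^{-x^2/2}$. The desired inequality then becomes
\[
\frac{1}{\sqrt{s^2+2}+s}\,\phi(s) \;\leq\; \bar\Phi(s) \;\leq\; \frac{1}{s}\,\phi(s)
\]
for $s \geq 0$ (the upper bound is vacuous at $s=0$, while the lower bound there reduces to the elementary inequality $\tfrac{1}{2} \geq \tfrac{1}{2\sqrt{\pi}}$).

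The upper bound is the classical Mills' ratio trick. For $x \geq s > 0$ I would use $x/s \geq 1$, and then invoke $\tfrac{d}{dx}\phi(x) = -x\phi(x)$:
\[
\bar\Phi(s) \;\leq\; \int_s^\infty \frac{x}{s}\phi(x)\,dx \;=\; \frac{1}{s}\big[-\phi(x)\big]_s^\infty \;=\; \frac{\phi(s)}{s}.
\]

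The lower bound is the substantive part. Set $m(s) := \bar\Phi(s)/\phi(s)$ (Mills' ratio) and $\ell(s) := 1/(\sqrt{s^2+2}+s) = (\sqrt{s^2+2}-s)/2$. Using $\bar\Phi'(s) = -\phi(s)$ and $\phi'(s) = -s\phi(s)$, the ratio satisfies the linear ODE $m'(s) = s\,m(s) - 1$. I would then show by direct differentiation that
\[
\ell'(s) - s\,\ell(s) + 1 \;=\; \frac{(1+s^2)\bigl(\sqrt{s^2+2}-s\bigr)}{2\sqrt{s^2+2}} \;>\; 0,
\]
so $\ell$ is a strict sub-solution of the same ODE. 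Letting $d(s) := m(s) - \ell(s)$, subtracting the two identities gives $d'(s) - s\,d(s) = -(\ell'(s) - s\ell(s) + 1) < 0$, and multiplying by the integrating factor $e^{-s^2/2}$ shows that $d(s)\,e^{-s^2/2}$ is strictly decreasing on $[0,\infty)$.

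To conclude, I would check the asymptotics at $s\to\infty$: the standard expansions $\bar\Phi(s) = \phi(s)/s + O(\phi(s)/s^3)$ and $\ell(s) = 1/(2s) + O(1/s^3)$ give $d(s) \to 0$ and hence $d(s)\,e^{-s^2/2} \to 0$. Since $d(s)\,e^{-s^2/2}$ is decreasing and vanishes at infinity, it is nonnegative on $[0,\infty)$, yielding $m(s) \geq \ell(s)$, which is precisely the lower bound. The main obstacle I foresee is purely the algebraic bookkeeping in the derivative identity for $\ell'(s) - s\ell(s) + 1$; the rest is a straightforward integrating-factor / boundary-value argument and does not require any nontrivial Gaussian estimate beyond $\phi' = -x\phi$.
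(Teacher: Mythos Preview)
Your argument is correct. The upper bound is the standard Mills' ratio trick, and for the lower bound your ODE / integrating-factor argument goes through: the identity
\[
\ell'(s) - s\,\ell(s) + 1 \;=\; \frac{(1+s^2)\bigl(\sqrt{s^2+2}-s\bigr)}{2\sqrt{s^2+2}}
\]
checks out, and together with $m'(s)=s\,m(s)-1$ and the limit $d(s)e^{-s^2/2}\to 0$ it yields $m(s)\ge\ell(s)$ on $(0,\infty)$.

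As for the comparison: the paper does not prove this statement at all. It is quoted as a known fact (Eq.~7.1.13 of Abramowitz--Stegun), so there is no ``paper's own proof'' to compare against. Your write-up therefore goes strictly beyond what the paper does here. The approach you chose---reducing to Mills' ratio and using a sub/super-solution argument for the first-order linear ODE it satisfies---is one of the classical routes to these bounds; another common derivation (closer in spirit to the Abramowitz--Stegun presentation) proceeds via the error-function identity and an integration-by-parts / continued-fraction expansion. Your method has the advantage of being entirely self-contained and requiring only $\phi'=-x\phi$ and elementary calculus.
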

This directly leads to the following bounds on $\Phi^{-1}(\cdot)$.

\begin{fact}[Folklore]				\label{fact:cdf}
	For any $\mu \in (0,1/2)$, we have $|\Phi^{-1}(\mu)| \leq \max\left(2\sqrt{\log\frac1\mu},O(1)\right)$.
\end{fact}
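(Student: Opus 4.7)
The plan is to deduce this directly from the upper tail bound in Fact \ref{fact:cdf-estim}. Since $\mu \in (0,1/2)$, the value $t := \Phi^{-1}(\mu)$ is nonpositive, so we want to upper bound $|t|$. The strategy is to split on whether $|t|$ is large or small; in the small regime the bound holds with an absolute constant, and in the large regime we can apply the Gaussian tail estimate and solve for $|t|$ in terms of $\log(1/\mu)$.

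Concretely, first I would dispose of the easy case: if $|t| \leq C_0$ for an absolute constant (say $C_0 = 10$), then the ``$O(1)$'' branch of the max already dominates and there is nothing to prove. So I may assume $|t| \geq C_0 \geq 1$. In this regime, Fact \ref{fact:cdf-estim} gives
\[
\mu = \Phi(t) \leq \frac{1}{|t|\sqrt{2\pi}}\, e^{-t^2/2}.
\]
Taking logarithms, this rearranges to
\[
\frac{t^2}{2} \leq \log\frac{1}{\mu} - \log(|t|\sqrt{2\pi}) \leq \log\frac{1}{\mu} + O(1),
\]
where the second inequality uses $|t|\sqrt{2\pi} \geq 1$ (so the logarithm is nonnegative, and we only need the trivial bound). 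Hence $|t| \leq \sqrt{2\log(1/\mu) + O(1)}$.

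Now I would finish by splitting once more on the size of $\log(1/\mu)$. If $\log(1/\mu)$ exceeds some absolute constant $C_1$, then $2\log(1/\mu) + O(1) \leq 4\log(1/\mu)$, which yields $|t| \leq 2\sqrt{\log(1/\mu)}$, matching the first branch of the max. Otherwise $\mu$ is bounded away from $0$ by an absolute constant (and from $1$ since $\mu < 1/2$), so by continuity and monotonicity of $\Phi^{-1}$ on $(0,1/2]$, the value $|\Phi^{-1}(\mu)|$ is bounded by an absolute constant, which is absorbed into the $O(1)$ branch. There is no real obstacle here; the only mild care needed is handling the logarithmic $|t|$ factor in Fact \ref{fact:cdf-estim} cleanly, which is done by the preliminary case split ensuring $|t| \geq 1$.
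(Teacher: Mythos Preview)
Your proof is correct. The paper does not actually give its own proof of this fact; it is stated as folklore with only the remark ``This directly leads to the following bounds on $\Phi^{-1}(\cdot)$'' immediately after Fact~\ref{fact:cdf-estim}, so your approach --- deducing the bound from the upper tail estimate in Fact~\ref{fact:cdf-estim} --- is exactly the derivation the paper intends.

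One small simplification: once you have $|t| \geq 1$ and hence $\log(|t|\sqrt{2\pi}) \geq 0$, the inequality $t^2/2 \leq \log(1/\mu) - \log(|t|\sqrt{2\pi})$ already gives $t^2/2 \leq \log(1/\mu)$ directly, so $|t| \leq \sqrt{2\log(1/\mu)} \leq 2\sqrt{\log(1/\mu)}$. The additional ``$+\,O(1)$'' slack and the second case split on the size of $\log(1/\mu)$ are unnecessary, though they do no harm.
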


\begin{fact}[Folklore]				
\label{prop:half-error}
	There exists $\epsilon_0 \in (0,1)$ such that for every $\epsilon \leq \epsilon_0$, and every $t \in (-\infty,0]$ we have 
	\[
	\Phi(t) - \Phi(t - \epsilon) \leq 4\epsilon\cdot\Phi(t)\sqrt{\log\frac{1}{\Phi(t)}}.
	\]
\end{fact}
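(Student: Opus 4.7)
The plan is to reduce the claim to a pointwise comparison between the Gaussian PDF $\phi(s) = \frac{1}{\sqrt{2\pi}} e^{-s^2/2}$ and the Gaussian CDF $\Phi$, which can then be verified using the asymptotic estimates of Fact \ref{fact:cdf-estim}.

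First I would observe that since the density $\phi$ is monotonically increasing on $(-\infty,0]$, for any $t \leq 0$ and any $\epsilon > 0$ we have $\phi(s) \leq \phi(t)$ for all $s \in [t-\epsilon, t]$, and hence
\begin{equation*}
\Phi(t) - \Phi(t-\epsilon) \;=\; \int_{t-\epsilon}^{t} \phi(s)\, ds \;\leq\; \epsilon\,\phi(t).
\end{equation*}
So it suffices to prove the pointwise inequality $\phi(t) \leq 4\,\Phi(t)\sqrt{\log(1/\Phi(t))}$ for all $t \leq 0$, i.e.\ to control the ``inverse Mills ratio'' $\phi(t)/\Phi(t)$ by $4\sqrt{\log(1/\Phi(t))}$.

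For the regime $t \leq -1$ (the interesting regime where both sides blow up), I would invoke Fact \ref{fact:cdf-estim}. The lower bound there gives $\Phi(t) \geq \phi(t)/(|t| + \sqrt{2+t^2})$, whence
\begin{equation*}
\frac{\phi(t)}{\Phi(t)} \;\leq\; |t| + \sqrt{2+t^2} \;\leq\; (1+\sqrt{3})\,|t|.
\end{equation*}
The upper bound in Fact \ref{fact:cdf-estim} gives $\Phi(t) \leq \phi(t)/|t|$, and taking logs this yields $\log(1/\Phi(t)) \geq t^2/2 + \log|t| + \tfrac12\log(2\pi) \geq t^2/2$ for $|t| \geq 1$, so $4\sqrt{\log(1/\Phi(t))} \geq 2\sqrt{2}\,|t|$. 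Since $1+\sqrt{3} < 2\sqrt{2}$, the desired pointwise bound follows in this range.

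For the remaining range $t \in [-1, 0]$, both quantities are bounded by absolute constants, and I would just verify by direct numerics that $\phi(t)/\Phi(t) \leq \phi(0)/\Phi(-1) < 3$ while $4\sqrt{\log(1/\Phi(t))} \geq 4\sqrt{\log 2} > 3$, which closes the argument. The main (very mild) obstacle is matching the constant $4$ rather than something larger; this is handled by the fact that the constants $1+\sqrt{3} \approx 2.73$ and $2\sqrt{2}\approx 2.83$ leave a small slack, and by using the full estimate $\log(1/\Phi(t)) \geq t^2/2$ rather than any weaker bound. The role of the hypothesis $\epsilon \leq \epsilon_0$ is not actually needed for this argument — it appears to be a technical convenience in the statement and can safely be ignored — but it can equally well be absorbed by restricting to $\epsilon \leq 1$ if desired.
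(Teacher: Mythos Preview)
Your proposal is correct and follows essentially the same approach as the paper: both reduce to the pointwise bound $\phi(t)\le 4\,\Phi(t)\sqrt{\log(1/\Phi(t))}$ via the inequality $\Phi(t)-\Phi(t-\epsilon)\le \epsilon\,\phi(t)$ (the paper phrases this as convexity of $\Phi$ on $(-\infty,0]$, you as monotonicity of $\phi$ there---these are the same statement), and then invoke Fact~\ref{fact:cdf-estim} to control $\phi(t)/\Phi(t)$ against $|t|$ and $|t|$ against $\sqrt{\log(1/\Phi(t))}$. Your case split at $|t|=1$ and explicit constant tracking are in fact slightly more careful than the paper's presentation, and your observation that the hypothesis $\epsilon\le\epsilon_0$ is not actually used is also correct.
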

\begin{proof}
	We first observe that the map $\ell \mapsto \Phi(\ell)$ is convex in $(-\infty,0)$, which can be argued as follows. By definition, we have $\Phi(\ell) = \int^\ell_\infty \phi(x) dx $ and therefore using Leibniz rule we have $\Phi'(\ell) = \phi(\ell)$. This implies that $\Phi''(\ell) = -\frac{\ell}{\sqrt{2\pi}}\cdot e^{-\ell^2/2}$ which is nonnegative in $(-\infty,0)$, thus establishing the convexity of $\Phi(\cdot)$ in $\mathbbm{R}_{\leq 0}$. Therefore using the convexity of $\Phi(\cdot)$ we get that 
	\begin{eqnarray*}
	\Phi(t) - \Phi(t- \epsilon) \leq \epsilon\cdot\Phi'(t)  &\leq& \epsilon \cdot \frac{d}{dt}\int^t_{-\infty} \phi(x)dx \\
	&\leq& \epsilon\cdot\phi(t) \\
	&\overset{1}{\leq}& 2\epsilon\cdot\Phi(t) |t| \\
	&\leq& 4\epsilon\cdot \Phi(t)\sqrt{\log \frac{1}{\Phi(t)}}, 
	\end{eqnarray*}
	where step $1$ is due to Fact \ref{fact:cdf-estim}.
\end{proof}

 \begin{fact}[Folklore]		
	 \label{fact:incr}
 	Let $t \in (-\infty,0)$ and $\Delta \in (0,1)$ be such that $\Delta|t| \leq 1$. Then,
 	\[
 	\Phi(t + \Delta) - \Phi(t) \leq C\Phi(t)\Delta\sqrt{\log\frac{1}{\Phi(t)}}
 	\]
 	for some absolute constant $C > 0$. In particular, $C$ can be taken to be $24$.
 \end{fact}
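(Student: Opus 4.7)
\textbf{Proof plan for Fact \ref{fact:incr}.} The strategy mirrors the proof of Fact \ref{prop:half-error}, but it must handle the fact that on $(-\infty, 0)$ the density $\phi$ is increasing, so moving from $t$ to $t+\Delta$ (toward $0$) traverses larger values of $\phi$ than moving from $t$ to $t-\epsilon$. The key observation is that the hypothesis $\Delta |t| \leq 1$ is precisely what keeps $\phi(t+\Delta)$ comparable to $\phi(t)$.

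\textbf{Step 1: Reduce to bounding $\phi(t+\Delta)$.} By the fundamental theorem of calculus, $\Phi(t+\Delta) - \Phi(t) = \int_t^{t+\Delta} \phi(s)\,ds \leq \Delta \cdot \max_{s \in [t,t+\Delta]} \phi(s)$. I would split into two cases based on the sign of $t+\Delta$.

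\textbf{Step 2: Main case $t+\Delta \leq 0$.} On $(-\infty,0]$ the density $\phi$ is increasing, so $\max_{s \in [t,t+\Delta]} \phi(s) = \phi(t+\Delta)$. Write $\phi(t+\Delta) = \phi(t) \cdot \exp\bigl(-\tfrac{1}{2}((t+\Delta)^2 - t^2)\bigr) = \phi(t)\exp\bigl(\Delta|t| - \Delta^2/2\bigr) \leq e\,\phi(t)$, using the hypothesis $\Delta|t| \leq 1$. Now apply the lower bound half of Fact \ref{fact:cdf-estim} to get $\phi(t) \leq (\sqrt{2+t^2} + |t|)\,\Phi(t) \leq 4(1+|t|)\,\Phi(t)$, and then bound $|t| = |\Phi^{-1}(\Phi(t))| \leq 2\sqrt{\log(1/\Phi(t))} + O(1)$ via Fact \ref{fact:cdf}. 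Combining these gives $\phi(t+\Delta) \leq C'\,\Phi(t)\sqrt{\log(1/\Phi(t))}$ for an absolute constant $C'$, which yields the desired bound after multiplying by $\Delta$.

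\textbf{Step 3: Boundary case $t+\Delta > 0$.} Since $\Delta > |t|$ and $\Delta|t| \leq 1$, we get $t^2 < 1$, hence $t \in (-1,0)$ and $\Phi(t) \geq \Phi(-1)$, a positive absolute constant. Then $\Phi(t+\Delta) - \Phi(t) \leq \Delta \cdot \phi(0) = \Delta/\sqrt{2\pi}$, while the right-hand side $C\,\Phi(t)\,\Delta\sqrt{\log(1/\Phi(t))}$ is bounded below by a positive constant times $\Delta$. Choosing $C$ large enough (the claim says $C = 24$ suffices) handles this case trivially.

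\textbf{Expected obstacle.} The only delicate point is bookkeeping the absolute constant $C$: Step 2 contributes a factor of $e$ from the exponential blow-up, a factor from $(\sqrt{2+t^2}+|t|)/|t|$ (which is worst when $|t|$ is small), and a factor from the $O(1)$ additive slack in Fact \ref{fact:cdf}, while Step 3 requires $C$ to dominate $1/(\sqrt{2\pi}\,\Phi(-1)\sqrt{\log(1/\Phi(-1))})$. Working out that $C = 24$ suffices is routine but requires separately verifying the small-$|t|$ regime (where $\sqrt{\log(1/\Phi(t))}$ could be small) and the large-$|t|$ regime. Everything else follows the same convexity-plus-asymptotic-estimate template used in Fact \ref{prop:half-error}.
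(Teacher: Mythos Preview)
Your proposal is correct and follows essentially the same approach as the paper: bound the integral by $\Delta$ times the maximum density, use the hypothesis $\Delta|t|\le 1$ to control the blow-up from shifting by $\Delta$, invoke Fact~\ref{fact:cdf-estim} to pass between $\phi$ and $\Phi$, and handle the small-$|t|$ regime by a trivial bound. The only cosmetic differences are that the paper splits cases at $t=-4$ (rather than at $t+\Delta=0$) and bounds the ratio $\Phi(t+\Delta)/\Phi(t)\le 12$ via the two-sided estimate of Fact~\ref{fact:cdf-estim}, whereas you bound $\phi(t+\Delta)/\phi(t)\le e$ directly from the density formula; both routes exploit $\Delta|t|\le 1$ in the same way and arrive at the same constant order.
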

 \begin{proof}
	 Let $\delta := \Phi(t)$. Now suppose $t \geq - 4$. Then, $\delta \geq \Phi(-4) = : \delta_0$ and since $ x \to x\sqrt{\log(1/x)}$ is a increasing function for $x \in (0,1)$ we have $\delta\sqrt{\log(1/\delta)} \geq \delta_0\sqrt{\log(1/\delta_0)} := C$. Hence using this we can bound:
	 \[
	 \Phi(t + \Delta) - \Phi(t) \leq \Delta \leq \frac1C \Delta \cdot \delta \sqrt{\log(1/\delta)},
	 \] 
	 which establishes the inequality for $t \geq -4$. Now we proceed to address the case where $t \leq -4$. Since $\Delta \in (0,1)$ in the setting of the lemma, we have $t + \Delta \leq -3$. Now, using the definition of $\Phi$ we proceed as follows:
 	\begin{eqnarray}
 	\Phi(t + \Delta) - \Phi(t) = \int^{t + \Delta}_t \frac{1}{\sqrt{2\pi}}\cdot e^{-t^2/2}dt 
 	&\leq& \Delta \cdot \frac{e^{-(t+ \Delta)^2/2}}{\sqrt{2\pi}}\cdot  		\non\\
 	&\leq& \Delta \cdot \Phi(t + \delta)(|t + \Delta|) 			\non\\
 	&\leq& 2\Delta \cdot \Phi(t + \Delta)\sqrt{\log\frac{1}{\Phi(t + \Delta)}} 		\label{eqn:fact-eq} 
 	\end{eqnarray}
 	Therefore, all that remains is to bound $\Phi(t + \Delta)$ with $\Phi(t)$. From Fact \ref{fact:cdf-estim} and the fact that $t,t+ \Delta \leq -2$ we have   
 	\[
 	 \frac{1}{2|a|} e^{-a^2/2} \leq \Phi(a) \leq \frac{1}{|a|} e^{-a^2/2}
 	\]
 	for $a = t,t+ \Delta$. Hence,
 	\[
 	\frac{\Phi(t + \Delta)}{\Phi(t)} \leq 4\left(1 + \frac{\Delta}{|t|}\right)^{-1}\cdot e^{-(|t| - \Delta)^2/2 + t^2/2} \leq 4e^{-\Delta^2/2 + \Delta|t|} \leq 12
 	\]
 	where in the last step we used $\Delta|t| \leq 1$. Plugging in the above bound into \eqref{eqn:fact-eq} give us 
 	\[
 	\Phi(t + \Delta) - \Phi(t) \leq 8\delta\Delta\cdot\sqrt{\log \frac1\delta}.
 	\]
 \end{proof}

\begin{fact}		\label{fact:gauss-max}
	Let $(g_1,\ldots,g_d)$ be {\em jointly distributed} where marginally $g_i$ is distributed $N(0,1)$ for every $i \in [d]$. Then for any constant $C \geq 4$ we have 
	\[
	\Pr_{g_1,\ldots,g_d} \left[\max_{j \in [d]} |g_j| \geq \sqrt{C\log d}\right] \leq e^{-C\log d/4}.
	\]
\end{fact}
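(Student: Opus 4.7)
The plan is to handle this by a direct union bound combined with the standard one-sided Gaussian tail estimate, since the joint distribution is irrelevant once we reduce to controlling each $|g_j|$ marginally.

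First I would apply a union bound to write
\[
\Pr\left[\max_{j \in [d]} |g_j| \geq \sqrt{C \log d}\right] \;\leq\; \sum_{j=1}^d \Pr\!\left[|g_j| \geq \sqrt{C \log d}\right] \;=\; d \cdot \Pr\!\left[|g| \geq \sqrt{C \log d}\right],
\]
where $g \sim N(0,1)$, using only the marginal assumption on each $g_j$ (this is the step where the joint dependence is completely sidestepped). Then I would invoke the standard Mills-ratio bound $\Pr[g \geq t] \leq \frac{1}{t\sqrt{2\pi}} e^{-t^2/2}$ for $t>0$ (which also follows from Fact~\ref{fact:cdf-estim} applied at $-t$), yielding $\Pr[|g| \geq t] \leq \frac{2}{t\sqrt{2\pi}} e^{-t^2/2} \leq e^{-t^2/2}$ whenever $t \geq 2$. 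Plugging in $t = \sqrt{C\log d}$ with $C \geq 4$ and $d \geq 2$ ensures $t \geq 2$, so the pre-factor is absorbed.

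Combining the two steps gives a bound of $d \cdot e^{-C (\log d)/2}$. The final step is the bookkeeping to show this is at most $e^{-C(\log d)/4}$. Recalling the convention that $\log = \log_2$ throughout the paper, one rewrites $d = e^{(\log d)\ln 2}$ to get
\[
d \cdot e^{-C(\log d)/2} \;=\; \exp\!\left( (\log d)\bigl(\ln 2 - C/2\bigr)\right),
\]
and it suffices to check that $\ln 2 - C/2 \leq -C/4$, i.e.\ $C \geq 4 \ln 2$, which holds comfortably for $C \geq 4$. The only mild subtlety is this base-of-log bookkeeping; there is no real obstacle here, since the estimate has a lot of slack (the stated bound could even be tightened).
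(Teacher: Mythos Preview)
Your proof is correct and follows essentially the same approach as the paper: a union bound over the $d$ coordinates followed by the standard Gaussian tail estimate from Fact~\ref{fact:cdf-estim}, then arithmetic to absorb the factor of $d$. Your handling of the base-of-log bookkeeping is in fact a bit more careful than the paper's terse final inequality, but the underlying argument is identical.
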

\begin{proof}
	Again using the estimates from Fact \ref{fact:cdf-estim} we get that 
	\begin{align*}
		\Pr_{g_1,\ldots,g_d} \left[\max_{j \in [d]} |g_j| \geq \sqrt{C\log d}\right]
		 \leq \sum_{j \in [d]} \Pr_{g_j \sim N(0,1)} \left[|g_j| \geq \sqrt{C \log d}\right] 
		 \leq 2d \cdot e^{-C\log d/2} 			 
		 \leq e^{-C\log d/4}. 		
	\end{align*}
\end{proof}

\begin{fact}		\label{fact:gauss-max1}
	Let $(g_1,\ldots,g_d)$ be {\em jointly distributed} Gaussians which are marginally $N(0,1)$. Then for some absolute constant $C > 0$ we have
	\[
	\Ex_{g_1,\ldots,g_d \sim N(0,1)} \left[\max_{i \in [d]} g^2_j\right] \leq C \log (d)
	\]
\end{fact}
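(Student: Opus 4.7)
The plan is to obtain the bound from the tail estimate of Fact~\ref{fact:gauss-max} via the standard layer-cake representation of expectation. Let $M := \max_{j \in [d]} g_j^2$, which is a nonnegative random variable, so
\[
\Ex\left[M\right] \;=\; \int_0^{\infty} \Pr\left[M \geq t\right]\, dt.
\]
Observe that the event $\{M \geq t\}$ is identical to the event $\{\max_{j \in [d]} |g_j| \geq \sqrt{t}\}$, which is exactly the quantity controlled by Fact~\ref{fact:gauss-max}.

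I would split the integral at the threshold $t_0 := 4 \log d$. On $[0, t_0]$ we simply use $\Pr[M \geq t] \leq 1$, contributing at most $4 \log d$. On $[t_0, \infty)$, for each such $t$ write $t = C \log d$ with $C = t/\log d \geq 4$. Then Fact~\ref{fact:gauss-max} applies and gives
\[
\Pr\left[M \geq t\right] \;=\; \Pr\left[\max_{j \in [d]}|g_j| \geq \sqrt{C \log d}\right] \;\leq\; e^{-C \log d / 4} \;=\; e^{-t/4}
\]
(where, consistent with the paper's convention that $\log = \log_2$, the base-$2$ factor is absorbed into the absolute constant). Integrating,
\[
\int_{t_0}^{\infty} \Pr\left[M \geq t\right] dt \;\leq\; \int_{4 \log d}^{\infty} e^{-t/4}\, dt \;\leq\; 4.
\]
Combining the two pieces yields $\Ex[M] \leq 4 \log d + 4 \leq C \log d$ for, say, $C = 8$ (for $d \geq 2$; the case $d = 1$ is trivial since $\Ex[g_1^2] = 1$).

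The argument is essentially mechanical once Fact~\ref{fact:gauss-max} is in hand, so there is no real obstacle. The only minor care needed is to track the base of the logarithm correctly when converting the tail bound $e^{-C \log d / 4}$ into a geometrically decaying factor in $t$; since $d \geq 2$ implies $\log d \geq 1$ (in either base), the geometric decay holds with a universal rate and the tail integral contributes only a constant.
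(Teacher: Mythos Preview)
Your proposal is correct and follows essentially the same approach as the paper: both use the layer-cake representation $\Ex[M] = \int_0^\infty \Pr[M \geq t]\,dt$, split the integral at $\Theta(\log d)$, and bound the tail using Fact~\ref{fact:gauss-max}. The paper makes the change of variables $t = \alpha \log d$ explicit whereas you write $C = t/\log d$ implicitly, but this is purely cosmetic; one minor note is that your equality $e^{-C\log d/4} = e^{-t/4}$ holds exactly by your definition $t = C\log d$ regardless of the logarithm base, so the parenthetical remark about absorbing a base-$2$ factor is unnecessary.
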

\begin{proof}
	Let $\zeta_{\rm max}$ denote the random variable $\max_{i \in [d]} g_i$. By definition,
	\begin{align*}
	\Ex_{g_1,\ldots,g_d \sim N(0,1)} \left[\max_{i \in [d]} g^2_j\right] 
	&= \int^{\infty}_0 \Pr\left[ \zeta^2_{\rm max} \geq t\right] dt \\
	&\leq  C \log d + \int^{\infty}_{C \log d} \Pr\left[ \zeta^2_{\rm max} \geq t\right] dt \\
	&\leq  C \log d + \int^{\infty}_{\alpha = C } \log (d) \Pr\left[ \zeta_{\rm max} \geq \sqrt{\alpha \log (d)}\right] d\alpha  \tag{Change of variables $t = \alpha \log (d)$}\\
	&\leq  C \log d + \int^{\infty}_{\alpha = C } \log (d) e^{-\alpha \log (d)/2} d\alpha \\
	&\leq  C \log d + o(1). 
	\end{align*}
\end{proof}

\begin{fact}		\label{fact:gauss-max0}
	Let $(g_1,\ldots,g_d)$ be {\em jointly distributed} Gaussians whose marginal distributions are $N(0,1)$. Then,
	\[
	\Ex_{g_1,\ldots,g_d} \left[\max_{i \in [d]} |g_i| \right] \leq \sqrt{C\log d},
	\]
	for some absolute constant $C>0$.
\end{fact}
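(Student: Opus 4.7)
The plan is to reduce this statement directly to Fact \ref{fact:gauss-max1}, which already gives $\Ex\left[\max_{i \in [d]} g_i^2\right] \leq C \log d$. The key observation is that $\max_{i \in [d]} |g_i| = \sqrt{\max_{i \in [d]} g_i^2}$, since squaring preserves the ordering on nonnegative numbers.

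First I would apply Jensen's inequality to the concave function $x \mapsto \sqrt{x}$ on $\mathbbm{R}_{\geq 0}$, together with the identity above, to obtain
\[
\Ex_{g_1,\ldots,g_d}\left[\max_{i \in [d]} |g_i|\right] = \Ex_{g_1,\ldots,g_d}\left[\sqrt{\max_{i \in [d]} g_i^2}\right] \leq \sqrt{\Ex_{g_1,\ldots,g_d}\left[\max_{i \in [d]} g_i^2\right]}.
\]
Then I would invoke Fact \ref{fact:gauss-max1} to bound the right-hand side by $\sqrt{C \log d}$ for the absolute constant $C$ furnished by that fact, which immediately yields the claimed inequality.

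There is no real obstacle here: Fact \ref{fact:gauss-max1} does all of the work of controlling the tail behavior of the maximum, and Jensen's inequality converts the second-moment bound into the first-moment bound we need. The only minor point worth noting is that the constant $C$ appearing in the statement of Fact \ref{fact:gauss-max0} is the square root (in some sense) of the constant in Fact \ref{fact:gauss-max1}, but since both are unspecified absolute constants this is immaterial.
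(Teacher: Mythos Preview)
Your proposal is correct and matches the paper's proof essentially line for line: the paper also invokes Jensen's inequality together with Fact~\ref{fact:gauss-max1} to obtain $\Ex\left[\max_i |g_i|\right] \leq \left(\Ex\left[\max_i g_i^2\right]\right)^{1/2} \leq \sqrt{C\log d}$. One very minor remark: the constant $C$ in Fact~\ref{fact:gauss-max0} is the \emph{same} $C$ as in Fact~\ref{fact:gauss-max1}, not its square root, since the bound reads $\sqrt{C\log d}$ rather than $C\sqrt{\log d}$.
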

\begin{proof}
	Using Jensen's inequality and Fact \ref{fact:gauss-max1} we have
	\[
	\Ex_{g_1,\ldots,g_d} \left[\max_{i \in [d]} |g_i| \right]
	\leq \Ex_{g_1,\ldots,g_d} \left[\max_{i \in [d]} g^2_i \right]^{1/2}
	\leq \sqrt{C \log d}.
	\]
\end{proof}

\section{Approximation Algorithm for \ssve}

In this section, we prove Theorem \ref{thm:hsse-main}. The algorithm for Theorem \ref{thm:hsse-main} is described in Algorithm \ref{alg:approx-pred}.

	\begin{algorithm}[ht!]
		\SetAlgoLined
		\KwIn{A graph $G = (V_G,E_G)$ of maximum degree $d$, volume parameter $\delta \in (0,1/2]$} 
		Set parameter $t \gets (1/\delta)^{24}$\;
		Construct the hypergraph $H = (V,E,w,W)$ from $G$ using the algorithm from Lemma \ref{lem:transform}\;
		Solve the following $R \defeq (t+2)$-round Lasserre relaxation of the following SDP:  
		\begin{align}
			\textnormal{Minimize}~& \frac{1}{\delta W(V)} \sum_{e \in E}w(e) \max_{i,j \in e}\tPr_{(X_i,X_j) \sim \mu_{ij}}\left[X_i \neq X_j\right] 
			\label{eqn:sdp-1}\\
		& \E_{i \sim V_G}\tPr_{X_i \sim \mu_i } \Big[X_i = 1\Big] =  \delta 		\label{eqn:spd-2}\\
		& \left|\mu_{X_i\gets 1} - \mu_{X_j\gets 1} \right| \leq \tPr_{\mu_{i,j}}\left[X_i \neq X_j\right] & \forall i,j \in e, \forall e \in E 
		\label{eqn:sdp-3}
		\end{align}
		where $\mu_{X_i \gets 1}$ is the probability of the local variable $X_i$ being set to $1$ under $\mu$. \label{step:solve}\;
		Let $\mu^*:= \{\mu^*_{T,\alpha}\}$ be the degree-$R$ pseudo-distribution corresponding to the optimal solution \label{step:condition} \;
		Sample a uniformly random subset $T \subseteq V$ of size at most $t$ and an assignment from the local distribution $X_T \gets \alpha \sim \mu^*_T$\;
		Let $\mu$ be the degree-$2$ pseudo-distribution obtained by conditioning $\mu^*$ on $X_T \gets \alpha$ \;
		Let $\Delta_e = \max_{i,j \in e} \tPr_{\mu_{ij}| X_S \gets \alpha} \left[X_i \neq X_j\right]$ be the contribution from $e \in E$\;
		Delete all the hyperedges with $\Delta_e \geq 1/10$ and consider them as cut \label{step:e-del}   \linebreak
		
		{\bf New Vector Solution}.  \label{step:newvec}\\
		Construct a new vector solution $\{v'_i\}_{i \in V}$ as follows. 
		Let $\{v_i\}_{i \in V} \cup \{\uphi\}$ be the vector solution corresponding to $\mu$ from Proposition \ref{prop:lass}\;  
		For every $i \in V$, write $v_i = (1 - 2\mu_i) \uphi - 2 z_i$ as in Proposition \ref{prop:lass}\;
		Let $\hat{z}$ be a unit vector orthogonal\footnotemark[11]  to $\uphi$ and $\{z_i\}_{i \in V}$,
		and let $\theta \gets \delta^{12}$.
		For every $i \in V$, define \label{step:vi'} 
		\begin{equation}			\label{eqn:vi-def}
		v'_i \defeq \frac{v_i - \theta \hat{z}}{\sqrt{1 + \theta^2}}.
		\end{equation}
		{\bf Shifted Hyperplane Rounding}. \label{step:gaussproj} \\ 
		Write $v'_i = (1 - 2\mu'_i) \uphi - 2z'_i$ where $\langle \uphi,z'_i \rangle = 0$\;
		Sample Gaussian $g \sim N(0,1)^r$ and construct $S \subset V$ as 
		\[
		S:= \left\{i \in S| \Langle g, \frac{z'_i}{\|z'_i\|} \Rangle \leq \Phi^{-1}(\mu'_i) \right\}.
		\]
		where $r$ is the ambient dimension of the vector solution $\{v'_i\}_{i \in V} \cup \{\uphi\}$.
		\label{step:round}\;
		{\bf Rolling Back to $G$}. Use item (vi) of Lemma \ref{lem:transform} to find a subset $S' \subset V_G$ in $G$ such that 
		$|S'| \in [0.99\delta|V_G|,1.01\delta|V_G|]$ $\phiv_G(S') \leq 2\phie_H(S)$. \label{step:final}\;
		Output subset $S'$\;
		\caption{Approximation algorithm for \ssve} 
		\label{alg:approx-pred}
	\end{algorithm}	
	\footnotetext{Suppose the ambient dimension of the vector solution is $r$, and the dimension of the span of the vectors $\{\uphi\} \cup \{z_i\}_{i \in V}$ is less than $r$, then clearly we can find such a $\hat{z}$ in $\mathbbm{R}^r$. Otherwise, we treat the vectors to be in $\mathbbm{R}^{r + 1}$ (by padding each vector by a $0$), and then we can find such a $\hat{z}$ in $\mathbbm{R}^{r + 1}$.}

\paragraph{The objective function.} 
Note that the objective is a weighted linear combination of terms $\max_{i,j} \tPr\left[X_i \neq X_j\right]$ terms, which is convex. To see this, observe that for any $e \in E$, $c_{e}$ and include the constraints 
\[
\forall e \in E, i,j \in e : \qquad \tPr_{\mu_{ij}} \left[X_i = 1,X_j = 0\right] + 
\tPr_{\mu_{ij}} \left[X_i = 0,X_j = 1\right] \leq c_{e}
\] 
and minimize the objective
\[
\frac{1}{\delta W(V)} \Ex_S \Ex_{X_S \sim \mu_S} \sum_{e \in E} w_e c_{e}
\]
which is weighted linear combination of variables, and therefore convex. Hence, it follows that we can solve the program in Step \ref{step:solve} up to polynomial precision in polynomial time (Lemma \ref{lem:sos-time}),  

\paragraph{The $\ell_1$-constraints}

Note that for any $i,j \in V$, the constraint $|\mu_{X_i \gets 1} - \mu_{X_j \gets 1}| \leq \tPr_{(X_{i},X_j) \sim \mu_{ij}}\left[ X_i \neq X_j\right]$ is just a linear constraint on the variables of the relaxation i.e.,
\[
\left| \tPr_{X_i \sim \mu_i} \Big[X_i = 1\Big] - \tPr_{X_j \sim \mu_j}\Big[X_j = 1\Big] \right| \leq 
\tPr_{(X_i,X_j) \sim \mu_{ij}}\left[X_i = 1,X_j = 0\right] + \tPr_{(X_i,X_j) \sim \mu_{ij}}\Big[X_i = 0, X_j = 1\Big].
\]
Furthermore, we point out that any SoS feasible solution where the number of rounds is at least $2$ automatically satisfies these constraints.
\begin{claim}
	Let $\mu$ be a pseudo-distribution of degree at least $2$. Then $\mu$ immediately satisfies the constraints from \eqref{eqn:sdp-3}.
\end{claim}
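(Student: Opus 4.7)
The plan is to verify this claim by a direct calculation using only the fact that a degree-$2$ pseudo-distribution has genuine local distributions on every pair of variables. Concretely, for any pair $i,j \in V$, the degree-$2$ SoS relaxation guarantees a marginal local distribution $\mu_{ij}$ on $(X_i,X_j) \in \{0,1\}^2$ that is an honest probability distribution (non-negative entries summing to $1$), since $|\{i,j\}| = 2$ is within the SoS degree. This is the only property of the pseudo-distribution that will be needed.

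First, I will expand the marginal probabilities in terms of $\mu_{ij}$: the pseudo-distribution constraints imply consistency, i.e.,
\begin{align*}
\tPr_{X_i \sim \mu_i}[X_i = 1] &= \tPr_{\mu_{ij}}[X_i = 1, X_j = 1] + \tPr_{\mu_{ij}}[X_i = 1, X_j = 0],\\
\tPr_{X_j \sim \mu_j}[X_j = 1] &= \tPr_{\mu_{ij}}[X_i = 1, X_j = 1] + \tPr_{\mu_{ij}}[X_i = 0, X_j = 1].
\end{align*}
Subtracting these two identities gives
\[
\tPr[X_i = 1] - \tPr[X_j = 1] = \tPr_{\mu_{ij}}[X_i = 1, X_j = 0] - \tPr_{\mu_{ij}}[X_i = 0, X_j = 1].
\]

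Next, I apply the triangle inequality together with the non-negativity of $\mu_{ij}$ (which is the key point where we use that it is a genuine distribution):
\[
\bigl|\tPr[X_i = 1] - \tPr[X_j = 1]\bigr| \leq \tPr_{\mu_{ij}}[X_i = 1, X_j = 0] + \tPr_{\mu_{ij}}[X_i = 0, X_j = 1] = \tPr_{\mu_{ij}}[X_i \neq X_j],
\]
which is exactly the $\ell_1$-constraint \eqref{eqn:sdp-3}. I expect no real obstacle here: the inequality is the standard coupling-style bound for genuine distributions, and a degree-$2$ pseudo-distribution is genuine at the pair level, so the same argument goes through verbatim.
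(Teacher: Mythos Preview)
Your proof is correct and arguably more elementary than the paper's. Both arguments prove the same inequality, but via different representations of the degree-$2$ pseudo-distribution. The paper passes to the vector solution $\{u_i\}_{i\in V}\cup\{\uphi\}$ from Proposition~\ref{prop:lass}, identifies $\tPr[X_i=1]=\|u_i\|^2$ and $\tPr[X_i\neq X_j]=\|u_i-u_j\|^2$, and then invokes the $\ell_2^2$-triangle inequality (with the origin as the third point) to get $\bigl|\|u_i\|^2-\|u_j\|^2\bigr|\le\|u_i-u_j\|^2$. You instead stay entirely on the probabilistic side, using only that the pair marginal $\mu_{ij}$ is a genuine distribution and hence has non-negative entries; the inequality then drops out of a one-line cancellation. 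Your route has the mild advantage of not needing the vector embedding or the $\ell_2^2$-triangle inequality at all, while the paper's route ties the claim more visibly to the SDP geometry used elsewhere. Either way the content is the same.
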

\begin{proof}
	Let $\{u_i\}_{i \in V} \cup \{\uphi\}$ be the vector solution corresponding to the pseudo-distribution. Note that since $\mu$ is a pseudo-distribution, the corresponding vector solution satisfies the $\ell^2_2$ triangle inequality constraints. Hence, for any $i,j \in V$, we have
	\begin{align*}
		\left|\tPr_{X_i}\left[X_i = 1\right] - \tPr_{X_j}\left[X_j = 1\right]\right|
		&= \left|\|u_i\|^2_2 - \|u_j\|^2_2\right|			\\
		&\leq \|u_i - u_j \|^2_2 \\
		&= \tPr_{X_i,X_j}\left[X_i \neq X_j\right]. 
	\end{align*}
\end{proof}

\subsection{Analysis of Algorithm \ref{alg:approx-pred}}

Our analysis begins with the following lemma which says that for the \hsse~instance $H(V,E,w)$ constructed from the graph $G$, the optimal value of the SDP in Step \ref{step:solve} is at most $\phiv_\delta$.

\begin{lemma}					\label{lem:alg-feas}
	Let $H = (V,E,w,W)$ be the corresponding hypergraph obtained by instantiating Lemma \ref{lem:transform} with $G$. Then the SDP in step \ref{step:solve} of Algorithm \ref{alg:approx-pred} is feasible. Moreover, its optimal value is at most $\phiv_\delta := \phiv_{\delta}(G)$.
\end{lemma}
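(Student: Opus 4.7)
The plan is to exhibit an explicit feasible solution to the SDP whose objective value is at most $\phi^V_\delta$, by lifting the optimal integral vertex cut of $G$ to a degenerate pseudo-distribution on $H$.

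First I would invoke Lemma \ref{lem:transform}, specifically item (iii), to obtain a subset $S \subset V$ of the constructed hypergraph $H$ satisfying $w(S) = \delta\, w(V)$, $|S \cap V_G| = \delta|V_G|$, and $\phi^E_H(S) \leq \phi^V_\delta$. Using this $S$, I would define the candidate pseudo-distribution $\mu^\star$ of degree $R$ as the Dirac measure concentrated on the integral Boolean labeling $x^\star \in \{0,1\}^V$ given by $x^\star_i = \mathbbm{1}[i \in S]$; concretely, for every $T \subseteq V$ with $|T| \leq R$ and every $\alpha \in \{0,1\}^T$, set $\mu^\star_T(\alpha) = \mathbbm{1}[\alpha = x^\star|_T]$. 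This is a genuine probability distribution (hence trivially a valid degree-$R$ pseudo-distribution) and therefore automatically satisfies all moment constraints imposed by the Lasserre hierarchy.

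Next I would verify the two explicit constraints \eqref{eqn:spd-2} and \eqref{eqn:sdp-3}. For \eqref{eqn:spd-2}, we have
\[
\E_{i \sim V_G}\tPr_{X_i \sim \mu^\star_i}[X_i = 1] \;=\; \E_{i \sim V_G}\mathbbm{1}[i \in S] \;=\; \frac{|S \cap V_G|}{|V_G|} \;=\; \delta,
\]
using item (iii) of Lemma \ref{lem:transform}. For \eqref{eqn:sdp-3}, at an integral labeling both sides coincide: $|\mu^\star_{X_i \gets 1} - \mu^\star_{X_j \gets 1}| = |\mathbbm{1}[i \in S] - \mathbbm{1}[j \in S]|$ equals $\tPr_{\mu^\star_{ij}}[X_i \neq X_j]$.

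Finally I would bound the objective. Since $\mu^\star$ is supported on a single Boolean assignment, $\max_{i,j \in e}\tPr[X_i \neq X_j] = \mathbbm{1}[e \text{ is cut by } S]$, so the objective evaluates to
\[
\frac{1}{\delta\, W(V)}\sum_{e \in E} w(e)\,\mathbbm{1}[e \text{ cut by } S] \;=\; \frac{w(\partial^E_H(S))}{\delta\, W(V)}.
\]
From the construction in Lemma \ref{lem:transform}, $W(S) = |S \cap V_G| = \delta|V_G| = \delta\, W(V)$, and since $\delta \leq 1/2$ we have $\min\{\mathrm{Vol}(S),\mathrm{Vol}(S^c)\} = W(S) = \delta\, W(V)$. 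Hence the displayed quantity equals $\phi^E_H(S)$, which is at most $\phi^V_\delta$ by item (iii) of the lemma. I do not anticipate a genuine obstacle here: the only thing to be careful about is bookkeeping between the volume normalization $\delta\, W(V)$ used in the SDP objective and the $\min\{\mathrm{Vol}(S),\mathrm{Vol}(S^c)\}$ denominator in the definition of $\phi^E_H$, which is resolved by the identity $W(S) = \delta\, W(V)$ just noted.
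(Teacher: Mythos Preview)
Your proposal is correct and follows essentially the same approach as the paper: both lift the optimal integral set $S$ guaranteed by Lemma~\ref{lem:transform}(iii) to a point-mass pseudo-distribution, verify the cardinality and $\ell_1$ constraints directly from the indicator, and evaluate the objective as $w(\partial^E_H(S))/(\delta\,W(V))$. The only cosmetic difference is that the paper tracks the normalization via $w(S)=\delta\,w(V)$ together with $w(V)=W(V)=|V_G|$, whereas you go through $W(S)=|S\cap V_G|=\delta\,W(V)$; both are equivalent given the weight assignments in the construction.
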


\begin{proof}
	We first observe that combining items (ii) and (iii) of Lemma \ref{lem:transform}, we have 
	\begin{equation}
		\phie_\delta(H) \leq \Phi^{\sym}_{\delta}(G_{\rm sym}) \leq \phiv_\delta(G) = \phiv_\delta.
	\end{equation}
	Let $S \subset V$ be the subset guaranteed by the item (iii) of Lemma \ref{lem:transform} satisfying the conditions:
	\[
	(a)~~~w(S) = \delta \cdot w(V) \qquad\qquad (b)~~~|S \cap V_G| = \delta |V_G| \qquad\qquad (c)~~~w\left(\partial^E_H(S)\right) \leq \phiv_\delta \cdot w(S).
	\]
	Now let $\mu$ be the degree-$R$ psuedo-distribution corresponding to the integral assignment $\mathbbm{1}_S$, i.e., for any subset $T \subset V$ and any assignment $\alpha \in \{0,1\}^T$ we have
	\[
	\tPr_{X_T \sim \mu_T} \left[X_T = \alpha\right] = \prod_{i \in T, \alpha(i) = 1}\mathbbm{1}_S(i).
	\]
	As is standard, we can show that $\mu$ is a feasible pseudo-distribution assignment to the convex program in step 1. Clearly, since $\mu$ corresponds to an integral assignment, it satisfies all the degree-$R$ SoS constraints. Furthermore, observe that 
	\[
	\Ex_{i \sim V_G} \tPr\left[X_i = 1\right] = \frac{|S \cap V_G|}{|V_G|} = \frac{\delta |V_G| }{|V_G|} = \delta,
	\]
	where the second equality follows from condition (b) from above. Furthermore, since $\mu$ corresponds to an integral assignment, the above arguments hold for any conditioning $X_A \gets \alpha$. Finally, fix any hyperedge $e \in E$, and $i,j \in e$. Then,
	\[
	|\mu_{X_i \gets 1} - \mu_{X_j \gets 1}| = \left|\mathbbm{1}_S(i) - \mathbbm{1}_S(j)\right| \leq \max_{i',j' \in e} \left|\mathbbm{1}_S(i') - \mathbbm{1}_S(j')\right|
	= \max_{i,j \in e}\tPr_{X_i,X_j \sim \mu}\left[X_i \neq X_j\right],
	\]
	and again since $\mu$ corresponds to an integral assignment, the above holds for all conditionings $X_T \gets \alpha$. These establishes that $\mu$ is a feasible solution to the SoS relaxation in Step \ref{step:solve}.
	
	Finally, since $\mu$ is a feasible solution, the value attained by it is an upper bound on the optimal value of the relaxation. Hence, the optimal value is at most
	\begin{align*}
		&\frac{1}{\delta W(V)}\sum_{e \in E}w(e) \max_{i,j \in e}\tPr_{(X_i,X_j) \sim \mu_{ij}}\left[X_i \neq X_j\right]  \\
		& \leq \frac{1}{\delta W(V)} \sum_{e \in E}w(e) \max_{i,j \in e} |\mathbb{1}_S(i) - \mathbbm{1}_S(j)| \\
		& = \frac{w\left(\partial^E_H(S)\right)}{\delta W(V)}  \\
		& \leq \phiv_\delta \cdot \frac{w(S)}{\delta W(V)} 					\tag{Item (c)}\\
		& = \phiv_\delta \cdot \frac{\delta w(V)}{\delta W(V)} 				\tag{Item (b)}\\[3pt]
		& = \phiv_\delta.    					 \tag{Since $w(V) = W(V) = |V_G|$ from Lemma \ref{lem:transform}(iii)} 
	\end{align*}
	
\end{proof}

{\bf Proof Map}. The rest of the proof of Theorem \ref{thm:hsse-main} consists of several parts. We outline the overall structure of the proof here for accessibility.

\begin{enumerate}
	\item In Section \ref{sec:pre-proc}, we analyze the pre-processing step and show that the new vector solution $\{v'_i\}_{i \in V}$ approximately preserves the properties of the degree-$2$ solution $\{v_i\}_{i \in V}$ in addition to satisfying some new ones.
	\item In Section \ref{sec:cut-bound1}, we analyze the hyperplane rounding step and prove our main lemma which formally establishes that \eqref{eqn:cut-bound} holds for nice distributions (Lemma \ref{lem:cut-bound1}) - this is the key technical ingredient in the analysis of the approximation guarantee of our algorithm.
	\item In Section \ref{sec:conc}, we establish the concentration bound (Lemma \ref{lem:conc}) on the size of the set returned by the algorithm.
	\item Finally in Section \ref{sec:ssve-main} we combine all the ingredients to prove Theorem \ref{thm:hsse-main}.
\end{enumerate}

\subsection{Preprocessing Step}				\label{sec:pre-proc}

Recall that Step \ref{step:vi'} of Algorithm \ref{alg:approx-pred} defines 
\begin{equation} \label{eqn:vi'2}
v'_i \defeq \frac{(1 - 2\mu_i) \uphi - 2z_i - \theta\hat{z}}{\sqrt{1 + \theta^2}} = (1 - 2\mu'_i)\uphi - 2z'_i . 
\end{equation}

It is easy to see that the new set of vectors also have unit norm i.e, 
\begin{eqnarray*}
\|v'_i\|^2 &=& (1 + \theta^2)^{-1} \left\| (1 - 2\mu_i) \uphi - 2z_i  - \theta \hat{z} \right\|^2 \\
&=& (1 + \theta^2)^{-1} \left(\|(1 - 2\mu_i)\uphi - 2z_i\|^2 + \theta^2\|\hat{z}\|^2\right) \\
&=& (1 + \theta^2)^{-1} \left(\|v_i\|^2 + \theta^2\right) = 1
\end{eqnarray*}
where the second equality uses that $\hat{z}$ is orthogonal to every other vector, and the third inequality uses that $\hat{z}$ is unit norm. Furthermore, now writing $v'_i = (1 - 2\mu'_i) \uphi - 2z'_i$ and comparing the $\uphi$ and $\uphi^{\perp}$ components, we get that 
\begin{equation}				\label{eqn:mu-w-defn}
\mu'_i = \frac{\mu_i}{\sqrt{1 + \theta^2}} + \frac12\left(1 - \frac{1}{\sqrt{1 + \theta^2}}\right)
 \ \ \ \ \textnormal{and} \ \ \ \ 
 z'_i = \frac{z_i + (\theta/2)\hat{z}}{\sqrt{1 + \theta^2}}.
\end{equation}
In  particular, rearranging the above expression for $\mu'_i$ gives us the following useful identity:
\begin{equation} 			\label{eqn:mu-shift}
\mu'_i - \mu_i = \left(\frac12 - \mu_i\right)\left(1 - \frac{1}{\sqrt{1 + \theta^2}}\right).
\end{equation}
Using the above, we establish the following.

\begin{claim}				\label{cl:mu-bounds}
For $\theta^2 \leq 1/100$, and for every $i \in V$, we have 
	(i) $|\mu_i - \mu'_i| \leq \theta^2$ and 
	(ii) $\mu'_i \in [\theta^2/10, 1 - \theta^2/10]$.
\end{claim}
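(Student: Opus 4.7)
The plan is to reduce everything to the identity \eqref{eqn:mu-shift}, namely
\[
\mu'_i - \mu_i = \left(\tfrac{1}{2} - \mu_i\right)\left(1 - \tfrac{1}{\sqrt{1 + \theta^2}}\right),
\]
and then just control the scalar factor $1 - 1/\sqrt{1+\theta^2}$ via elementary estimates. Since $\mu_i \in [0,1]$ (it is a pseudo-probability by Proposition \ref{prop:lass}(ii)), the first factor is at most $1/2$ in absolute value. For the second factor, I would use the two-sided Taylor/Bernoulli bound
\[
\tfrac{\theta^2}{4} \;\leq\; 1 - \tfrac{1}{\sqrt{1 + \theta^2}} \;\leq\; \tfrac{\theta^2}{2}
\qquad\text{for all } \theta^2 \leq 1/100.
\]
The upper bound comes from the inequality $1/\sqrt{1+x}\geq 1 - x/2$ for $x\geq 0$, which rearranges to $1 - 1/\sqrt{1+x}\leq x/2$. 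The lower bound comes from expanding $1/\sqrt{1+x}=1-x/2+3x^2/8-\cdots$ and observing that for $x\leq 1/100$ the error term is dominated so that $1-1/\sqrt{1+x}\geq x/2 - 3x^2/8 \geq x/4$. Neither estimate is delicate.

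For part (i), the identity plus the two bounds immediately gives
\[
|\mu'_i - \mu_i| \;\leq\; \tfrac{1}{2}\cdot \tfrac{\theta^2}{2} \;=\; \tfrac{\theta^2}{4} \;\leq\; \theta^2,
\]
which is the claim.

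For part (ii), I would exploit the fact that \eqref{eqn:mu-w-defn} exhibits $\mu'_i$ as an affine function of $\mu_i$ with positive slope $1/\sqrt{1+\theta^2}$, so it is monotone increasing on $[0,1]$. Thus the extrema over $\mu_i\in[0,1]$ are attained at the endpoints. At $\mu_i = 0$ one gets $\mu'_i = \tfrac{1}{2}(1 - 1/\sqrt{1+\theta^2}) \geq \theta^2/8 \geq \theta^2/10$, which handles the lower bound. At $\mu_i = 1$ one computes $1 - \mu'_i = \tfrac{1}{2}(1 - 1/\sqrt{1+\theta^2}) \geq \theta^2/10$ by the same estimate, which handles the upper bound. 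So for all $\mu_i\in[0,1]$, $\mu'_i \in [\theta^2/10,\,1-\theta^2/10]$.

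There is no real obstacle here — the whole content of the claim is the elementary inequality $1 - 1/\sqrt{1+\theta^2}\in[\theta^2/4,\theta^2/2]$ for $\theta^2\leq 1/100$, combined with the affine/monotone structure of the shift. The only thing worth double-checking in writing is that $\mu_i\in[0,1]$ is actually guaranteed (it is, as $\mu_i = \tPr_\mu[X_i=1]$), so that restricting to the endpoints is valid.
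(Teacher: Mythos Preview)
Your proof is correct and uses the same identity \eqref{eqn:mu-shift} and the same elementary estimates on $1 - 1/\sqrt{1+\theta^2}$ as the paper. The only difference is organizational: for part (ii) the paper does a case split on whether $\mu_i \in [1/4,3/4]$, $[0,1/4]$, or $[3/4,1]$, whereas you observe that $\mu_i \mapsto \mu'_i$ is affine increasing and simply check the endpoints $\mu_i \in \{0,1\}$ --- a slight streamlining but not a different idea.
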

\begin{proof}
	 Fix a $i \in V$. From \eqref{eqn:mu-shift} we have 
	 \begin{eqnarray*}
		 |\mu_i - \mu'_i| &=& \left|\frac12 - \mu_i\right|\cdot\left(1 - \frac{1}{\sqrt{1 + \theta^2}}\right) 
	 \leq \frac12\left(1 - \frac{1}{\sqrt{1 + \theta^2}}\right) \\
	 &=& \frac{\sqrt{1 + \theta^2} - 1}{2\sqrt{1 + \theta^2}}
	 \leq \sqrt{1 + \theta^2} - 1 \leq \theta^2,
	 \end{eqnarray*}
 	where the last inequality is due to the fact $\sqrt{1 + \theta^2} \leq 1 + \theta^2$.
	For establishing (ii), suppose $\mu_i \in [1/4,3/4]$. Then, we get that 
	$\mu'_i \in [1/4 - \theta^2, 3/4 + \theta^2]$. For our range of $\theta^2$, we get that
	$[1/4 - \theta^2, 3/4 + \theta^2] \subseteq [\theta^2/10,1 - \theta^2/10]$.
	Now, let us assume that $\mu_i \in [0,1/4]$. From \eqref{eqn:mu-shift} we have 
	\begin{align*} 
	\mu'_i - \mu_i & = \paren{\frac{1}{2} - \mu_i} \cdot\left(1 - \frac{1}{\sqrt{1 + \theta^2}} \right)
	\geq \frac{1}{4} \left(\frac{\sqrt{1 + \theta^2} - 1}{\sqrt{1 + \theta^2}}\right) \\
	& = \frac{1}{4} \left(\frac{\theta^2}{\sqrt{1 + \theta^2} \paren{\sqrt{1 + \theta^2} + 1 } }\right) 
	\geq  \frac{\theta^2}{10}.
	\end{align*}
	Therefore, we get that $\mu'_i \geq \mu_i + \theta^2/10 \geq \theta^2/10$. We also have that 
	$\mu'_i \leq \mu_i + \theta^2 \leq 1/4 + \theta^2 \leq 1 - \theta^2/10$.

	The case $\mu_i \in [3/4,1]$ can be proved similar to the previous case.

\end{proof}
We shall also need the following which follows directly from the definition.
\begin{claim}					\label{cl:vprime-equal}
	For every $i,j \in V$ we have 
	\[
	\|v'_i - v'_j\|^2 = \frac{1}{1 + \theta^2}\|v_i - v_j\|^2.
	\]
\end{claim}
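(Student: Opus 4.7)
The proof is essentially a one-line computation, since the $\hat{z}$ shift is identical for every vertex and therefore cancels when we take differences. My plan is to start from the definition \eqref{eqn:vi-def} and simply subtract:
\[
v'_i - v'_j = \frac{v_i - \theta\hat{z}}{\sqrt{1+\theta^2}} - \frac{v_j - \theta\hat{z}}{\sqrt{1+\theta^2}} = \frac{v_i - v_j}{\sqrt{1+\theta^2}}.
\]
Taking squared norms then yields the claimed identity
\[
\|v'_i - v'_j\|^2 = \frac{1}{1+\theta^2}\,\|v_i - v_j\|^2.
\]
There is no obstacle here; the key geometric point is that $\hat{z}$ is a common offset applied to all the $v_i$'s (with the same coefficient $-\theta$), so it drops out of any pairwise difference. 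The only reason the claim is non-trivial to state is that after the rescaling by $1/\sqrt{1+\theta^2}$ one must track the scalar factor, which matches the overall norm-preserving normalization used to make $\|v'_i\|=1$. No other properties of $\hat{z}$ (orthogonality, unit norm) are needed for this particular claim — those were only used earlier to verify that $\|v'_i\|=1$.
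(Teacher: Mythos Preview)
Your proof is correct and is essentially identical to the paper's: both simply subtract the two definitions from \eqref{eqn:vi-def}, observe that the common $-\theta\hat{z}$ term cancels, and conclude that $v'_i - v'_j = (v_i - v_j)/\sqrt{1+\theta^2}$. The paper just writes out $v_i$ and $v_j$ in their decomposed form $(1-2\mu_i)\uphi - 2z_i$ before cancelling, but the argument is the same.
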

\begin{proof}
	Fix a $i,j \in V$. Then using \eqref{eqn:vi'2} we have:
	\begin{eqnarray*}
	\|v'_i - v'_j\|^2 
	&=& \left\|\frac{(1 - 2\mu_i)\uphi - 2z_i - \theta\hat{z}}{\sqrt{1 + \theta^2}} - \frac{(1 - 2\mu_j)\uphi - 2z_j - \theta\hat{z}}{\sqrt{1 + \theta^2}}\right\|^2\\
	&=& \frac{1}{1 + \theta^2} 
	\left\|\left((1 - 2\mu_i)\uphi - 2z_i\right) - \left((1 - 2\mu_j)\uphi - 2z_j\right)\right\|^2\\
	&=& \frac{1}{1 + \theta^2} \|v_i - v_j\|^2 .
	\end{eqnarray*}
\end{proof}

Lastly we shall need the following.
\begin{claim}				\label{cl:w-shift}
	For every $i,j \in V$, we have $|\langle z'_i,z'_j \rangle| \leq | \langle z_i,z_j \rangle | + \theta^2/4$.
\end{claim}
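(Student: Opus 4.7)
\textbf{Proof plan for Claim \ref{cl:w-shift}.}
The plan is to expand $z'_i$ using the explicit formula \eqref{eqn:mu-w-defn} and exploit the orthogonality of $\hat z$ to the $z$-vectors. Concretely, I would write
\[
\langle z'_i, z'_j \rangle \;=\; \frac{1}{1+\theta^2}\, \Langle z_i + (\theta/2)\hat z,\; z_j + (\theta/2)\hat z \Rangle.
\]
Expanding the inner product on the right and using $\langle z_i,\hat z\rangle = \langle z_j,\hat z\rangle = 0$ and $\|\hat z\|^2=1$ (both of which are built into Step \ref{step:newvec} of Algorithm \ref{alg:approx-pred}), the cross terms vanish and we get the clean identity
\[
\langle z'_i, z'_j\rangle \;=\; \frac{\langle z_i,z_j\rangle + \theta^2/4}{1+\theta^2}.
\]

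From here the claim follows by the triangle inequality and the elementary bound $1/(1+\theta^2)\le 1$. Specifically, since $\theta^2/4 \ge 0$,
\[
|\langle z'_i,z'_j\rangle| \;\le\; \frac{|\langle z_i,z_j\rangle| + \theta^2/4}{1+\theta^2} \;\le\; |\langle z_i,z_j\rangle| + \tfrac{\theta^2}{4}.
\]
There is no real obstacle: the entire content of the claim is the orthogonality of $\hat z$ to $\{z_i\}_{i\in V}$, which is precisely how $\hat z$ was chosen in Step \ref{step:vi'}. I would keep the writeup to a couple of lines; the only thing worth flagging is that one could in fact prove the slightly stronger bound $|\langle z'_i,z'_j\rangle| \le (|\langle z_i,z_j\rangle| + \theta^2/4)/(1+\theta^2)$, but the additive form stated in the claim is what is used downstream (presumably in combination with Claim \ref{cl:mu-bounds} to argue that $\{v'_i\}$ still approximately satisfies the pseudo-distribution covariance relations of Proposition \ref{prop:lass}(iii)).
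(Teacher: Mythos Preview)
Your proposal is correct and matches the paper's own proof essentially line for line: both expand $z'_i$ via \eqref{eqn:mu-w-defn}, use $\hat z \perp z_i,z_j$ and $\|\hat z\|=1$ to kill the cross terms, and then apply the triangle inequality together with $1/(1+\theta^2)\le 1$. The only cosmetic difference is ordering---the paper drops the $1/(1+\theta^2)$ factor before expanding, whereas you first derive the exact identity $\langle z'_i,z'_j\rangle = (\langle z_i,z_j\rangle + \theta^2/4)/(1+\theta^2)$ and then bound.
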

\begin{proof}
	By definition of the $z'_i$'s we can write 
	\begin{align*}
	|\langle z'_i, z'_j \rangle| 
	& = \frac{1}{1 + \theta^2}\left| \Langle \Big(z_i + (\theta/2) \hat{z}\Big),\Big( z_j + (\theta/2)\hat{z}\Big) \Rangle \right|\\
	& \leq \left| \Langle \Big(z_i + (\theta/2) \hat{z}\Big),\Big( z_j + (\theta/2)\hat{z}\Big) \Rangle \right|\\
	& = | \langle z_i,z_j \rangle + (\theta^2/4) \|\hat{z}\|^2 | & (\hat{z} \perp z_i,z_j) \\
	& = | \langle z_i,z_j \rangle + (\theta^2/4)  | & \textrm{(using $\|\hat{z}\|^2 = 1$)}\\
	& \leq | \langle z_i,z_j \rangle| + \theta^2/4 .
	\end{align*}
\end{proof}

\subsection{Simulating Degree-$2$ SoS Properties}
\label{sec:proxy}
In general, the vectors $\{v'_i\}_{i \in V}$ needn't form a degree-$2$ SoS solution. In this section, 
we show that they still satisfy some useful properties which we will use in our analysis of
Algorithm \ref{alg:approx-pred}.

\begin{claim}			\label{cl:mu-ell1}
	For any $i,j \in V$ we have 
	\[
	|\mu'_i - \mu'_j| \leq  2\|v'_{i} - v'_j\|^2.
	\]
\end{claim}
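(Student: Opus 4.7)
The plan is to read off $|\mu'_i - \mu'_j|$ from $|\mu_i - \mu_j|$ using the shift formula \eqref{eqn:mu-w-defn}, bound $|\mu_i - \mu_j|$ using the SDP $\ell_1$-constraint \eqref{eqn:sdp-3}, and then convert the $\|v_i - v_j\|^2$ that appears into $\|v'_i - v'_j\|^2$ via Claim \ref{cl:vprime-equal}.

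First I would note that, although the $\ell_1$-constraint \eqref{eqn:sdp-3} in the SDP is written only for pairs $i,j$ inside some hyperedge $e$, the short claim following \eqref{eqn:sdp-3} in the paper shows that any degree-$2$ pseudo-distribution satisfies $|\mu_i - \mu_j| \le \tPr_{\mu_{ij}}[X_i \neq X_j]$ for \emph{all} $i,j \in V$ (via the fact that $\langle u_i,u_j\rangle \le \min(\|u_i\|^2,\|u_j\|^2)$). Then Proposition \ref{prop:lass}(iv) lets me rewrite the right-hand side as $\tfrac14\|v_i - v_j\|^2$, giving the key intermediate bound
\[
|\mu_i - \mu_j| \;\leq\; \tfrac{1}{4}\,\|v_i - v_j\|^2.
\]

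Next I would use the expression for $\mu'_i$ from \eqref{eqn:mu-w-defn}, namely $\mu'_i = \mu_i/\sqrt{1+\theta^2} + \tfrac12(1 - 1/\sqrt{1+\theta^2})$, to observe that the additive shift cancels out when taking differences, so $\mu'_i - \mu'_j = (\mu_i - \mu_j)/\sqrt{1+\theta^2}$, and in particular $|\mu'_i - \mu'_j| \le |\mu_i - \mu_j|$. Finally, Claim \ref{cl:vprime-equal} gives $\|v_i - v_j\|^2 = (1+\theta^2)\,\|v'_i - v'_j\|^2$. Chaining these together:
\[
|\mu'_i - \mu'_j| \;\leq\; |\mu_i - \mu_j| \;\leq\; \tfrac{1}{4}\|v_i - v_j\|^2 \;=\; \tfrac{1+\theta^2}{4}\|v'_i - v'_j\|^2 \;\leq\; 2\,\|v'_i - v'_j\|^2,
\]
where the last step uses the standing assumption $\theta^2 \le 1/100$ (in fact $\tfrac{1+\theta^2}{4} \le \tfrac{1}{2}$, so the constant $2$ is quite loose).

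There is not really a hard step here; the bookkeeping is entirely routine once the right identity is invoked at each stage. The only subtlety worth flagging is that the SDP constraint as written is only per-hyperedge, so one must explicitly appeal to the ``degree-$2$ SoS automatically satisfies \eqref{eqn:sdp-3}'' observation to get the inequality for arbitrary $i,j \in V$, as needed for the statement of Claim \ref{cl:mu-ell1}.
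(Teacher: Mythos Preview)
Your proof is correct and follows essentially the same chain of steps as the paper's own proof: use \eqref{eqn:mu-w-defn} to relate $|\mu'_i-\mu'_j|$ to $|\mu_i-\mu_j|$, invoke the $\ell_1$-constraint/Proposition \ref{prop:lass}(iv), and convert via Claim \ref{cl:vprime-equal}. If anything, you are more careful than the paper: you track the factor $1/4$ from Proposition \ref{prop:lass}(iv) (the paper drops it) and you explicitly justify why the $\ell_1$-bound holds for \emph{all} $i,j\in V$ rather than just pairs inside a hyperedge, a point the paper's proof silently assumes.
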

\begin{proof}
	For any $i,j \in V$,
	\begin{align*}
	|\mu'_i - \mu'_j | 
	& = \frac{|\mu_i - \mu_j|}{\sqrt{1 + \theta^2}}  \tag{Using \eqref{eqn:mu-w-defn}} \\
	& \leq \Pr_{X_{ij} \sim \mu}\left[X_i \neq X_j\right] 		\tag{Using \eqref{eqn:sdp-3}} \\
	& \leq \|v_i - v_j\|^2 										\tag{Using Proposition \ref{prop:lass}} \\
	& = (1 + \theta^2) \|v'_i - v'_j\|^2   \tag{using Claim \ref{cl:vprime-equal}} \\
	& \leq 2 \|v'_i - v'_j\|^2 .
	\end{align*}
\end{proof}

\begin{claim}[Cut-Correlation Bound]			\label{cl:corr1}
	For any $i,j \in V$
	\[
	1 - \langle \bar{z'_i},\bar{z'_j} \rangle \leq \frac{\|v'_i - v'_j\|^2}{8\|z'_i\|\|z'_j\|}.
	\]
\end{claim}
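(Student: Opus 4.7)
The plan is to exploit the orthogonal decomposition $v'_i = (1-2\mu'_i)\uphi - 2z'_i$ (with $z'_i \perp \uphi$), which lets me split $\|v'_i - v'_j\|^2$ into its $\uphi$-component and $\uphi^\perp$-component contributions. Concretely, since the two components are orthogonal, I get
\[
\|v'_i - v'_j\|^2 \;=\; 4(\mu'_i - \mu'_j)^2 \;+\; 4\|z'_i - z'_j\|^2 \;\geq\; 4\|z'_i - z'_j\|^2.
\]

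Next, I would bound $1 - \langle \bar{z'_i}, \bar{z'_j}\rangle$ in terms of $\|z'_i - z'_j\|^2$. Expanding gives $\|z'_i - z'_j\|^2 = \|z'_i\|^2 + \|z'_j\|^2 - 2\langle z'_i, z'_j\rangle$, and using the AM--GM inequality $\|z'_i\|^2 + \|z'_j\|^2 \geq 2\|z'_i\|\|z'_j\|$, this yields
\[
\|z'_i - z'_j\|^2 \;\geq\; 2\|z'_i\|\|z'_j\|\Bigl(1 - \tfrac{\langle z'_i, z'_j\rangle}{\|z'_i\|\|z'_j\|}\Bigr) \;=\; 2\|z'_i\|\|z'_j\|\bigl(1 - \langle \bar{z'_i}, \bar{z'_j}\rangle\bigr).
\]

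Combining the two displays gives $1 - \langle \bar{z'_i}, \bar{z'_j}\rangle \leq \|v'_i - v'_j\|^2 / (8\|z'_i\|\|z'_j\|)$, which is exactly the claim. There is no real obstacle here: this is a short geometric computation that follows directly from the definition of $v'_i$ in \eqref{eqn:vi'2} and the decomposition \eqref{eqn:mu-w-defn}, without needing the degree-$2$ SoS structure or the pre-processing properties from Section \ref{sec:pre-proc}. The only point worth verifying is that $\|z'_i\|, \|z'_j\| > 0$ so the normalized vectors $\bar{z'_i}, \bar{z'_j}$ are well defined, which is guaranteed by Claim \ref{cl:mu-bounds}(ii) together with the fact that $\|z'_i\|^2$ is bounded below in terms of $\mu'_i(1-\mu'_i)$ (essentially, $z'_i$ inherits a nontrivial $\uphi^\perp$-component from $z_i$ plus the $\theta\hat z/2$ shift).
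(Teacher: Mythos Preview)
Your proof is correct and follows essentially the same approach as the paper: decompose $\|v'_i - v'_j\|^2$ into its $\uphi$- and $\uphi^\perp$-components, drop the $\uphi$-component, and apply AM--GM to $\|z'_i\|^2 + \|z'_j\|^2$. The paper's proof is identical in structure and detail.
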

\begin{proof}
	For any $i,j \in V$,
	\begin{align*}
	\|v'_i - v'_j\|^2 
	& = 4(\mu'_i - \mu'_j)^2\|\uphi\|^2 + 4\|z'_i - z'_j\|^2 & \textrm{(using \eqref{eqn:vi'2} and $z_i,z_j,\hat{z} \perp \uphi$)} \\
	&\geq 4\|z'_i - z'_j\|^2 \\
	&= 4 (\|z'_i\|^2 + \|z'_j\|^2 - 2 \langle z'_i,z'_j \rangle) \\
		&\geq 4 (2\|z'_i\|\|z'_j\| - 2 \langle z'_i,z'_j \rangle) & \textrm{(AM-GM inequality)} \\
	&= 8 \|z'_i\|\|z'_j\|(1 -  \langle \bar{z'}_i, \bar{z'}_j \rangle). 
	\end{align*}
\end{proof}

\begin{claim}				\label{cl:mu-w-bound}
	For any $i \in V$ such that $\mu'_i \leq 1/2$, we have $\mu'_i \leq 4\|z'_i\|^2$. 
\end{claim}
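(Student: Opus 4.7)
The plan is to reduce the statement to the identity $\|z'_i\|^2 = \mu'_i(1-\mu'_i)$, which would mirror the corresponding identity $\|z_i\|^2 = \mu_i(1-\mu_i)$ from Proposition \ref{prop:lass}(ii). Once that identity is established, the condition $\mu'_i \le 1/2$ forces $1-\mu'_i \ge 1/2$, so $\|z'_i\|^2 \ge \mu'_i/2$, yielding $\mu'_i \le 2\|z'_i\|^2 \le 4\|z'_i\|^2$ (in fact a factor of $2$ suffices, but $4$ is what the claim asks for).

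To prove the identity, I would compute both sides explicitly starting from the expressions in \eqref{eqn:mu-w-defn}. Writing $s = \sqrt{1+\theta^2}$, the formula $\mu'_i = \tfrac{1}{2} + \tfrac{\mu_i - 1/2}{s}$ gives
\[
\mu'_i(1-\mu'_i) \;=\; \frac{1}{4} - \frac{(1/2 - \mu_i)^2}{s^2} \;=\; \frac{s^2/4 - 1/4 + \mu_i - \mu_i^2}{s^2} \;=\; \frac{\theta^2/4 + \mu_i(1-\mu_i)}{1+\theta^2}.
\]
On the other hand, using $z'_i = (z_i + (\theta/2)\hat z)/s$ together with $\hat z \perp z_i$ and $\|\hat z\| = 1$, we get
\[
\|z'_i\|^2 \;=\; \frac{\|z_i\|^2 + \theta^2/4}{1+\theta^2} \;=\; \frac{\mu_i(1-\mu_i) + \theta^2/4}{1+\theta^2},
\]
where the second equality uses Proposition \ref{prop:lass}(ii). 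These two expressions match, establishing $\|z'_i\|^2 = \mu'_i(1-\mu'_i)$, and the claim follows as above.

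There is no real obstacle here; the only thing to be careful about is to track the $s = \sqrt{1+\theta^2}$ factors correctly when rewriting $\mu'_i$ via \eqref{eqn:mu-w-defn}, and to note that $\mu'_i \le 1/2$ is equivalent to $\mu_i \le 1/2$ (since $\mu'_i - 1/2 = (\mu_i - 1/2)/s$), although this equivalence is not strictly needed for the proof — one only uses $\mu'_i \le 1/2$ directly. The fact that the same identity relating the bias and the $\uphi$-perpendicular component that holds for a true degree-$2$ solution survives the $\hat z$-shift is essentially what makes the pre-processing step preserve the relevant SoS-like structure.
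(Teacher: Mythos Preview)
Your proof is correct and in fact cleaner than the paper's. The paper does not observe the exact identity $\|z'_i\|^2 = \mu'_i(1-\mu'_i)$; instead it bounds each of the two summands in the expression $\mu'_i = \tfrac{\mu_i}{\sqrt{1+\theta^2}} + \tfrac12\bigl(1 - \tfrac{1}{\sqrt{1+\theta^2}}\bigr)$ separately against the corresponding pieces of $\|z'_i\|^2 = \tfrac{\|z_i\|^2 + \theta^2/4}{1+\theta^2}$, using $\mu_i \le 2\|z_i\|^2$ (from $\mu_i \le 1/2$, which it deduces from $\mu'_i \le 1/2$ via \eqref{eqn:mu-shift}) and the extra assumption $\theta \in (0,1)$ to control the constants. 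That route yields the constant $4$; your identity gives $2$.

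As a further simplification you might note: since $v'_i$ was already shown to be a unit vector right after \eqref{eqn:vi'2}, and $v'_i = (1-2\mu'_i)\uphi - 2z'_i$ is an orthogonal decomposition, the identity $\|z'_i\|^2 = \mu'_i(1-\mu'_i)$ follows in one line from $1 = (1-2\mu'_i)^2 + 4\|z'_i\|^2$, without any computation through \eqref{eqn:mu-w-defn}. Your route through the explicit formulas is of course also valid and makes the point that the $\hat z$-shift preserves this piece of SoS structure exactly, not just approximately.
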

	\begin{proof}
		Fix an $i \in V$ such that $\mu'_i \leq 1/2$.
		Using the definition from \eqref{eqn:mu-w-defn} we have 
		\[
		\|z'_i\|^2 = \left\| \frac{z_i + (\theta/2)\hat{z}}{\sqrt{1 + \theta^2}} \right\|^2 = \frac{\|z_i\|^2 + \theta^2/4}{1 + \theta^2} .
		\]
		Note that from \eqref{eqn:mu-shift} we get that $\mu'_i \leq 1/2$ if and only if $\mu_i \leq 1/2$. 	
		Therefore, here we have $\mu_i \leq 1/2$. Since $\|z_i\|^2 = \mu_i(1 - \mu_i)$ it follows that 
		\[
		\frac{\mu_i}{\sqrt{1 + \theta^2}} \leq \mu_i  \leq 2\|z_i\|^2 \leq \frac{4\|z_i\|^2}{1 + \theta^2}
			\qquad \forall \theta \in (0,1) .	\]
		 Similarly
		\[
		\frac12\left(1 - \frac{1}{\sqrt{1 + \theta^2}}\right)
		\leq \frac12\left(1 - \frac{1}{1 + \theta^2}\right)
		\leq \frac{\theta^2}{1 + \theta^2}
		\]
		where in the first step we use $\sqrt{1 + a} \leq 1 + a$ for every $a \geq 0$. 
		Using \eqref{eqn:mu-w-defn} and combining the bounds from above,
		\[
		\mu'_i = \frac{\mu_i}{\sqrt{1+ \theta^2}} + \frac12\left(1 - \frac{1}{\sqrt{1 + \theta^2}}\right)
		\leq 4\cdot\frac{\|z_i\|^2}{1 + \theta^2} + 4\cdot\frac{\theta^2/4}{1 + \theta^2} = 4 \|z'_i\|^2 .
		\]
	\end{proof}

\section{Shifted Hyperplane Rounding}				\label{sec:cut-bound1}

In this section, we prove our main technical ingredient (Lemma \ref{lem:cut-bound1}) which gives the probability of a hyperedge getting separated using Gaussian rounding (Step \ref{step:gaussproj}). 

\begin{restatable}{lem}{cutbound}
	\label{lem:cut-bound1}
	The following holds for every $d \geq 2$. Fix a hyperedge $e = (1,2,\ldots,d)$ and let $(\mu'_1,\ldots,\mu'_d)$ be the corresponding set of biases which satisfy $0 \leq \mu'_d \leq \mu'_{d-1}\leq \cdots \leq \mu'_1\leq 1/2$. Furthermore, suppose $\mu'_1 \geq \max\{A_{d,\delta} \nu_e,\delta^{C^*_0}\}$ where (i) $C^*_0 = 24$, (ii) $C^*_1 = 32C^*_0\log(1/\delta)$, and (iii) $A_{d,\delta} = 64 C^*_0 C^*_1 \log(1/\delta) \log(d)$. Then
	\[
	\Pr_S\left[e \textrm{ is cut } \right] \lesssim C^*_0\sqrt{\alpha_e \nu_e \log(1/\delta)\log (d)  },  
	\]
	where $\alpha_e = \max_{i \in e} \min\{\mu'_i,1 - \mu'_i\}$ and $\nu_e:= \max_{i,j \in e} \|v'_i - v'_j\|^2$.
\end{restatable}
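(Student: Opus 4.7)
The plan is to implement the proof sketch outlined at the end of Section 2, with the added complication that the thresholds $t_i := \Phi^{-1}(\mu'_i)$ now vary across the hyperedge. Write $g_i := \langle g, \bar{z'_i}\rangle$, which are marginally $N(0,1)$, and designate vertex $1$ (with the largest bias $\mu'_1$) as the reference. For each $j \geq 2$, decompose $g_j = \rho_j g_1 + \sqrt{1 - \rho_j^2}\,\zeta_j$, where $\rho_j = \langle \bar{z'_1}, \bar{z'_j}\rangle$ and $\zeta_j \sim N(0,1)$ is independent of $g_1$. Since the hypothesis forces $\mu'_i \leq 1/2$ for every $i$, we have $\alpha_e = \mu'_1$, so the target bound becomes $O(\sqrt{\mu'_1 \nu_e \log(1/\delta)\log d})$.

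The first substantive step is to use the preprocessing properties of Section 4.1 to show that the biases, thresholds, and correlations along $e$ are tightly clustered. Claim \ref{cl:mu-ell1} gives $|\mu'_i - \mu'_j| \leq 2\nu_e$, so the nice-edge hypothesis $\mu'_1 \geq A_{d,\delta}\nu_e$ forces $\mu'_j \in [\mu'_1/2, \mu'_1]$ for every $j \in e$. Combined with Claim \ref{cl:mu-w-bound} this yields $\|z'_j\|^2 \geq \mu'_j/4 \geq \mu'_1/8$, and plugging into Claim \ref{cl:corr1} gives
\[
1 - \rho_j \leq \frac{\nu_e}{8\|z'_1\|\|z'_j\|} \lesssim \frac{\nu_e}{\mu'_1},
\qquad\text{hence}\qquad
\sqrt{1 - \rho_j^2} \lesssim \sqrt{\nu_e/\mu'_1}.
\]
Since $\mu'_j \geq \delta^{C^*_0}/2$, Fact \ref{fact:cdf} gives $|t_j| \lesssim \sqrt{\log(1/\delta)}$ uniformly in $j$, and a direct application of Fact \ref{fact:cdf-estim} to the ratio $\mu'_j/\mu'_1 \in [1/2,1]$ shows $|t_1 - t_j| \lesssim \sqrt{\nu_e/\mu'_1}\cdot\sqrt{\log(1/\delta)}$.

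The core of the argument is to show that the cut event localizes $g_1$ into a narrow band around $t_1$. Condition on a value of $\zeta_{\max} := \max_{j \geq 2}|\zeta_j|$; by Fact \ref{fact:gauss-max}, $\zeta_{\max} \lesssim \sqrt{\log d}$ with very high probability. For any $j \geq 2$ and any fixing of $g_1$,
\[
|g_j - g_1| \leq (1 - \rho_j)|g_1| + \sqrt{1 - \rho_j^2}\,\zeta_{\max}
\lesssim \sqrt{\nu_e/\mu'_1}\,\bigl(|g_1| + \zeta_{\max}\bigr).
\]
If the edge is cut, then some $g_i \leq t_i$ while some $g_j > t_j$. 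Since $g_i,g_j$ both sit within $O(\sqrt{\nu_e/\mu'_1}(|g_1|+\zeta_{\max}))$ of $g_1$ and the thresholds differ by at most the bound above, this forces
\[
g_1 \in \bigl[t_1 - W,\ t_1 + W\bigr],
\qquad
W \lesssim \sqrt{\nu_e/\mu'_1}\,\bigl(|g_1| + \zeta_{\max} + \sqrt{\log(1/\delta)}\bigr).
\]
After further restricting to the typical event $|g_1| \lesssim |t_1|+1 \lesssim \sqrt{\log(1/\delta)}$, one can replace $W$ by $O(\sqrt{\nu_e/\mu'_1}(\zeta_{\max}+\sqrt{\log(1/\delta)}))$.

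Finally, the probability that $g_1$ lands in this window is controlled by Facts \ref{prop:half-error} and \ref{fact:incr}, which require the perturbation $W$ to satisfy $W\cdot|t_1| \leq 1$; this is precisely where the nice-edge hypothesis $\mu'_1 \geq A_{d,\delta}\nu_e$ with $A_{d,\delta} = 64C^*_0 C^*_1 \log(1/\delta)\log d$ is invoked. Under it,
\[
\Pr\bigl[g_1 \in [t_1 - W,\, t_1 + W]\bigr] \lesssim \mu'_1 \sqrt{\log(1/\mu'_1)}\cdot W
\lesssim \sqrt{\mu'_1\,\nu_e\,\log(1/\delta)}\,\bigl(\zeta_{\max}+\sqrt{\log(1/\delta)}\bigr).
\]
Integrating out $\zeta_{\max}$ via Fact \ref{fact:gauss-max0} (which gives $\Ex[\zeta_{\max}] \lesssim \sqrt{\log d}$) yields the claimed bound $O(\sqrt{\mu'_1\,\nu_e\,\log(1/\delta)\log d})$. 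The main technical obstacle I anticipate is the careful bookkeeping of the low-probability events: the conditioning on $|g_1|$ and on $\zeta_{\max}$ must enter multiplicatively rather than additively, which is handled cleanly by a dyadic decomposition on $\zeta_{\max}$ together with the tail bound from Fact \ref{fact:gauss-max} on each shell.
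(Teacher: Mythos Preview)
Your overall strategy matches the paper's: decompose each $g_j$ along $g_1$, show the cut event localizes $g_1$ to a short window, and bound the window probability via Facts~\ref{prop:half-error} and~\ref{fact:incr}. However, the bookkeeping as written does not deliver the stated bound. The problematic step is the line
\[
|g_j - g_1| \;\leq\; (1-\rho_j)\,|g_1| + \sqrt{1-\rho_j^2}\,\zeta_{\max}
\;\lesssim\; \sqrt{\nu_e/\mu'_1}\,\bigl(|g_1| + \zeta_{\max}\bigr),
\]
where you have collapsed $(1-\rho_j)$ to $\sqrt{\nu_e/\mu'_1}$. In fact $(1-\rho_j)\lesssim \nu_e/\mu'_1$, which is strictly smaller, and this matters: with your looser bound, after replacing $|g_1|$ by $\sqrt{\log(1/\delta)}$ and integrating $\zeta_{\max}$ you pick up an additive term $\sqrt{\mu'_1\nu_e}\cdot\log(1/\delta)$. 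This is \emph{not} dominated by the target $\sqrt{\mu'_1\nu_e\,\log(1/\delta)\log d}$ once $\log(1/\delta)\gg\log d$, and $d,\delta$ are independent parameters in the lemma. The paper avoids this by keeping the first–order drift $(1-\rho_j)g_1$ separate from the fluctuation $\sqrt{1-\rho_j^2}\,\zeta_j$: it uses the asymmetric window $[t'_d - 2a\sqrt{\tilde\theta_m},\,t_1 + 2a\sqrt{\tilde\theta_m}]$ with $t'_d = t_d/\rho_{j^*}$ (Claims~\ref{cl:lower-thr}--\ref{cl:t-bound}), so the drift contributes $|t_d - t'_d|\lesssim |t_d|\tilde\theta_m\lesssim(\nu_e/\mu'_1)\sqrt{\log(1/\delta)}$ and, after multiplying by $\mu'_1\sqrt{\log(1/\mu'_1)}$, yields $\nu_e\log(1/\delta)$ rather than $\sqrt{\mu'_1\nu_e}\,\log(1/\delta)$; the former is absorbable via $\nu_e\le \mu'_1/A_{d,\delta}$ and the specific choice $A_{d,\delta}\asymp\log^2(1/\delta)\log d$. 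For the same reason the paper does \emph{not} estimate $|t_1-t_j|$ in threshold space; it uses $\Phi(t_1)-\Phi(t_d)=\mu'_1-\mu'_d\le 2\nu_e$ directly, which again lands on $\nu_e$ rather than on your looser $\sqrt{\mu'_1\nu_e\log(1/\delta)}$.

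A smaller issue: the phrase ``further restricting to the typical event $|g_1|\lesssim|t_1|+1$'' reads as additional conditioning, but the complement of that event has probability $\asymp\mu'_1$, which can be far larger than the target. What you actually need (and what the paper's window formulation gives for free) is that the cut event itself forces $|g_1|\le|t_1|+O(W)$; solve the self-referential inequality $|g_1|\le|t_1|+W(|g_1|)$ rather than condition.
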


Towards proving Lemma \ref{lem:cut-bound1}, we establish some basic preliminaries and conventions that will be used in this section. Throughout the rest of the section, we will be working in the setting of Lemma \ref{lem:cut-bound1}. Let $\cG = (g_1,\ldots,g_d)$ be the collection of Gaussians in the setting of the Lemma \ref{lem:cut-bound1} defined as 
\[
g_i \defeq \Langle g, \frac{z'_i}{\|z'_i\|} \Rangle = \langle g, \bar{z'_i} \rangle,
\] 
Note that the event ``$e$ is cut'' translates to the following event in terms of the Gaussians:
\[
\Big\{\exists i,j \in e : g_i \leq t_i \wedge g_j > t_j\Big\}.
\]
The biases for the rounding, denoted by $\mu'_1,\ldots,\mu'_d$, are identified with thresholds $t_1,\ldots,t_d$ such that $t_i = \Phi^{-1}(\mu'_i)$. We use $\rho_{ij}$ to denote the correlation between $g_i,g_j$ and note that $\rho_{ij} = \langle \bar{z'_i},\bar{z'_j} \rangle$ using the definition of $(g_i)_{i \in e}$. For brevity, we shall use $\rho_j$ to denote $\rho_{1,j}$ for every $j \geq 2$. Define $\tilde{\theta}_m$ as  
\begin{equation}   				\label{eqn:theta-def}
\tilde{\theta}_m \defeq \max_{2 \leq j \leq d} 1 - \rho^2_j 
\end{equation}
and for the specific setting of the hyperedge $e = [d]$, let it be attained by some $j^* \in [d] \setminus \{1\}$. The following lemma states some inequalities used in the rest of the proofs. 
\setcounter{theorem}{1}
\begin{lemma}				\label{lem:ineq}
	Suppose the conditions in the setting of Lemma \ref{lem:cut-bound1} hold for the hyperedge $e = [d]$. Denote $\nu_e = \max_{i,j \in e} \|v'_i - v'_j\|^2$. Then the parameters $\mu'_i,t_i$ and the correlations $\rho_j = \rho_{1j}$ defined as above satisfy the following properties.
	\begin{itemize}
		\item[(a)] $\max_{i,j \in e} | \mu'_i - \mu'_j| \leq 2\nu_e$.
		\item[(b)] $\max_{i,j \in e} \mu'_i/\mu'_j \leq 2$.
		\item[(c)] $\tilde{\theta}_m \leq 2\frac{\nu_e}{\sqrt{\mu'_1\mu'_j}}$ for every $j \geq 2$.
		\item[(d)] $\tilde{\theta}_m \leq 2\nu_e/\mu'_i$ for every $i \in e$.
		\item[(e)] For every $i \in e$, we have $\mu'_i \geq \delta^{C^*_0}$.
		\item[(f)] For every $i \in e$, $|t_i| \leq 2\sqrt{C^*_0 \log(2/\delta)}$.
		\item[(g)] For every $j \geq 2$  we have $\rho_j \geq 0.9$.
	\end{itemize}
\end{lemma}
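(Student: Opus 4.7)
The seven assertions are deterministic inequalities that all follow by combining the structural claims of Section \ref{sec:proxy} (the $\ell_1$-bound in Claim \ref{cl:mu-ell1}, the cut-correlation bound in Claim \ref{cl:corr1}, the $\|z'_i\|^2$ lower bound in Claim \ref{cl:mu-w-bound}, and the Gaussian CDF inverse estimate of Fact \ref{fact:cdf}) with the quantitative hypothesis $\mu'_1 \geq A_{d,\delta}\nu_e$. I would prove them in the order (a), (b), (e), (f), (c), (d), (g), since later parts feed on earlier ones.

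Part (a) is an immediate restatement of Claim \ref{cl:mu-ell1}. For (b) and (e) the idea is to bootstrap: since $\mu'_1$ is the largest bias, (a) gives $\mu'_j \geq \mu'_1 - 2\nu_e$ for every $j \in e$; the hypothesis $\mu'_1 \geq A_{d,\delta}\nu_e$ with $A_{d,\delta}$ a large polylog then forces $\mu'_j \geq \mu'_1(1 - 2/A_{d,\delta}) \geq \mu'_1/2$, which simultaneously gives (b) and, combined with $\mu'_1 \geq \delta^{C^*_0}$, the lower bound claimed in (e) (up to an absorbed factor of $2$ that the parameter choices in the algorithm accommodate without issue). Part (f) is then a direct application of Fact \ref{fact:cdf}: since $\mu'_i \leq 1/2$ and $\mu'_i \gtrsim \delta^{C^*_0}$, the inverse CDF magnitude is at most $2\sqrt{C^*_0\log(2/\delta)}$.

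For (c) I combine Claim \ref{cl:corr1} with Claim \ref{cl:mu-w-bound}. The former yields $1 - \rho_j \leq \|v'_1 - v'_j\|^2/(8\|z'_1\|\|z'_j\|)$, and the latter gives $\|z'_i\| \geq \sqrt{\mu'_i}/2$ (using $\mu'_i \leq 1/2$), so $1 - \rho_j \leq \nu_e/(2\sqrt{\mu'_1\mu'_j})$. Since $\tilde{\theta}_m = \max_{j \geq 2}(1 - \rho_j^2) \leq 2\max_{j \geq 2}(1 - \rho_j)$, the asserted bound in (c) follows, with the ``for every $j \geq 2$'' phrasing obtained by absorbing a $\sqrt{2}$ factor via (b). Part (d) then follows from (c) after noting $\sqrt{\mu'_1\mu'_i} \geq \mu'_i$ (so that $2\nu_e/\sqrt{\mu'_1\mu'_i} \leq 2\nu_e/\mu'_i$) for $i \geq 2$, and using (b) to reduce the $i = 1$ case to any other index.

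Finally for (g), the same correlation bound together with (b) gives $1 - \rho_j = O(\nu_e/\mu'_1) = O(1/A_{d,\delta})$, which is far below $0.1$ because $A_{d,\delta} = 64 C^*_0 C^*_1 \log(1/\delta)\log(d)$ dwarfs any absolute constant; this also shows $\rho_j$ is close to $1$ and in particular nonnegative. There is no genuine obstacle in the argument: the entire lemma is constant-bookkeeping, and the only care required is ordering the inequalities so that (a), (b), and (e) are available before (c), (f), and (g) invoke them.
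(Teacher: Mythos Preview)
Your proposal is correct and follows essentially the same route as the paper: parts (a), (b), (c), (d), (f), (g) are argued identically, chaining Claims \ref{cl:mu-ell1}, \ref{cl:corr1}, \ref{cl:mu-w-bound} and Fact \ref{fact:cdf} with the hypothesis $\mu'_1 \geq A_{d,\delta}\nu_e$. The one small divergence is item (e): you derive $\mu'_i \gtrsim \delta^{C^*_0}$ by combining (b) with the hypothesis $\mu'_1 \geq \delta^{C^*_0}$, whereas the paper invokes Claim \ref{cl:mu-bounds} directly (the preprocessing guarantees $\mu'_i \geq \theta^2/10$ for every vertex, independent of the hyperedge), which is slightly cleaner since it does not need (b) at all; both routes lose the same harmless constant factor.
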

\begin{proof}
	For (a), for any $i,j \in V$, using Claim \ref{cl:mu-ell1} we have $|\mu'_i - \mu'_j| \leq 2\|v'_i - v'_j\|^2 \leq 2\nu_e$. 

	Recall that in the setting of the Lemma \ref{lem:cut-bound1} we have $\mu'_1 \geq A_{d,\delta} \nu_e \geq 100 \nu_e$. Hence for any $j,j' \in e$, using  item (a) we get that $\mu'_j \geq \mu'_1 - 2\nu_e \geq 0.98 \mu'_1 \geq 0.98 \mu'_{j'}$, which establishes (b). 

	For (c), using the definition from \eqref{eqn:theta-def} we see that
	\begin{align*}
	\tilde{\theta}_m = \max_{j \geq 2} (1 - \rho_j)(1 + \rho_j) 
	&\leq 2 \max_{j \geq 2}(1 - \rho_j) 			\\
	& = 2 \max_{j \geq 2} \paren{ 1 - \langle \bar{z'_1},\bar{z'_j} \rangle	}	\tag{Definition of $\rho_j$} \\
	&\leq \frac14\max_{j \geq 2}\frac{\|v'_1 - v'_j\|^2}{\|z'_1\|\|z'_j\|}		\tag{Claim \ref{cl:corr1}} \\
	&\leq \frac14\max_{j \geq 2}\frac{\nu_e}{\|z'_1\|\|z'_j\|}					\tag{Definition of $\nu_e$} \\
	&\leq \max_{j \geq 2}\frac{\nu_e}{\sqrt{\mu'_1\mu'_j}} 		\tag{Since $\mu'_1,\mu'_j \leq 1/2$ and Claim \ref{cl:mu-w-bound}}\\
		& \leq 2 \frac{\nu_e}{\sqrt{\mu'_1\mu'_i}} \qquad \forall i \in e  \tag{Using item (b)}	
	\end{align*}

	For (d), using $\mu'_1 \geq \mu'_i,\ \forall i \in e$, we get
	\[ \tilde{\theta}_m \leq 2 \frac{\nu_e}{\sqrt{\mu'_1\mu'_i}} 
		\leq \frac{2 \nu_e}{\mu'_i} \qquad \forall i \in e . \]

Item (e) follows from Claim \ref{cl:mu-bounds} and our setting of parameter $\theta^2 = \delta^{C^*_0}$. For item (f), we use item (e), Fact \ref{fact:cdf}, and the fact that the map $x \mapsto |\Phi^{-1}(x)|$ is decreasing in $x \in (0,1/2]$ to show
\[
|t_i| = \left|\Phi^{-1}(\mu'_i)\right| \leq \left| \Phi^{-1}(\delta^{C^*_0}) \right| \leq 2 \sqrt{C^*_0\log\frac1\delta} . 
\]
For item (g), we observe that for any $j \in \{2,\ldots,d\}$ we have
\begin{align*}
\rho_j & \geq  1 - \frac{\|v'_1 - v'_j\|^2}{8\|z'_1\|\|z'_j\|} 		\tag{Claim \ref{cl:corr1}}\\
					  & \geq 1 - \frac{\nu_e}{\sqrt{\mu'_1\mu'_j}} 		\tag{Claim \ref{cl:mu-w-bound}}\\
					  & \geq 1 - \frac{1}{A_{d,\delta}} 			\tag{Since $\mu'_1 \geq A_{d,\delta} \cdot \nu_e$}\\
					  &\geq 0.9.	
\end{align*}
\end{proof}

\newcommand{\tpo}{t_d}
\newcommand{\tdo}{t'_d}

\subsection{Proof of Lemma \ref{lem:cut-bound1}}		

\begin{proof}
	Throughout the proof, we shall assume that all the conditions of Lemma \ref{lem:cut-bound1} and their consequences (Lemma \ref{lem:ineq}) hold. Recall that $g_1,\ldots,g_d$ is the collection of correlated Gaussians associated with the vertices of the hyperedge $e$. We begin with the following claim.
	
	\begin{claim}				\label{cl:gauss-decomp}
		For every $i \in \{2,\ldots,d\}$, we can write $g_i = \rho_{i} \cdot g_1 + \sqrt{1 - \rho^2_{i}}\cdot \zeta_i$, where $(\zeta_2,\ldots,\zeta_d)$ are jointly distributed Gaussian random variables that are independent of $g_1$. In particular, $\zeta_i$ is marginally distributed as $N(0,1)$ for every $i \in \{2,\ldots,d\}$.
	\end{claim}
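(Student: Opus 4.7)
The plan is to use the elementary fact that linear functionals of a standard multivariate Gaussian are jointly Gaussian, and that for jointly Gaussian random variables uncorrelated implies independent. Since $g \sim N(0,I)$ and each $g_i = \langle g, \bar{z'_i}\rangle$ is a linear functional, the vector $(g_1,\ldots,g_d)$ is jointly Gaussian; the marginal variance of each $g_i$ is $\|\bar{z'_i}\|^2 = 1$, so $g_i \sim N(0,1)$ marginally, and by definition $\mathrm{Cov}(g_1,g_i) = \langle \bar{z'_1},\bar{z'_i}\rangle = \rho_i$.

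First I would note that by Lemma \ref{lem:ineq}(g) we have $\rho_i \geq 0.9$, so in particular $1-\rho_i^2 > 0$ and the quantity $\sqrt{1-\rho_i^2}$ is well defined and nonzero. I would then \emph{define}
\[
\zeta_i \defeq \frac{g_i - \rho_i g_1}{\sqrt{1 - \rho_i^2}}, \qquad i = 2,\ldots,d.
\]
By construction this rearranges to $g_i = \rho_i g_1 + \sqrt{1-\rho_i^2}\,\zeta_i$, which is the identity we want. The vector $(g_1,\zeta_2,\ldots,\zeta_d)$ is a linear image of the jointly Gaussian vector $(g_1,\ldots,g_d)$, hence jointly Gaussian itself.

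Next I would verify the two statistical properties of $\zeta_i$ by a direct covariance computation. For the variance,
\[
\mathrm{Var}(\zeta_i) = \frac{1}{1-\rho_i^2}\bigl(\mathrm{Var}(g_i) - 2\rho_i\mathrm{Cov}(g_i,g_1) + \rho_i^2 \mathrm{Var}(g_1)\bigr) = \frac{1 - 2\rho_i^2 + \rho_i^2}{1-\rho_i^2} = 1,
\]
so $\zeta_i \sim N(0,1)$ marginally. For independence from $g_1$,
\[
\mathrm{Cov}(\zeta_i,g_1) = \frac{1}{\sqrt{1-\rho_i^2}}\bigl(\mathrm{Cov}(g_i,g_1) - \rho_i \mathrm{Var}(g_1)\bigr) = \frac{\rho_i - \rho_i}{\sqrt{1-\rho_i^2}} = 0,
\]
and since $(g_1,\zeta_2,\ldots,\zeta_d)$ are jointly Gaussian, zero covariance between $\zeta_i$ and $g_1$ upgrades to independence of $g_1$ from the entire vector $(\zeta_2,\ldots,\zeta_d)$ (the cross-covariance block vanishes entry-wise).

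There is no real obstacle here; this is a textbook orthogonal decomposition of a Gaussian. The only mild subtlety is ensuring $1 - \rho_i^2 > 0$ so that the definition of $\zeta_i$ makes sense, which is handled by invoking Lemma \ref{lem:ineq}(g).
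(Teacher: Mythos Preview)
Your proof is correct and essentially the same as the paper's. The only cosmetic difference is that the paper carries out the orthogonal decomposition on the direction vectors $b_i := \bar{z'_i}$ (writing $b_i = \rho_i b_1 + \sqrt{1-\rho_i^2}\,b'_i$ with $b'_i \perp b_1$ a unit vector, then setting $\zeta_i := \langle g, b'_i\rangle$), whereas you do the equivalent decomposition directly on the random variables; both routes are the standard Gram--Schmidt/regression argument for jointly Gaussian variables.
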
	
	\begin{proof}
		For simplicity, let us denote $b_i := z'_i/\|z'_i\|$. Then by definition, we have $g_i = \langle g, b_i \rangle$ for every $i \in [d]$ where $g$ is distributed as $N(0,1)^r$. Furthermore, we also have $\langle b_1,b_i \rangle = \rho_{i}$. Hence for any $i \in \{2,\ldots,d\}$, we can write $b_i$ as a sum of its $b_i$ and $b^\perp_i$ components as 
		\[
		b_i = \rho_i \cdot b_1 + \sqrt{1 - \rho^2_i} \cdot b'_i,
		\] 
		where $b'_i$ is the unit vector which is the component of $b_i$ that is orthogonal to $b_1$. Hence, writing $\zeta_i := \langle g,b'_i \rangle$ for every $i \in \{2,\ldots,d\}$ we get that 
		\[
		g_i = \langle g, b_i \rangle = \rho_i \cdot\langle g,b_1 \rangle + \sqrt{1 - \rho^2_i} \cdot \langle g,b'_i \rangle = \rho_i \cdot g_1 + \sqrt{1 - \rho^2_i} \cdot \zeta_i.
		\]
		Note that since $b'_i$ is a unit vector, the random variable $\zeta_i = \langle g, b'_i \rangle$ is marginally distributed as $N(0,1)$. Finally, since $b'_i \perp b_1$ for every $i \geq 2$, it follows that $\zeta_2,\ldots,\zeta_d$ are independent of $g_1$.
	\end{proof}
	Let $\zeta_{\rm max} := \max_{j \geq 2} |\zeta_j|$, and let $\varphi: \mathbbm{R}^{\geq 0} \to \mathbbm{R}^{\geq 0}$ be the probability mass function corresponding to the distribution of $\zeta_{\rm max}$. The following is a straightforward consequence of Claim \ref{cl:gauss-decomp}.
	\begin{observation} 		\label{obs:zeta-indep}
		The random variable $\zeta_{\rm max}$ is independent of $g_1$.
	\end{observation}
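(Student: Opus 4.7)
The plan is to observe that this is an immediate corollary of Claim \ref{cl:gauss-decomp}, which has already done essentially all of the work. That claim establishes not merely that each individual $\zeta_i$ is independent of $g_1$, but that the \emph{joint} random vector $(\zeta_2,\ldots,\zeta_d)$ is independent of $g_1$. Once we have this joint independence, the observation follows from the general measure-theoretic fact that any deterministic (measurable) function of a random vector independent of $g_1$ is itself independent of $g_1$, applied to the function $(x_2,\ldots,x_d)\mapsto \max_{j\geq 2} |x_j|$.

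First I would briefly unpack why the joint independence in Claim \ref{cl:gauss-decomp} holds, since the claim's proof only states it implicitly. Recall that in the notation of that proof, $b_1 = \bar{z'_1}$ and $\zeta_i = \langle g, b'_i\rangle$ for $i\in\{2,\ldots,d\}$, where each $b'_i$ is a unit vector orthogonal to $b_1$. Writing $g = g_1 b_1 + g_\perp$ with $g_\perp$ the component of $g$ orthogonal to $b_1$, standard properties of the standard Gaussian ensemble give that $g_1$ and $g_\perp$ are independent (orthogonal linear functionals of a spherical Gaussian are independent). Since each $\zeta_i = \langle g_\perp, b'_i\rangle$ is a function of $g_\perp$ alone, the entire vector $(\zeta_2,\ldots,\zeta_d)$ is a (linear) function of $g_\perp$ and is therefore independent of $g_1$.

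Given this joint independence, I would conclude by noting that $\zeta_{\rm max} = \max_{j\geq 2} |\zeta_j|$ is a continuous, hence Borel-measurable, function of $(\zeta_2,\ldots,\zeta_d)$, so it remains independent of $g_1$. There is no real obstacle here, the observation is a one-line consequence of Claim \ref{cl:gauss-decomp}; the only thing worth emphasizing is the distinction between marginal independence (each $\zeta_i$ independent of $g_1$) and joint independence (the vector independent of $g_1$), the latter being what is actually needed and what is actually provided by the construction in the proof of Claim \ref{cl:gauss-decomp}.
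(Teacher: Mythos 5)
Your proposal is correct and matches the paper's treatment: the paper also deduces Observation \ref{obs:zeta-indep} directly from the joint independence of $(\zeta_2,\ldots,\zeta_d)$ and $g_1$ established in Claim \ref{cl:gauss-decomp} (stating it as a "straightforward consequence"), and your elaboration of the orthogonal-decomposition and measurable-function steps simply fills in the same argument.
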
  
	Note that since $\zeta_{\rm max}$ is the maximum absolute value of $(d-1)$ (non-independent) Gaussian random variables, we can use Facts \ref{fact:gauss-max}, \ref{fact:gauss-max1}, and \ref{fact:gauss-max0} to derive useful bounds on the expectation and tail probabilities of $\zeta_{\rm max}$.
	\begin{lemma}				\label{lem:zeta}
		The random variable $\zeta_{\rm max} = \max_{j \geq 2} |\zeta_j|$ satisfies the following bounds for some absolute constant $C>0$.
		\begin{itemize}
			\item[(i)] $\Ex_{\zeta_2,\ldots,\zeta_d}\left[\zeta_{\rm max}\right] \leq \sqrt{C\log d}$. \\[-6pt]
			\item[(ii)] $\Ex_{\zeta_2,\ldots,\zeta_d}\left[\zeta^2_{\rm max}\right] \leq C\log d$.
			\item[(iii)] $\Pr_{\zeta_2,\ldots,\zeta_d}\left[\zeta_{\rm max} \geq \sqrt{C^*_1\log d}\right] \leq e^{-\frac{C^*_1\log d}{4}}$.
		\end{itemize}
	\end{lemma}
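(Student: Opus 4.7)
The statement is essentially a direct repackaging of Facts~\ref{fact:gauss-max}, \ref{fact:gauss-max1}, and \ref{fact:gauss-max0} applied to the collection $\{\zeta_2,\ldots,\zeta_d\}$. By Claim~\ref{cl:gauss-decomp}, these random variables are jointly distributed with each $\zeta_j$ marginally $N(0,1)$, so the hypotheses of the three cited facts are satisfied (with the ambient dimension being $d-1$ rather than $d$). The plan is therefore to dispatch each of the three items by invoking the corresponding fact and then absorbing the difference between $\log(d-1)$ and $\log d$ into the absolute constant $C$.

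For item (iii), I would apply Fact~\ref{fact:gauss-max} to the $d-1$ Gaussians $\zeta_2,\ldots,\zeta_d$ with constant $C^*_1$; this gives
\[
\Pr\left[\zeta_{\rm max} \geq \sqrt{C^*_1 \log(d-1)}\right] \leq e^{-C^*_1 \log(d-1)/4}.
\]
Since $d \geq 2$, one can pass from $\log(d-1)$ to $\log d$ at the cost of increasing the probability bound on the left by a harmless constant factor (and for the degenerate case $d=2$ one simply invokes the univariate Gaussian tail bound directly); either way, the stated form $e^{-C^*_1 \log d/4}$ holds for all $d \geq 2$. Note that the hypothesis $C \geq 4$ in Fact~\ref{fact:gauss-max} is satisfied since by definition $C^*_1 = 32 C^*_0 \log(1/\delta) \geq 32$.

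For items (i) and (ii), I would similarly invoke Fact~\ref{fact:gauss-max0} and Fact~\ref{fact:gauss-max1}, applied to $\zeta_2,\ldots,\zeta_d$. This immediately yields
\[
\Ex[\zeta_{\rm max}^2] \leq C\log(d-1) \leq C\log d \quad\text{and}\quad \Ex[\zeta_{\rm max}] \leq \sqrt{C \log(d-1)} \leq \sqrt{C\log d},
\]
for an absolute constant $C$ (possibly after absorbing factors to move from $d-1$ to $d$).

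There is really no obstacle here — the work is entirely borne by the preliminary facts, and the lemma functions as a convenient renaming step isolating the consequences that will be used later. The only mild care point is handling $d=2$, where $\log(d-1)=0$; this is resolved either by bumping the constant $C$ or by noting that for a single standard Gaussian $\zeta_2$, the claims $\Ex|\zeta_2|\leq\sqrt{C\log 2}$, $\Ex \zeta_2^2 = 1 \leq C\log 2$, and $\Pr[|\zeta_2|\geq \sqrt{C^*_1 \log 2}] \leq e^{-C^*_1 \log 2/4}$ all hold trivially for any sufficiently large absolute constant $C$.
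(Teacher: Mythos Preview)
Your proposal is correct and follows essentially the same approach as the paper, which simply notes that items (i), (ii), and (iii) follow directly from Facts~\ref{fact:gauss-max0}, \ref{fact:gauss-max1}, and \ref{fact:gauss-max} respectively, using that each $\zeta_j$ is marginally $N(0,1)$. Your additional care about the distinction between $d-1$ and $d$ (and the $d=2$ edge case) is a reasonable nicety that the paper glosses over.
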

	\begin{proof}
		Since $\zeta_{\rm max}$ is the maximum absolute value of $\zeta_2,\ldots,\zeta_d$, each of which is marginally distributed as $N(0,1)$, items (i),(ii), and (iii) follow directly using Facts \ref{fact:gauss-max0}, \ref{fact:gauss-max1}, and \ref{fact:gauss-max} respectively.
	\end{proof}
	
	Recall that $t_i = \Phi^{-1}(\mu'_i)$ for every $i \in [d]$. Define
	\[
	\tdo \defeq \frac{\tpo}{\sqrt{1 - \tilde{\theta}_m}} = \frac{\tpo}{\rho_{j^*}}
	\]
	where recall that $\tilde{\theta}_m = 1 - \rho^2_{j^*}$ from \eqref{eqn:theta-def}.  We begin by making a couple of basic observations.
	\begin{claim}			\label{cl:lower-thr}
		Fix a realization of $\zeta_{\rm max} = a$ independent of $g_1$ (Observation \ref{obs:zeta-indep}) such that $a > 0$. Suppose $g_1 \leq \tdo - 2a \sqrt{\tilde{\theta}_m}$. Then for any other $j \geq 2$ we have $g_j < t_j$.
	\end{claim}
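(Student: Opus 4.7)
The plan is to leverage the decomposition from Claim~\ref{cl:gauss-decomp}, which writes $g_j = \rho_j g_1 + \sqrt{1-\rho_j^2}\,\zeta_j$ for every $j \geq 2$, together with the uniform control $|\zeta_j| \leq \zeta_{\max} = a$ and the inequality $\sqrt{1-\rho_j^2} \leq \sqrt{\tilde{\theta}_m}$ (which is just the definition of $\tilde{\theta}_m$ in \eqref{eqn:theta-def}). Since Lemma~\ref{lem:ineq}(g) guarantees $\rho_j \geq 0.9 > 0$, I can multiply the assumed upper bound on $g_1$ through by $\rho_j$ without flipping the inequality and obtain
\[
g_j \;\leq\; \rho_j\bigl(\tdo - 2a\sqrt{\tilde{\theta}_m}\bigr) + \sqrt{\tilde{\theta}_m}\,a \;=\; \rho_j \tdo + a\sqrt{\tilde{\theta}_m}\,(1 - 2\rho_j).
\]
Because $\rho_j \geq 0.9$, the coefficient $(1-2\rho_j) \leq -0.8$ is strictly negative, while $a\sqrt{\tilde{\theta}_m} > 0$, so the second term can be dropped to yield the strict bound $g_j < \rho_j \tdo$.

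It then remains to verify $\rho_j \tdo \leq t_j$. Unfolding $\tdo = t_d/\rho_{j^*}$, this becomes $(\rho_j/\rho_{j^*})\, t_d$. The key observation is that $j^*$ attains $\tilde{\theta}_m = \max_{j \geq 2}(1-\rho_j^2)$, so $\rho_{j^*}$ is the \emph{smallest} correlation in $\{\rho_j\}_{j \geq 2}$ (all of which are positive by Lemma~\ref{lem:ineq}(g)); consequently $\rho_j/\rho_{j^*} \geq 1$. Since every bias is at most $1/2$, every threshold $t_i = \Phi^{-1}(\mu'_i)$ is non-positive, so multiplying the negative number $t_d$ by a factor at least $1$ can only make it smaller: $(\rho_j/\rho_{j^*})\, t_d \leq t_d$. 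Finally, the ordering $\mu'_d \leq \mu'_j$ and monotonicity of $\Phi^{-1}$ give $t_d \leq t_j$, so chaining the inequalities produces $g_j < \rho_j \tdo \leq t_d \leq t_j$, which is the desired conclusion.

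I do not expect any genuine obstacle here; the argument is essentially a careful chain of inequalities. The only subtlety is sign tracking: since $t_d \leq 0$, the effect of multiplying by a factor $\geq 1$ is counterintuitive (it moves $t_d$ \emph{further} from zero), and one has to correctly identify $\rho_{j^*}$ as the minimum correlation rather than the maximum. Both points are immediate once one unpacks the definitions of $\tdo$ and $\tilde{\theta}_m$.
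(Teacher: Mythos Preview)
Your proof is correct and follows essentially the same route as the paper's: both use the decomposition $g_j = \rho_j g_1 + \sqrt{1-\rho_j^2}\,\zeta_j$, bound the noise term by $a\sqrt{\tilde\theta_m}$, exploit $\rho_j \geq 0.9$ to make the coefficient $(1-2\rho_j)$ strictly negative, and then use $\rho_j/\rho_{j^*} \geq 1$ together with $t_d \leq 0$ and $t_d \leq t_j$ to conclude. The paper presents the chain of inequalities slightly more compactly, but the ingredients and logic are identical.
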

	\begin{proof}
	Fix any $j \geq 2$. Recall that $\rho_j \geq 0$ using Lemma \ref{lem:ineq}(g). 
	Using the upper bound on $g_1$, the fact that $1 - \rho_j^2 \leq \tilde{\theta}_m$, and  the fact that $|\zeta_j| \leq \zeta_{\rm max} = a$ we get that
	\begin{align*}
	g_j = \rho_j g_1 + \sqrt{1 - \rho^2_j} \cdot \zeta_j 
	&\leq \rho_j \paren{\frac{\tpo}{\sqrt{1 - \tilde{\theta}_m}} - 2a \sqrt{\tilde{\theta}_m}} + \sqrt{\tilde{\theta}_m} \cdot a \\
	& = \rho_j \frac{\tpo}{\rho_{j^*}} + (1 - 2\rho_j) a \sqrt{\tilde{\theta}_m} \\
	& \overset{1}{\leq} \tpo - 0.8 a \sqrt{\tilde{\theta}_m} < \tpo \leq t_j
	\end{align*}
	where inequality $1$ follows from the following observations. By definition of $j^*$, we have 
	$\rho_j \geq \rho_{j^*}$ for every $j \geq 2$, and $\tpo < 0$. Therefore, $\rho_j \tpo/\rho_{j^*} \leq \tpo$.
	The bound on the second term follows using $\rho_j \geq 0.9$ (Lemma \ref{lem:ineq}(g)).
	\end{proof}
	\begin{claim} 			\label{cl:upper-thr}
		Fix a realization of $\zeta_{\rm max} = a$ independent of $g_1$ (Observation \ref{obs:zeta-indep}) such that $a > 0$. Suppose $g_1 \geq t_1 + 2a\sqrt{\tilde{\theta}_m}$. Then for any $j \geq 2$ we have $g_j > t_j$.
	\end{claim}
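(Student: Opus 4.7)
Claim \ref{cl:upper-thr} is the mirror image of Claim \ref{cl:lower-thr} with the roles of the lower and upper thresholds swapped, so the plan is to run essentially the symmetric argument. The plan is to start from the Gaussian decomposition of Claim \ref{cl:gauss-decomp}: for every $j \geq 2$ we have $g_j = \rho_j g_1 + \sqrt{1-\rho_j^2}\,\zeta_j$, with $|\zeta_j| \leq \zeta_{\rm max} = a$ and $\sqrt{1 - \rho_j^2} \leq \sqrt{\tilde\theta_m}$ by the definition of $\tilde\theta_m$ in \eqref{eqn:theta-def}. Hence $g_j \geq \rho_j g_1 - a\sqrt{\tilde\theta_m}$, which is the key one-sided bound.

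Next I would plug in the hypothesis $g_1 \geq t_1 + 2a\sqrt{\tilde\theta_m}$ to obtain
\[
g_j \;\geq\; \rho_j\!\left(t_1 + 2a\sqrt{\tilde\theta_m}\right) - a\sqrt{\tilde\theta_m} \;=\; \rho_j t_1 + (2\rho_j - 1)\,a\sqrt{\tilde\theta_m}.
\]
Now two sign facts from the setup of Lemma \ref{lem:cut-bound1} enter. First, since $\mu'_i \leq 1/2$ for every $i \in e$ we have $t_i = \Phi^{-1}(\mu'_i) \leq 0$; combined with $\rho_j \in [0,1]$ this gives $\rho_j t_1 \geq t_1$. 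Second, by Lemma \ref{lem:ineq}(g) we have $\rho_j \geq 0.9$, so $2\rho_j - 1 \geq 0.8 > 0$. Substituting these,
\[
g_j \;\geq\; t_1 + 0.8\,a\sqrt{\tilde\theta_m} \;\geq\; t_1.
\]

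To finish, I would use the ordering of thresholds: because the $\mu'_i$ are sorted with $\mu'_1 = \max_{i \in e} \mu'_i$ and $\Phi^{-1}$ is monotonically increasing, we have $t_1 = \Phi^{-1}(\mu'_1) \geq \Phi^{-1}(\mu'_j) = t_j$ for every $j \geq 2$. Hence $g_j \geq t_1 + 0.8 a \sqrt{\tilde\theta_m} \geq t_j$, with strict inequality whenever either $a > 0$ and $\tilde\theta_m > 0$, or $t_1 > t_j$; in the degenerate corner $\tilde\theta_m = 0$ we have $\rho_j = 1$ for all $j \geq 2$ (using $\rho_j \geq 0.9$) so $g_j = g_1 > t_1 \geq t_j$ directly from the hypothesis. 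Either way, $g_j > t_j$, as required. There is no real obstacle beyond bookkeeping of signs: the two inputs doing all the work are the Gaussian decomposition from Claim \ref{cl:gauss-decomp} and the $\rho_j \geq 0.9$ bound from Lemma \ref{lem:ineq}(g), exactly as in Claim \ref{cl:lower-thr}.
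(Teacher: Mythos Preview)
Your proof is correct and follows essentially the same approach as the paper's: you use the decomposition $g_j = \rho_j g_1 + \sqrt{1-\rho_j^2}\,\zeta_j$, bound the noise term by $a\sqrt{\tilde\theta_m}$, substitute the hypothesis on $g_1$, and then invoke $\rho_j \geq 0.9$ together with $t_j \leq t_1 \leq 0$ to conclude $g_j \geq t_j + 0.8\,a\sqrt{\tilde\theta_m} > t_j$. The only difference is cosmetic---you split the inequality $\rho_j t_1 \geq t_j$ into two steps ($\rho_j t_1 \geq t_1$ and $t_1 \geq t_j$) and add a remark on the degenerate case $\tilde\theta_m = 0$, which the paper leaves implicit (and which is harmless since it is a measure-zero event for $g_1$).
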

	\begin{proof}
	Fix any $j \geq 2$, Then using the lower bound on $g_1$, and the bounds $\rho_j \geq 0$ and $|\zeta_j| \leq a$ we have
	\begin{eqnarray*}
	g_j = \rho_j g_1 + \sqrt{1 - \rho^2_j} \cdot \zeta_j 
	&\geq& \rho_j \left(t_1 + 2a\sqrt{\tilde{\theta}_m}\right) - a\sqrt{\tilde{\theta}_m} \\
	& = & \rho_j t_1 + (2\rho_j - 1) a \sqrt{\tilde{\theta}_m}  \\
	&\overset{1}{\geq} & t_j + 0.8 a\sqrt{\tilde{\theta}_m} > t_j \\
	\end{eqnarray*}
	where in inequality $1$, we use $t_j \leq t_1 \leq 0$ (since $\mu'_j \leq \mu'_1 \leq 1/2$) and $1 \geq \rho_j \geq 0.9$ (Lemma \ref{lem:ineq}(g)).
	\end{proof}
	\begin{claim}				\label{cl:t-bound}
		$|\Phi(\tpo) - \Phi(\tdo)| \leq |\tpo - \tdo| \leq 2 |\tpo|\tilde{\theta}_m$.
	\end{claim}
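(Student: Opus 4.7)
\textbf{Proof plan for Claim \ref{cl:t-bound}.} The claim has two parts, and both are short. For the first inequality $|\Phi(\tpo) - \Phi(\tdo)| \leq |\tpo - \tdo|$, I would simply invoke Fact \ref{fact:cdf1}, which states that the Gaussian CDF $\Phi$ is $1/\sqrt{2\pi}$-Lipschitz, hence in particular $1$-Lipschitz. So no work is required here.

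For the second inequality $|\tpo - \tdo| \leq 2|\tpo|\tilde{\theta}_m$, I plan to compute $\tpo - \tdo$ in closed form. By definition, $\tdo = \tpo / \sqrt{1 - \tilde{\theta}_m} = \tpo/\rho_{j^*}$, so
\[
|\tpo - \tdo| \;=\; |\tpo|\cdot\left|1 - \frac{1}{\rho_{j^*}}\right| \;=\; |\tpo|\cdot\frac{1 - \rho_{j^*}}{\rho_{j^*}}.
\]
The algebraic identity $1 - \rho_{j^*} = (1 - \rho_{j^*}^2)/(1 + \rho_{j^*}) = \tilde{\theta}_m/(1 + \rho_{j^*})$ then gives
\[
|\tpo - \tdo| \;=\; |\tpo| \cdot \frac{\tilde{\theta}_m}{\rho_{j^*}(1 + \rho_{j^*})}.
\]

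The only remaining step is to lower bound $\rho_{j^*}(1 + \rho_{j^*})$ by a constant at least $1/2$, which follows from Lemma \ref{lem:ineq}(g): since $\rho_j \geq 0.9$ for every $j \geq 2$, in particular $\rho_{j^*} \geq 0.9$, so $\rho_{j^*}(1 + \rho_{j^*}) \geq 0.9 \cdot 1.9 \geq 1 \geq 1/2$. Plugging this back yields $|\tpo - \tdo| \leq 2|\tpo|\tilde{\theta}_m$, as claimed.

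There is no real obstacle here; the claim is essentially just the first-order Taylor expansion of $x \mapsto 1/\sqrt{1-x}$ around $x = 0$, combined with the $1$-Lipschitz property of $\Phi$. The only reason a $2$ (rather than $1$) appears on the right-hand side is to give a clean constant without having to track $\rho_{j^*}$ carefully; my calculation in fact shows $|\tpo - \tdo| \leq |\tpo|\tilde{\theta}_m$, which is strictly stronger than the stated bound and will give the $2$ with room to spare.
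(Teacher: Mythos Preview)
Your proof is correct and essentially identical to the paper's: both invoke Fact \ref{fact:cdf1} for the first inequality, then compute $|\tpo - \tdo| = |\tpo|\cdot\tilde{\theta}_m/(\rho_{j^*}(1+\rho_{j^*}))$ and bound the denominator below using $\rho_{j^*} \geq 0.9$ from Lemma \ref{lem:ineq}(g). Your observation that the bound actually gives $|\tpo - \tdo| \leq |\tpo|\tilde{\theta}_m$ (not just $2|\tpo|\tilde{\theta}_m$) is correct and harmless.
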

	\begin{proof}  	
		Using Fact \ref{fact:cdf1}, and $\tilde{\theta}_m = 1 - \rho^2_{j^*}$ we have
		\[
		|\Phi(\tpo) - \Phi(\tdo)| \leq |\tpo - \tdo| = \left|\frac{\tpo}{\rho_{j^*}} - \tpo \right|
		= |\tpo| \frac{1 - \rho_{j^*}}{\rho_{j^*}}
		= |\tpo| \frac{1 - \rho^2_{j^*}}{\rho_{j^*}(1 + \rho_{j^*})}
		\leq 2|\tpo|\tilde{\theta}_m
		\]
		where in the last term we use $\tilde{\theta}_m = 1 - \rho^2_{j^*}$ and $\rho_{j^*} \geq 0.9$ from Lemma \ref{lem:ineq}(g).
	\end{proof}

	Therefore, from Claims \ref{cl:lower-thr} and \ref{cl:upper-thr} it follows that fixing $\zeta_{\rm max} = a$ independent of $g_1$, the hyperedge can get cut only when $g_1 \in [\tdo - 2a\sqrt{\tilde{\theta}_m},t_1 + 2a\sqrt{\tilde{\theta}_m}]$. For ease of notation, we shall denote $\Delta(a,t_1,\tdo) = \Phi(t_1 + 2a \sqrt{\tilde{\theta}_m}) - \Phi(\tdo - 2a\sqrt{\tilde{\theta}_m})$. Furthermore, recall that $\varphi(\cdot)$ is the mass function of the random variable $\zeta_{\rm max}$. Then,
	\begin{align}
	\Pr\left[ e \mbox{ is cut} \right]
	&= \int^{\infty}_0 \Pr\left[e \mbox{ is cut} \Big| \zeta_{\rm max} = a \right] \varphi(a)da \nonumber \\
	&\leq \int^{\infty}_0 \Pr\left[ g_1 \in [\tdo -2 a \sqrt{\tilde{\theta}_m}, t_1 + 2a \sqrt{\tilde{\theta}_m} ] \Big| \zeta_{\rm max} = a  \right] \varphi(a)da \nonumber\\ 
	&= \int^{\infty}_0 \Pr\left[ g_1 \in [\tdo -2 a \sqrt{\tilde{\theta}_m}, t_1 + 2a \sqrt{\tilde{\theta}_m} ]\right] \varphi(a)da \tag{Observation \ref{obs:zeta-indep}} \\
	&= \int^{\infty}_0 \left(\Phi(t_1 + 2a\sqrt{\tilde{\theta}_m}) - \Phi(\tdo - 2a\sqrt{\tilde{\theta}_m})\right) \varphi(a)da \nonumber \\
	&= \int^{\infty}_0 \Delta(a,t_1,\tdo) \varphi(a)da \nonumber \\
	&= \int^{\infty}_{\sqrt{C^*_1 \log d}} \Delta(a,t_1,\tdo) \varphi(a)da + \int^{\sqrt{C^*_1 \log d}}_0 \Delta(a,t_1,\tdo) \varphi(a)da, 	\label{eqn:main}
	\end{align}
	where recall that $C^*_1 = 32 C^*_0 \log(1/\delta)$ and $C^*_0 = 24$ is a constant.

	\paragraph{Bounding the first term of \eqref{eqn:main}.} We begin by breaking the integral into intervals.
	\begin{align}
	&\int^{\infty}_{\sqrt{C^*_1 \log d}} \Delta(a,t_1,\tdo) \varphi(a)da 			\non\\
	&= \int^{\infty}_{\sqrt{C^*_1 \log d}} \paren{\Phi\paren{t_1 + 2a\sqrt{\tilde{\theta}_m}} - \Phi\paren{\tdo - 2a\sqrt{\tilde{\theta}_m}}} \varphi(a)da	\non \\
	&= \int^{\infty}_{\sqrt{C^*_1 \log d}}\left( \paren{\Phi\paren{t_1 + 2a\sqrt{\tilde{\theta}_m}} - \Phi(t_1)\right) + \left(\Phi(t_1) - \Phi(\tdo)\right) + \left(\Phi(\tdo) - \Phi\paren{\tdo - 2a\sqrt{\tilde{\theta}_m}}} \right)\varphi(a)da. 	\label{eqn:terms}
	\end{align}
	For the middle term of \eqref{eqn:terms} we have 
	\begin{align}
	& \int^{\infty}_{\sqrt{C^*_1 \log d}} \left(\Phi(t_1) - \Phi(\tdo)\right) \varphi(a)da	\non \\
	&= \int^{\infty}_{\sqrt{C^*_1 \log d}} \left(\Phi(t_1) - \Phi(\tpo) + \Phi(\tpo)- \Phi(\tdo)\right) \varphi(a)da 			\non\\
	&= \int^{\infty}_{\sqrt{C^*_1 \log d}} \left(|\mu'_1 - \mu'_d| + \Phi(\tpo)- \Phi(\tdo) \right) \varphi(a)da 		\tag{Since $\Phi(t_i) = \mu'_i$}	\non\\
	&\lesssim \int^{\infty}_{\sqrt{C^*_1 \log d}} \left(\nu_e + {|\tpo|\tilde{\theta}_m} \right) \varphi(a)da \tag{Lemma \ref{lem:ineq}(a), Claim \ref{cl:t-bound}}\\
	&= \left(\nu_e + |\tpo|\tilde{\theta}_m \right) \int^{\infty}_{\sqrt{C^*_1 \log d}} \varphi(a)da	\non \\
	&\leq \left(\nu_e + |\tpo|\tilde{\theta}_m \right) e^{-C^*_1\log(d)/4} 		\tag{Lemma \ref{lem:zeta}(iii)} 		\non\\
	&\leq \nu_e + \left(\frac{2\nu_e}{\mu'_1}\cdot2\sqrt{C^*_0 \log\frac2\delta} \right)\cdot e^{-C^*_1 (\log d)/4}  \tag{Lemma \ref{lem:ineq}(d), Lemma \ref{lem:ineq}(f)} \non \\
	&\leq \nu_e + \left(\frac{2\nu_e}{\delta^{C^*_0}}\cdot2\sqrt{C^*_0 \log\frac2\delta}\right)\cdot \delta^{8C^*_0 \log d} \tag{using $C^*_1 \geq 32C^*_0 \log(1/\delta)$ and $\mu'_1 \geq \delta^{C^*_0}$} \non \\ 
	&\lesssim \nu_e.								\label{eqn:mid-term}
	\end{align}

	For the remaining two terms in \eqref{eqn:terms}~we again observe that $|\Phi(x + z) - \Phi(x)| \leq |z|$ and therefore,
	\begin{align}
	&\int^{\infty}_{\sqrt{C^*_1 \log d}} \left(\Phi(t_1 + 2 a \sqrt{\tilde{\theta}_m}) - \Phi(t_1) + \Phi(\tdo) - \Phi(\tdo - 2 a \sqrt{\tilde{\theta}_m})\right) \varphi(a)da 		\non\\
	&\lesssim \sqrt{\tilde{\theta}_m}\int^{\infty}_{\sqrt{C^*_1 \log d}} a  \varphi(a) da   \non\\
	&= \sqrt{\tilde{\theta}_m} \Ex\left[\zeta_{\rm max} \mathbbm{1}_{\zeta_{\rm max} \geq \sqrt{C^*_1 \log d}}\right] 		\non\\
	&\leq \sqrt{\tilde{\theta}_m} \Ex\left[\zeta^2_{\rm max}\right]^{1/2} \Ex\left[\mathbbm{1}_{\zeta_{\rm max} \geq \sqrt{C^*_1 \log d}}\right]^{1/2} 
	\tag{Cauchy-Schwarz}		\non \\ 
	&\lesssim \sqrt{\tilde{\theta}_m \log d} \cdot \Pr\left[ \zeta_{\rm max} \geq \sqrt{C^*_1 \log d}\right]^{1/2} \tag{Lemma \ref{lem:zeta}(ii)}   \non \\
	&\leq \sqrt{\tilde{\theta}_m \log d} \cdot e^{-C^*_1 (\log d)/8} 		\tag{Lemma \ref{lem:zeta}(iii)}			\non \\
	&\lesssim \sqrt{\frac{\nu_e}{\mu'_1} \log d} \cdot e^{-C^*_1(\log d)/8}	\tag{Lemma \ref{lem:ineq}(d)} 			\non \\
	& \leq \sqrt{\frac{\nu_e}{\delta^{C^*_0}} \log d} \cdot \delta^{2 C^*_0 \log d}	\tag{using $C^*_1 \geq 32 C^*_0 \log(1/\delta)$ and $\mu'_1 \geq \delta^{C^*_0}$.} \non \\
	&\lesssim \sqrt{\mu'_1 \nu_e} = \sqrt{\alpha_e\nu_e} 				\label{eqn:rem-term}
	\end{align}
	where the penultimate step uses $\mu'_1 \geq \delta^{C^*_0}$ in the setting of the lemma (Lemma \ref{lem:ineq} (e)) , and the last step is due to $\alpha_e = \mu'_1$ in the setting of the lemma. In summary, combining the bounds from \eqref{eqn:mid-term} and \eqref{eqn:rem-term} we get that 
	\begin{equation}					\label{eqn:first-bound}
	\int^{\infty}_{\sqrt{C^*_1 \log d}} \Delta(a,t_1,\tdo) \varphi(a)da 
	\lesssim \nu_e + \sqrt{\alpha_e \nu_e}
	\end{equation} 
	
	\paragraph{Bounding the second term of \eqref{eqn:main}.}
	We begin by observing that whenever $\zeta_{\rm max} = a \leq \sqrt{C^*_1 \log(d)}$, by the setting of parameters in Lemma \ref{lem:cut-bound1} and using the bounds on $\tilde{\theta}_m,|t_1|$ from Lemma \ref{lem:ineq}(d),(f) we have 
	\begin{align}					
	2 a \cdot \sqrt{\tilde{\theta}_m} \cdot |t_1| 
	& \leq 2 \sqrt{C^*_1 \log(d)} \cdot \sqrt{\frac{2\nu_e}{\mu'_1}} \cdot 2\sqrt{2 C^*_0 \log(1/\delta)} \non \\
	& = \sqrt{\frac{64 C^*_1 C^*_0\log(d)\log(1/\delta)}{A_{d,\delta}}} \tag{Since $\mu'_1 \geq A_{d,\delta} \nu_e$} \non \\
	& = 1 ,  \label{eqn:cond1}
	\end{align} 
where the last step is due to $A_{d,\delta} = 64C^*_0C^*_1\log d\log(1/\delta)$ in the setting of Lemma \ref{lem:cut-bound1}. Then instantiating $\Delta = 2 a \sqrt{\tilde{\theta}_m}$ and $t = t_1$, we observe that the parameters $\Delta$ and $t$ satisfy the premise of Fact \ref{fact:incr} (from \eqref{eqn:cond1}), and therefore using Fact \ref{fact:incr} we get 
\begin{equation}
	\label{eq:t1a}
\Phi\paren{t_1 + 2a\sqrt{\tilde{\theta}_m}} - \Phi(t_1) \lesssim 
	\Phi(t_1) 2a\sqrt{\tilde{\theta}_m} \sqrt{\log\frac{1}{\Phi(t_1)}} .
\end{equation}
Next, observe that 
\[ \abs{\tpo - \paren{\tdo - 2a \sqrt{\tilde{\theta}_m}}} \leq |\tpo - \tdo| + 2a\sqrt{\tilde{\theta}_m} \ 
	\overset{\textnormal{Claim \ref{cl:t-bound}}}{\leq}  \ 2|\tpo| \tilde{\theta}_m + 2a\sqrt{\tilde{\theta}_m} .
\] 
Using this and Fact \ref{prop:half-error}, we get
\begin{equation}
	\label{eq:t1b}
\Phi(\tpo) - \Phi\paren{\tdo - 2a\sqrt{\tilde{\theta}_m}} \lesssim
2\paren{|\tpo|\tilde{\theta}_m + a \sqrt{\tilde{\theta}_m}}\Phi(\tpo) \sqrt{\log\frac{1}{\Phi(\tpo)}} .
\end{equation}

	Now we can bound
	\begin{align}
	& \Phi\paren{t_1 + 2 a \sqrt{\tilde{\theta}_m}} - \Phi\paren{\tdo - 2 a \sqrt{\tilde{\theta}_m}} \non \\
	&= \Phi\paren{t_1 + 2 a \sqrt{\tilde{\theta}_m}} - \Phi(t_1) + \Phi(t_1) - \Phi(\tpo) + \Phi(\tpo) - \Phi\paren{\tdo - 2a\sqrt{\tilde{\theta}_m}} \non \\
	&= \mu'_1 - \mu'_d + \Phi\paren{t_1 + 2 a \sqrt{\tilde{\theta}_m}} - \Phi(t_1) + \Phi(\tpo) - \Phi\paren{\tdo - 2a\sqrt{\tilde{\theta}_m}} 		\tag{Since $t_i = \Phi^{-1}(\mu'_i)$}  \non \\
	&\leq  2\nu_e + \left(\Phi\paren{t_1 + 2a\sqrt{\tilde{\theta}_m}} - \Phi(t_1)\right) + \left(\Phi(\tpo) - \Phi\paren{\tdo - 2a\sqrt{\tilde{\theta}_m}} \right) 
	\tag{Claim \ref{cl:mu-ell1}} \non \\
	&\lesssim \nu_e + \Phi(t_1) 2a\sqrt{\tilde{\theta}_m} \sqrt{\log\frac{1}{\Phi(t_1)}} 
	+ 2\paren{|\tpo|\tilde{\theta}_m + a \sqrt{\tilde{\theta}_m}}\Phi(\tpo) \sqrt{\log\frac{1}{\Phi(\tpo)}} \tag{using \eqref{eq:t1a}, \eqref{eq:t1b}} \non \\
	&= \nu_e + \Phi(t_1) 2a\sqrt{\tilde{\theta}_m} \sqrt{\log\frac{1}{\Phi(t_1)}} 
	+ \Phi(t_d) 2a \sqrt{\tilde{\theta}_m}\sqrt{\log\frac{1}{\Phi(\tpo)}}  + |\tpo|\tilde{\theta}_m \mu'_d \sqrt{\log\frac{1}{\mu'_d}}\non \\
	&\overset{1}{\lesssim} \nu_e + \mu'_1 a\sqrt{\tilde{\theta}_m} \sqrt{\log\frac{1}{\mu'_1}} 
	+ |\tpo|\tilde{\theta}_m \mu'_d \sqrt{\log\frac{1}{\mu'_d}} 		\non \\	
	&\overset{2}{\leq}  \nu_e + \mu'_1 a\sqrt{\tilde{\theta}_m} \sqrt{\log\frac{1}{\mu'_1}} 
	+ 2C^*_0\nu_e \log\frac{1}{\delta} \non \\
	&\lesssim \mu'_1 a\sqrt{\frac{\nu_e}{\mu'_1}} \sqrt{C^*_0 \log \frac{1}{\delta}} + C^*_0\nu_e \log\frac{1}{\delta} \tag{Lemma \ref{lem:ineq}(d),(f)} \non \\
	& \leq a\sqrt{C^*_0 \alpha_e\nu_e \log \frac{1}{\delta}} + C^*_0\nu_e \log\frac{1}{\delta}, 			\label{eqn:sec-term}			
	\end{align} 
	where the last step uses $\alpha_e = \mu'_1$ in our setting. Further, we can argue steps $1$ and $2$ as follows:
	\begin{itemize}
		\item For step $1$, we use that (i) $\Phi(t_d) = \mu'_d$ and (ii) the map $x \mapsto x\sqrt{\log(1/x)}$ is increasing in $x \to [0,1/2]$.
		\item Step $2$ can be obtained by combining the observations (i) $|\tpo| \leq 2\sqrt{2 C^*_0\log(1/\delta)}$ (Lemma \ref{lem:ineq} (f)) (ii) $\tilde{\theta}_m \leq 2\nu_e/\mu'_d$ (Lemma \ref{lem:ineq}(d)) (iii) $\mu'_d \geq \delta^{C^*_0}/2$ (Lemma \ref{lem:ineq}(f)) to get
		\[
		|\tpo|\tilde{\theta}_m \mu'_d \sqrt{\log\frac{1}{\mu'_d}}
		\leq \paren{2\sqrt{2 C^*_0\log(1/\delta)}} \cdot \left(\frac{2 \nu_e}{\mu'_d}\right) \cdot \mu'_d \cdot
		\sqrt{\log \frac{1}{\delta^{C^*_0}}}
		\lesssim  C^*_0 \nu_e \log\frac1\delta .
		\]
	\end{itemize}
	Therefore, plugging in the above bound we get that 
	\begin{align}
	&\int^{\sqrt{C^*_1 \log d}}_0 \Delta(a,t_1,\tdo) \varphi(a)da				\non \\
	&= \int^{\sqrt{C^*_1 \log d}}_0 \left(\Phi\paren{t_1 + \sqrt{\tilde{\theta}_m}a} - \Phi\paren{t'_d - \sqrt{\tilde{\theta}_m}a}\right) \varphi(a)da 		\non\\
	&\lesssim \int^{\sqrt{C^*_1 \log d}}_0 \left(a\sqrt{C^*_0 \alpha_e\nu_e \log \frac{1}{\delta}} + C^*_0\nu_e \log\frac{1}{\delta} \right) \varphi(a) da 	\tag{Using \eqref{eqn:sec-term}}\non\\
	&= \sqrt{C^*_0 \alpha_e\nu_e \log \frac{1}{\delta}} \int^{\sqrt{C^*_1 \log d}}_0 a \cdot \varphi(a) da 
		+ C^*_0\nu_e \log\frac{1}{\delta} \int^{\sqrt{C^*_1 \log d}}_0 \varphi(a) da 	\non\\
	&\leq \sqrt{C^*_0 \alpha_e\nu_e \log \frac{1}{\delta}}\cdot \Ex\left[\zeta_{\rm max}\right] + C^*_0\nu_e \log\frac{1}{\delta} \non \\ 
	&\lesssim \sqrt{C^*_0 \alpha_e\nu_e \log d \log \frac{1}{\delta}} + C^*_0\nu_e \log\frac{1}{\delta} \label{eqn:sec-bound}
	\end{align}
	where in the penultimate inequality, we bound the first integral by $\Ex[\zeta_{\rm max}]$ and the second integral can be bounded by $1$ since $\varphi(a)$ is the probability mass function of $\zeta_{\rm max}$. For the last step, we use the bound $\Ex\left[\zeta_{\rm max}\right] \lesssim \sqrt{\log d}$ from Lemma \ref{lem:zeta}(i).
	
	\paragraph{Finishing the proof.} 
	Combining the two bounds from \eqref{eqn:first-bound} and \eqref{eqn:sec-bound} and plugging them into \eqref{eqn:main}, we get that 
	\begin{align*}
	\Pr\Big[e \mbox{ is cut }\Big] 
	& = \int^{\infty}_{\sqrt{C^*_1 \log d}} \Delta(a,t_1,\tdo) \varphi(a)da + \int^{\sqrt{C^*_1 \log d}}_0 \Delta(a,t_1,\tdo) \varphi(a)da \\
	& \lesssim C^*_0 \nu_e \log(1/\delta) + C^*_0\sqrt{\alpha_e\nu_e} \sqrt{\log(d)\log\frac1\delta} \\
	& \lesssim C^*_0 \sqrt{\nu_e \mu'_1/A_{d,\delta}} \log(1/\delta) + C^*_0\sqrt{\alpha_e\nu_e} \sqrt{\log(d)\log\frac1\delta} \\
	& \lesssim  C^*_0\sqrt{\alpha_e\nu_e} \sqrt{\log(d)\log\frac1\delta},
	\end{align*}
	where the penultimate inequality is due $\nu_e \leq \mu'_1/A_{d,\delta}$ in the setting of the lemma, and the last inequality is using $\mu'_1  = \max(\mu'_1, 1- \mu'_1) = \alpha_e$.
\end{proof}
	
\subsection{When the biases are greater than $1/2$}

Using Lemma \ref{lem:cut-bound1} as a black box, we can prove an identical variant of the lemma for the case where all the biases are greater than $1/2$. We state and prove the lemma formally below.

\begin{restatable}{lem}{cutsym}			
	\label{lem:cut-bound-sym}
	Fix a hyperedge $e = (1,2,\ldots,d)$ and let $(\mu'_1,\ldots,\mu'_d)$ be the corresponding set of biases such that $1 \geq \mu'_d \geq \mu'_{d-1}\geq \cdots \geq \mu'_1\geq 1/2$ and $1 - \mu'_1 \geq \max(A_{d,\delta} \nu_e,\delta^{C^*_0})$, where $A_{d,\delta}:= 16C^*_0C^*_1 \log d \log (1/\delta)$. Then
	\[
	\Pr_{e \sim E}\left[e \mbox{ is cut}\right] \lesssim C^*_0\sqrt{\alpha_e\nu_e \log(1/\delta) \log (d)},
	\]
	where $C^*_1 = O(C^*_0\log(1/\delta))$.
\end{restatable}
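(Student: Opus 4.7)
The plan is to reduce Lemma~\ref{lem:cut-bound-sym} to Lemma~\ref{lem:cut-bound1} via a complementation symmetry of the shifted hyperplane rounding. The key observation is that the cut event of a hyperedge $e$ is invariant under swapping the output set $S$ with its complement $V \setminus S$, and that the Gaussian rounding naturally admits such a swap by negating the underlying Gaussian vector. Concretely, for each $i \in e$ define $\hat{\mu}_i := 1 - \mu'_i$ and $\hat{t}_i := \Phi^{-1}(\hat{\mu}_i) = -t_i$ (using the symmetry $\Phi^{-1}(1-x) = -\Phi^{-1}(x)$). Under this transformation the hypotheses of Lemma~\ref{lem:cut-bound-sym} translate directly to those of Lemma~\ref{lem:cut-bound1}: the ordering $1 \geq \mu'_d \geq \cdots \geq \mu'_1 \geq 1/2$ becomes $0 \leq \hat{\mu}_d \leq \cdots \leq \hat{\mu}_1 \leq 1/2$, and the bounded-away hypothesis $1 - \mu'_1 \geq \max(A_{d,\delta}\nu_e, \delta^{C^*_0})$ becomes $\hat{\mu}_1 \geq \max(A_{d,\delta}\nu_e, \delta^{C^*_0})$.

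To realize the reduction at the level of the random rounding, let $g \sim N(0,1)^r$ denote the Gaussian sample used in Step~\ref{step:round} and set $g_i = \langle g, \bar{z'_i}\rangle$. Since $\hat{g} := -g \sim N(0,1)^r$, the projections $\hat{g}_i := \langle \hat{g}, \bar{z'_i}\rangle = -g_i$ have the same joint distribution as $(g_i)_{i \in e}$, with all pairwise correlations $\rho_{ij}$ preserved. Define $\hat{S} := \{i \in V : \hat{g}_i \leq \hat{t}_i\}$; then $i \in \hat{S}$ iff $g_i \geq t_i$, so $\hat{S} = V \setminus S$ almost surely. Since the cut event is invariant under complementation of $S$, one obtains $\Pr[e \text{ is cut by } S] = \Pr[e \text{ is cut by } \hat{S}]$. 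Crucially, $\nu_e$ and $\alpha_e$ are both preserved: $\|v'_i - v'_j\|^2 = 4(\mu'_i - \mu'_j)^2 + 4\|z'_i - z'_j\|^2$ depends only on squared bias-differences (invariant under $\mu'_i \leftrightarrow 1 - \mu'_i$) and on $\|z'_i - z'_j\|^2$, while $\alpha_e = 1 - \mu'_1 = \hat{\mu}_1 = \max_i \min(\hat{\mu}_i, 1 - \hat{\mu}_i)$.

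The remaining technical check --- and the only substantive one --- is that the structural claims from Section~\ref{sec:proxy} used in the proof of Lemma~\ref{lem:cut-bound1} (bundled in Lemma~\ref{lem:ineq}) continue to hold with $\mu'_i$ replaced by $\hat{\mu}_i$. The $\ell_1$-bound $|\hat{\mu}_i - \hat{\mu}_j| = |\mu'_i - \mu'_j| \leq 2\nu_e$ is immediate from Claim~\ref{cl:mu-ell1}, the cut-correlation bound of Claim~\ref{cl:corr1} is a statement purely about the (unchanged) $z'_i$ vectors, and the two-sided boundedness $\hat{\mu}_i \in [\delta^{C^*_0}/10,\, 1 - \delta^{C^*_0}/10]$ follows from Claim~\ref{cl:mu-bounds}. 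The symmetric analog of Claim~\ref{cl:mu-w-bound} --- namely $\hat{\mu}_i \leq O(1)\|z'_i\|^2$ whenever $\mu'_i \geq 1/2$ --- is proved by the same template as the original: from $\|z_i\|^2 = \mu_i(1 - \mu_i) \geq (1 - \mu_i)/2$ for $\mu_i \geq 1/2$, combined with the pre-processing formula~\eqref{eqn:mu-w-defn} and Claim~\ref{cl:mu-bounds}(i), a short computation gives $\hat{\mu}_i \leq C \|z'_i\|^2$ for an absolute constant $C$, which can be absorbed into $A_{d,\delta}$. With these analogs in place, Lemma~\ref{lem:cut-bound1} applies verbatim to $\hat{S}$ and yields $\Pr[e \text{ is cut by } \hat{S}] \lesssim C^*_0\sqrt{\alpha_e \nu_e \log(1/\delta)\log d}$, which by the distributional identity equals $\Pr[e \text{ is cut by } S]$, completing the proof.
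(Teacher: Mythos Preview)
Your proposal is correct and follows essentially the same approach as the paper: both arguments reduce to Lemma~\ref{lem:cut-bound1} via the symmetry $\hat g_i := -g_i$, $\hat\mu_i := 1-\mu'_i$, $\hat t_i := -t_i$, observing that the cut event, the correlations $\rho_{ij}$, the parameter $\nu_e$, and $\alpha_e$ are all preserved. You are somewhat more explicit than the paper in re-verifying the auxiliary claims (the $\ell_1$-bound, the cut--correlation bound, and the symmetric analog of Claim~\ref{cl:mu-w-bound}), whereas the paper simply asserts that the hatted quantities ``satisfy all the requirements for Lemma~\ref{lem:cut-bound1}'' and invokes it directly.
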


\begin{proof}
	Let $g_1,\ldots,g_d$ be the ensemble of Gaussians associated with vertices $1,\ldots,d \in e$. Let us consider a new family of Gaussian random variables $\wh{\cG} = (\hat{g}_1,\ldots,\hat{g}_d)$ such that $\hat{g}_i := - g_i$ for every $i \in [d]$. Furthermore, for every $i \in [d]$, define $\hat{\mu}_i = 1 - \mu'_i$ and $\hat{t}_i = \Phi^{-1}(\hat{\mu}_i)$. Since $\phi(\cdot)$ is symmetric around the origin, we have $\hat{t}_i = - t_i$ for every $i \in [d]$. Consequently, this gives us a one-to-one correspondence between cut events with respect to $\cG$ and $\wh{\cG}$ respectively i.e.,
	\[
	\left\{\exists i,j \in e : g_i \leq t_i  \ \ \&  \ \  g_j > t_j \right\} \iff \left\{\exists i,j \in e : \hat{g}_i \leq \hat{t}_i  \ \ \&  \ \ \hat{g}_j > \hat{t}_j \right\} 
	\]
	Furthermore, we can conclude the following. 
	\begin{itemize}
		\item We have $0 \leq \hat{\mu}_d \leq \cdots \leq \hat{\mu}_1 \leq 1/2$ and $\hat{\mu}_1 \geq \max\{\delta^{C^*_0}, A_{d,\delta} \nu_e\}$.
		\item Furthermore $\max_{i \in e} \min( \hat{\mu}_i,1 -\hat{\mu}_i) = \max_{i \in e} \min(\mu'_i, 1 - \mu'_i) = \alpha_e$.
	\end{itemize}
	Therefore the Gaussians $(\hat{g}_i)_{i \in e}$, and the biases $(\hat{\mu}_i)_{i \in e}$ together satisfy all the requirements for Lemma \ref{lem:cut-bound1}. Hence, using the one-to-one correspondence established above and invoking Lemma \ref{lem:cut-bound1} we get that  
	\begin{eqnarray*}
		\Pr_{(g_i)_{i \in e}} \left[ \left\{\exists i,j \in e : g_i \leq t_i  \ \ \&  \ \ g_j > t_j \right\} \right]
		&=& \Pr_{(\hat{g}_i)_{i \in e}}\left[\left\{\exists i,j \in e : \hat{g}_i \leq t'_i \ \  \&  \ \ \hat{g}_j > \hat{t}_j \right\} \right] \\
		&\lesssim& C^*_0\sqrt{\alpha_e\nu_e \log(1/\delta) \log (d)}.
	\end{eqnarray*}
\end{proof}

\section{Concentration of Size of the Set}				\label{sec:conc}

The proofs in this sections frequently rely on some basic information theoretic quantities; we refer to the readers to Appendix \ref{sec:inf-theory} for a self-contained overview of the definitions and observations needed there.

Let us define the set of variables $y_1,\ldots,y_n$ as: 
\begin{enumerate}
	\item Sample a Gaussian $g \sim N(0,1)^r$ and for every $i \in [n]$ define $\zeta_i = \langle g, \bar{z'_i} \rangle$. 
	\item Define thresholds $t_i = \Phi^{-1}(\mu'_i)$.
	\item Set $y_i \defeq \mathbbm{1}(\zeta_i \leq \Phi^{-1}(\mu'_i))$ for every $i \in V$. 
\end{enumerate}
Note that the above steps are precisely the Gaussian rounding steps from Algorithm \ref{alg:approx-pred} (Line \ref{step:round}). In particular, the random variable $y_i = \mathbbm{1}_S(i)$ just indicates whether vertex $i$ is included in $S$. The main lemma of this section is the following lemma which gives concentration bounds for the size of $S$. 
\begin{lemma}
\label{lem:conc}
Setting $t = (1/\delta)^{100}$ and $\theta = \delta^{12}$ in Algorithm \ref{alg:approx-pred} we have: 
\[ 
\Pr_S\Big[ \left|\Ex_{i \sim V} \left[y_i\right] - \Ex_{i \sim V}[\mu'_i]\right| \geq \delta^2\Big] \leq \delta . 
\]
\end{lemma}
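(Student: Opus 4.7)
The plan is to bound $\mathrm{Var}_g\bigl[\E_{i\sim V} y_i\bigr]$ by a polynomial in $\delta$ much smaller than $\delta^4$ and then apply Chebyshev's inequality. Since each $\zeta_i = \langle g, \bar{z'_i}\rangle$ is marginally $N(0,1)$, we have $\E_g[y_i] = \mu'_i$, and a standard variance decomposition (with $i,j$ iid from $V$) gives
\[
\mathrm{Var}_g\Bigl[\E_{i \sim V} y_i\Bigr] \;=\; \E_{i, j\sim V}\mathrm{Cov}_g(y_i, y_j).
\]
So it suffices to show that $\E_{i,j}|\mathrm{Cov}_g(y_i, y_j)| = O(\delta^{24}\log(1/\delta))$ on a $(1-\delta/2)$-probability event over the conditioning $(T, X_T)$.

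For a fixed pair $(i,j)$, the pair $(\zeta_i, \zeta_j)$ is a centered unit-variance bivariate Gaussian with correlation $\rho_{ij} = \langle \bar{z'_i}, \bar{z'_j}\rangle$, so writing $t_i = \Phi^{-1}(\mu'_i)$ and $\alpha_i := \min(\mu'_i, 1 - \mu'_i)$,
\[
\mathrm{Cov}_g(y_i, y_j) \;=\; \int_0^{\rho_{ij}} \phi_2(t_i, t_j;\rho')\, d\rho'.
\]
Using Facts~\ref{fact:cdf-estim} and~\ref{fact:cdf} to obtain $\phi(t_i) \lesssim \alpha_i\sqrt{\log(1/\alpha_i)}$, one gets $|\mathrm{Cov}_g(y_i, y_j)| \lesssim |\rho_{ij}|\,\alpha_i\alpha_j\log(1/(\alpha_i\alpha_j))$ for $|\rho_{ij}| \leq 1/2$, and the trivial bound $\sqrt{\alpha_i\alpha_j}$ otherwise. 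Combining with the geometric bound
\[
|\rho_{ij}| \;=\; \frac{|\langle z'_i, z'_j\rangle|}{\|z'_i\|\|z'_j\|} \;\leq\; \frac{4\bigl(|\widetilde{\mathrm{Cov}}(X_i, X_j)| + \theta^2/4\bigr)}{\sqrt{\alpha_i \alpha_j}},
\]
where the numerator uses Claim~\ref{cl:w-shift} and Proposition~\ref{prop:lass}(iii) and the denominator uses Claim~\ref{cl:mu-w-bound} to get $\|z'_i\| \geq \sqrt{\alpha_i}/2$, a short case analysis yields the uniform pointwise bound
\[
|\mathrm{Cov}_g(y_i, y_j)| \;\lesssim\; \log(1/\delta)\bigl(|\widetilde{\mathrm{Cov}}(X_i, X_j)| + \theta^2/4\bigr),
\]
using $\alpha_i \geq \theta^2/10$ (Claim~\ref{cl:mu-bounds}) so that $\log(1/(\alpha_i\alpha_j)) = O(\log(1/\delta))$.

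To bound the average pseudo-covariance, I would invoke the standard SoS conditioning lemma (see Appendix~\ref{sec:inf-theory}): since the algorithm's SDP is solved to $R = t + 2$ rounds with $t = \delta^{-100}$, conditioning on a uniformly random $T$ of size $t$ gives
\[
\E_{T, X_T}\,\E_{i, j \sim V}\bigl[I(X_i; X_j \mid X_T)\bigr] \;\leq\; O(\log 2)/t \;=\; O(\delta^{100}).
\]
Pinsker's inequality together with the fact that for Boolean variables $|\widetilde{\mathrm{Cov}}(X_i, X_j)| \leq \sqrt{I(X_i; X_j)/2}$, followed by Jensen, yields $\E_{T, X_T, i, j}|\widetilde{\mathrm{Cov}}(X_i, X_j)| \leq O(\delta^{50})$. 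Markov's inequality then produces a good event of probability $\geq 1 - \delta/2$ over $(T, X_T)$ on which $\E_{i,j}|\widetilde{\mathrm{Cov}}(X_i, X_j)| \leq O(\delta^{49})$. Combining with $\theta^2 = \delta^{24}$,
\[
\E_{i,j}|\mathrm{Cov}_g(y_i, y_j)| \;\leq\; O(\log(1/\delta))\bigl(\delta^{49} + \delta^{24}\bigr) \;=\; O(\delta^{24}\log(1/\delta)),
\]
and Chebyshev's inequality gives $\Pr_g[|\E_i y_i - \E_i \mu'_i| \geq \delta^2] \leq O(\delta^{20}\log(1/\delta)) \leq \delta/2$; a final union bound over the conditioning event closes the argument.

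The main obstacle is the second step: one must extract the $\log(1/\delta)$ factor from the Gaussian covariance integral while simultaneously absorbing the $\theta^2/4$ ``baseline correlation'' generated by the common $\hat{z}$ shift in the preprocessing. The parameter choices $\theta = \delta^{12}$ and $t = \delta^{-100}$ are calibrated precisely so that both the preprocessing shift and the residual SoS pseudo-covariance decay faster than the $\delta^5$ variance budget, and the case $|\rho_{ij}| > 1/2$ --- which occurs only when the biases are forced to be extremely small, so that the trivial covariance bound can still be charged to $|\widetilde{\mathrm{Cov}}(X_i,X_j)| + \theta^2$ --- requires a separate treatment.
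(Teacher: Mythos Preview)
Your approach is correct and takes a genuinely different route from the paper's. The paper bounds $\mathrm{Cov}(y_i,y_j)$ information-theoretically: it passes through $\mathrm{Cov}(y_i,y_j)\le\sqrt{I(y_i;y_j)}$, then controls $I(y_i;y_j)$ by a trichotomy (one of $\|z'_i\|,\|z'_j\|,|\langle\bar{z'_i},\bar{z'_j}\rangle|$ is at most $|\langle z'_i,z'_j\rangle|^{1/3}$), using the data-processing inequality and the closed-form Gaussian mutual information in the third case. This produces the rather lossy bound $\mathrm{Var}\lesssim (\E_{i,j}I(x_i;x_j))^{1/12}+\theta^{1/3}$, which is why the paper requires such aggressive settings of $t$ and $\theta$. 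You instead attack the covariance directly via the bivariate-density integral (Plackett's identity), which is more elementary---no mutual information, no data processing---and yields near-linear dependence on $|\widetilde{\mathrm{Cov}}|+\theta^2$, so your variance bound is far tighter than the paper's.

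One correction: the intermediate claim $|\mathrm{Cov}_g(y_i,y_j)|\lesssim|\rho_{ij}|\,\alpha_i\alpha_j\log(1/(\alpha_i\alpha_j))$ for $|\rho_{ij}|\le 1/2$ is too strong. Taking $t_i=t_j=t$ and $\rho'=1/2$ gives $\phi_2(t,t;1/2)\asymp e^{-2t^2/3}$ while $\phi(t)^2\asymp e^{-t^2}$, so $\phi_2$ is \emph{not} controlled by $\phi(t_i)\phi(t_j)$. What does hold is $\phi_2(t_i,t_j;\rho')\le C\min(\phi(t_i),\phi(t_j))$ for $|\rho'|\le 1/2$ (write the exponent as $\tfrac{t_j^2}{2}+\tfrac{(t_i-\rho't_j)^2}{2(1-\rho'^2)}$), yielding $|\mathrm{Cov}_g|\lesssim|\rho_{ij}|\sqrt{\alpha_i\alpha_j\log(1/(\alpha_i\alpha_j))}$. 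Combined with your inequality $|\rho_{ij}|\le 4(|\widetilde{\mathrm{Cov}}|+\theta^2/4)/\sqrt{\alpha_i\alpha_j}$ the $\sqrt{\alpha_i\alpha_j}$ factors cancel and you still land on $|\mathrm{Cov}_g|\lesssim(|\widetilde{\mathrm{Cov}}|+\theta^2)\sqrt{\log(1/\delta)}$; likewise in the $|\rho_{ij}|>1/2$ case the trivial bound $|\mathrm{Cov}_g|\le\sqrt{\alpha_i\alpha_j}\le 4\|z'_i\|\|z'_j\|<8(|\widetilde{\mathrm{Cov}}|+\theta^2/4)$ closes the loop. So the final pointwise bound and the rest of the argument are unaffected.
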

The above lemma was established for the setting $\theta = 0$, here we extend it to the setting $\theta := \delta^{O(1)}$. The proof of the above lemma mostly along the lines of Corollary 5.7~\cite{RT12arXiv} with minor changes to account for the fact that we are working with the perturbed vector solution. Before proving the above, we need some preparation. To begin with, the following lemma is an extension of Theorem 5.6 from \cite{RT12} which gives useful variance bounds. 
\begin{lemma}				\label{lem:conc1}
Let $x_1,\ldots,x_n$ be the variables corresponding to the deg-$2$ SoS solution constructed in Step \ref{step:condition} of Algorithm \ref{alg:approx-pred}.
Then,
\[ 
{\rm Var}\left(\Ex_{i \sim V}\left[y_i\right]\right) \leq O(1)\cdot \Ex_{i,j \sim V} \left[|I(x_i,x_j)\right]^{1/12} + O(\theta^{1/3}),
\]
where $I(\cdot,\cdot)$ denotes the mutual information between a pair of variables.
\end{lemma}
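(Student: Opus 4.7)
The plan is to adapt the proof of Theorem 5.6 in~\cite{RT12} (itself used in the $\theta = 0$ case) by carefully tracking the error introduced by the $\theta$-perturbation. First I would expand the variance of an average as
\[
\mathrm{Var}\Paren{\Ex_{i \sim V}[y_i]} = \Ex_{i,j \sim V}\Brac{\mathrm{Cov}(y_i, y_j)} \leq \Ex_{i,j \sim V}\Brac{\abs{\mathrm{Cov}(y_i, y_j)}},
\]
so it suffices to control a typical pairwise covariance. Since $y_i, y_j$ are threshold indicators of the jointly Gaussian pair $(\zeta_i, \zeta_j)$ with correlation $\rho'_{ij} := \langle \bar{z'_i}, \bar{z'_j}\rangle$, a standard Hermite/Mehler-type estimate for halfspace indicators of a $\rho$-correlated Gaussian pair gives $\abs{\mathrm{Cov}(y_i, y_j)} \lesssim \abs{\rho'_{ij}}^{c}$ for some absolute constant $c \in (0,1]$; the precise exponent $1/12$ in the lemma will be determined by combining $c$ with Pinsker's inequality and a Jensen step at the end.

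Next I would relate the perturbed correlation $\rho'_{ij}$ to the original pseudo-correlation $\rho_{ij} := \langle \bar{z_i}, \bar{z_j}\rangle$. Using the explicit identities $\langle z'_i, z'_j\rangle = (\langle z_i, z_j\rangle + \theta^2/4)/(1+\theta^2)$ and $\|z'_i\|^2 = (\|z_i\|^2 + \theta^2/4)/(1+\theta^2)$ from the preprocessing section, the shift is negligible whenever $\|z_i\|^2, \|z_j\|^2 \gg \theta^2$, and dominating only when one of the biases is very small. Accordingly I split into two cases with threshold $\tau := \theta^{2/3}$: if $\min(\mu_i, \mu_j) \leq \tau$, then since $y_i, y_j \in \{0,1\}$ we have the trivial bound $\abs{\mathrm{Cov}(y_i, y_j)} \leq \min(\mu'_i, \mu'_j) \lesssim \tau$, whose contribution to the average is at most $O(\theta^{2/3})$; otherwise $\|z_i\|^2, \|z_j\|^2 \gtrsim \tau$ by Claim~\ref{cl:mu-w-bound}, and a direct computation gives $\abs{\rho'_{ij} - \rho_{ij}} \lesssim \theta^2/\tau = \theta^{4/3}$. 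In aggregate this gives $\abs{\mathrm{Cov}(y_i, y_j)} \lesssim \abs{\rho_{ij}}^{c} + O(\theta^{c'})$ for some $c' > 0$, and the additive error is absorbed into the final $O(\theta^{1/3})$ after balancing constants.

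It remains to translate $\rho_{ij}$ into mutual information. By Proposition~\ref{prop:lass}, $\rho_{ij}$ is exactly the correlation coefficient of the pseudo-Booleans $x_i, x_j$ under $\mu$, so a Pinsker-type bound (as in~\cite{RT12}) gives $\rho_{ij}^2 \lesssim I(x_i, x_j)$ and hence $\abs{\rho_{ij}}^{c} \lesssim I(x_i, x_j)^{c/2}$. Substituting and applying Jensen's inequality (for the concave power $c/2 \leq 1$) to pull the exponent outside the average yields
\[
\mathrm{Var}\Paren{\Ex_i y_i} \lesssim \Paren{\Ex_{i,j}\Brac{\abs{I(x_i, x_j)}}}^{c/2} + O(\theta^{1/3}),
\]
which, after the right choice of $c$ in the halfspace bound, gives the claimed exponent $1/12$. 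The main obstacle I expect is pinning down the right exponent in the halfspace covariance estimate so that it matches $1/12$ precisely, which may require a somewhat delicate Hermite-coefficient argument rather than a crude hypercontractivity bound; a secondary subtlety is the very-small-bias regime where $z'_i$ points almost entirely along $\hat{z}$ regardless of $z_i$, so $\rho'_{ij}$ carries essentially no information about $\rho_{ij}$ and one must rely purely on the trivial small-bias bound absorbed by the $O(\theta^{1/3})$ additive term.
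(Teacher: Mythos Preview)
Your overall plan --- expand the variance as an average of pairwise covariances, bound each by some power of a correlation-like quantity, absorb the $\theta$-perturbation, and finish with Jensen --- matches the paper's architecture. The gap is in one specific step: you assert that ``a Pinsker-type bound (as in~\cite{RT12}) gives $\rho_{ij}^2 \lesssim I(x_i,x_j)$'', where $\rho_{ij} = \langle \bar z_i,\bar z_j\rangle$ is the \emph{normalized} correlation coefficient of the pseudo-Booleans. This is false. The Raghavendra--Tan fact (Fact~\ref{fact:b5} here) bounds the \emph{covariance}, not the correlation coefficient: $\mathrm{Cov}(X,Y)\le\sqrt{I(X;Y)}$. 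For $X=Y\sim\mathrm{Bernoulli}(p)$ with $p\to 0$ one has $\rho(X,Y)=1$ while $I(X;Y)=H(p)\to 0$, so no inequality of the form $\rho^2\lesssim I$ can hold. Your case split does lower-bound $\|z_i\|,\|z_j\|$ by $\sqrt\tau$ in the ``good'' case, but feeding that into Fact~\ref{fact:b5} only yields $|\rho_{ij}|\le|\mathrm{Cov}(x_i,x_j)|/\tau\le\sqrt{I(x_i;x_j)}/\tau$, which blows up rather than helps.

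The paper avoids this by never passing through the normalized correlation on the $x$-side. Instead it first applies Fact~\ref{fact:b5} on the $y$-side, $\mathrm{Cov}(y_i,y_j)\le\sqrt{I(y_i;y_j)}$, and then bounds $I(y_i;y_j)$ directly by the \emph{unnormalized} inner product $|\langle z'_i,z'_j\rangle|^{1/3}+O(\theta)$ via a trichotomy (Lemma~\ref{lem:corr-bound}): since $|\langle z'_i,z'_j\rangle|=\|z'_i\|\,\|z'_j\|\,|\langle\bar z'_i,\bar z'_j\rangle|$, at least one of the three factors is at most $|\langle z'_i,z'_j\rangle|^{1/3}$; if a norm is small the entropy $H(y_i)$ is small, and if the normalized correlation is small the Gaussian mutual information is small (data processing plus Fact~\ref{fact:gauss-mi}). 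This yields $\mathrm{Cov}(y_i,y_j)\lesssim|\langle z'_i,z'_j\rangle|^{1/6}+\theta^{1/2}$, after which Claim~\ref{cl:w-shift} replaces $z'$ by $z$ at cost $\theta^{1/3}$, and then $\langle z_i,z_j\rangle=\mathrm{Cov}(x_i,x_j)$ combined with Fact~\ref{fact:b5} again gives $|\mathrm{Cov}(x_i,x_j)|^{1/6}\le I(x_i;x_j)^{1/12}$. The exponent $1/12=\tfrac12\cdot\tfrac13\cdot\tfrac12$ is precisely the composition of these three steps, so there is no free parameter $c$ to tune in a Hermite estimate; the $1/3$ comes from the trichotomy, not from any halfspace-specific analysis.
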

Towards proving Lemma \ref{lem:conc1}, we shall need the following lemma which is an adaptation of Lemma 5.6 from \cite{RT12}.
\begin{lemma}				\label{lem:corr-bound}
	For any $i,j \in V$, we have 
	\[
	I(y_i;y_j) \leq O(|\langle z'_i,z'_j \rangle|^{1/3} + \theta ) 
	\]
\end{lemma}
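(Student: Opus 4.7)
The plan is to adapt the proof of Lemma 5.6 in \cite{RT12} to the perturbed setting where Gaussian rounding uses the shifted vectors $\{z'_i\}$ rather than the original degree-$2$ SoS vectors $\{z_i\}$; the additive $\theta$ term will absorb the discrepancy between the two roundings.

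First, I would observe that $(\zeta_i, \zeta_j)$ is jointly Gaussian with standard marginals and correlation $\rho_{ij} = \langle z'_i, z'_j\rangle / (\|z'_i\| \|z'_j\|)$, so $(y_i, y_j)$ is a Bernoulli pair with marginals $\mu'_i, \mu'_j$ whose joint law is fully determined by $\rho_{ij}$ and the thresholds $t_i = \Phi^{-1}(\mu'_i)$. A first-order expansion of the bivariate Gaussian CDF in $\rho_{ij}$ gives
\[
|{\rm Cov}(y_i, y_j)| \lesssim |\rho_{ij}|\, \phi(t_i) \phi(t_j),
\]
and Mills's ratio combined with Claim \ref{cl:mu-w-bound} gives $\phi(t_i) \lesssim \|z'_i\| \sqrt{\log(1/\mu'_i)}$, so after cancelling the normalization factors, $|{\rm Cov}(y_i, y_j)| \lesssim |\langle z'_i, z'_j\rangle|$ up to polylogarithmic factors.

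Second, to convert the covariance estimate into a mutual information bound with the $1/3$ exponent, I would interpolate between a $\chi^2$-type bound $I(y_i; y_j) \lesssim {\rm Cov}(y_i, y_j)^2 / ({\rm Var}(y_i)\, {\rm Var}(y_j))$ and the entropy cap $I(y_i; y_j) \leq \min(H(y_i), H(y_j)) \lesssim \min(\|z'_i\|^2, \|z'_j\|^2) \log(1/\min(\|z'_i\|^2, \|z'_j\|^2))$, where the latter uses $\mu'_i \lesssim \|z'_i\|^2$ from Claim \ref{cl:mu-w-bound} (when $\mu'_i \leq 1/2$, and symmetrically otherwise). A case split on whether $\|z'_i\|\|z'_j\|$ exceeds $|\langle z'_i, z'_j\rangle|^{1/3}$ balances the two bounds: in the first regime the $\chi^2$ bound gives $O(|\langle z'_i, z'_j\rangle|^{4/3})$, and in the second the entropy cap, using $\min(\|z'_i\|^2, \|z'_j\|^2) \leq \|z'_i\|\|z'_j\|$, yields $O(|\langle z'_i, z'_j\rangle|^{1/3})$.

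Third, the additive $O(\theta)$ correction comes from the preprocessing: by Claim \ref{cl:mu-bounds} and Claim \ref{cl:w-shift}, the biases and inner products differ from their pre-shift counterparts by $O(\theta^2)$, so coupling $(y_i, y_j)$ with the analogous pair $(\tilde y_i, \tilde y_j)$ derived from rounding $\{z_i\}$ with biases $\{\mu_i\}$ produces joint distributions at total variation distance $O(\theta^2)$, contributing at most $O(\theta^2 \log(1/\theta)) \leq O(\theta)$ to $I(y_i; y_j)$ via continuity of mutual information under small TV perturbations. The main obstacle I anticipate is obtaining the precise $1/3$ exponent: the naive data-processing bound $I(y_i; y_j) \leq -\tfrac{1}{2}\log(1 - \rho_{ij}^2) = O(\rho_{ij}^2)$ is too weak, because $\rho_{ij}$ can be as large as $|\langle z'_i, z'_j\rangle|/\theta^2$ once the lower bound $\|z'_i\| \gtrsim \theta$ becomes tight, so the interpolation between the $\chi^2$-based and entropy-based bounds together with careful tracking of the Mills-ratio log factors is the crux of the argument.
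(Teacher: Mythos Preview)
Your proposal has the right high-level shape (a case split between ``one norm is small, use the entropy cap'' and ``the correlation is small, bound via the Gaussian joint law''), but it takes a substantially more complicated route than the paper and, crucially, misidentifies where the $1/3$ exponent comes from.

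The paper's argument is a three-line trichotomy. Writing $c := |\langle z'_i,z'_j\rangle| = \|z'_i\|\,\|z'_j\|\,|\rho_{ij}|$, at least one of the three factors is $\leq c^{1/3}$. If $\|z'_i\| \leq c^{1/3}$ (or symmetrically for $j$), then $\|z'_i\|^2 \leq c^{2/3}$; since $\|z'_i\|^2 \geq \mu_i(1-\mu_i)/2$ (from the explicit formula for $z'_i$) and $|\mu'_i - \mu_i| \leq \theta^2$, this forces $\min(\mu'_i, 1-\mu'_i) \lesssim c^{2/3} + \theta^2$, and the entropy cap $I(y_i;y_j) \leq H(y_i) \leq 4\sqrt{\mu'_i}$ gives $O(c^{1/3} + \theta)$. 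If instead $|\rho_{ij}| \leq c^{1/3}$, the paper applies exactly the data-processing bound you dismissed as ``too weak'': $I(y_i;y_j) \leq I(g_i;g_j) = -\tfrac{1}{2}\log(1 - \rho_{ij}^2) = O(\rho_{ij}^2) \leq O(c^{2/3})$. The trichotomy \emph{guarantees} that $\rho_{ij} \leq c^{1/3}$ in this branch, so the bound is not weak at all; your worry that $\rho_{ij}$ might be as large as $c/\theta^2$ is handled precisely by the fact that in that regime one of the norm factors is small and you are in the other branch.

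Consequently, your first-order covariance expansion, the $\chi^2$-type bound (which would require a lower bound on ${\rm Var}(y_i)$ in terms of $\|z'_i\|^2$ that does not hold in general once the $\theta$-shift is in play), and the TV-coupling argument for the $\theta$ term are all unnecessary. The $\theta$ contribution does not arise from comparing the shifted and unshifted roundings; it appears directly in the small-norm branch via $\mu'_i \leq \mu_i + \theta^2$ followed by a square root. Your case split on $\|z'_i\|\|z'_j\| \gtrless c^{1/3}$ is close in spirit but at the wrong threshold: it yields $\min(\|z'_i\|^2,\|z'_j\|^2) \leq c^{1/3}$, hence $H(y_i) \lesssim c^{1/3}\log(1/c)$, leaving a spurious logarithmic factor that the paper's three-factor split avoids.
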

\begin{proof}
	Fix $i,j \in V$, and let $|\langle z'_i,z'_j \rangle| = c$. Without loss of generality, we can assume that $c \leq 1/2$ (otherwise the inequality is trivially true, since $I(y_i;y_j) \leq 1$ (Fact \ref{fact:I-bound})). Since $\langle z'_i,z'_j \rangle = \|z'_i\|\|z'_j\|\langle \bar{z'_i},\bar{z'_j} \rangle$,  we must have (i) $\|z'_i\| \leq c^{1/3}$ or (ii) $\|z'_j\| \leq c^{1/3}$ or (iii) $|\langle \bar{z'_i},\bar{z'_j} \rangle| \leq c^{1/3}$.
	Suppose (i) is true (the case (ii) can be handled similarly). Using \eqref{eqn:mu-w-defn}, 
	\[
	c^{2/3} \geq \|z'_i\|^2 = \frac{\|z_i\|^2 + \theta^2/4}{1 + \theta^2} \geq \frac{\|z_i\|^2}{2} = \frac{\mu_i(1 - \mu_i)}{2}.
	\]
	Therefore, $\min(\mu_i, 1 - \mu_i) \leq 4 c^{2/3}$. Without loss of generality we may assume that $\mu_i \leq 4 c^{2/3}$. Using Claim \ref{cl:mu-bounds}, we get that $\mu'_i \leq \mu_i + \theta^2 \leq 4c^{2/3} + \theta^2$. Since $\Ex\left[y_i\right] = \mu'_i$, it follows that
	\begin{align*}
	I(y_i;y_j) & \leq H(y_i) =  \mu'_i \log \frac{1}{\mu'_i} + (1 - \mu'_i) \log \frac{1}{1- \mu'_i} 
			\leq 2 \mu'_i \log \frac{1}{\mu'_i} 		\tag{Fact \ref{fact:h-bound}}\\
		& \leq 4 \sqrt{\mu'_i} \tag{using Fact \ref{fact:fact1}} \\
	& = O\paren{\sqrt{c^{2/3} + \theta^2}} = O \paren{ c^{1/3} + \theta} \tag{$\sqrt{a^2 + b^2} \leq \abs{a} + \abs{b}$} 
	\end{align*}
	Otherwise, we have $|\langle \bar{z'_i},\bar{z'_j} \rangle | \leq c^{1/3}$. However, note that the $\rho: = \langle \bar{z'_i}, \bar{z'_j} \rangle$ is exactly the correlation between gaussians $g_i $ and $g_j$ used for rounding. Note that covariance matrix $\Sigma$ of the joint distribution is exactly 
	\[
	\Sigma = 
	\begin{pmatrix}
	1 & \rho \\ \rho & 1
	\end{pmatrix} .
	\]
	Therefore, 
	\[
		I(y_i;y_j) \leq I(g_i;g_j) = -\frac12 \log {\rm det}(\Sigma) =  \log \frac{1}{\sqrt{1 - c^{2/3}}}  \leq O(c^{1/3})
	\]
	where the first step is due to the data processing inequality (Lemma \ref{lem:data-proc}) and the second step is due to Fact \ref{fact:gauss-mi}. For the last step, we use the assumption $c \leq 1/2$ and get
	\[
	\log \frac{1}{\sqrt{1 - c^{2/3}}} = \frac12\log\left(1 + \frac{c^{2/3}}{1 - c^{2/3}} \right) \leq \frac{0.5 c^{2/3}}{1 - c^{2/3}} \leq \frac{0.5 c^{2/3}}{1 - 2^{-2/3}}
	\]
	which completes the proof.
	
\end{proof}

\begin{fact}
	\label{fact:fact1}
	For $x \in (0,1]$, we have $x \log (1/x) \leq 2 \sqrt{x}$.
\end{fact}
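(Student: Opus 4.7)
The plan is to reduce Fact \ref{fact:fact1} to the elementary inequality $\log_2 u \leq u$ for $u \geq 1$, which is a standard consequence of $2^u \geq u$. Concretely, I would substitute $u = 1/\sqrt{x}$, so that $u \geq 1$ as $x$ ranges over $(0,1]$. Under this substitution, $x = 1/u^2$ and $\log_2(1/x) = 2\log_2 u$, so
\[
2\sqrt{x} - x\log_2(1/x) = \frac{2}{u} - \frac{2\log_2 u}{u^2} = \frac{2}{u^2}\bigl(u - \log_2 u\bigr).
\]
Thus the claim $x \log_2(1/x) \leq 2\sqrt{x}$ is equivalent to $\log_2 u \leq u$ for every $u \geq 1$.

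For the latter, I would just note the chain $\log_2 u \leq \log_2(2^u) = u$, valid because $2^u \geq u$ for all $u \geq 1$ (which follows, e.g., by comparing derivatives of $2^u$ and $u$ at $u=1$ together with $2^1 = 2 > 1$, or by the standard bound $\ln u \leq u - 1 \leq u \ln 2$ on $[1, \infty)$, which can be verified by checking the single critical point $u = 1/\ln 2$). Since the excerpt fixes all logarithms to be base $2$, this exactly gives $\log_2 u \leq u$ and hence the desired inequality.

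No step is really an obstacle here; the only mild subtlety is making sure the logarithm base is handled correctly (base $2$ rather than natural log, per the convention set in the Preliminaries), but the substitution $u = 1/\sqrt{x}$ makes this transparent and reduces everything to a one-line monotonicity check. If one preferred a calculus route, defining $f(x) = 2\sqrt{x} - x\log_2(1/x)$ on $(0,1]$, computing $f'(x)$, and verifying $f(1) = 2 \geq 0$ together with $\lim_{x \to 0^+} f(x) = 0$ would also work, but the substitution argument above is shorter and cleaner.
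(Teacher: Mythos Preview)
Your reduction via $u = 1/\sqrt{x}$ is correct and matches the paper's approach, which substitutes $y = 1/x$ and verifies $\log y \le 2\sqrt{y}$ for $y \ge 1$ by a derivative check---the same inequality as your $\log_2 u \le u$ after setting $u = \sqrt{y}$. One small slip in your parenthetical: the chain $\ln u \le u - 1 \le u\ln 2$ fails for large $u$ since $\ln 2 < 1$, so drop the middle term and verify $\ln u \le u\ln 2$ directly via the critical point $u = 1/\ln 2$ as you already suggest.
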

\begin{proof}
First, we observe $f(y) \defeq 2\sqrt{y} - \log y \geq 0\ \forall y \geq 1$. This follows from
	$f(1) = 2 \geq 0$ and $f'(y) = 1/ \sqrt{y} - 1/y \geq 0\ \forall y \geq 1$. Therefore,
	$\forall x \in (0,1]$ we have $\log (1/x) \leq 2 \sqrt{1/x}$. Multiplying both sides by $x$
	finishes the proof.
\end{proof}

We shall also need the following.
\begin{fact}[Fact B.5 \cite{RT12}] 			
\label{fact:b5}
For any pair of $\{0,1\}$-valued random variables $X,Y$ we have ${\rm Cov}(X,Y) \leq \sqrt{I(X;Y)}$.
\end{fact}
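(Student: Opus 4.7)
The plan is to combine Pinsker's inequality with the elementary observation that, for $\{0,1\}$-valued random variables, the covariance is literally a single signed coordinate of the difference between the joint distribution and the product of marginals. This reduces the statement to bounding a total variation distance by mutual information, which is exactly Pinsker's setting.

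First I would unpack the covariance. Since $X,Y \in \{0,1\}$, we have $\E[XY] = \Pr[X=1, Y=1]$ and $\E[X]\E[Y] = \Pr[X=1]\Pr[Y=1]$, and hence
\[
{\rm Cov}(X,Y) = P_{XY}(1,1) - (P_X \otimes P_Y)(1,1).
\]
Because this is a single signed coordinate of the difference of two probability measures on $\{0,1\}^2$, it is bounded in absolute value by the total variation distance:
\[
|{\rm Cov}(X,Y)| \leq \|P_{XY} - P_X \otimes P_Y\|_{TV}.
\]

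Next I would invoke Pinsker's inequality, which says that for any two distributions $P,Q$ on a common space, $\|P-Q\|_{TV} \leq \sqrt{\tfrac12 D_{KL}(P\|Q)}$. Applying this with $P = P_{XY}$ and $Q = P_X \otimes P_Y$ and recalling the definition $I(X;Y) = D_{KL}(P_{XY} \| P_X \otimes P_Y)$ yields
\[
\|P_{XY} - P_X \otimes P_Y\|_{TV} \leq \sqrt{I(X;Y)/2}.
\]
Chaining the two inequalities gives $|{\rm Cov}(X,Y)| \leq \sqrt{I(X;Y)/2} \leq \sqrt{I(X;Y)}$, which is (slightly stronger than) the claimed bound.

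I do not expect any real obstacle: this is a one-line composition of two textbook facts. The only conceptual point worth noting is that Bernoulli-ness is what makes $|{\rm Cov}(X,Y)|$ literally a coordinate of $P_{XY} - P_X \otimes P_Y$, so no extra constants or alphabet-size factors appear when passing to total variation; for general bounded random variables one would need a separate argument (e.g., bounding $|\E[XY] - \E[X]\E[Y]|$ by TV times $\|X\|_\infty\|Y\|_\infty$), which here is unnecessary.
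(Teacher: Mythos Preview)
Your proof is correct. The paper does not actually prove this fact; it is simply quoted from \cite{RT12} and used as a black box. Your argument via Pinsker's inequality is exactly the standard route (and is essentially how \cite{RT12} proves it as well): identify ${\rm Cov}(X,Y)$ as a single coordinate of $P_{XY}-P_X\otimes P_Y$, bound by total variation, then apply Pinsker. One small remark: the paper works with base-$2$ logarithms throughout, so the Pinsker constant becomes $(\ln 2)/2$ rather than $1/2$; since this is still at most $1$, your final inequality goes through unchanged.
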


\paragraph{Proof of Lemma \ref{lem:conc1}.}
The proof of Lemma \ref{lem:conc1} follows along the lines of the proof of Theorem 5.6 from \cite{RT12}.

\begin{proof}
Following the proof of Theorem 5.6 from \cite{RT12} we have 
\begin{align}
{\rm Var}\left(\Ex_{i \sim V} \left[y_i\right]\right) 
&= \Ex_{i,j \sim V} \left[{\rm Cov}(y_i,y_j)\right] 			\non\\
&\leq \Ex_{i,j \sim V} \left[\left(I(y_i;y_j)\right)^{1/2}\right]		 		\tag{Fact \ref{fact:b5}}	\non\\
&\leq O(1)\cdot\Ex_{i,j \sim V} \left[\left(|\langle z'_i,z'_j \rangle|^{1/3} + \theta\right)^{1/2}\right] 		\tag{Lemma \ref{lem:corr-bound}} \non \\
&\leq O(1) \cdot\Ex_{i,j \sim V} \left[\left(|\langle z'_i,z'_j \rangle|\right)^{1/6}\right] + O(\theta^{1/2}).		\label{eqn:w-bound0}
\end{align}
Using Claim \ref{cl:w-shift}, we have 
\begin{align}
\Ex_{i,j \sim V} \left[|\langle z'_i,z'_j \rangle|^{1/6}\right]	
&\leq \Ex_{i,j \sim V} \left[\left(|\langle z_i,z_j \rangle| + \theta^2/4\right)^{1/6}\right] \non \\
&\leq \Ex_{i,j \sim V}\left[\left(|\langle z_i,z_j \rangle|^{1/6} + \theta^{1/3}\right)\right] \non \tag{$(a + b)^{1/6} \leq \abs{a}^{1/6} + \abs{b}^{1/6}$} \\
&= \Ex_{i,j \sim V}\left[|\langle z_i,z_j \rangle|^{1/6}\right] + \theta^{1/3}. 	\label{eqn:w-bound2}
\end{align}
Now finally, note that the $\{z_i\}$-vectors come from {\em the original conditioned} degree-$2$ SoS solution and hence satisfy $\langle z_i,z_j \rangle = {\rm Cov}(x_i,x_j)$, where $x_i,x_j$ are the {\em local} variables of the pseudo-distribution (Proposition \ref{prop:lass}). Hence, 

\begin{align}				
	\Ex_{i,j \sim V}\left[|\langle z_i,z_j \rangle|^{1/6}\right] & = 
	\Ex_{i,j \sim V}\left[|{\rm Cov}(x_i,x_j)|^{1/6}\right] 
	\leq \Ex_{i,j \sim V}\left[|I(x_i;x_j)|^{1/12}\right] & \textrm{(Fact \ref{fact:b5})} \non \\
	& \leq \left(\Ex_{i,j \sim V}\left[|I(x_i;x_j)|\right]\right)^{1/12}. \qquad \qquad \qquad \textrm{(using Jensen's inequality)}\label{eqn:conc2}
\end{align}

Therefore, combining the above bounds we get that 
\begin{align*}
{\rm Var}\left(\Ex_{i \sim V}\left[y_i\right]\right) 
&\lesssim  \Ex_{i,j \sim V} \left[|\langle z'_i,z'_j \rangle|^{1/6}\right] + O(\theta^{1/2}) 		\tag{Using \eqref{eqn:w-bound0}}\\
&\leq \Ex_{i,j \sim V} \left[|\langle z_i,z_j \rangle|^{1/6}\right] +  O(\theta^{1/3}) + O(\theta^{1/2})		\tag{Using \eqref{eqn:w-bound2}} \\
&\leq \Ex_{i,j \sim V} \left[I(x_i;x_j)\right]^{1/12} +  O(\theta^{1/3}),				\tag{Using \eqref{eqn:conc2}}
\end{align*}
which finishes the proof of the lemma.
\end{proof}

\subsection{Proof of Lemma \ref{lem:conc}}

We shall need the following lemma from \cite{RT12} which says that conditioning on large number of rounds reduces the average mutual information with high probability.

\begin{lemma}[\cite{RT12}]
	There exists a constant $C \geq 1$ such that the following holds. Fix $\eta,\delta \in (0,1)$, and let $t = 1/\eta\delta$. Then with probability $1 - \delta$ over the choice of random conditionings of at most $t$-sized subsets $X_T \gets \alpha$ from Step \ref{step:condition} we have
	\[
	\Ex_{i,j \sim V} \left[I(x_i;x_j)\right] \leq C\eta.
	\]
\end{lemma}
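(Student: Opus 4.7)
My plan is to use the classical information-theoretic argument of Raghavendra--Tan: define an average entropy potential $\Phi_k$ for random $k$-sized conditionings, identify each drop $\Phi_k - \Phi_{k+1}$ with the average conditional mutual information between a fresh pair and the next conditioned variable via the chain rule, telescope, and then turn the resulting expectation bound into a high-probability statement via Markov. Concretely, for $k \in \{0,1,\ldots,t\}$ I will set
\[
\Phi_k \defeq \Ex_{T \sim \binom{V}{k},\, \alpha \sim \mu^*_T}\, \Ex_{i \sim V}\Brac{\tilde H(X_i \mid X_T = \alpha)}.
\]
Since all variables are Boolean, $0 \le \Phi_k \le 1$. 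The quantities above are genuine entropies: after conditioning on any $|T| \le t$ coordinates, the single-variable marginal of $\mu^*$ is an actual probability distribution because Step \ref{step:solve} solves the $R=t+2$ round lifting.

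The next step is to compute $\Phi_k - \Phi_{k+1}$. Writing out the expectations and using the classical identity $H(X_i \mid X_T) - H(X_i \mid X_T, X_j) = I(X_i;X_j \mid X_T)$ pointwise in $(T,\alpha)$, I get
\[
\Phi_k - \Phi_{k+1} \;=\; \Ex_{T \sim \binom{V}{k},\, j \sim V \setminus T,\, \alpha}\,\Ex_{i \sim V}\Brac{I\paren{X_i;X_j \mid X_T = \alpha}}.
\]
The pairwise marginal $(X_i,X_j \mid X_T = \alpha)$ is again an honest distribution because $R \ge t+2$, so $I(\cdot;\cdot)$ is well-defined. Since $|T| \le t$ is tiny compared to $n$, replacing $j \sim V \setminus T$ by $j \sim V$ changes each summand by at most $O(t/n)$; I will absorb this negligible slack into the final constant $C$.

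Telescoping over $k = 0,1,\ldots,t-1$ then yields
\[
\sum_{k=0}^{t-1}\Ex_{T \sim \binom{V}{k},\,\alpha}\Ex_{i,j \sim V}\Brac{I(X_i;X_j \mid X_T = \alpha)} \;\le\; \Phi_0 - \Phi_t + O(t^2/n) \;\le\; 2.
\]
Dividing by $t$, the average over $k \in \{0,1,\ldots,t-1\}$ of the quantity $F(T,\alpha) \defeq \Ex_{i,j\sim V}[I(X_i;X_j \mid X_T = \alpha)]$ is at most $2/t = 2\eta\delta$. Step \ref{step:condition} of the algorithm effectively samples $(k,T,\alpha)$ uniformly in this same way, so $\Ex[F(T,\alpha)] \le 2\eta\delta$.

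A single application of Markov's inequality to the nonnegative random variable $F$ then gives $\Pr\!\Brac{F(T,\alpha) \ge 2\eta} \le \delta$, which is the claim with $C = 2$. The only point that requires any care, rather than being pure bookkeeping, is ensuring that $\tilde H$ and $\tilde I$ satisfy the chain rule on the SoS pseudo-distribution; this reduces to observing that the one- and two-variable marginals after conditioning on $|T| \le t$ coordinates are honest distributions, which is exactly why the SDP is lifted to $R = t+2$ rounds in the first place. Beyond that, the proof is entirely telescoping plus Markov and I do not anticipate any real obstacle.
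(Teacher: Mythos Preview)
The paper does not prove this lemma; it is quoted from \cite{RT12} and used as a black box in the proof of Lemma~\ref{lem:conc}. Your proposal is exactly the standard Raghavendra--Tan argument that the citation points to: define the entropy potential $\Phi_k$, identify each drop $\Phi_k-\Phi_{k+1}$ with an average conditional mutual information via the chain rule, telescope to bound the total by $\Phi_0-\Phi_t\le 1$, average over $k$ to get an expectation bound of order $1/t=\eta\delta$, and apply Markov. Your observation that $R=t+2$ rounds guarantee the one- and two-variable marginals after any $|T|\le t$ conditioning are honest distributions (so that $H$ and $I$ obey the chain rule) is correct and is precisely why the hierarchy is lifted to that level. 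The proof is correct.
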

Using our choice of $\eta$ and invoking the above lemma, we get that with probability at least $1 - \delta$, we have that $\Ex_{i,j \sim V} \left[I(x_i;x_j)\right]^{1/12} \leq \delta^4/2$. Furthermore, using our choice of $\theta = (\delta/2)^{12}$ and invoking Lemma \ref{lem:conc1} we get that ${\rm Var}(|S|) \leq \delta^4$. Furthermore, we have $\Ex_{i \sim V} y_i = \delta$. Therefore, using Chebyshev's inequality, 
	\[
\Pr_{y_1,\ldots,y_n}\left[ \Big|\Ex_{i \sim V}\left[y_i\right] - \Ex_{i \sim V} \left[\mu'_i\right] \Big| > \delta^2\right] \leq \frac{\delta^5}{\delta^4} \leq \delta.
	\]

\section{Proof of Theorem \ref{thm:hsse-main}}				\label{sec:ssve-main}

In this section, we finally prove Theorem \ref{thm:hsse-main}. First we combine the Gaussian rounding Lemmas (Lemma \ref{lem:cut-bound1}, Corollary \ref{lem:cut-bound-sym}) to give a lemma which bounds the probability of any edge getting cut (Lemma \ref{lem:edge-cut}). Then we use this in Section \ref{sec:proof} to bound the expansion guarantee of the sets rounded by Algorithm \ref{alg:approx-pred} to complete the proof of Theorem \ref{thm:hsse-main}. 

\begin{lemma}				\label{lem:edge-cut}
	Fix a hyperedge $e \in E$.
	\[ \Pr\left[{e \textrm{ is cut}} \right]
	\lesssim \sqrt{\alpha_e\nu_e\log d \log(1/\delta)} + \tilde{O}(d)\cdot \nu_e(\log(1/\delta))^2  \]
\end{lemma}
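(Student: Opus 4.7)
The proof is a case analysis on the biases $\{\mu'_i\}_{i \in e}$, driven by two structural facts: the $\ell_1$ closeness in Claim \ref{cl:mu-ell1} (which gives $|\mu'_i - \mu'_j| \leq 2\nu_e$ for every $i,j \in e$) and the edge-deletion step (Step \ref{step:e-del}), which ensures $\nu_e < 1/10$ on every surviving edge. We compare the biases to $1/2$ and split into three regimes which together cover every surviving hyperedge.

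\textbf{Cases A and B: all biases on one side of $1/2$.} Suppose first that all $\mu'_i \leq 1/2$, and order the vertices so that $\mu'_1 = \max_i \mu'_i = \alpha_e$. If $\mu'_1 \geq \max(A_{d,\delta}\nu_e,\, \delta^{C^*_0})$ (the ``Nice'' regime), Lemma \ref{lem:cut-bound1} applies directly and yields $\Pr[e \text{ cut}] \lesssim \sqrt{\alpha_e \nu_e \log d \log(1/\delta)}$. Otherwise (``Gap'' regime), the union bound $\Pr[e \text{ cut}] \leq \sum_{i \in e}\Pr[y_i=1] = \sum_i \mu'_i \leq d \mu'_1$ gives at most $d A_{d,\delta}\nu_e + d \delta^{C^*_0}$; since $A_{d,\delta} = O(\log d \log^2(1/\delta))$, the first term is $\tilde{O}(d)\nu_e \log^2(1/\delta)$ (matching the lemma's second term), while the $d\delta^{C^*_0}$ piece is only incurred in the narrow bias window $[\theta^2/10,\, \delta^{C^*_0})$ permitted by Claim \ref{cl:mu-bounds} and is absorbed by tightening constants. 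The case all $\mu'_i > 1/2$ is fully symmetric, applying Corollary \ref{lem:cut-bound-sym} (with $\alpha_e = 1 - \min_i \mu'_i$) in the Nice regime and the complementary union bound $\sum_i(1 - \mu'_i)$ in the Gap regime.

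\textbf{Case C: mixed biases.} When some $\mu'_i \leq 1/2 < \mu'_j$ coexist, Claim \ref{cl:mu-ell1} forces every bias into $[1/2 - 2\nu_e,\, 1/2 + 2\nu_e]$. Hence $\alpha_e \geq 1/2 - 2\nu_e = \Omega(1)$, every threshold $t_i = \Phi^{-1}(\mu'_i)$ is $O(\nu_e)$, and by Claim \ref{cl:corr1} together with $\|z'_i\|^2 = \Omega(1)$ (Claim \ref{cl:mu-w-bound}) every pairwise correlation satisfies $\rho_{ij} \geq 1 - O(\nu_e)$. A direct argument paralleling the proof of Lemma \ref{lem:cut-bound1} -- decompose $g_i = \rho_{1i}g_1 + \sqrt{1-\rho^2_{1i}}\,\zeta_i$, control $\zeta_{\max}$ via Lemma \ref{lem:zeta}, and observe that the edge can be cut only when $g_1$ lies in an interval of length $O(\sqrt{\nu_e}\,\zeta_{\max})$ around $0$, where $g_1$ has density $O(1)$ -- yields $\Pr[e \text{ cut}] \lesssim \sqrt{\nu_e \log d}$, which is dominated by $\sqrt{\alpha_e \nu_e \log d \log(1/\delta)}$ since $\alpha_e,\, \log(1/\delta) = \Omega(1)$.

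\textbf{Main obstacle.} The heavy technical lifting is packaged inside Lemma \ref{lem:cut-bound1}; the only genuinely new work at this stage is the Mixed regime in Case C, which falls outside the hypotheses of both Lemma \ref{lem:cut-bound1} and Corollary \ref{lem:cut-bound-sym}. The rescue is that edge deletion plus the $\ell_1$ constraint pin the biases near $1/2$, making both $\alpha_e$ and the Gaussian density near the thresholds $\Theta(1)$, so a single one-variable Gaussian window argument closes the case without needing the delicate CDF perturbation bounds used in Lemma \ref{lem:cut-bound1}.
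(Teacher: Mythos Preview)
Your Cases A and B match the paper's argument almost exactly, with one simplification you missed: by Claim \ref{cl:mu-bounds} and the choice $\theta^2 = \delta^{C^*_0}$, the lower bound $\mu'_1 \gtrsim \delta^{C^*_0}$ holds \emph{unconditionally}. The paper therefore splits only on whether $\alpha_e \gtrless A_{d,\delta}\nu_e$; your extra worry about a ``narrow bias window $[\theta^2/10,\delta^{C^*_0})$'' and the resulting $d\delta^{C^*_0}$ term is unnecessary (and your ``absorbed by tightening constants'' is correct only because that window is a constant-factor artifact).

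Your Case C is a genuinely different route. The paper handles the mixed regime by partitioning $e = e^+ \sqcup e^-$ along the $1/2$ threshold, bounding $\Pr[e\text{ cut}]$ by $\Pr[e^+\text{ cut}] + \Pr[e^-\text{ cut}] + \Pr[(i^+,i^-)\text{ cut}]$ for a single cross pair, and then invoking Cases A/B on the two sub-hyperedges and an edge-rounding bound from \cite{RT12arXiv} on the cross pair. You instead rerun the Gaussian decomposition argument of Lemma \ref{lem:cut-bound1} from scratch, exploiting that near $1/2$ the thresholds are $O(\nu_e)$, the density of $g_1$ is $\Theta(1)$, and $\|z'_i\| = \Theta(1)$ so correlations are $1 - O(\nu_e)$. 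Your approach is more self-contained (no external edge lemma needed) and arguably cleaner in this regime; the paper's approach is more modular, reducing entirely to black boxes already proved. Both yield the same bound since $\alpha_e = \Theta(1)$ here.
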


Before we begin the proof of the lemma, we remark that since $C^*_0$ is fixed to be an absolute constant independent of all other parameters, in this proof we shall often treat $C^*_0$ as a constant and absorb it into the $O(\cdot)$ notation. 

\begin{proof}
	By reordering, we may assume $e = [d]$. We first prove the statement for the case when all the biases of the vertices in the hyperedge lie in $[\delta^{C^*_0},1/2]$, following which we extend the proof to the setting of arbitrary biases.
	
	\paragraph{Bounding the cut value when $\mu'_i \leq 1/2\ \forall i \in e$.} 
	In this case, $\min (\mu'_i,1 - \mu'_i) = \mu'_i$ for each $i \in e$. 
	Therefore, $\alpha_e$ defined as $\alpha_e \defeq \max_{i \in e} \min(\mu'_i,1 - \mu'_i)$ satisfies $\max_{i \in e} \mu'_i = \alpha_e $.
	Furthermore, we will assume the ordering satisfies $\mu'_d \leq \mu'_{d - 1} \leq \cdots \leq \mu'_1$. Now recall that 
	\begin{equation}					\label{eqn:cut-condn}
	\nu_e =\max_{i,j \in e}  \|v'_i - v'_j\|^2  \qquad \textnormal{and} \qquad \max_{i,j \in e} |\mu'_i - \mu'_j| \leq 2\nu_e . 
	\end{equation}
	where the inequality is due to item (a) of Lemma \ref{lem:ineq}. Next, we bound the probability of a hyperedge being cut in the rounded solution using two cases. Recall $A_{d,\delta} = 16 C^*_0 C^*_1\log(1/\delta) \log(d)$, where $C^*_1 \lesssim C^*_0 \log(1/\delta)$. 
	
	{\bf Case 1.1}: Suppose $\alpha_e \leq A_{d,\delta} \nu_e$. Here, by a union bound,
	\begin{align*}
		\Pr_{(g_i)_{i \in e}}\Big[e \mbox{ is cut }\Big] 
		&\leq \Pr_{(g_i)_{i \in e}}\left[\exists i \in e : y(i) \neq 0\right] \\
		&\leq \sum_{i \in e} \mu'_i \leq d A_{d,\delta} \nu_e& \textrm{(Using $\max_{i \in e} \mu'_i = \alpha_e $)} \\
		&\leq \tilde{O}(d) \cdot C^*_1C^*_0 \nu_e\log(1/\delta) \\
		&\leq \tilde{O}(d) \cdot (C^*_0)^2 \nu_e \log^2 (1/\delta) . 
	\end{align*}
	
	{\bf Case 1.2} Suppose $\alpha_e \geq A_{d,\delta} \nu_e$. Note that from our choice of $\theta$ in Step \ref{step:vi'} in Algorithm \ref{alg:approx-pred}, and using Claim \ref{cl:mu-bounds} we have $\mu'_1 \geq \delta^{C^*_0}$. There our setting of parameters here satisfies the premise of Lemma \ref{lem:cut-bound1}. Hence invoking Lemma \ref{lem:cut-bound1} we get that 
	\begin{align*}
		\Pr_{(g_i)_{i \in e}}\Big[e \mbox{ is cut }\Big] 
		&\lesssim C^*_0\sqrt{\alpha_e\nu_e \log(1/\delta) \log (d)}.
	\end{align*} 
	Combining the bounds from the two cases gives us the bound here.
	
	\paragraph{Bounding the cut value when $\mu'_i \geq 1/2 \ \forall i \in e$.}
	We claim that this case is symmetric to the above case and and an identical bound can be recovered using similar arguments. By reordering, we may assume that $1/2 \leq \mu'_1 \leq \mu'_2 \leq \cdots \leq \mu'_d$ and using the guarantee of Claim \ref{cl:mu-bounds} and our choice of $\theta$, we have $\mu'_d \leq 1 - \delta^{C^*_0}$. Furthermore the guarantees of \eqref{eqn:cut-condn} hold as is. Here we shall again consider two cases depending on the value of $\alpha_e = 1 - \mu'_1$. 
	
	{\bf Case 2.1}: Suppose $\alpha_e \leq A_{d,\delta} \nu_e$. Here, by a union bound,
	\begin{align*}
		\Pr_{(g_i)_{i \in e}}\Big[e \mbox{ is cut }\Big] 
		&\leq \Pr_{(g_i)_{i \in e}}\left[\exists i \in e : y(i) \neq 1\right] \\
		&\leq \sum_{i \in e}(1 - \mu'_i) \leq d A_{d,\delta} \nu_e & \textrm{(Using $\max_{i \in e} 1 - \mu'_i = \alpha_e $)} \\
		&\leq \tilde{O}(d) \cdot C^*_1C^*_0 \nu_e\log(1/\delta) \\
		&\leq \tilde{O}(d) \cdot (C^*_0)^2 \nu_e \log^2 (1/\delta) . 
	\end{align*}

	{\bf Case 2.2} Suppose $\alpha_e \geq A_{d,\delta} \nu_e$. Similar to Case 1.2, we observe that the setting of parameters here satisfy the conditions required for Lemma \ref{lem:cut-bound-sym}. Therefore, as before, we can invoke Lemma \ref{lem:cut-bound-sym} to bound
	\begin{align*}
		\Pr_{(g_i)_{i \in e}}\Big[e \mbox{ is cut }\Big] 
		&\lesssim C^*_0\sqrt{\alpha_e\nu_e \log(1/\delta) \log (d)}.
	\end{align*} 
	Hence, combining the two cases gives us 
	\begin{align*}
		\Pr_{(g_i)_{i \in e}}\Big[e \mbox{ is cut }\Big]  
			&\leq C^*_0\sqrt{\alpha_e\nu_e \log(1/\delta) \log (d)} + \tilde{O}(d)\cdot (C^*_0)^2 \nu_e \left(\log(1/\delta)\right)^{2}.
	\end{align*} 
		
	\paragraph{Extension to arbitrary biases.} Now let $e \in E$ be a hyperedge with bias values that crossover $1/2$. We partition the hyperedge as $e = e^+ \sqcup e^{-}$, where $e^+ := \left\{i \in e| \mu'_i \leq 1/2 \right\}$ and $e^{-} = \left\{i \in e| \mu'_i > 1/2 \right\}$. Then note that if $e^- = \emptyset$ or  $e^{+} = \emptyset$, then the analysis from cases 1 and 2 gives us the bound. Therefore, we consider the case $e^+ \neq \emptyset$ and $e^{-} \neq \emptyset$. Recall that from the edge deletion step (Step \ref{step:e-del} of Algorithm \ref{alg:approx-pred}), we have $\nu_e \leq 0.1$ for every surviving edge hyperedge $e$. Note that since the biases cross over $1/2$, we must have $\min_{i \in e} \mu'_i \geq 1/2 - 2\nu_e \geq 0.3$ and $\max_{i \in e} \mu'_i \leq 1/2 + 2\nu_e \leq 0.7$, and therefore in particular, $\alpha_e \geq 0.3$.
	Fix arbitrary vertices $i^+ \in e^{+}$ and $i^- \in e^{-}$. Then, we can bound the probability of a hyperedge $e$ getting cut as 
	\begin{align*}
		\Pr\Big[e \textrm{ is cut}\Big] 
		&\leq \Pr\Big[\left\{e^+ \mbox{ is cut} \right\} \vee \left\{e^{-} \mbox{ is cut} \right\}
		\vee \{(i^+,i^-) \mbox{ is cut} \} \Big] \\
		&\leq \Pr\Big[e^+ \mbox{ is cut} \Big] + \Pr\Big[e^{-} \mbox{ is cut} \Big]
		+ \Pr\left[ (i^+,i^-) \mbox{ is cut} \right] \\
		&\overset{1}{\lesssim} \left( \sqrt{\alpha_{e^+}\nu_{e} (\log d \log(1/\delta)) } + \sqrt{\alpha_{e^-}\nu_{e} (\log d \log(1/\delta)) }  \right) + \tilde{O}(d)\cdot \nu_e (\log (1/\delta))^{2}  +  \sqrt{\nu_e} \\
		&\overset{2}{\lesssim} \sqrt{\alpha_e\nu_e \log(1/\delta) \log (d)} + \tilde{O}(d)\cdot \nu_e (\log (1/\delta))^{2} + O(\sqrt{\alpha_e \nu_e})  
	\end{align*}
	Here step $1$ can be argued as follows. 
	\begin{itemize}
		\item[a.] For the first term, note that the hyperedge $e^+$ consists of vertices with biases within $[\delta^{C^*_0},1/2]$. Furthermore, $\nu_{e^{+}} = \max_{i,j \in e^+} \|v'_i - v'_j\|^2 \leq \max_{i,j \in e} \|v'_i - v'_j \|^2 \leq \nu_e $. Therefore, we can use case 1  to bound the probability of $e^+$ getting cut.
	\item[b.] For the second term, note that $e^-$ consists of vertices with biases within $[1/2,1 - \delta^{C^*_0}]$. Similar to the previous case, we get $\nu_{e^{-}} \leq \nu_e $. Therefore, we can use case 2 to bound the probability of $e^-$ getting cut.
		\item [c.] The third term constitutes the analysis of an edge getting cut for which we use Lemma A.5 from \cite{RT12arXiv}. Additionally we observe that $\sqrt{\nu_e} \lesssim \sqrt{\nu_e \alpha_e}$ since $\alpha_e \geq 0.4$ from the above discussion.
	\end{itemize}
	Step $2$ follows from the observations that
	\[
	\max\left(\alpha_{e^+},\alpha_{e^-} \right) \leq \frac12 \leq \frac{5\alpha_e}{4} \ \ \ \ \textnormal{ and } \ \ \ \ \max(\nu_{e^+}, \nu_{e^-}) \leq \nu_e .
	\]
\end{proof}

\subsection{Proof of Theorem \ref{thm:hsse-main}}			\label{sec:proof}

We begin by recalling some notation we will use in this proof. Recall that $H = (V,E,w,W)$ is the weighted hypergraph constructed using Lemma \ref{lem:transform}, where $w: E \to \mathbbm{R}_{\geq 0}$ is the edge weight function and $W: V \to \mathbbm{R}_{\geq 0}$ is the vertex weight function. The proof of the theorem consists of two parts: first we will establish that the weight of the subset $S$ rounded by Algorithm \ref{alg:approx-pred} in Step \ref{step:round} is $\Theta(\delta\cdot W(V))$. Conditioned on this, we will show that the subset $S$ has the intended hypergraph expansion guarantee. Finally using Lemma \ref{lem:transform} (vi), we can conclude the set $S$ can be used to identify a set $S' \subset V_G$ in $G$ of size $\sim \delta |V_G|$ with matching (up to constant factors) vertex expansion guarantee. 

\paragraph{Bounding vertex weight $W(S)$}
 
Recall that $S \subset V$ is the set rounded off in Step \ref{step:round} of Algorithm \ref{alg:approx-pred}. Let $i \sim V$ denote the random draw of a vertex according to vertex weight function $W$. To begin with, observe that the conditioned degree-$2$ pseudo-distribution $\mu$ obtained in Step \ref{step:condition} must satisfy:
\[
\Ex_{i \sim V} \big[\mu_i \big] = \E_{i \sim V}\Pr_{X_i \sim \mu}\left[X_i = 1\right]  = \delta.
\]
Furthermore, using Lemma \ref{cl:mu-bounds}, we can relate the analogous expression on the shifted biases to the above as 
\[
\left| \Ex_{i \sim V} \big[\mu'_i \big] - \Ex_{i \sim V} \big[\mu_i \big] \right| \leq \delta^{10}
\]
For a subset $T \subseteq V$, let $\delta_{\rm rel}(T)$ denote the {\em relative} weight of the set with respect to vertex weight function $W$.  We have the following bounds on the expected relative vertex weight of the set rounded off by the algorithm in Line \ref{step:round}:
\begin{align}
\Ex_S\left[\delta_{\rm rel}(S)\right] := \Ex_{S} \left[\frac{W(S)}{W(V)}\right] 
= \Ex_{S} \Ex_{i \sim V} \left[\mathbbm{1}_S(i)\right] 		
= \Ex_{i \sim V} \left[\mu'_i\right] 		
\in \delta(1 \pm o_\delta(1)). 	\label{eqn:s-step1}
\end{align}
On the other hand, from Lemma \ref{lem:conc1} we have 
\begin{equation}			\label{eqn:s-step2}
\Pr_S\Big[\delta_{\rm rel}(S) \notin [0.99,1.01] \Ex_S \left[\delta_{\rm rel}(S)\right]\Big] \leq 0.1 .
\end{equation}
Therefore combining the bounds from \eqref{eqn:s-step1} and \eqref{eqn:s-step2} we have 
\begin{equation}			\label{eqn:s-step3}
\Pr_S\left[\frac{W(S)}{W(V)} \notin [0.9,1.1] \delta \right] \leq 0.1.
\end{equation}

\paragraph{Bounding the expansion.}

To begin with, recall that from Lemma \ref{lem:alg-feas}  
\[
 \frac{1}{\delta W(V)} \sum_{e \in E} w(e)\max_{i,j \in e}\Pr_{X_{ij} \sim \mu^*}\left[X_i \neq X_j\right] \leq \phiv_\delta,
\]
Denoting $\Delta_e = \max_{i,j \in e}\Pr_{X_{ij} \sim \mu}\left[X_i \neq X_j\right]$, with probability at least $0.9$ over the choice of $\mu$ obtained by the random conditioning $X_t \gets \alpha$, using Markov's inequality we have 
\begin{equation}			\label{eqn:obj-1}
\frac{1}{\delta W(V)}\sum_{e \in E} w(e)\Delta_e \leq  10 \cdot\phiv_\delta,	
\end{equation}
Further, conditioning on the above, we observe that 
\begin{align}
\sum_{e \in E} w(e)\nu_e &= \sum_{e \in E} \max_{i,j \in e} \|v'_i - v'_j\|^2 \\
&\leq \sum_{e \in E} w(e)\max_{i,j \in e} \|v_i - v_j\|^2 		\tag{Claim \ref{cl:vprime-equal}}	\non\\ 
&= 4 \sum_{e \in E} w(e) \Delta_e			\tag{Proposition \ref{prop:lass}}	\non \\
&\leq 40 \phiv_\delta \left(\delta W(V)\right),	\label{eqn:obj-2}	
\end{align}
where the last step is using \eqref{eqn:obj-1}. Next, we account for the weight of edges removed in the deletion step.

\begin{claim}			\label{cl:del-cost}
	The deletion step (Step \ref{step:e-del}) removes at most $10 \sum_{e \in E} w(e) \nu_e$ weight of edges.
\end{claim}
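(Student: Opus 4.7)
The plan is to prove Claim~\ref{cl:del-cost} by a direct application of Markov's inequality together with the explicit relationship between $\Delta_e$ and $\nu_e$ that follows from the pre-processing step. Recall that the deletion step removes exactly those hyperedges with $\Delta_e \geq 1/10$, where $\Delta_e = \max_{i,j \in e} \tPr_{\mu_{ij}}[X_i \neq X_j]$; the quantity to control is therefore $\sum_{e : \Delta_e \geq 1/10} w(e)$.

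First, I will observe that Markov's inequality applied to the nonnegative random variable $\Delta_e$ under the probability measure on $E$ proportional to the weights $w(e)$ yields
\[
\sum_{e : \Delta_e \geq 1/10} w(e) \;\leq\; 10 \sum_{e \in E} w(e)\,\Delta_e.
\]
Next, I will relate $\Delta_e$ to $\nu_e$ using the SoS structure and the pre-processing. By Proposition~\ref{prop:lass}(iv), $\tPr_{\mu_{ij}}[X_i \neq X_j] = \tfrac14 \|v_i - v_j\|^2$, so $\Delta_e = \tfrac14 \max_{i,j \in e} \|v_i - v_j\|^2$. Combining this with Claim~\ref{cl:vprime-equal} which states $\|v'_i - v'_j\|^2 = \|v_i - v_j\|^2/(1+\theta^2)$, gives the identity
\[
\nu_e \;=\; \max_{i,j \in e} \|v'_i - v'_j\|^2 \;=\; \frac{4\,\Delta_e}{1+\theta^2}.
\]
In particular, since $\theta^2 = \delta^{24} \leq 1$, we obtain $\Delta_e \leq \nu_e$ for every hyperedge $e$.

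Substituting this bound back into the Markov inequality gives
\[
\sum_{e : \Delta_e \geq 1/10} w(e) \;\leq\; 10 \sum_{e \in E} w(e)\,\Delta_e \;\leq\; 10 \sum_{e \in E} w(e)\,\nu_e,
\]
which is exactly the claimed bound. There is no substantive obstacle here: the only thing to double-check is the direction of the inequality in the identification of $\Delta_e$ with $\nu_e$ via $(1+\theta^2)$, which follows cleanly from the two earlier claims about the pre-processing. The bound in the statement ($10$) is somewhat loose; one could in fact get a factor of $10(1+\theta^2)/4 \leq 3$, but the weaker bound stated suffices and keeps the exposition simple.
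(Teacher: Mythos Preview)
Your proof is correct and uses essentially the same Markov-inequality argument as the paper. The only difference is cosmetic: you explicitly track the distinction between $\Delta_e$ (the deletion criterion) and $\nu_e$ via the identity $\nu_e = 4\Delta_e/(1+\theta^2)$, whereas the paper's proof works directly with the condition $\nu_e \geq 0.1$ (which is implied by $\Delta_e \geq 0.1$ for the same reason you give).
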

\begin{proof}
We can bound the weight of edges deleted in Step \ref{step:e-del} as follows.
\[
\sum_{e \in E} w(e) \nu_e \geq \sum_{\substack{e \in E \\ \nu_e \geq 0.1}} w(e) \nu_e \geq 0.1 \sum_{\substack{e \in E \\ \nu_e \geq 0.1}}w(e)
\]
which on rearranging gives us that the weight of edges is deleted is at most $ 10\sum_e \nu_e w(e)$.
\end{proof}

Finally, the following claim accounts for the cost of edges cut from the rounding step.

\begin{claim}
For every edge $e \in E$ such that $\nu_e \leq 0.1$ we have 
\[
\Pr\left[e \textrm{ is cut} \right] \lesssim \sqrt{\mu'_{v(e)}\nu_e\log d \log(1/\delta)} + \tilde{O}(d)\cdot \nu_e(\log(1/\delta))^{2}
\]
where recall that $v(e) := \pi(e) \in e$ is the unique vertex identified by edge $e$ from Lemma \ref{lem:transform} (v).
\end{claim}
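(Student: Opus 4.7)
My plan is to deduce the claim directly from Lemma \ref{lem:edge-cut} by establishing the pointwise comparison
\[
\alpha_e \;\lesssim\; \mu'_{v(e)} + \nu_e
\]
for every surviving hyperedge $e$ (i.e., every $e$ with $\nu_e \leq 0.1$, which is automatic after the deletion step). Once this is in hand, substituting into Lemma \ref{lem:edge-cut} and using $\sqrt{a+b} \leq \sqrt{a} + \sqrt{b}$ gives
\[
\sqrt{\alpha_e \nu_e \log d \log(1/\delta)} \;\leq\; \sqrt{\mu'_{v(e)} \nu_e \log d \log(1/\delta)} \;+\; O\!\left(\nu_e \sqrt{\log d \log(1/\delta)}\right),
\]
and the extra $O(\nu_e \sqrt{\log d \log(1/\delta)})$ term is absorbed into the $\tilde{O}(d)\nu_e(\log(1/\delta))^2$ term already present on the right-hand side of Lemma \ref{lem:edge-cut}, since $d \geq \sqrt{\log d}$ and $(\log(1/\delta))^2 \geq \sqrt{\log(1/\delta)}$.

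To prove $\alpha_e \lesssim \mu'_{v(e)} + \nu_e$, I would invoke Lemma \ref{lem:ineq}(a), which says $\max_{i,j \in e} |\mu'_i - \mu'_j| \leq 2\nu_e$, so every bias in $e$ lies in a window of width $2\nu_e$ around $\mu'_{v(e)}$. A three-way case analysis on where the biases sit relative to $1/2$ then closes the argument. If $\max_{i \in e} \mu'_i \leq 1/2$, then $\alpha_e = \max_i \mu'_i \leq \mu'_{v(e)} + 2\nu_e$. If $\min_{i \in e} \mu'_i > 1/2$, then $\mu'_{v(e)} > 1/2$ and $\alpha_e = 1 - \min_i \mu'_i \leq 1/2 \leq \mu'_{v(e)}$. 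In the remaining (straddling) case the spread bound together with $\nu_e \leq 0.1$ forces every bias in $e$ to lie in $[1/2 - 2\nu_e, 1/2 + 2\nu_e] \subseteq [0.3, 0.7]$, so $\mu'_{v(e)} \geq 0.3$ while $\alpha_e \leq 1/2 \leq 2\mu'_{v(e)}$. In all three cases $\alpha_e \leq O(\mu'_{v(e)} + \nu_e)$, as desired.

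The entire content of the claim is this short comparison between $\alpha_e$ and $\mu'_{v(e)}$; the hard Gaussian-rounding work has already been packaged in Lemma \ref{lem:cut-bound1} and Corollary \ref{lem:cut-bound-sym}, and the union-bound analysis of the mixed-bias regime is already folded into Lemma \ref{lem:edge-cut}. Notice that the argument does not use any property of $v(e)$ beyond $v(e) \in e$ --- the key structural input is the $\ell_1$ constraint \eqref{eqn:sdp-3} in the SDP, which passes to the perturbed solution via Lemma \ref{lem:ineq}(a) and ties all the biases inside a hyperedge together. There is no genuine obstacle here; the step is essentially bookkeeping to align the conclusion of Lemma \ref{lem:edge-cut} with the form needed downstream in the proof of Theorem \ref{thm:hsse-main}, where the distinguished vertex $v(e)$ and the identity $w(e) = W(v(e))$ from Lemma \ref{lem:transform} will be used to convert $\sum_e w(e) \sqrt{\mu'_{v(e)} \nu_e}$ into the target approximation bound.
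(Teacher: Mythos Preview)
Your proposal is correct and follows essentially the same route as the paper: bound $\alpha_e$ by $\mu'_{v(e)} + O(\nu_e)$ using the $\ell_1$ spread of biases, substitute into Lemma \ref{lem:edge-cut}, and absorb the resulting $O(\nu_e\sqrt{\log d \log(1/\delta)})$ term into the $\tilde O(d)\nu_e(\log(1/\delta))^2$ term.

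Two minor remarks. First, the paper's bound $\alpha_e \leq \mu'_{v(e)} + 2\nu_e$ is obtained in one line without any case analysis: if $i_0$ realizes $\alpha_e$, then $\alpha_e = \min(\mu'_{i_0}, 1-\mu'_{i_0}) \leq \mu'_{i_0} \leq \mu'_{v(e)} + 2\nu_e$, the last step by the $\ell_1$ spread. Your three-case argument is correct but unnecessary. Second, the right reference for the spread bound is Claim \ref{cl:mu-ell1}, not Lemma \ref{lem:ineq}(a): the latter is stated under the Nice-edge hypotheses of Lemma \ref{lem:cut-bound1}, which need not hold for an arbitrary surviving edge, whereas Claim \ref{cl:mu-ell1} holds unconditionally for all $i,j \in V$.
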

\begin{proof}
From Lemma \ref{lem:transform} (v), recall that every hyperedge $e \in E$ is uniquely identified with a vertex $v(e) \in V $ such that $v(e) \in e$. We shall need the following observation.

\begin{observation}				\label{obs:alpha-bd}
	For every hyperedge $e$, we have $\alpha_e \leq \mu'_{v(e)} + 2\nu_e$.
\end{observation}
\begin{proof}
	Fix a hyperedge $e$, and let $i_0 \in e$ be such that $\alpha_e = \min(\mu'_{i_0},1 - \mu'_{i_0})$. Hence,
	\[
	\alpha_e = \min(\mu'_{i_0}~,1 - \mu'_{i_0}) \leq \mu'_{i_0} \leq \mu'_{v(e)} + 2\nu_e
	\] 
	where the last inequality follows from Claim \ref{cl:mu-ell1}.
\end{proof}
Combining the above with Lemma \ref{lem:edge-cut}, we can establish
\begin{eqnarray}
\Pr\left[e \mbox{ is cut }\right] 
&\leq& \sqrt{\alpha_e\nu_e\log d \log(1/\delta)} + \tilde{O}(d)\cdot \nu_e(\log(1/\delta))^2  	\label{eqn:step-1} \non \\
&\overset{1}{\leq}& \sqrt{(\mu'_{v(e)} + 2\nu_e)\nu_e\log d \log(1/\delta)} + \tilde{O}(d)\cdot \nu_e(\log(1/\delta))^{2}    \non \\
&\overset{2}{\leq}& \sqrt{\mu'_{v(e)}\nu_e\log d \log(1/\delta)} + 2\nu_e\sqrt{\log d \log(1/\delta)} +
\tilde{O}(d)\cdot \nu_e \log(1/\delta)^{2}   \nonumber\\
&\overset{3}{\leq}& \sqrt{\mu'_{v(e)}\nu_e\log d \log(1/\delta)} + \tilde{O}(d)\cdot \nu_e(\log(1/\delta))^{2}	\label{eqn:step-2}
\end{eqnarray}
where step $1$ uses Observation \ref{obs:alpha-bd}, step 2 uses $\sqrt{a + b} \leq \sqrt{a} + \sqrt{b}$ for nonnegative $a,b$, and step $3$ follows by observing that the second term in the previous expression is dominated by the third term. 
\end{proof}
{\bf Cleaning up}. Note that with probability at least $0.8$, we have $W(S) \in [0.9\delta n,1.1 \delta n]$ and \eqref{eqn:obj-2} holds. Using this we can derive the following bound on the weighted sum of shifted biases: 
\begin{align}
\sum_{e \in E}w(e) \mu'_{v(e)} & = \sum_{e \in E} W(v(e)) \mu'_{v(e)} \tag{$w({v(e)}) = W(v(e))$ by construction of $H$} \non \\
& \overset{1}{\leq} \sum_{e \in E} W(v(e)) \mu_{v(e)} + \delta^{C^*_0} \sum_{e \in E} W(v(e))  \non \\
&= \sum_{i \in V} W(i) \mu_{i} + \delta^{C^*_0} \sum_{i \in V} W(i) \tag{one-to-one correspondence between $V$ and $E$.} \non \\
&\overset{2}{\leq} \delta n + \delta^{C^*_0} n \leq 2\delta n,			\label{eqn:weight-bound}
\end{align}
where in step $1$ above, we apply the first point of Claim \ref{cl:mu-bounds} point wise to bound every $\mu'_{v(e)}$. In step $2$, we bound the first term using the feasibility of $\mu_i$'s from the SDP solution, and the second term is bounded using the weight bound from Lemma \ref{lem:transform}. Now we are ready to bound the hyperedge expansion of $S$ in $H$ as:  
\begin{align*}
\phie_H(S) \lesssim
&\frac{ \sum_{e \in E}w(e) \paren{\sqrt{\mu'_{v(e)}\nu_e\log d \log(1/\delta)} + \tilde{O}(d)\cdot \nu_e(\log(1/\delta))^{2}}}{W(S)}  \\
& \lesssim \sum_{e \in E}w(e)\frac{\sqrt{\mu'_{v(e)}\nu_e\log d \log(1/\delta)}}{\delta n} + \frac{\sum_{e \in E}\tilde{O}(d)\cdot w(e)\nu_e(\log(1/\delta))^{2}}{\delta n}  	\tag{Since $W(S) \geq 0.9 \delta n$} \\
& \leq \sqrt{\frac{\sum_{e \in E}w(e) \mu'_{v(e)}}{\delta n}}\sqrt{\frac{\sum_{e \in E}w(e)\nu_e\log d \log(1/\delta) }{\delta n}} + \tilde{O}(d)\cdot \phiv_\delta (\log(1/\delta))^{2}  		\tag{Using \eqref{eqn:obj-2}}  \\
& \lesssim \sqrt{\phiv_\delta \log d \log(1/\delta)} + \tilde{O}(d)\cdot \phiv_\delta (\log(1/\delta))^2.  
\end{align*}
where in the last step we bound the first term using the bound from \eqref{eqn:weight-bound} and the second term using \eqref{eqn:obj-2}. 

{\bf Putting things together}. The above arguments imply that with probability at least $0.9$, the set $S \subset V$ constructed in line \ref{step:round} satisfies $|S|\in (1 \pm o_\delta(1))\cdot|V_G|$ and 
\[
\phi^E_H(S) \lesssim \sqrt{\phiv_\delta \log d \log(1/\delta)} + \tilde{O}(d)\cdot \phiv_\delta (\log(1/\delta))^2.
\]
Then in final step of the Algorithm \ref{alg:approx-pred} (Line \ref{step:final}), using item (vi) of Lemma \ref{lem:transform}, the algorithm can find a subset $S' \subseteq V_G$ such that $|S'| \in [0.8,1.2] \delta n$ and 
\[
\phiv_G(S') \leq \phie_H(S) \lesssim \sqrt{\phiv_\delta \log d \log(1/\delta)} + \tilde{O}(d)\cdot \phiv_\delta (\log(1/\delta))^2,
\]
which completes the proof of Theorem \ref{thm:hsse-main}.

\section{$f(d)$-inapproximability for \ssve}
\label{sec:deg-red}
We prove the following SSEH based hardness result for \ssve~which rules out any $f(d)$-approximation for \ssve.

\ssedeg*

Our starting point is the following alternative but equivalent form of the SSEH.
\begin{conjecture}[\cite{RS10,RST12}]	
\label{conj:sseh}
	Given an $\epsilon \in (0,1), M \geq 1$, there exists $\delta \in (0,1)$ such that the following holds. Given a regular graph $G = (V,E)$, it is NP-Hard to distinguish between the following two cases.
	\begin{itemize}
		\item {\bf YES}: There exists $S \subseteq V$ such that $\mu(S) = \delta$ and $\phi(S) \leq \epsilon$.
		\item {\bf NO}: For every $S \subseteq V$ such that $\mu(S) \in [\delta/M,M\delta]$, we have $\phi(S) \geq 1- \epsilon$.
	\end{itemize}
\end{conjecture}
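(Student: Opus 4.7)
The stated conjecture is a multiplicatively robust reformulation of the SSEH, equivalent to the canonical form: for every $\eta > 0$ there exists $\delta_0(\eta) > 0$ such that it is NP-hard to distinguish, on a regular graph, whether there exists $S$ with $\mu(S) = \delta_0$ and $\phi(S) \leq \eta$, versus every $S$ with $\mu(S) = \delta_0$ having $\phi(S) \geq 1 - \eta$. The plan is to derive the target distinguishing problem directly from this canonical form via a subset/superset sampling argument calibrated to the multiplicative window $[\delta/M, M\delta]$, with no change to the instance at all.

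Given $\epsilon \in (0,1)$ and $M \geq 1$, set $\eta := \epsilon/(3M)$, $\delta := \delta_0(\eta)$, and pass the canonical SSEH instance through verbatim. The YES case transfers immediately. For the NO case, suppose for contradiction that the canonical NO-instance contains a set $S$ with $\mu(S) = \beta \in [\delta/M, M\delta]$ and $\phi(S) < 1 - \epsilon$. From such an $S$, I would construct a set $S^*$ of volume exactly $\delta$ satisfying $\phi(S^*) < 1 - \eta$, contradicting the canonical NO condition. When $\beta \leq \delta$, take $S^* := S \cup T$ where $T$ is a uniformly random subset of $V \setminus S$ of size $(\delta - \beta) n$; when $\beta > \delta$, take $S^*$ to be a uniformly random subset of $S$ of size $\delta n$. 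In both constructions $|S^*| = \delta n$ is deterministic, and a first-moment computation on a $d$-regular graph (tracking the expected numbers of edges from $S$ to $V \setminus S^*$, from $T$ to $V \setminus S^*$, and from $S^*$ to $S \setminus S^*$, together with the regularity identity $2|E(G[U])| + |E(U, U^c)| = d|U|$) yields
\[
\E[\phi(S^*)] = 1 - \frac{\min(\beta, \delta)}{\max(\beta, \delta)}(1 - \phi(S)) + o(1) \leq 1 - \frac{\epsilon}{M} + o(1) < 1 - \eta,
\]
where $\min(\beta,\delta)/\max(\beta,\delta) \geq 1/M$ comes from $\beta \in [\delta/M, M\delta]$. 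Hence some realization of $S^*$ violates the canonical NO condition, completing the reduction.

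The central conceptual choice is $\eta = \epsilon/(3M)$: the $1/M$ degradation inherent to the sampling argument is absorbed by invoking the canonical SSEH at a much smaller soundness parameter. The only technical items are (i) ensuring the canonical SSEH is invoked on $d$-regular graphs as the target statement requires, which follows from a standard regularization gadget preserving expansion up to additive $o(1)$ terms, and (ii) absorbing integer rounding in $|T|$ and $|S^*|$ into the $o(1)$ error, which is immediate for $n$ large. I do not anticipate any fundamental obstacle: this derivation is one of the folklore equivalences between SSEH formulations established in \cite{RS10, RST12}.
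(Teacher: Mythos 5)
There is a category mismatch here: the statement you were asked about is not a theorem of the paper but the Small-Set Expansion Hypothesis itself, stated as Conjecture \ref{conj:sseh} and cited to \cite{RS10,RST12}. The paper contains no proof of it (and none is known or expected without resolving the hypothesis, since the statement asserts \NP-hardness); it is simply the assumption from which the paper's own reduction (Lemma \ref{lem:deg-red} and Proposition \ref{prop:deg-hardness}) departs. What your write-up actually sketches is something different: a derivation of the multiplicative-window formulation (soundness for all $\mu(S)\in[\delta/M,M\delta]$) from the exact-size formulation (soundness only for $\mu(S)=\delta$) of the same unproven hypothesis. That equivalence is the content attributed to \cite{RST12}, not something this paper proves or needs to prove, so your argument cannot be matched against a proof in the paper, and presenting it as a proof of the conjecture would be circular — you are assuming one form of SSEH to obtain another.

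Taken on its own terms as an equivalence sketch, your first-moment argument is essentially sound: with $S^*$ a random superset (when $\beta\le\delta$) or random subset (when $\beta>\delta$) of exact measure $\delta$, the expected boundary computation on a $d$-regular graph does give $\E[\phi(S^*)]\le 1-\frac{\min(\beta,\delta)}{\max(\beta,\delta)}(1-\phi(S))+o(1)\le 1-\epsilon/M+o(1)$, and since $|S^*|=\delta n$ is deterministic, the expectation bound yields a realization contradicting the exact-size NO condition at soundness parameter $\eta=\epsilon/(3M)$; the parameter bookkeeping ($\delta:=\delta_0(\eta)$, YES case passing through verbatim) is fine, and the regularity assumption is already part of the standard formulation so no extra gadget is needed. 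But you should be explicit that what you have established is a reduction between two conjectural formulations, i.e.\ the folklore/\cite{RST12} equivalence, and that Conjecture \ref{conj:sseh} itself remains an assumption in this paper, invoked only as the starting point of the hardness reduction in Section \ref{sec:deg-red}.
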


For a regular graph $G=(V,E)$ and $S \subseteq V$, we use $\mu_G(S)$ to denote $|S|/|V|$.
We first describe a degree reduction procedure which approximately maintains the completeness and soundness guarantees of SSEH. 
\begin{lemma}[Folklore]			\label{lem:deg-red}
There exists constants $g \in \mathbbm{N}$ and $\alpha_0 \in (0,1)$ and a polynomial time reduction from any regular $(\epsilon,1-\epsilon)$-instance with parameters $M$ and $\delta$ (as defined in Conjecture \ref{conj:sseh}) to a $(g+1)$-regular instance $G'=(V',E')$ with the following properties.
	\begin{itemize}
		\item {\bf Completeness}. If $G$ is a YES instance, then there exists $S \subseteq V'$ such that $\mu_{G'}(S) = \delta$ and $\phi_{G'}(S) \leq \epsilon/(g+1)$.
		\item {\bf Soundness}. If $G$ is a NO instance, then for every $S \subseteq V'$ such that $\mu_{G'}(S) = \delta$, we have $\phi_{G'}(S) \geq \alpha_0$.
	\end{itemize} 
\end{lemma}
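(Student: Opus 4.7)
I would use the standard expander-replacement reduction. Fix an absolute constant $g$ (say $g=4$) for which an explicit family $\{H_D\}_{D \ge g+1}$ of $g$-regular graphs on $D$ vertices with constant edge expansion $\lambda > 0$ exists (Margulis, zig-zag, or Ramanujan constructions). Given the $D$-regular $(\epsilon, 1-\epsilon)$-instance $G = (V,E)$ supplied by Conjecture~\ref{conj:sseh} (invoked with parameter $M = 2$), I would construct $G' = (V', E')$ as follows: for each $v \in V$, create a \emph{cloud} $V_v$ of $D$ fresh vertices $v_1, \dots, v_D$ whose internal edges form a copy of $H_D$; after fixing arbitrary labelings of the $D$ edges incident to each vertex of $G$, replace each edge $(u,v) \in E$ (labeled $i$ at $u$ and $j$ at $v$) by the cross-edge $(u_i, v_j)$ in $E'$. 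Every vertex of $G'$ then has degree exactly $g+1$.

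\textbf{Completeness.} If $S \subseteq V$ is a YES set with $\mu_G(S) = \delta$ and $\phi_G(S) \le \epsilon$, lift it to $S' := \bigcup_{v \in S} V_v$. Intra-cloud edges stay on one side of the cut, so $|\partial_{G'}(S')| = |\partial_G(S)| \le \epsilon D |S|$, while $|S'| = D|S|$ and $|V'| = D|V|$, yielding $\mu_{G'}(S') = \delta$ and $\phi_{G'}(S') \le \epsilon/(g+1)$.

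\textbf{Soundness.} Fix any $S' \subseteq V'$ with $\mu_{G'}(S') = \delta$ and set $x_v := |S' \cap V_v|/D$ and $T := \sum_v \min(x_v, 1-x_v)$, and fix a large constant $M_0$ to be chosen shortly. If $T \ge \delta|V|/M_0$, the edge expansion of each $H_D$ produces at least $\lambda D \min(x_v, 1-x_v)$ intra-cloud cut edges in every cloud, so $|\partial_{G'}(S')| \ge \lambda D T \ge \lambda D \delta|V|/M_0$, giving $\phi_{G'}(S') \ge \lambda/(M_0(g+1))$. Otherwise $T < \delta|V|/M_0$; let $A := \{v : x_v > 1/2\}$. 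Using $\sum_v x_v = \delta|V|$ and the split $V = A \sqcup A^c$ (on which $\min(x_v, 1-x_v)$ equals $1-x_v$ and $x_v$ respectively), a one-line calculation gives $|A| \in [\delta|V| - T, \delta|V| + T]$, hence $\mu_G(A) \in [\delta(1 - 1/M_0), \delta(1 + 1/M_0)] \subseteq [\delta/2, 2\delta]$. Conjecture~\ref{conj:sseh} (with $M = 2$) then yields $|\partial_G(A)| \ge (1-\epsilon)D|A|$. Introduce the comparator $S'' := \bigcup_{v \in A} V_v$; by the completeness analysis $|\partial_{G'}(S'')| = |\partial_G(A)|$, and $|S' \triangle S''| = DT$. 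Since toggling one vertex in a subset changes the edge boundary by at most the degree $g+1$, I obtain
\[
|\partial_{G'}(S')| \;\ge\; |\partial_{G'}(S'')| - (g+1)|S' \triangle S''| \;\ge\; (1-\epsilon)(1-1/M_0) D \delta |V| - (g+1) D \delta|V|/M_0,
\]
so dividing by the volume $(g+1) D \delta |V|$ of $S'$ yields $\phi_{G'}(S') \ge [(1-\epsilon)(1-1/M_0) - (g+1)/M_0]/(g+1)$. Choosing $M_0 \ge 10(g+1)/(1-\epsilon)$ makes this a positive absolute constant, and I set $\alpha_0$ to be the minimum of the two case bounds.

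\textbf{Main obstacle.} The only delicate point is calibrating $M_0$: it must be large enough for Case~B to yield a positive constant expansion, yet the induced slack $1/M_0$ on $\mu_G(A)$ must still lie inside the $[\delta/M, M\delta]$ window promised by Conjecture~\ref{conj:sseh}. These requirements are compatible because $M = 2$ already works for any $M_0 \ge 2$. The remaining ingredients --- an explicit constant-degree expander family and the one-vertex-toggle bound on edge boundaries --- are entirely standard.
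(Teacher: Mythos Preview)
Your proposal is correct and uses essentially the same replacement-product construction and majority-rounding analysis as the paper (which also fixes $M=2$ and rounds to $A=\{v:x_v>1/2\}$). The only cosmetic difference is in the final step of the soundness: the paper directly counts cross-edges leaving $S'$ from the ``good'' clouds, whereas you compare $S'$ to the rounded union $S''$ via the per-vertex degree toggle bound; both are standard and yield the same conclusion.
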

The reduction for the above lemma is folklore and uses the {\em replacement product trick} introduced in \cite{PY91}; for e.g., \cite{RT13} also use an identical construction for degree reduction for a different range of parameters. We include a proof of it here for our setting for completeness.

\begin{proof}

{\bf Construction}. Given a $d$-regular graph $G = (V,E)$, we construct a graph $G' = (V',E')$ as follows. Let $H = (V_H,E_H)$ be a $g$-regular $\alpha$-expander on $d$-vertices. For every $i \in V$, we replace it with a copy of $H$ in $G'$, which we call $H_i$ on vertex set $\cC_i$. (Note that by construction $|\cC_v| = d$). For every $i \in V$, fix an arbitrary ordering $i(1),i(2),\ldots,i(d)$ on the vertices in $\cC_i$. For every edge $(i,j) \in E$, we identify a {\em unique} vertex $i(a) \in \cC_i$ and $j(b) \in \cC_j$ and put an edge between $(i(a),j(b))$ in $G'$. Note that this can be done consistently such that in the resultant $G'$, every vertex has degree $g+1$.

We fix $M= 2$ and argue completeness and soundness of the reduction.

{\bf Completeness}. Suppose there exists a set $S \subseteq V$ such that $\mu_G(S) = \delta$ and $\phi(S) \leq \epsilon$. Then consider the corresponding set $S' = \cup_{i \in S} \cC_i$. It is easy to see that $\mu_{G'}(S) = \delta$ and $\phi_{G'}(S) = |\partial_G(S)|/((g+1)(d \delta n)) \leq \epsilon/(g+1)$.

{\bf Soundness}. Suppose there exists a set $S' \subset V'$ of measure $\delta$ such that $\phi_{G'}(S) \leq \epsilon'/(g+1)$ where $\epsilon' = \alpha/100$. Then $\left|\partial_{G'}(S') \right| \leq \epsilon' d \delta n$.

For every $i \in V$, let $\cC'_i \defeq \cC_i \cap S'$ and $\gamma_i \defeq |\cC'_i|/|\cC_i| = |\cC'_i|/d$. Define 
\[
T_{\rm Good} \defeq \left\{i \in V | \gamma_i \geq 1/2\right\} \qquad \textnormal{and} \qquad T_{\rm bad} \defeq \left\{i \in V | \gamma_i < 1/2 \right\} . \]
Note that by definition of $H$ we have 
\begin{align*}
\epsilon' d \delta n \geq \left|\partial_{G'}(S')\right| 
&\geq \sum_{i \in T_{\rm Bad}} |E_{G'}(\cC'_i ,\cC_i \setminus \cC'_i)| \\
&\geq \alpha \sum_{i \in {T_{\rm bad}} } \min\{\gamma_i,1 - \gamma_i\}\cdot g\cdot d 	\tag{Expansion of $H_i$'s}\\
&\geq \alpha \sum_{i \in {T_{\rm bad}} } \gamma_i  g d 					\tag{Definition of $T_{\rm bad}$}
\end{align*}
or rearranging we get that
\begin{equation}				\label{eqn:bad-set-bd}
\sum_{i \in T_{\rm bad}} \mu_{G'}(\cC'_i) = \frac{1}{n} \sum_{i \in {T_{\rm bad}} } \gamma_i 
	\leq \frac{1}{n} \cdot \frac{\epsilon' d \delta n }{\alpha g d} =  \frac{\epsilon' \delta }{\alpha g} .
\end{equation}
	Therefore we have $\sum_{i \in T_{\rm good}} \mu(\cC'_i) \geq \delta - \epsilon' \delta/(\alpha g) \geq 0.99 \delta$. Now note that every cloud $\cC'_i$ corresponding to each $i \in T_{\rm good}$ can contribute at most $1/n$ mass (recall that $T_{\rm good} \subseteq V$), and by definition of $T_{\rm good}$, must contribute at least $1/(2n)$ mass. Therefore, using the upper and lower bound we get that $|T_{\rm good}| \in[ 0.99  \delta n,2\delta n]$. Now using the fact that $\cC_i$'s are disjoint, we note that 
\begin{align}					
\sum_{\substack{i,j \in T_{\rm good} \\ i \neq j}}\left|E_{G'}(\cC'_i,\cC'_j)\right| &\leq \sum_{\substack{i,j \in T_{\rm good} \\ i \neq j}} \left|E_{G}(i,j)\right|
	= {\sf vol}_G(T_{\rm good}) - \phi_G(T_{\rm good}){\sf vol}_G(T_{\rm good}) \non \\
	&\leq {\sf vol}_G(T_{\rm good}) - (1 - \epsilon){\sf vol}_G(T_{\rm good}) 
	= \epsilon {\sf vol}_G(T_{\rm good})  \non \\ 
	& \leq 2\epsilon d \delta n \label{eqn:in-set-bd}
\end{align}
	where in the second inequality we use that $\phi_G(A) \geq (1 - \epsilon)$ for every $A$ such that ${\sf vol}_G(A) \in [\delta dn/2,2\delta dn]$. 

	Note that every $i(a) \in \cup_{j \in T_{\rm good}} \cC'_{j}$ has exactly one $E$-edge incident on it. Therefore the total number of $E$-edge incident on $\cup_{j \in T_{\rm good}} \cC'_{j}$  is at least 
\[ |\cup_{j \in T_{\rm good}} \cC'_j| = \sum_{j \in T_{\rm good}} |\cC'_j| 
	= \sum_{j \in T_{\rm good}} \gamma_i d 
	\geq |T_{\rm good}|d/2 \geq \delta n d/4 \] 
	which uses $\gamma_i \geq 1/2\ \forall i \in T_{\rm good}$.  

Out of these, at most $(\epsilon'/\alpha g) \delta d n$ are incident on $T_{\rm bad}$ (from \eqref{eqn:bad-set-bd}) and at most $2 \epsilon d \delta n$ are incident on itself (using \eqref{eqn:in-set-bd}). Therefore, at least 
\[
	d \delta n/4 - (\epsilon'/\alpha g) \delta d n - 2 \epsilon d \delta n \geq d \delta n/8
\]
number of $E$-edges leave the set $S'$ from $\cup_{i \in T_{\rm good}} \cC'_i$. Therefore, $|\partial_{G'}(S')| \geq d\delta n/8$, which gives us a contradiction.
\end{proof}

\begin{fact}			\label{fact:exp-bounds}
	For any graph $G = (V,E)$ with maximum degree $d$ we have 
	\[
	 \phi^V_\delta(G) \leq \phi^E_\delta(G) \leq d\cdot \phi^V_{\delta}(G). 
 	\]
\end{fact}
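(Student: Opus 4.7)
The plan is to establish, for every $S \subseteq V$, the pointwise sandwich
\[
|\partial^V(S)| \;\leq\; |E(S,S^c)| \;\leq\; d \cdot |\partial^V(S)|,
\]
and then normalize by $|S|$ and minimize over sets of relative size $\delta$. (The statement of the Fact is consistent with reading $\phi^E(S)$ here as the unnormalized edge expansion $|E(S,S^c)|/|S|$; the degree-normalized conductance $|E(S,S^c)|/(d|S|)$ would flip both inequalities, so the correct normalization must be pinned down at the outset.)

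For the left inequality, I would send each cut edge $(u,v) \in E(S,S^c)$ with $u \in S$, $v \in S^c$ to its $S^c$-endpoint $v$. By the definition of the vertex boundary we have $v \in \partial^V(S)$, and conversely every $v' \in \partial^V(S)$ has at least one neighbor in $S$, so it appears in the image. This yields a surjection from $E(S,S^c)$ onto $\partial^V(S)$, hence $|\partial^V(S)| \leq |E(S,S^c)|$.

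For the right inequality, I would partition $E(S,S^c)$ by its $S^c$-endpoint, which always lies in $\partial^V(S)$. Since $G$ has maximum degree $d$, each $v \in \partial^V(S)$ is incident to at most $d$ cut edges, so
\[
|E(S,S^c)| \;=\; \sum_{v \in \partial^V(S)} |\{u \in S : (u,v) \in E\}| \;\leq\; d \cdot |\partial^V(S)|.
\]

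Dividing through by $|S|$ gives $\phi^V(S) \leq \phi^E(S) \leq d \cdot \phi^V(S)$ for every $S$. To finish, let $S_E$ be a minimizer for $\phi^E_\delta$ and $S_V$ a minimizer for $\phi^V_\delta$, both of size $\delta|V|$: then $\phi^V_\delta \leq \phi^V(S_E) \leq \phi^E(S_E) = \phi^E_\delta$, and $\phi^E_\delta \leq \phi^E(S_V) \leq d \cdot \phi^V(S_V) = d \cdot \phi^V_\delta$, which together are exactly the two halves of the Fact. The argument is a pure double-counting exercise, so the only subtle point is fixing the correct normalization for $\phi^E$ in this context; once that is settled, each inequality is a one-line bookkeeping step.
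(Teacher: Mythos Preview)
The paper does not supply a proof of this Fact; it is stated without justification and then invoked in the proof of Proposition~\ref{prop:deg-hardness}. Your argument is the standard and correct one: the surjection from cut edges to their $S^c$-endpoints gives $|\partial^V(S)| \leq |E(S,S^c)|$, and grouping cut edges by their boundary endpoint together with the degree bound gives $|E(S,S^c)| \leq d\,|\partial^V(S)|$.

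You are also right to flag the normalization issue. The inequality $\phi^V_\delta \leq \phi^E_\delta \leq d\,\phi^V_\delta$ as written matches the \emph{unnormalized} edge expansion $|E(S,S^c)|/|S|$, whereas the paper's introduction defines $\phi^E_G(S) = |E(S,S^c)|/(d|S|)$ for $d$-regular graphs, under which the sandwich would read $\phi^E_\delta \leq \phi^V_\delta \leq d\,\phi^E_\delta$ instead. The Fact and its subsequent use in Section~\ref{sec:deg-red} are consistent with the unnormalized reading, so your choice is the right one; it is worth stating this explicitly, as you do.
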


We now prove Proposition \ref{prop:deg-hardness}.
\begin{proof}[Proof of Proposition \ref{prop:deg-hardness}]
	Fixing $\epsilon$, let $G = (V,E)$ be an instance of \smallsetexpansion~as guaranteed by Conjecture \ref{conj:sseh}. Let $G'= (V',E')$ be the graph obtained by instantiating Lemma \ref{lem:deg-red} on $G = (V,E)$. Note that from the guarantee of Lemma \ref{lem:deg-red}, we have that the maximum degree of $G'$ is $g+1$. Denote $\epsilon' := \epsilon/(g+1)$. Then, if $G$ is a YES instance, using Proposition \ref{fact:exp-bounds} we have 
	\[
	\phi^{V'}_\delta(G') \leq \phi^{E'}_\delta(G') \leq \frac{\epsilon}{g+1}.
	\] 
	On the other hand, if $G$ is a NO instance, then using the lower bound from Proposition \ref{fact:exp-bounds} we get that
	\[
	\phi^{V'}_\delta(G') \geq \frac{\phi^{E'}_{\delta}(G')}{g+1} \geq \frac{1}{8g^2}.
	\]
	Putting the two case together, and setting $\epsilon < f^{-1}(8g)$ establishes the claim.
\end{proof}

\section{Proof of Theorem \ref{thm:hsse-hyper}}

The algorithm for Theorem \ref{thm:hsse-hyper} is almost identical to that of Theorem \ref{thm:hsse-main}; the only change required is that the cardinality/expected weight constraint now compares with the total weight of edges, in order to accommodate the definition of hypergraph expansion. For completeness we state the algorithm below. 

\begin{algorithm}[ht!]
	\SetAlgoLined
	\KwIn{A hypergraph $H = (V,E)$ of arity $d$, volume parameter $\delta \in (0,1/2]$, contant $C^*_0$, and an error parameter $t$.} 
	Solve the following $R \defeq (t+2)$-round Lasserre relaxation of 
	\begin{align*}
		\min & \frac{1}{\delta W(V)} \Ex_{S \subseteq_t V_G} \Ex_{X_S \sim \mu_S} \sum_{e \in E}w(e) \max_{i,j \in e}\Pr_{(X_i,X_j) \sim \mu_{ij}|X_S}\left[X_i \neq X_j\right] \\
		& \E_{i \sim V}W(i) \Pr_{X_i \sim \mu_i | X_S \gets \alpha} \Big[X_i = 1\Big] \leq 2  \delta d \sum_{e \in E} w(e) & \forall (R-1)\textnormal{-admissible} (S,\alpha) \\
		& |\mu_{i,1} - \mu_{j,1}| \leq \Pr_{\mu_{i,j | X_S \gets \alpha}}\left[X_i \neq X_j\right] & \forall i,j \in e, \forall e \in E 
	\end{align*}
	Let $\{u_{S,\alpha}\}$ be the optimal vector solution to the above, and let $r \leq n^{O(R)}$ denote the dimension of these vectors\;
	Sample a uniformly random subset $S \subseteq V$ of size $t$ and an assignment $X_S \gets \alpha \sim \mu_S$\;
	Let $\{u_i\}_{i \in V}$ be the denote the set of vectors corresponding to the degree-$2$ SoS psuedodistribution conditioned on $X_S \gets \alpha$ 
	\label{step:condition1} \;
	Let $\nu_e = \max_{i,j \in e} \Pr_{\mu_{ij}| X_S \gets \alpha} \left[X_i \neq X_j\right]$ be the contribution from $e \in E$,
	and let $\nu^*\sum_{e \in E} w(e) = \sum_e w(e) \nu_e$ denote the SDP objective value w.r.t. distribution conditioned on $X_S \gets \alpha$\;
	Delete all the hyperedges with $\nu_e \geq 1/10$ and consider them as cut \label{step:e-del1} \linebreak
	{\bf New Vector Solution}. \\
	Construct a new vector solution $\{v'_i\}_{i \in V}$ as follows. 
	Let $u_i = \mu_i \uphi + z_i$ where $\uphi$ is the ``one vector'' and $\langle \uphi, z_i \rangle = 0$.
	For every $i \in V$, let $v_i \defeq (1 - 2\mu_i) \uphi - 2z_i$ \label{step:newvec1}\;
	Let $\hat{z}$ be a unit vector orthogonal to $\uphi, \{z_i\}_{i \in V}$,
	and let $\theta \gets \delta^{12}$.
	For every $i \in V$, define \label{step:vi'1} 
	\[
	v'_i \defeq \frac{(1 - 2\mu_i) \uphi - 2z_i - \theta \hat{z}}{\sqrt{1 + \theta^2}}
	\]
	{\bf Shifted Hyperplane Rounding} \label{step:gaussproj1} \\ 
	Write $v'_i = (1 - 2\mu'_i) \uphi - 2z'_i$ where $\langle \uphi,z'_i \rangle = 0$\;
	Sample Gaussian $g \sim N(0,1)^r$ and for every $i \in V$, define $\zeta_i = \langle g, \bar{z}'_i \rangle$\;
	For every $i \in T$, assign $y_i = \mathbbm{1}\left(\zeta \leq \Phi^{-1}(\mu'_i)\right)$\;
	Output subset $S = {\rm supp}(y)$.
	\caption{Approximation algorithm for \hsse} 
	\label{alg:hyper-sse}
\end{algorithm}

\subsection{Analysis of Algorithm \ref{alg:hyper-sse}}

Most of the claims and observations from the analysis of Algorithm \ref{alg:approx-pred} apply as is to this setting as well. We highlight the key conclusions that hold as is from proof of Theorem \ref{thm:hsse-main}.

To begin with, we bound the shift in the total weight due to the shifted biases. Let $\widehat{W}: V \to [0,1]$ denote the normalized weights defined as $\hat{W}(i) = W(i)/d \sum_{e \in E} w(e)$. Then, we can again use Claim \ref{cl:mu-bounds} to bound the shift in the total bias as
	\[
	\left|`\Ex_{i \sim \widehat{W}} \mu_i - \Ex_{i \sim \widehat{W}} \mu'_i \right| \leq \delta^{C^*_0}  
	\] 
	which implies that 
	\[
	0.99 \delta d \sum_{e \in E} w(e) \leq \sum_{e \in E} W(i) \mu'_i \leq 1.01 \delta d \sum_{e \in E} w(e)
	\]
	whenever $\delta$ is smaller than an absolute constant. 
	
Now we analyze the conditioning step. As before, using Lemma \ref{lem:conc}, with probability at least $0.9$ we have
	\[
	0.9 \delta d \sum_{e \in E} w(e) \leq \sum_{i \in S} W(i) \leq 1.1 \delta d \sum_{e \in E} w(e)
	\]
Again, for a random conditioning, using Markov's inequality, we have that with probability at least $0.9$, 
\[
\sum_{e \in E} w(e) \max_{i,j \in e} \|v_i - v_j\|^2 \leq 10 \cdot {\sf Sdp} 
\]
and consequently, using Claim \ref{cl:vprime-equal},
\[
\sum_{e \in E} w(e) \max_{i,j \in e} \|v'_i - v'_j\|^2 \leq 10 \cdot {\sf Sdp} .
\]

In particular, the above conclusions hold simultaneously with probability at least $0.8$.

\paragraph{Bounding the Expansion}

The key technical tool, Lemma \ref{lem:edge-cut}, carries over as is. For any hyperedge $e \in E$, let $\hat{\alpha}_e \defeq \min_{i \in e} \mu_i$. Then, analogous to Observation \ref{obs:alpha-bd}, using the $\ell_1$-constraints we have $\hat{\alpha}_e \leq \alpha_e + \nu_e$ and therefore, repeating the steps \eqref{eqn:step-1}-\eqref{eqn:step-2} from the previous analysis we can bound the probability of an edge getting cut in the rounding step as 
\[
\Pr_{{\sf Alg}} \Big[e \mbox{ is cut}\Big] \lesssim \left(C^*_0 \log(1/\delta)\right)\sqrt{\hat{\alpha}_e\nu_e \log(d)} + \tilde{O}(d) \nu_e(C^*_0\log(1/\delta)).
\]
We can then bound the expansion of the set $S$ as 
\begin{align*}
& \frac{\sum_{e \in E}\sqrt{\hat{\alpha}_e\nu_e \log(1/\delta)\log(d)}}{\sum_{v \in S} W(v)} + \tilde{O}(d) (\log(1/\delta))^2 \frac{\sum_{e \in E} w(e) \nu_e}{\sum_{v \in S}W(v)}  \\
& \lesssim \frac{\sum_{e \in E}\sqrt{\hat{\alpha}_e\nu_e \log(1/\delta )\log(d)}}{\sum_{v \in S} W(v)} + \tilde{O}(d) \phi^* (\log(1/\delta))^2 
\end{align*}

Define $\eta^H_{\rm max}$ as follows. For every hyperedge $e \in E$, let  $e^{\circ} \subseteq e$ be a choice of a non-empty subset of vertices. Then we have 
\[
\eta^H_{\rm max} = \min_{\{e^{\circ}\}} \max_{u \in V} \sum_{e: u \in e^{\circ}} \frac{\log|e^{\circ}|}{|e^{\circ}|}
\]
Then towards bounding the first term, we can use the following observation implicit in the proof of Theorem 6.3~\cite{lm16}

\begin{lemma}
We can bound	
	\[
	\frac{\sum_{e \in E}\sqrt{\hat{\alpha}_e\nu_e \log(d)}}{\sum_{v \in V} W(v)} \leq \sqrt{\eta^H_{\rm max} \phi^* \log(d)}
	\]
In particular, if $H$ is a $d$-uniform hypergraph with maximum degree $r$, then $\eta^{H}_{\rm max} = (r \log_2 (d))/d$.
\end{lemma}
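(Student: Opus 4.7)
The plan is to apply Cauchy--Schwarz after replacing the pointwise quantity $\hat{\alpha}_e$ by a vertex-indexed average. For each hyperedge $e$, I would fix a subset $e^\circ_e \subseteq e$ achieving the minimum in the definition of $\eta^H_{\max}$. Since $\hat{\alpha}_e = \min_{i \in e} \mu_i \leq \min_{i \in e^\circ_e} \mu_i \leq (1/|e^\circ_e|) \sum_{i \in e^\circ_e} \mu_i$, one can bound the numerator pointwise by
\[
\sum_{e \in E} w(e) \sqrt{\hat{\alpha}_e \nu_e \log d}
\;\leq\; \sum_{e \in E} w(e) \sqrt{\nu_e \cdot \tfrac{\log d}{|e^\circ_e|} \textstyle \sum_{i \in e^\circ_e} \mu_i},
\]
and then apply Cauchy--Schwarz on the outer sum to split this into $\bigl(\sum_e w(e)\nu_e\bigr)^{1/2}$ and $\bigl(\sum_e w(e)\tfrac{\log d}{|e^\circ_e|}\sum_{i \in e^\circ_e}\mu_i\bigr)^{1/2}$.

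The first factor equals $\sqrt{\phi^* \sum_v W(v)}$ by the definition of $\phi^*$. For the second factor, the key step is to swap the order of summation, rewriting it as $\log d \sum_i \mu_i \cdot \sum_{e: i \in e^\circ_e} w(e)/|e^\circ_e|$. The outer vertex sum is controlled by the SDP cardinality constraint $\sum_i \mu_i W(i) = O(\delta) \sum_v W(v)$, while the inner per-vertex quantity $\sum_{e: i \in e^\circ_e} w(e)/|e^\circ_e|$ is precisely the one bounded by $\eta^H_{\max}$ via the chosen family (up to a $\log|e^\circ_e|$ factor absorbed below). Combining these, taking the square root, and dividing by $\sum_v W(v)$ yields the claimed bound.

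For the uniform-hypergraph instantiation, I would take $e^\circ_e = e$ for every hyperedge: then $|e^\circ_e| = d$ and every vertex $u$ appears in at most $r$ hyperedges, so $\max_u \sum_{e: u \in e^\circ_e} \log|e^\circ_e|/|e^\circ_e| \leq r \log_2 d / d$, giving $\eta^H_{\max} \leq r\log_2 d/d$ as claimed.

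The main obstacle in the general bound is reconciling the $\log d$ factor appearing on the left-hand side against the $\log|e^\circ_e|$ weight present inside $\eta^H_{\max}$: in the uniform case the two coincide, but in general a slightly sharper pointwise inequality on $\hat{\alpha}_e$ is needed (using, e.g., an arithmetic--geometric-mean or Jensen-type refinement on the set $e^\circ_e$) in order to absorb the extra $\log d / \log|e^\circ_e|$ factor. This is the step that I would need to unpack carefully---alternatively, one may partition the edges based on whether $|e^\circ_e|$ is comparable to $d$ (where the two logs agree up to constants) or much smaller (where the corresponding $\mu_i$'s are necessarily larger, compensating for the smaller $\log|e^\circ_e|$). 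The rest of the argument is a fairly standard Cauchy--Schwarz-plus-summation-swap in the spirit of the proof of Theorem~6.3 of~\cite{lm16}.
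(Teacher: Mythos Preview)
The paper itself does not supply a proof of this lemma; it simply records it as ``implicit in the proof of Theorem~6.3 of~\cite{lm16}'' and moves on. Your Cauchy--Schwarz-plus-summation-swap argument is exactly the mechanism used in~\cite{lm16}, so at the level of strategy your proposal matches what the paper is invoking.

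Two comments on the details. First, your normalization in the line ``the first factor equals $\sqrt{\phi^*\sum_v W(v)}$'' is off: the SDP objective is $\frac{1}{\delta W(V)}\sum_e w(e)\nu_e$, so $\sum_e w(e)\nu_e \lesssim \phi^*\,\delta\, W(V)$, and the missing $\delta$ is what ultimately cancels against the $\delta$ coming from the cardinality constraint in the second factor. Keep track of that. Second, the $\log d$ versus $\log|e^\circ|$ obstacle you flag is real as the lemma is written, but it is an artifact of the statement rather than a genuine mathematical difficulty: as stated, the right-hand side carries both a $\sqrt{\log d}$ and, inside $\eta^H_{\max}$, a $\log|e^\circ|$, which is redundant (and in the $d$-uniform case yields an extra $\sqrt{\log d}$ beyond what Theorem~\ref{thm:hsse-hyper} actually uses). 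If you drop the $\log|e^\circ|$ from the definition of $\eta^H_{\max}$ (so that in the uniform case $\eta^H_{\max}=r/d$), your argument goes through cleanly and delivers precisely the bound needed for the theorem; the partitioning workaround you sketch is unnecessary.
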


Plugging in the bound from above lemma in place of the first term concludes the proof.

\subsection*{Acknowledgments}

AL was supported in part by SERB Award ECR/2017/003296, a Pratiksha
Trust Young Investigator Award, and an IUSSTF virtual center on “Polynomials as an Algorithmic
Paradigm”.

\bibliographystyle{alpha}
\bibliography{main}

\appendix
\section{Facts from Information Theory}				\label{sec:inf-theory}

As in \cite{RT12}, the proof of Lemma \ref{lem:conc} uses elementary information theory. For completeness, we introduce the (specific instantiations) of quantities and related properties used in the proof. We again point out that all the logarithms used in this section will be base $2$. Given a $0/1$ random variable $X$ with bias $\mu$, the entropy of $X$, denoted by $H(X)$, is defined as $H(X) = \mu\log(1/\mu) + (1 - \mu) \log(1/(1- \mu)) $. For a pair of random variables $X,Y$, the mutual information between $X$ and $Y$ is denoted by $I(X;Y)$ and is defined as $I(X;Y) = H(X) - H(X|Y)$ where $H(X|Y)$ is the conditional entropy of $X$ given $Y$. We now list the facts used by the proof of Lemma \ref{lem:conc}.

\begin{fact}		\label{fact:h-bound}
	For any $X \sim {\sf Bernoulli}(\mu)$ such that $\mu \leq 1/2$ we have  
	\[
	H(X) \leq 2 \mu \log\frac{1}{\mu} .
	\]
\end{fact}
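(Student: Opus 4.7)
The claim reduces, after substituting $H(X) = \mu\log(1/\mu) + (1-\mu)\log(1/(1-\mu))$, to establishing
\[
(1-\mu)\log\frac{1}{1-\mu} \leq \mu\log\frac{1}{\mu} \qquad \textnormal{for every } \mu \in (0,1/2].
\]
I plan to prove this single inequality via a concavity argument on the function $f(\mu) := \mu\log(1/\mu) - (1-\mu)\log(1/(1-\mu))$ on the interval $[0,1/2]$.

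First I would verify the two boundary values: $f(0^+) = 0$ (both summands vanish in the limit) and $f(1/2) = 0$ (by symmetry, as both summands equal $(1/2)\log 2$). Next I would compute the second derivative directly from $f(\mu) = -\mu\log\mu + (1-\mu)\log(1-\mu)$, obtaining $f'(\mu) = -\log(\mu(1-\mu)) - 2/\ln 2$ and hence
\[
f''(\mu) = -\frac{1-2\mu}{\mu(1-\mu)\ln 2} \leq 0 \qquad \textnormal{for } \mu \in (0,1/2].
\]
Thus $f$ is concave on $[0,1/2]$. Since a concave function that vanishes at both endpoints of an interval is nonnegative on that interval, $f \geq 0$ on $[0,1/2]$: for any interior $\mu \in (0,1/2)$, I can write $\mu = (1-2\mu)\cdot 0 + 2\mu\cdot(1/2)$ and apply concavity to deduce $f(\mu) \geq (1-2\mu)f(0) + 2\mu f(1/2) = 0$. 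Rearranging and adding $\mu\log(1/\mu)$ to both sides gives the desired $H(X) \leq 2\mu\log(1/\mu)$.

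There is no real obstacle here beyond a clean calculation; however, it is worth noting that the inequality is tight at $\mu = 1/2$ (both sides equal $1$), so any looser bound risks failing near that endpoint. For example, a more direct attempt using $\ln(1+x) \leq x$ yields $(1-\mu)\log(1/(1-\mu)) \leq \mu\log e$ and hence only $H(X) \leq \mu\log(e/\mu)$, which matches $2\mu\log(1/\mu)$ only when $\mu \leq 1/e$; recovering the full range $(0,1/2]$ via this route would need a separate case using $H(X) \leq 1$ together with the monotonicity of $2\mu\log(1/\mu)$ on $[1/e,1/2]$. The concavity plan bypasses this case split by being exact at both endpoints of the interval.
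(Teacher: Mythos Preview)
Your concavity argument is correct and, interestingly, coincides with the alternative route you mention at the end: the paper proves this fact precisely via the crude bound $\log(1+x)\leq x$ applied to $\log\frac{1}{1-\mu}=\log\bigl(1+\frac{\mu}{1-\mu}\bigr)$, obtaining $(1-\mu)\log\frac{1}{1-\mu}\leq \mu$ and then using $\mu\leq \mu\log\frac{1}{\mu}$ for the last step. As you correctly observe, that chain is only valid on the subrange $\mu\leq 1/e$ (the final inequality $1\leq \log\frac{1}{\mu}$ fails for $\mu\in(1/e,1/2]$ with natural logs, and with base-2 logs the step $\log_2(1+x)\leq x$ already fails for small $x$); the paper's proof as written therefore does not cover the full interval claimed. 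Your approach avoids this by exploiting that $f(\mu)=\mu\log\frac{1}{\mu}-(1-\mu)\log\frac{1}{1-\mu}$ vanishes exactly at both endpoints $0$ and $1/2$ and is concave in between, which yields the inequality uniformly on $(0,1/2]$ with no case split and no slack at $\mu=1/2$. So your proof is both different from and strictly cleaner than the paper's.
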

\begin{proof}
	Using $\log(1 + \mu) \leq \mu$ we get that 
	\begin{eqnarray*}
	H(X)= \mu\log\frac1\mu + (1 - \mu)\log\frac{1}{1 - \mu}
	&\leq& \mu\log\frac1\mu + (1 - \mu)\log\left(1 + \frac{\mu}{1 - \mu}\right) \\
	&\leq& \mu\log\frac1\mu + \mu \leq  2\mu\log\frac1\mu .
	\end{eqnarray*}
\end{proof}

\begin{fact}			\label{fact:I-bound}			
	For any pair of boolean random variables $X,Y$ we have $I(X;Y) \leq 1$.
\end{fact}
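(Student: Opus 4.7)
The plan is to prove this directly from the definition of mutual information together with two standard facts about entropy of Boolean variables. Specifically, I would expand $I(X;Y) = H(X) - H(X \mid Y)$, use that the conditional entropy $H(X \mid Y)$ is nonnegative to conclude $I(X;Y) \leq H(X)$, and then use that for a Boolean random variable $X$ the entropy $H(X)$ is maximized by the uniform distribution on $\{0,1\}$, giving $H(X) \leq 1$.

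In slightly more detail, writing $X \sim \mathrm{Bernoulli}(\mu)$, the binary entropy function $h(\mu) = \mu \log(1/\mu) + (1-\mu)\log(1/(1-\mu))$ is concave in $\mu$ on $[0,1]$, vanishes at $\mu \in \{0,1\}$, and attains its maximum at $\mu = 1/2$, where $h(1/2) = (1/2)\log 2 + (1/2)\log 2 = 1$ (recall logarithms here are base $2$). Hence $H(X) \leq 1$. Combined with $H(X \mid Y) \geq 0$, which follows from the fact that $H(X \mid Y) = \mathbb{E}_{y \sim Y}[H(X \mid Y = y)]$ and each $H(X \mid Y = y)$ is itself a binary entropy and hence nonnegative, we obtain $I(X;Y) \leq H(X) \leq 1$.

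There is no real obstacle here; the only minor point is to be explicit that the logarithm is base $2$ (which is the paper's convention) so that the upper bound of $1$ (rather than $\ln 2$) is correct. I would write the argument as a three-line display:
\[
I(X;Y) \;=\; H(X) - H(X \mid Y) \;\leq\; H(X) \;\leq\; 1,
\]
with a one-sentence justification of each inequality.
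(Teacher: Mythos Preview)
Your proposal is correct and takes essentially the same approach as the paper: the paper's proof is the single line $I(X;Y) = H(X) - H(X|Y) \leq H(X) \leq 1$, citing Cover--Thomas for the fact that the entropy of a Boolean random variable lies in $[0,1]$, whereas you additionally spell out why $H(X) \leq 1$ via the binary entropy function and why $H(X\mid Y) \geq 0$.
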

\begin{proof}
	Using the fact that the entropy of boolean random variable lies between $[0,1]$ (for e.g., see Example 2.1.1~\cite{cover-thomas}), we have $I(X;Y) = H(X) - H(X|Y) \leq H(X) \leq 1$.
\end{proof}

\begin{fact}			\label{fact:gauss-mi}
	For any pair of Gaussian random variables $(g_i,g_j) \sim N(0,\Sigma)$ we have $I(g_i;g_j)  = - \frac12\log |\Sigma|$.
\end{fact}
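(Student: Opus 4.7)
The plan is to reduce this to the standard differential entropy formula for multivariate Gaussians, via the identity $I(X;Y) = h(X) + h(Y) - h(X,Y)$, where $h(\cdot)$ denotes differential entropy. Throughout, one should keep in mind that in the context where this fact is invoked (Lemma \ref{lem:corr-bound}), the Gaussians $g_i, g_j$ are inner products of a standard Gaussian $g \sim N(0,\mathbbm{1})^r$ with unit vectors, so their marginal variances equal $1$; hence $\Sigma_{ii} = \Sigma_{jj} = 1$, which will make the leading factors of $2\pi e$ cancel cleanly.

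First I would recall (or briefly derive) the closed-form expression for the differential entropy of a $k$-dimensional Gaussian $Z \sim N(\mu, \Sigma_Z)$, namely $h(Z) = \tfrac{1}{2}\log\bigl((2\pi e)^k \abs{\Sigma_Z}\bigr)$. This follows by writing out $h(Z) = -\int f_Z(z) \log f_Z(z)\, dz$, substituting the Gaussian density $f_Z(z) = (2\pi)^{-k/2}\abs{\Sigma_Z}^{-1/2} \exp(-\tfrac{1}{2}(z-\mu)^\top \Sigma_Z^{-1}(z-\mu))$, and using $\Ex[(Z-\mu)^\top \Sigma_Z^{-1}(Z - \mu)] = \mathrm{tr}(\Sigma_Z^{-1} \Sigma_Z) = k$.

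Next I would apply this formula to each of the three entropies. Since $g_i$ and $g_j$ are marginally $N(0,1)$ (the relevant case for the paper), we get $h(g_i) = h(g_j) = \tfrac{1}{2}\log(2\pi e)$. For the joint distribution, $(g_i,g_j) \sim N(0,\Sigma)$ is $2$-dimensional, so $h(g_i,g_j) = \tfrac{1}{2}\log\bigl((2\pi e)^2 \abs{\Sigma}\bigr)$. Combining,
\[
I(g_i;g_j) = h(g_i) + h(g_j) - h(g_i,g_j) = \log(2\pi e) - \tfrac{1}{2}\log\bigl((2\pi e)^2 \abs{\Sigma}\bigr) = -\tfrac{1}{2}\log\abs{\Sigma}.
\]

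There is no real obstacle here; the only thing to be a bit careful about is verifying that $I(X;Y) = h(X) + h(Y) - h(X,Y)$ continues to hold for continuous random variables (it does, and it can be taken as the definition in this setting via $I(X;Y) = \int\int f_{XY}\log(f_{XY}/f_Xf_Y)$ and splitting the logarithm). If one wanted to state the fact for general (not necessarily unit-variance) jointly Gaussian $(g_i,g_j)$, the same calculation yields $I(g_i;g_j) = \tfrac{1}{2}\log\bigl(\Sigma_{ii}\Sigma_{jj}/\abs{\Sigma}\bigr)$, and the form in the statement is the specialization to the standardized case used throughout the proof of Lemma \ref{lem:corr-bound}.
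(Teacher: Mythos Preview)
The paper states this as a Fact without proof, so there is no paper proof to compare against. Your derivation via the differential-entropy identity $I(X;Y)=h(X)+h(Y)-h(X,Y)$ together with the closed-form Gaussian entropy $h(Z)=\tfrac12\log((2\pi e)^k|\Sigma_Z|)$ is the standard textbook argument and is correct. Your observation that the stated formula $-\tfrac12\log|\Sigma|$ is really the specialization to the unit-variance case (which is exactly how it is used in Lemma~\ref{lem:corr-bound}) is also on point.
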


\begin{lemma}[Data Processing Inequality]			\label{lem:data-proc}
	Let $(X,Y)$ and $(X',Y")$ be pairs of random variables such that $X$ is completely determined by $X'$ and $Y$ is completely determined by $Y'$. Then $I(X;Y) \leq I(X';Y')$
\end{lemma}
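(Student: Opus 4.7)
My plan is to prove this via the chain rule for mutual information, exploiting the fact that a deterministic functional relationship $X = f(X')$ forces $H(X \mid X') = 0$. The idea is to introduce the joint variable $(X, X', Y, Y')$ and compute $I(X,X'; Y,Y')$ two different ways: once by stripping off $(X', Y')$ to get something bounded below by $I(X;Y)$, and once by stripping off $(X, Y)$ to get exactly $I(X'; Y')$.

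First I would show $I(X, X'; Y, Y') \geq I(X; Y)$ by two chain-rule expansions:
\[
I(X, X'; Y, Y') \;=\; I(X; Y, Y') + I(X'; Y, Y' \mid X) \;\geq\; I(X; Y, Y') \;=\; I(X; Y) + I(X; Y' \mid Y) \;\geq\; I(X; Y),
\]
where each inequality is just non-negativity of (conditional) mutual information.

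Next I would show the matching equality $I(X, X'; Y, Y') = I(X'; Y')$. By the chain rule,
\[
I(X, X'; Y, Y') \;=\; I(X'; Y, Y') + I(X; Y, Y' \mid X').
\]
The second term vanishes because $X$ is determined by $X'$: this gives $H(X \mid X') = 0$, hence also $H(X \mid X', Y, Y') = 0$, so $I(X; Y, Y' \mid X') = 0$. Applying the chain rule once more on the first term,
\[
I(X'; Y, Y') \;=\; I(X'; Y') + I(X'; Y \mid Y'),
\]
and the second term vanishes by the symmetric argument: since $Y$ is determined by $Y'$, $H(Y \mid Y') = 0$ which forces $I(X'; Y \mid Y') = 0$. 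Combining yields $I(X, X'; Y, Y') = I(X'; Y')$, and chaining with the previous paragraph gives the lemma.

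I do not expect any real obstacle here; this is essentially the textbook proof of the data-processing inequality specialized to the case $X \leftarrow X' \text{ and } Y' \rightarrow Y$, and the only care needed is in correctly identifying which conditional-entropy terms are killed by the deterministic relations $X = f(X')$ and $Y = g(Y')$.
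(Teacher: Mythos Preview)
Your argument is correct and is essentially the standard chain-rule proof of the data processing inequality specialized to deterministic channels. The paper itself does not supply a proof of this lemma; it is stated in the appendix as a standard information-theoretic fact and simply invoked where needed, so there is nothing to compare your approach against.
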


\section{Integrality Gap}		\label{sec:int-gap}

Here we show that the integralilty gap of the lifted SDP is still $\Omega(d)$ for up to near-linear levels of SoS, as stated formally in the following proposition.

\begin{restatable}{reprop}{intgap}					\label{lem:int-gap}
	For every $d \geq 2$, there exists a $\delta = \delta(d)$ such that for $R = \Omega(n/\log(n))$, the $R$-level SoS lifting of the basic SDP for the $\delta$-HSSE problem has integrality gap $\Omega(d)$.
\end{restatable}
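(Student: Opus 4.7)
The plan is to exhibit, for each $d \geq 2$ and $\delta = 1/d$, a $d$-uniform hypergraph $H$ on $n$ vertices together with an honest probability distribution $\mu$ on $\{0,1\}^n$ such that (i) the moments of $\mu$ form a feasible SoS solution of value $O(1/d)$ and (ii) $\phie_\delta(H) = \Omega(1)$. Since $\mu$ is honest, $\tEx_\mu[p^2] = \Ex_\mu[p^2] \geq 0$ for every polynomial $p$, so its moments form a valid degree-$R$ pseudo-distribution for every $R \leq n$; in particular, the $R = \Omega(n/\log n)$ in the statement is easily met (and could be replaced by any $R \leq n$). The ratio of (ii) to (i) then gives the claimed $\Omega(d)$ gap.

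For the pseudo-distribution, I would take $\mu$ to be the uniform distribution over all subsets of $V$ of size exactly $\delta n$. Then $\mu_i = \tPr_\mu[X_i = 1] = \delta$ for every $i$, and a direct computation shows that for any consistent conditioning $X_S \gets \alpha$ the expected conditional cardinality equals $\delta n$, so the cardinality constraint \eqref{eqn:spd-2} holds at every conditioning. The $\ell_1$ constraint \eqref{eqn:sdp-3} is trivial because $\mu_i = \mu_j$ for every $i,j$. Crucially, the SDP objective uses $\max_{i,j \in e} \tPr[X_i \neq X_j]$ (a pairwise quantity) rather than $\tPr[\exists i,j \in e : X_i \neq X_j]$ (the true cut-probability under $\mu$); under $\mu$ the former equals $2\delta(1-\delta) n/(n-1) \leq 3\delta$, and with the standard vertex-degree weights $\vol{V} = d\sum_e w(e)$ of a $d$-uniform hypergraph the normalized objective is at most $3\delta \sum_e w(e) / (\delta\vol{V}) = 3/d$.

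For the hard instance, I would take $H$ to be a random $d$-uniform hypergraph with $m = Cn$ independent uniformly-random hyperedges, where $C = C(d)$ is a sufficiently large constant. For any fixed $S$ with $|S| = \delta n$, each hyperedge avoids being cut with probability $\delta^d + (1-\delta)^d$, which is bounded away from $1$ by a constant (roughly $1/e$ for large $d$); hence $|\partial^E_H(S)|$ is a sum of $m$ independent Bernoullis each with mean $\Omega(1)$, and a Chernoff bound gives $\Pr[|\partial^E_H(S)| < m/4] \leq \exp(-\Omega(m))$. Choosing $C$ large enough that $\exp(-\Omega(m)) \ll 2^{-n}$ and union-bounding over the $\leq 2^n$ candidate sets ensures that with positive probability $|\partial^E_H(S)| \geq m/4$ simultaneously for every $\delta n$-set $S$; standard concentration also shows that all vertex degrees lie in $\Theta(dm/n)$, so $\vol{S} = \Theta(d\delta m)$ for every such $S$, and dividing yields $\phie_H(S) = \Omega(1)$.

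The main technical step I anticipate is the joint Chernoff-plus-union-bound analysis of this random hypergraph, including the calibration of $C$ so that the per-set failure probability beats $2^{-n}$ and the verification that vertex degrees concentrate uniformly (an explicit $d$-uniform expander, or a configuration-model regular hypergraph, would sidestep the degree-concentration step entirely). Combining the SDP upper bound $O(1/d)$ with the integral lower bound $\Omega(1)$ on this $H$ then yields the claimed $\Omega(d)$ integrality gap at every SoS level $R \leq n$, and in particular at $R = \Omega(n/\log n)$.
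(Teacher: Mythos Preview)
Your approach is correct and uses the same two ingredients as the paper: a random $d$-uniform hypergraph on $n$ vertices as the hard instance, and the uniform distribution over $\delta n$-subsets as the SoS solution. The soundness arguments are essentially identical (Chernoff plus a union bound over all $\delta n$-sets).

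The difference is in the completeness analysis. You exploit directly that (a) an honest distribution is a valid degree-$R$ pseudo-distribution for every $R$, and (b) the basic SDP's objective is a sum of \emph{pairwise} terms $\max_{i,j\in e}\tPr[X_i\neq X_j]$, not the true hyperedge cut probability; under the uniform distribution over $\delta n$-subsets each such term equals $\Theta(\delta)$ irrespective of the arity, so the normalized objective is $O(1/d)$. This yields the $\Omega(d)$ gap at \emph{every} level $R\le n$, which is actually stronger than what the proposition states. The paper instead bounds the objective \emph{after conditioning} on an arbitrary $(R{-}1)$-subset $T$: it shows that, with high probability over the random hypergraph, only $\tilde O(dR)$ hyperedges touch $T$ and can have conditioned pairwise-cut value $1$, while the remaining edges still contribute $O(1/d)$. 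Balancing these two contributions is precisely where the restriction $R=O(n/\log n)$ enters. That more elaborate analysis would be needed if the SDP one is gapping had its objective defined via conditioning (as in the algorithmic SDP of Figure~\ref{fig:hsse-round} / Algorithm~\ref{alg:hyper-sse}), but for the basic SDP as stated your direct computation suffices and is cleaner.
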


\begin{proof}
The gap instance is simple: it is just a $d$-uniform random hypegraph with a linear number of hyperedges. Formally, our construction is as follows. 

{\bf Construction}. Given $d,n \geq 2$, set $\delta = 1/d$, $k = \delta n$ and $r = C n \log(1/\delta)$ for some large constant $C$.
Let $H = (V,E)$ be a random hypergraph on $n$-vertices where $E = \set{e_1,\ldots,e_r}$ such that each $e_i $ is a uniformly random $d$-sized subset of $V$.

\paragraph{Soundness.} 
We first note that for any constant $c>0$, we have $(\delta n - c)/(n-c) \leq \delta$.
Fix a subset $S \subseteq V$ of size $k$. Then for any fixed $i \in [r]$,
\[
\Pr_H\left[e_i \in \partial_H(S)\right] \geq 1 - \delta^d - (1 - \delta)^d \geq 1 - e^{-1} - \delta^d \geq 1/4
\]
where the first inequality is using $d = 1/\delta$. Therefore, $\Ex |\partial_H(S)| \geq r/4$. Therefore, using Chernoff bound over the randomness of the choice of the hyperedges, we have that  
\[
\Pr_H\left[ |\partial_H(S)| \geq \frac{r}{8}\right] \geq 1 - e^{-r/64}
\]
and hence taking a union bound over all $\binom{n}{\delta n} \leq e^{n \log (1/\delta)}$ subsets of size $k$ we have that 
\[
\Pr_H\left[\forall \ S \in \binom{[n]}{k} : |\partial_H(S)| \geq \frac{r}{8}\right] \geq 
1 - \exp\left(n \log (1/\delta ) - r/64\right) \geq 1 - e^{-Cr/2} 
\]
for large constant $C$.

\paragraph{Completeness.} Consider the distribution over subsets described by the following process
\begin{itemize}
	\item {\em Choose a random $k = \delta n = n/d$ sized subset $S$ and set $X_i = \mathbbm{1}(i \in S)$ for every $i \in V$.}
\end{itemize}

The above is a mixture over integer solutions, and therefore admits a valid degree $R$-pseudo-distribution. Fix a $R$-sized subset $S$, and let $E_S$ denote the set of hyperedges incident on $S$ i.e, 
\[
E_S := \left\{e \in E | e \cap S \neq \emptyset \right\}
\]
Now it is easy to see that 
\[
\Pr_{e = (i_1,\ldots,i_d)\sim \binom{[n]}{d}} \left[e \in E_S\right] \leq \sum_{j \in [d]} \Pr \left[i_j \in S\right] \leq \frac{d R}{n}
\]
and hence the expected number of hyperedges incident on $S$ is at most $d R r/n \lesssim  d R \log(1/\delta)$ (since $r = O(n \log(1/\delta))$ by our choice of parameters). Hence using Chernoff bound we have that 
\[
\Pr \left[|E_S| \geq  dR \log(1/\delta)\sqrt{\log n}\right]
\leq \exp\left(-\frac{\log(n)\log(1/\delta) dR}{4}\right). 
\]
Taking a union bound over all $\binom{n}{R} \leq e^{R \log n}$ choices of $R$-sized subsets we get that 
\begin{eqnarray*}
	\Pr \left[\forall \ S \in \binom{n}{R} : |E_S| \leq d R \log(1/\delta)\sqrt{\log(n)} \right]
	&\leq& \exp\left(R \log n- \frac{\log(n)\log(1/\delta) dR}{4}\right)  \\
	&\leq& \exp\left(- \frac{\log(n)\log(1/\delta) dR}{8}\right)
\end{eqnarray*} 
Now choose an  $R'$-sized subset $T$ with $R' \leq R-1$ and and a conditioning $X_T \gets \alpha$. Suppose the conditioning fixes $r'$ out of $R'$ variables to $1$. Then then the degree $R- R'$ distribution on the variables $V \setminus T$ conditioned on $X_T \gets \alpha$ corresponds to the following.
\begin{itemize}
	\item {\em Choose a uniformly random $(n/d - r')$-sized subset $S$ from $V \setminus T$ and set $X_i = \mathbbm{1}(i \in S)$}.
\end{itemize}  
In particular, for any $i \in V \setminus T$ we have that 
\[
\Pr_{X_i \sim \mu_i | X_T \gets \alpha}[X_i = 1] = \frac{\frac{n}{d} - r'}{n - R'} \leq \frac{2}{d}
\]
as long as $R' \leq n/2$. From the above, we can conclude that for any pair of variables $i,j \in V \setminus T$
\begin{equation}			\label{eqn:cross}
\Pr_{\mu_{ij}|X_T \gets \alpha}\left[X_i \neq X_j\right] \leq \Pr_{\mu_i | X_T \gets \alpha}\left[X_i = 1\right] 
+ \Pr_{\mu_j | X_T \gets \alpha}\left[X_j = 1\right] = \frac{4}{d}
\end{equation}
Therefore, in summary, for any $R$-sized subset $T$, and a conditioning of the variables in $T$ we have the following.  
\begin{enumerate}
	\item The number of hyperedges incident on $T$ is at most  $d R \log(1/\delta)\sqrt{\log (n)}$ for which the contribution to the objective value is at most $1$.
	\item For any hyperedge not intersecting with $T$, from \eqref{eqn:cross} we know that the contribution to the objective is at most $4/d$.
\end{enumerate}
Hence, the total value of the objective for any conditioning on any $R$-sized subset is at most
\[
d R \log(1/\delta) \sqrt{\log (n)} + \frac{4r}{d} \leq \frac{8r}{d}
\]
which completes the soundness analysis.
\end{proof}

\section{Reduction from \smallsetvertexexpansion~to \hypersse}		\label{sec:redn}

We introduce some additional notation introduced in this section. For a graph $G = (V,E)$, and a subset of vertices $ S \subset V$, we define the symmetric vertex boundary as $\partial^{\rm sym}_G(S) = \partial^V_G(S) \cup \partial^V_G(S^c)$. Furthermore, the internal boundary of the set $S$ is defined as $\partial^V_{\rm int}(S) = \partial^V_G(S^c)$. The symmetric vertex expansion of $S$, denoted by $\Phi^{V}_G$ is defined as 
\[
\Phi^V_G(S) = \frac{w(\partial^{\rm sym}_G(S))}{w(S)}
\]

\begin{lemma}				\label{lem:red1}
	Given a vertex weighted graph $G = (V,E,w)$ with vertex weights $w:V \to \mathbbm{R}_{\geq 0}$, there exists an efficient procedure to construct a vertex weighted graph $G' = (V',E',w')$ for which the following properties holds:
	\begin{itemize}
		\item {\bf Completeness}: If there exists a subset $S \subset V$ such that $w(S) = \delta$ and $\phi^V_G(S) \leq \epsilon$, then there exist a set $S' \subset V'$ such that $w'(S') = \delta$ and $\Phi^{V'}_{G'}(S') \leq 2\epsilon$.
		\item {\bf Soundness}: If there exists a subset $S' \subset V$ such that and $w'(S') = \delta$ and $\Phi^{\rm sym}_{G'}(S) \leq \epsilon$, then there exists a set $S \subset V$ such that $(1 - \epsilon)\delta  \leq w(S) \leq \delta$ and $\phi^V_G(S) \leq \epsilon/(1 - \epsilon)$.
	\end{itemize}
	Furthermore, the maximum degree of the graph $G'$ is exactly the maximum degree of graph $G$.
\end{lemma}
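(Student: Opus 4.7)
The plan is to take $G'$ to be the subdivision of $G$: place a zero-weight ``middle'' vertex $m_e$ on every edge $e = (u,v) \in E$, keep the original weights $w'(v) = w(v)$ for $v \in V$, set $w'(m_e) = 0$, and replace each edge $(u,v)$ by the length-two path $u - m_{uv} - v$. Since the only positive weight lives on $V \subset V'$, any set $S' \subseteq V'$ has $w'(S') = w(S' \cap V)$, and likewise $w'(\partial^{\rm sym}_{G'}(S'))$ picks up only its $V$-part. Each $v \in V$ has $\deg_{G'}(v) = \deg_G(v)$ and each middle vertex has degree $2$, so whenever $\Delta(G) \geq 2$ the maximum degree is preserved exactly (the degenerate case $\Delta(G) \leq 1$ can be handled trivially by taking $G' = G$).

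For completeness, given $A \subseteq V$ with $w(A) = \delta$ and $\phi^V_G(A) \leq \epsilon$, I would set $S' := A \cup \{m_e : e \in E,\ e\cap A \neq \emptyset\}$. A direct case analysis on the two types of vertices in $V'$ shows that $\partial^V_{G'}(S') \subseteq V$ equals $\partial^V_G(A)$, whereas $\partial^V_{G'}((S')^c) \subseteq M$ equals $\{m_e : e \in \partial^E_G(A)\}$ and thus carries zero weight. Hence $w'(\partial^{\rm sym}_{G'}(S')) = w(\partial^V_G(A)) \leq \epsilon\, w(A)$, giving $\Phi^{\rm sym}_{G'}(S') \leq \epsilon \leq 2\epsilon$.

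For soundness, given $S' \subseteq V'$ with $w'(S') = \delta$ and $\Phi^{\rm sym}_{G'}(S') \leq \epsilon$, I let $A := S' \cap V$ and trim it to $A' := A \setminus \partial^{\rm sym}_{G'}(S')$. Since $w'(\partial^{\rm sym}_{G'}(S')) \leq \epsilon\delta$ and this weight sits on $V$, we have $w(A \setminus A') \leq \epsilon\delta$, so $(1-\epsilon)\delta \leq w(A') \leq w(A) = \delta$. The main step is the following key claim: $\partial^V_G(A') \subseteq \partial^{\rm sym}_{G'}(S') \cap V$. To verify it, pick $u \in \partial^V_G(A')$ and $v \in A'$ with $(u,v) \in E_G$; if $u \in A \setminus A'$ the claim is immediate, so assume $u \in V \setminus A \subseteq (S')^c$ and inspect $m_{uv}$. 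If $m_{uv} \in S'$ then $u$ is adjacent in $G'$ to a vertex of $S'$, so $u \in \partial^V_{G'}(S') \cap V$; otherwise $m_{uv} \in (S')^c$, which forces $v \in \partial^V_{G'}((S')^c) \cap V \subseteq \partial^{\rm sym}_{G'}(S')$, contradicting $v \in A'$. Granting the claim, $w(\partial^V_G(A')) \leq \epsilon\delta$ and thus $\phi^V_G(A') \leq \epsilon/(1-\epsilon)$, as desired.

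I do not anticipate serious obstacles: the calculations are short and the only subtle point is the ``swapping'' argument in the soundness case analysis, which is precisely what ensures that the notorious inner-boundary term in $\Phi^{\rm sym}$ (the one that defeated naive reductions) is already absorbed into the $\epsilon\delta$ allowance coming from $\partial^{\rm sym}_{G'}(S') \cap V$. The loss factor $1/(1-\epsilon)$ in soundness is the exact price for removing the ``bad'' vertices $A \setminus A'$ from $A$ while the numerator stays $\leq \epsilon\delta$.
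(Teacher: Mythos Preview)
Your proposal is correct and follows essentially the same approach as the paper: the subdivision graph you construct is exactly the bipartite vertex--edge incidence graph used there (your middle vertex $m_e$ is their edge-vertex $e\in V_R$), and both the completeness set $S'=A\cup\{m_e:e\cap A\neq\emptyset\}$ and the soundness trimming argument match the paper's proof. The only cosmetic difference is that in soundness you remove all of $\partial^{\rm sym}_{G'}(S')\cap V$ from $A$, whereas the paper removes only the internal boundary $\partial^{\rm int}_{G'}(S')\cap V$; since both are bounded by $\epsilon\delta$ in weight, the resulting estimates are identical.
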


\begin{proof}
	Given the graph $G = (V,E,w)$ we construct a new bipartite graph $G' = (V_L,V_R,w')$ as follows. Here we assign $V_L = V$ and $V_R = E$ i.e., the left vertex set is the set of vertices in $G$ and the right vertex set is the edge-set in $E$. Furthermore, for every vertex set $u \in V$ and every edge $e = (u,v) \in E$ we add an edge $(u,e) \in E$. For every $u \in V_L$ we assign $w'(u) = w(u)$ and for every vertex $e \in V_R$ we assign $w'(0)$. This completes the description of the graph $G'$. For brevity, we shall denote $V' = V_L \cup V_R$. Now we analyze the completeness and soundness of the reduction.
	
	{\bf Completeness}. Suppose there exists a subset $S \subset V$ such that $\phi^V_G(S) \leq \epsilon$ and $w(S) = \delta$. Construct the set $S'$ by including all the vertices in $S$ and all the edges incident on $S$ i.e., 
	\[
	S' = S \cup \Big\{(u,v) \in V_R \Big| u \in S \vee v \in S \}
	\]
	Since the edge vertices $e \in V_R$ have weight $0$, we have $w'(S) = w(S) = \delta$. Now we bound the symmetric vertex expansion of $S'$. Fix a vertex $v \in \partial^{\rm sym}_{G'}(S)$. Since if $v \in V_R$, we have $w'(v) = 0$ so it does not contribute anything to the weight of the symmetric boundary. Therefore we may assume that $v \in V_L = V$. Then we claim that $v \notin S'$. Otherwise if $v \in S'$, we must have $v \in S$ (since $S' \cap V = S$). Furthermore, if $v \in S$, then for every $e = (v,v') \in E$ we have $e \in S'$ (from the construction of $G'$) and hence $N_{G'}(v) \subset S'$ which contradicts $v \in \partial^{\rm sym}_{V'}(S)$. Therefore we must have $v' \notin S$. But then there exists edge-vertex $e = (u,v) \in S'$ for some $u \in S'$. This implies that $v \in \partial^V_G(S)$. Hence, we have 
	\[
	\Phi^{V'}_{G'}(S') = \frac{w'\left(\partial^{\rm sym}_{G'}(S')\right)}{w'(S')} 
	= \frac{w\left(\partial^{\rm sym}_{G'}(S' \cap V)\right)}{w(S)} 
	\leq \frac{w\left(\partial^{V}_{G}(S)\right)}{w(S)} = \phi^V_G(S) \leq \epsilon.
	\]
	
	{\bf Soundness}. Fix a subset $S' \subset V'$ such that $w'(S') = \delta$ and $\Phi^{V'}_{G'}(S') \leq \epsilon$. Let $S'' \defeq S' \setminus \partial^{\rm int}_{G'}(S')$, and define $S = S'' \cap V$. We claim that $\partial^V(S) \subset \partial^{\rm sym}_{G'}(S')$. To see this, fix a vertex $v \in \partial^V_G(S)$. Then there exists $u \in S \cap N_G(v)$. Now since $u \in S$ we also have $u \in S^c$. Now we consider the following cases. Now for contradiction, suppose $v \notin \partial^{\rm sym}_{G'}(S')$, then this combined with the fact that $v \notin S$ implies that $v \notin S'$. Now we consider two cases:
	
	{\bf Case (i)}. Suppose $(u,v) \notin S'$. Then $u \in \partial^{\rm int}_{G'}(S')$ which contradicts the fact that $u \in S$.
	
	{\bf Case (ii)}. Suppose $(u,v) \in S'$. Then $v \in \partial^{\rm sym}_{G'}(S')$, contradicting that $v \notin \partial^{\rm sym}_{G'}(S')$.
	
	Therefore, the above establishes that $\partial^V_{G}(S) \subseteq \partial^{\sym}_{G'} \cap V$. Furthermore, note that we have 
	\[
	w(S) = w'(S'') = w'(S') \setminus w\left(\partial^{\rm int}_{G'}(S')\right) \geq w'(S') - \epsilon w(S') \geq (1 - \epsilon)w'(S') = (1 - \epsilon)w(S') 
	\]
	and since $S \subset S' \cap V$ we have $w(S) \cap w(S' \cap V)$. Therefore, combining the above observations we get that  
	\[
	\phi^V_G(S) = \frac{w\left(\partial^V_G(S)\right)}{w(S)} \leq \frac{w\left(\partial^{\rm sym}_{G'}(S')\right)}{(1- \epsilon) w'(S')} = \frac{\Phi^{\rm sym}_G(S)}{1 - \epsilon}.
	\]
\end{proof}

\begin{lemma}				\label{lem:red2}
	Given a vertex weighted graph $G = (V,E,w)$ of maximum degree $d$, there exists an efficient procedure to construct a weighted hypergraph $H = (V,E',w',W)$ on the same vertex set with edge weights given by $w':E' \to \mathbbm{R}_{\geq 0}$ and vertex weights given by $W':V \to \mathbbm{R}_{\geq 0}$ such that the following holds. For every subset $S \subset V$ we have $w(S) = W(S)$ and $w'\left(\partial^{E'}_{H}(S)\right) = w\left(\partial^{\rm sym}_G(S)\right)$. Consequently, we have $\phi^E_H(S) = \Phi^{V}_G(S)$ for every subset $S \subset V$.
\end{lemma}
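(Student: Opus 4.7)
\textbf{Proof proposal for Lemma \ref{lem:red2}.} The construction that I would use is the natural one suggested by the sketch of Lemma \ref{lem:transform}: keep the vertex set unchanged and, for every vertex $v \in V$, introduce exactly one hyperedge
\[
e_v \defeq \{v\} \cup N_G(v),
\]
with hyperedge weight $w'(e_v) \defeq w(v)$. For the vertex weights, set $W(v) \defeq w(v)$ for all $v \in V$. This immediately gives $W(S) = w(S)$ for every $S \subseteq V$, establishing the first identity with no work.

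The heart of the proof is the identification $w'(\partial^{E'}_H(S)) = w(\partial^{\rm sym}_G(S))$. The plan is to show that the map $v \mapsto e_v$ is a weight-preserving bijection between the symmetric vertex boundary of $S$ in $G$ and the cut hyperedges of $H$. Concretely, I would prove the following claim: for every $v \in V$, the hyperedge $e_v$ is cut by $S$ if and only if $v \in \partial^{\rm sym}_G(S)$. The forward direction is a short case analysis: if $e_v$ is cut, then there exist $u_1, u_2 \in e_v$ with $u_1 \in S$ and $u_2 \in S^c$; splitting on whether $v \in S$ or $v \in S^c$ and using the fact that $e_v \setminus \{v\} = N_G(v)$, one concludes that $v$ has at least one neighbor on the opposite side of the cut (or $v$ itself is on the opposite side of some neighbor), hence $v \in \partial^V_G(S) \cup \partial^V_G(S^c) = \partial^{\rm sym}_G(S)$. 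The converse is equally direct: if $v \in \partial^{\rm sym}_G(S)$, then $v$ is adjacent to some vertex on the opposite side of the cut, which together with $v$ witnesses that $e_v$ is cut.

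Given this bijection, the weight identity follows by summing:
\[
w'\!\left(\partial^{E'}_H(S)\right) \;=\; \sum_{v \,:\, e_v \text{ cut}} w'(e_v) \;=\; \sum_{v \in \partial^{\rm sym}_G(S)} w(v) \;=\; w\!\left(\partial^{\rm sym}_G(S)\right).
\]
Combining this with $W(S) = w(S)$ immediately yields
\[
\phi^E_H(S) \;=\; \frac{w'(\partial^{E'}_H(S))}{W(S)} \;=\; \frac{w(\partial^{\rm sym}_G(S))}{w(S)} \;=\; \Phi^V_G(S),
\]
completing the proof. I do not expect any real obstacle here: the only thing to be careful about is the edge case where $v \in S$ but none of its neighbors are in $S^c$ (so $e_v$ is entirely inside $S$ and $v$ is not in the boundary), and the symmetric case for $v \in S^c$; both are handled transparently by the case analysis above. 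Note finally that since $|e_v| = 1 + \deg_G(v) \leq d+1$, the hypergraph $H$ has arity at most $d+1$, consistent with item (iv) of Lemma \ref{lem:transform}.
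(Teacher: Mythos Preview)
Your proposal is correct and follows essentially the same approach as the paper: the same construction $e_v = \{v\} \cup N_G(v)$ with $w'(e_v) = w(v)$ and $W(v) = w(v)$, the same bijection argument showing $e_v$ is cut by $S$ iff $v \in \partial^{\rm sym}_G(S)$, and the same summation to conclude. Your case analysis is in fact slightly more explicit than the paper's, which simply records the equivalence in one line.
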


\begin{proof}
	Given a graph $G = (V,E,w)$, we construct the hypergraph $H = (V,E,W,w')$ on the vertex set $V$ as follows. For every vertex $v \in V$ we introduce a hyperedge $\cE(v) = \{v\} \cup N_G(v)$. Furthermore, for every hyperedge $\cE(v)$ we assign it weight $w'(\cE(v)) = w(v)$, and assign the vertex weight $W(v) = w(v)$. This concludes the construction of the hypergraph. From the construction, it follows that the max-arity of the hypergraph $H$ is the same as the maximum degree of $G$. Furthermore, for every $v \in V$ we have $w(v) = W(v)$ and hence $w(S) = W(S)$ for every subset $S \subset V$. Finally, we observe the following. Fix a vertex $v \in V$. Then,
	\begin{align*}
	v \in \partial^{\rm sym}_G (S) &\Leftrightarrow \exists \ u \in N_G(u) \mbox{ s.t. } \big\{u \in S, v \in S^c \} \vee \{u \in S^c , v\in S \} \\
								   &\Leftrightarrow \cE(v) \in \partial^{E'}_H(S)	
	\end{align*}	
	The above one-to-one correspondence directly implies for any subset $S \subset V$ we have 
	\[
	w'\left(\partial^{E'}_H(S)\right) = \sum_{v : \cE(v) \in \partial^{E'}_H(S)} w'(e) = \sum_{v : v \in \partial^V_G(S)} w(e) = w\left(\partial^V_G(S)\right)
	\]
	The above claims taken together immediately imply that $\phi^{E'}_H(S) = \Phi^{V}_G(S)$ for every subset $S \subset V$.
\end{proof}

\end{document}